\newcommand{\Path}{\textsf{Path}}
\newcommand{\BookUA}{\textsf{BookUA}}
\newcommand{\PointedUA}{\textsf{PointedUA}}
\newcommand{\El}{\textsf{El}}
\newcommand{\tEl}{\widetilde{\El}}
\newcommand{\HId}{\textsf{HId}}
\newcommand{\LHInv}{\textsf{LHInv}}
\newcommand{\RHInv}{\textsf{RHInv}}
\newcommand{\CompPair}{\textsf{CompPair}}
\newcommand{\src}{\textsf{src}}
\newcommand{\dest}{\textsf{tgt}}
\newcommand{\PtdUA}{\textsf{PtdUA}}
\newcommand{\UA}{\textsf{UA}}
\renewcommand\fracsquareslash[1]{
  \mathrel{\stackunder[1pt]{\tikz{\draw (0,0) rectangle (\squareheight,\squareheight);\draw(0,0) -- (\squareheight,\squareheight)}}{$\scriptscriptstyle #1$}}}
\title{(Pointed) Univalence in Universe Category Models of Type Theory}
\author{Krzysztof Kapulkin \and Yufeng Li}
\begin{abstract}
  We provide a formulation of the univalence axiom in a universe category model of dependent type theory that is convenient to verify in homotopy-theoretic settings.
  We further develop a strengthening of the univalence axiom, called pointed univalence, that is both computationally desirable and semantically natural, and verify its closure under Artin--Wraith gluing and formation of inverse diagrams.
\end{abstract}
\begin{document}

\maketitle

\section*{Introduction}

Voevodsky's Univalence Axiom is perhaps the most fundamental logical principle introduced in the 21st century.
A universe of (small) types satisfies the univalence axiom if, for any two types in that universe, their type of equalities (or identifications) is equivalent to the type of equivalences between them.
Thus, for example, two logically equivalent propositions or bijective sets can be treated as equal, and hence substituted for each other in all contexts.
While convenient and often assumed in informal mathematical practice, the univalence axiom requires formal justification.
The first such justification comes from Voevodsky's celebrated simplicial model \cite{kl21}; the model that not only satisfies, but in fact inspired, the univalence axiom.
With types interpreted as spaces (or Kan complexes to be precise), it is not surprising that paths in the universe of small types would be given by homotopy equivalences thereof.
Since then, a plethora of other models of univalence were introduced, including cubical models
\cite{cchm15, op16, ang+21, ahh18, lops18} and very general higher-categorical models in Grothendieck $\infty$-toposes \cite{shu-inv-el-ua,shu-elegant-ua,shu-strict-ua}.

The aims of this paper are twofold.
First, to more systematically study the univalence axiom in the language of Voevodsky's universe categories \cite{voe15} and/or Uemura's categories with representable maps \cite{uem23}.
Second, to discuss a slight strengthening of the univalence axiom, termed here \emph{pointed univalence}, that holds in most models and allows for pattern matching on equivalences, perhaps in the spirit of Voevodsky's original work.
We discuss each of these goals in turn.

Since their introduction \cite{voe15,voe17,kl21}, universe categories have been recognized as a convenient language to speak of semantics of type theory \cite{uem23}, including recasting other notions of models as universe categories. Perhaps the most prominent example is the rephrasing of categories with families \cite{dyb05} as natural models \cite{awo18}, which are universe category structures on presheaf categories.
In essence, a universe category consists of a category $\bC$ with a map $\tMcU \to \McU$ and a choice of pullbacks of $\tMcU \to \McU$.
The map $\tMcU \to \McU$ is thought of as an external universe and its pullbacks serve to interpret type dependency.
Some work is required to explain when the resulting model supports different type constructors and the details of this were worked out in \cite{kl21}.

When using universe categories to model homotopy type theory \cite{hottbook}, attention needs to be paid to how the univalence axiom is treated.
In \cite{kl21}, the goal is to build a single model of homotopy type theory (in simplicial sets), and when it comes to verifying univalence, the authors do just that: they translate the statement into a statement about simplicial sets and check it directly.
In contrast, the purpose of the present paper is to provide a more general semantic treatment of univalence at the level of universe categories by rephrasing the condition in a way that would make it easy to check in practice across various models.
We also aim to avoid using the internal language formulations which, while elegant, might occasionally conceal important details, possibly leading to incomplete arguments.
Altogether, the upshot of our formulation is that, when verifying univalence in a given universe category model, one does not need to work with syntax at all.
We contrast it with the closely related work of \citeauthor{uem23} \cite{uem23}, who provides the treatment of univalence in categories with representable maps, a close cousin of universe categories, taking a much more syntactic approach.

Interestingly, to arrive at a formulation of the univalence axiom that can be easily verified in a universe category, we actually give a slightly stronger statement than the one commonly used and given in the HoTT Book \cite{hottbook}, which brings us to the second contribution of this paper.
As mentioned before, the univalence axiom is most generally formulated by saying that a certain map from the identity type between two types in a universe to the type of equivalences between them is itself an equivalence.
Our strengthening requires that the homotopy inverse of this map send the identity equivalence to the reflexivity term.
To differentiate the two, we call the version found in \cite{hottbook} \emph{book univalence} and our new version \emph{pointed univalence}.
Pointed univalence is a very natural strengthening motivated by semantics coming from Quillen model categories \cite{qui67,hov99}, provided that the maps $A, B \colon 1 \to \McU$ picking the two types are cofibrations.
This is true, for instance, if sections of fibrations are cofibrations, which is the case in all Cisinski model structures on presheaf categories.
In contrast, this is not the case in the classifying category of cubical type theory \cite{cchm15}, since terms are not, in general, cofibrations.
On the other hand, since maps from the left class to fibrations axiomatize pattern matching, our pointed univalence is also computationally desirable, as it justifies performing pattern matching on equivalences.
As a result, we believe that this strengthening constitutes a new notion of independent interest with strong semantic justification.

At its core, our rephrasing is based on a lifting condition (using the notion of a dual of a deformation retraction \cite{cis19}) that involves only the object of h-isomorphisms; in particular, we do not need to know that the external universe admits an $\Id$-type structure.
While book univalence asks for a certain commutative square to admit a diagonal filler making only the lower triangle commute, pointed univalence requires that both resulting triangles commute.
As such, it is very natural to verify in models coming from Quillen model categories, where fillers make both triangles commute.

Our semantic study of pointed univalence includes studying closure properties of models under two fundamental constructions: Artin--Wraith gluing and inverse diagrams.
For book univalence, these were verified in the seminal work of Shulman \cite{shu15}, and we provide the ``pointed counterpart'' of his results.
These results require \emph{pointed function extensionality}, a similar strengthening of function extensionality.
Recalling that function extensionality asserts that two functions that are pointwise equal must be equal (as functions), pointed function extensionality asserts that functions pointwise equal via reflexivity terms must be equal by the reflexivity term.
Once again, this strengthening is given by another natural lifting condition.

As indicated above, the paper is naturally split into two parts, the first concerning (book and pointed) univalence in the framework of universe categories (\cref{sec:background,sec:gen-hiso,sec:univalence}), and the second `zooming in' on closure properties of pointed univalence (\cref{sec:funext,sec:glue-universe,sec:inverse-universe}).
In a bit more detail: in \cref{sec:background}, we review the basic theory of universe categories and, in particular, when a universe category supports different type formers needed to formulate the univalence axiom.
In \cref{sec:gen-hiso}, we construct the object of h-isomorphisms, which is required to state the univalence axiom.
In \cref{sec:univalence}, we describe three formulations of univalence: \cref{def:axm-univalence} contains the statement that is the easiest to make in a universe category; \cref{thm:univalence-sdf} rephrases it using a relatively easy to state condition; and \cref{thm:univalence-tt} compares it to a type-theoretic formulation.
In \cref{sec:funext}, we give a formulation of a variant of functional extensionality, again in terms of a lifting condition, called \emph{pointed functional extensionality}.
In \cref{sec:glue-universe}, we show that type-theoretic constructions, including pointed functional extensionality and univalence, are stable under formation of Artin--Wraith gluing categories.
Finally, in \cref{sec:inverse-universe}, we use iterated Artin--Wraith gluing to show that, under pointed functional extensionality, pointed univalence is stable under formation of inverse diagrams.


\tableofcontents
\section{Background}\label{sec:background}
In this paper we work in the framework of universe category models of type theory.
We start by reviewing some necessary background about pushforwards, polynomial
functors, universe category models and how one axiomatises various logical
constructions in this framework.

\subsection{Polynomial Functors}
We will work extensively with exponentiable maps as well as sometimes with the
polynomial functor and generic post-composite associated with an exponentiable
map.
\begin{definition}\label{def:pushforward}
  A map $p \colon E \to B$ in a category $\bC$ is \emph{exponentiable} if
  pullbacks along $p$ exist and the pullback functor
  $p^* \colon \sfrac{\bC}{B} \to \sfrac{\bC}{E}$ admits a right adjoint
  $p_* \colon \sfrac{\bC}{E} \to \sfrac{\bC}{B}$ which we call to be the
  \emph{pushforwards}.
\end{definition}

Using pushforwards, one can define the polynomial functor associated with an
exponentiable map.
\begin{definition}\label{def:expn-polynomial}
  Let $\bC$ be a finitely complete category and fix an exponentiable map
  $p \colon E \to B \in \bC$ along with a specific choice of pushforward $p_*$.
  Then, the \emph{polynomial functor} associated with $p$ is the composite functor
  \begin{equation*}
    \bP_p \coloneqq \left( \bC \xrightarrow{E \times -} \sfrac{\bC}{E} \xrightarrow{p_*} \sfrac{\bC}{B} \xrightarrow{B_!} \bC \right)
  \end{equation*}
  where the last map $\sfrac{\bC}{B} \xrightarrow{B_!} \bC$ just forgets away
  the base.
  Explicitly, given $X \in \bC$, it returns the domain object of the map
  $p_!(E \times X) \xrightarrow{p_!(\proj_1)} B$.
\end{definition}

Closely related to the polynomial functor associated with an exponential map is
the generic post-composite with an exponentiable map, constructed as follows.
\begin{construction}\label{constr:gen-comp}
  Let $\bC$ be a finitely complete category and fix an exponentiable map
  $p \colon E \to B \in \bC$ with pushforward $p_*$.
  Let $p' \colon E' \to B' \in \bC$ be a map.
  Then, the map
  \begin{equation*}
   \GenComp(p',p) \colon \ev^*(E \times E') \to p^*p_*(E \times B') \to p_*(E \times B')
  \end{equation*}
  is constructed as follows by first taking the pushforward of
  $E \times B' \to E$ along $p$ to obtain $p_*(E \times B')$, then taking the
  pullback along $p$ to obtain the counit $p^*p_*(E \times B') \to E \times B'$
  and finally pulling back $E' \to B'$ along the composite
  $p^*p_*(E \times B') \to E \times B' \to B'$.
  \begin{equation*}
    \begin{tikzcd}[cramped]
      {\ev^*(E \times E')} \\
      {p^*p_*(E \times B')} & {E \times E'} & {E'} \\
      & {E \times B'} & {B'} \\
      {p_*(E \times B')} && E \\
      && B
      \arrow[from=1-1, to=2-1]
      \arrow[from=1-1, to=2-2]
      \arrow["\ev"{description}, from=2-1, to=3-2]
      \arrow[from=2-1, to=4-1]
      \arrow[curve={height=18pt}, from=2-1, to=4-3]
      \arrow[from=2-2, to=2-3]
      \arrow[from=2-2, to=3-2]
      \arrow[from=2-3, to=3-3]
      \arrow["{p'}", from=3-2, to=3-3]
      \arrow["{\proj_1}"{description, pos=0.3}, from=3-2, to=4-3]
      \arrow[from=4-1, to=5-3]
      \arrow["p", from=4-3, to=5-3]
      \arrow["\lrcorner"{anchor=center, pos=0.15, scale=1.5}, draw=none, from=1-1, to=3-2]
      \arrow["\lrcorner"{anchor=center, pos=0.15, scale=1.5}, draw=none, from=2-1, to=5-3]
      \arrow["\lrcorner"{anchor=center, pos=0.15, scale=1.5}, draw=none, from=2-2, to=3-3]
    \end{tikzcd}
  \end{equation*}
\end{construction}

The generic composite constructed in the above \Cref{constr:gen-comp} lives up
to its name.
\begin{lemma}\label{lem:gen-comp}
  Let $p \colon E \to B \in \bC$ be an exponentiable map in finitely complete
  category $\bC$.
  Take $p' \colon E' \to B' \in \bC$ to be just any map.

  Suppose that one has pullbacks as on the left.
  Then the composite $X_2 \to X_1 \to X_0$ arises as a pullback of
  $\GenComp(p,p') \colon \ev^*(E \times E') \to p^*p_*(E \times B') \to p_*(E
  \times B')$ along a map $\ceil{X_1}.\ceil{X_2}$ as on the right
  \begin{center}
    \begin{minipage}{0.45\linewidth}
      \begin{equation*}
        \begin{tikzcd}[cramped]
          {X_2} & {E'} \\
          {X_1} & B' \\
          {X_0} & {E} \\
          & B
          \arrow["{\widetilde{\ceil{X_2}}}", from=1-1, to=1-2]
          \arrow["{q_2}"', from=1-1, to=2-1]
          \arrow[from=1-2, to=2-2]
          \arrow["{\ceil{X_2}}"{description}, from=2-1, to=2-2]
          \arrow["{q_1}"', from=2-1, to=3-1]
          \arrow["{\widetilde{\ceil{X_1}}}"{description}, from=2-1, to=3-2]
          \arrow["{\ceil{X_1}}"', from=3-1, to=4-2]
          \arrow[from=3-2, to=4-2]
          \arrow["\lrcorner"{anchor=center, pos=0.15, scale=1.5}, draw=none, from=1-1, to=2-2]
          \arrow["\lrcorner"{anchor=center, pos=0.15, scale=1.5}, draw=none, from=2-1, to=4-2]
        \end{tikzcd}
      \end{equation*}
    \end{minipage}
    \begin{minipage}{0.45\linewidth}
      \begin{equation*}
        \begin{tikzcd}[cramped]
          {X_2} & {\ev^*(E' \times E')} \\
          {X_1} & {p^*p_*(E \times B')} \\
          {X_0} & {p_*(E \times B')}
          \arrow[dashed, from=1-1, to=1-2]
          \arrow["{q_2}"', from=1-1, to=2-1]
          \arrow[from=1-2, to=2-2]
          \arrow[dashed, from=2-1, to=2-2]
          \arrow["{q_1}"', from=2-1, to=3-1]
          \arrow[from=2-2, to=3-2]
          \arrow["{\ceil{X_1}.\ceil{X_2}}"', dashed, from=3-1, to=3-2]
          \arrow["\lrcorner"{anchor=center, pos=0.05, scale=1.5}, draw=none, from=1-1, to=3-2]
          \arrow["\lrcorner"{anchor=center, pos=0.05, scale=1.5}, draw=none, from=2-1, to=3-2]
        \end{tikzcd}
      \end{equation*}
    \end{minipage}
  \end{center}
  where $X_0 \xrightarrow{\ceil{X_1}.\ceil{X_2}} p_*(E \times B')$ is the unique
  map such that
  $(X_0 \xrightarrow{\ceil{X_1}.\ceil{X_2}} p_*(E \times B) \to B) = (X_0
  \xrightarrow{\ceil{X_1}} B)$ and
  $(X_0 \xrightarrow{\ceil{X_1}.\ceil{X_2}} p_*(E \times B'))^\dagger = (X_1
  \xrightarrow{(\widetilde{\ceil{X_1}}, \ceil{X_2})} E \times B')$.
\end{lemma}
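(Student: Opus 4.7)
The plan is to verify the two squares in the right diagram separately as pullback squares and then invoke the pasting lemma; everything reduces to tracking the adjunction $p^{*} \dashv p_{*}$ and the defining property of $\ceil{X_1}.\ceil{X_2}$.

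First I would handle the bottom square. By \Cref{constr:gen-comp}, the map $p^{*}p_{*}(E \times B') \to p_{*}(E \times B')$ is, by definition, a pullback of $p \colon E \to B$ along the projection $p_{*}(E \times B') \to B$. The first defining equation of $\ceil{X_1}.\ceil{X_2}$ says that composing it with this projection returns $\ceil{X_1} \colon X_0 \to B$. So the pullback of $p^{*}p_{*}(E \times B') \to p_{*}(E \times B')$ along $\ceil{X_1}.\ceil{X_2}$ is, by the pasting lemma, the pullback of $p$ along $\ceil{X_1}$. By hypothesis this is precisely $X_1 \to X_0$ (with upper edge $\widetilde{\ceil{X_1}}$). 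This also identifies the induced comparison map $X_1 \to p^{*}p_{*}(E \times B')$ as $p^{*}(\ceil{X_1}.\ceil{X_2})$.

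Next I would handle the top square. Again by \Cref{constr:gen-comp}, the map $\ev^{*}(E \times E') \to p^{*}p_{*}(E \times B')$ is a pullback of $p' \colon E' \to B'$ along the composite $p^{*}p_{*}(E \times B') \xrightarrow{\ev} E \times B' \xrightarrow{\proj_{2}} B'$. To apply pasting, I need to compute the composite $X_1 \to p^{*}p_{*}(E \times B') \xrightarrow{\ev} E \times B'$. But this is exactly the adjoint transpose of $\ceil{X_1}.\ceil{X_2}$, which by the second defining equation equals $(\widetilde{\ceil{X_1}}, \ceil{X_2}) \colon X_1 \to E \times B'$. Projecting to $B'$ gives $\ceil{X_2}$, so the pullback of $\ev^{*}(E \times E') \to p^{*}p_{*}(E \times B')$ along $X_1 \to p^{*}p_{*}(E \times B')$ is the pullback of $p'$ along $\ceil{X_2}$, which by hypothesis is $X_2 \to X_1$.

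Having shown both squares are pullbacks, the outer rectangle is a pullback by pasting, and this is precisely the statement that $X_2 \to X_0$ arises as a pullback of $\GenComp(p',p)$ along $\ceil{X_1}.\ceil{X_2}$. The only nontrivial step is correctly identifying the upper edge of the bottom square and the lower edge of the top square, both of which are dictated by the two defining equations of $\ceil{X_1}.\ceil{X_2}$ via the adjunction; I expect this bookkeeping between the two characterisations (over $B$ and through the adjoint transpose) to be the main place to be careful, but there is no substantive obstacle beyond it.
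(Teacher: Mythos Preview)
Your proposal is correct and follows essentially the same approach as the paper: both arguments verify the two pullback squares separately via the pasting lemma, using that the lower map in $\GenComp(p',p)$ is a pullback of $p$ (combined with the first defining equation of $\ceil{X_1}.\ceil{X_2}$) and that the upper map is a pullback of $p'$ along $\proj_2 \circ \ev$ (combined with the second defining equation, i.e.\ the transpose identity $\ev \circ p^{*}(\ceil{X_1}.\ceil{X_2}) = (\widetilde{\ceil{X_1}},\ceil{X_2})$). The only cosmetic difference is that the paper spells out the construction of $\ceil{X_1}.\ceil{X_2}$ first, whereas you take it as given by its defining equations as stated in the lemma.
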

\begin{proof}
  One explicitly constructs the required map $\ceil{X_1}.\ceil{X_2}$ in the following steps.
  \begin{enumerate}
    \item One first constructs the map
    $X_0 \textcolor{magenta0}{{}\xrightarrow{\ceil{X_1}.\ceil{X_2}}{}}
    p_*(E \times B')$
    as a map
    $X_0 \to p_*(E \times B')$ over $B$ into the pushforward.
    By the universal property of the pushforward, we choose the transpose of
    this map to be the map
    $p^*X_0 = X_1 \textcolor{magenta0}{{}\xrightarrow{(\widetilde{\ceil{X_1}},
        \ceil{X_2})}{}} E \times B'$ over $E$.
    \item Next one computes the pullback of
    $p^*p_*(E \times B') \to p_*(E \times B')$ along
    $X_0 \textcolor{magenta0}{{}\xrightarrow{\ceil{X_1}.\ceil{X_2}}{}}
    p_*(E \times B')$ as $X_1 \to X_0$ because of the pullback lemma.
    \begin{equation*}
      \begin{tikzcd}[cramped,row sep=small]
        {X_1} && {p^*p_*(E \times B')} & E \\
        {X_0} && {p_*(E \times B')} & B
        \arrow["{p^*(\ceil{X_1}.\ceil{X_2})}"'{}, color=magenta0, from=1-1, to=1-3]
        \arrow["{\widetilde{\ceil{X_1}}}", color=magenta0, curve={height=-18pt}, from=1-1, to=1-4]
        \arrow[from=1-1, to=2-1]
        \arrow[from=1-3, to=1-4]
        \arrow[from=1-3, to=2-3]
        \arrow["p", from=1-4, to=2-4]
        \arrow["{\ceil{X_1}.\ceil{X_2}}"{description}, color=magenta0, from=2-1, to=2-3]
        \arrow["{\ceil{X_1}}"', color=magenta0, curve={height=18pt}, from=2-1, to=2-4]
        \arrow[from=2-3, to=2-4]
        \arrow["\lrcorner"{anchor=center, pos=0.15, scale=1.5}, draw=none, from=1-1, to=2-3]
        \arrow["\lrcorner"{anchor=center, pos=0.15, scale=1.5}, draw=none, from=1-3, to=2-4]
      \end{tikzcd}
    \end{equation*}
    \item Finally, one notes that by definition the transpose is
    $\ev \cdot \textcolor{magenta0}{p^*(\ceil{X_1}.\ceil{X_2})} =
    (\ceil{X_1}.\ceil{X_2})^\dagger =
    \textcolor{magenta0}{(\widetilde{\ceil{X_1}}, \ceil{X_2})}$.
    Therefore, once again by the pullback lemma, pulling back
    $\ev^*(E \times B') \to p^*p_*(E \times B')$ along
    $E_1
    \textcolor{magenta0}{{}\xrightarrow{p^*(\ceil{X_1}.\ceil{X_2})}{}}
    \ev^*(E \times B')$ is exactly $X_2 \to X_1$.
    \begin{equation*}
      \begin{tikzcd}[row sep=small]
        {X_2} && {\ev^*(E \times B')} & {E \times E'} & E' \\
        {X_1} && {p^*p_*(E \times B')} & {E \times B'} & B'
        \arrow[draw={magenta0}, from=1-1, to=1-3]
        \arrow[from=1-1, to=2-1]
        \arrow[from=1-3, to=1-4]
        \arrow[from=1-3, to=2-3]
        \arrow[from=1-4, to=1-5]
        \arrow[from=1-4, to=2-4]
        \arrow[from=1-5, to=2-5]
        \arrow["{p^*(\ceil{X_1}.\ceil{X_2})}"{}, color={magenta0}, from=2-1, to=2-3]
        \arrow["{(\widetilde{\ceil{X_1}}, \ceil{X_2})}"{description}, color={magenta0}, curve={height=18pt}, from=2-1, to=2-4]
        \arrow["{\ceil{X_2}}"{description}, color={magenta0}, curve={height=42pt}, from=2-1, to=2-5]
        \arrow["\ev"{description}, from=2-3, to=2-4]
        \arrow[from=2-4, to=2-5]
        \arrow["\lrcorner"{scale=1.5, anchor=center, pos=0.15, rotate=0}, draw=none, from=1-1, to=2-3]
        \arrow["\lrcorner"{scale=1.5, anchor=center, pos=0.15, rotate=0}, draw=none, from=1-3, to=2-4]
        \arrow["\lrcorner"{scale=1.5, anchor=center, pos=0.15}, draw=none, from=1-4, to=2-5]
      \end{tikzcd}
    \end{equation*}
  \end{enumerate}
  Putting everything together, we obtain the following diagram.
  \begin{equation*}
    \begin{tikzcd}[cramped, row sep=small]
      {X_2} && {\ev^*(E \times E')} \\
      {X_1} && {p^*p_*(E \times B')} & {E \times E'} & {E'} \\
      &&& {E \times B'} & {B'} \\
      {X_0} && {p_*(E \times B')} \\
      &&&& E \\
      &&&& B
      \arrow[draw=magenta0, from=1-1, to=1-3]
      \arrow[from=1-1, to=2-1]
      \arrow[from=1-3, to=2-3]
      \arrow[from=1-3, to=2-4]
      \arrow["{p^*(\ceil{X_1}.\ceil{X_2})}"{}, color=magenta0, from=2-1, to=2-3]
      \arrow["\ev"{description}, from=2-3, to=3-4]
      \arrow[from=2-3, to=4-3]
      \arrow[curve={height=18pt}, from=2-3, to=5-5]
      \arrow[from=2-4, to=2-5]
      \arrow[from=2-4, to=3-4]
      \arrow[from=2-5, to=3-5]
      \arrow[from=3-4, to=3-5]
      \arrow["{\proj_1}"{description, pos=0.3}, from=3-4, to=5-5]
      \arrow["{\ceil{X_1}.\ceil{X_2}}"{description}, color=magenta0, from=4-1, to=4-3]
      \arrow["{\ceil{X_1}}"', color=magenta0, curve={height=12pt}, from=4-1, to=6-5]
      \arrow[from=4-3, to=6-5]
      \arrow["p", from=5-5, to=6-5]
      \arrow[from=2-1, to=4-1]
      \arrow[crossing over, "{(\widetilde{\ceil{X_1}}, \ceil{X_2})}"{description}, color=magenta0, curve={height=12pt}, from=2-1, to=3-4]
      \arrow[crossing over, "{\ceil{X_2}}"{description, pos=0.3}, color=magenta0, from=2-1, to=3-5]
      \arrow[crossing over, "{\widetilde{\ceil{X_1}}}"{description, pos=0.3}, color=magenta0, curve={height=18pt}, from=2-1, to=5-5]
    \end{tikzcd}
  \end{equation*}

  Uniqueness of the map $\xrightarrow{\ceil{X_1}.\ceil{X_2}}$ then follows by
  uniqueness of the adjoint transpose.
\end{proof}

The generic composite is related to the associated polynomial functor in the
following sense.
\begin{lemma}\label{lem:gen-comp-psfw}
  Pushing forwards the first map $\ev^*(E \times E') \to p^*p_*(E \times B')$,
  in the context of the generic composite from \Cref{constr:gen-comp}, viewed as
  an object of the slice category over $p^*p_*(E \times B')$, along the second
  map $p^*p_*(E \times B') \to p_*(E \times B')$ in $\GenComp(p',p)$ obtains the
  same map $p_*(E \times E')\to p_*(E \times B')$ as applying the polynomial
  functor from \Cref{def:expn-polynomial} associated with $p$ to $E' \to B'$.
\end{lemma}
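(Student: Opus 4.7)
The plan is to identify the pushforward $q_{*}(f)$ with the polynomial-functor map $\alpha \colon p_{*}(E \times E') \to p_{*}(E \times B')$, both viewed as objects of $\sfrac{\bC}{p_{*}(E \times B')}$ (the latter via $\alpha$ itself as structure map), by showing that they represent the same presheaf and then concluding via Yoneda. Here I write $q$ for the second map of $\GenComp(p',p)$ and $f$ for the first.

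The first key observation I would draw from \Cref{constr:gen-comp} is that $q$ is by construction the pullback of $p$ along the structure map $p_{*}(E \times B') \to B$, so in particular $q$ is exponentiable, and for any $t \colon T \to p_{*}(E \times B')$ pasting of pullbacks identifies $q^{*}T \cong T \times_{B} E$ (with $T$ viewed as an object over $B$ by post-composing $t$ with the structure map). Using the adjunction $q^{*} \dashv q_{*}$, maps $T \to q_{*}(f)$ over $p_{*}(E \times B')$ then correspond to maps $T \times_{B} E \to \ev^{*}(E \times E')$ over $p^{*}p_{*}(E \times B')$. Unfolding $\ev^{*}(E \times E')$ as the pullback of $E' \to B'$ along the composite $p^{*}p_{*}(E \times B') \xrightarrow{\ev} E \times B' \to B'$, such data amount to a morphism $\phi \colon T \times_{B} E \to E'$ whose composite with $E' \to B'$ equals the composite $T \times_{B} E \to p^{*}p_{*}(E \times B') \xrightarrow{\ev} E \times B' \to B'$; by the universal property of the counit $\ev$, this last composite is exactly the transpose of $t$ under $p^{*} \dashv p_{*}$.

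On the other side, maps $T \to \alpha$ in $\sfrac{\bC}{p_{*}(E \times B')}$ are morphisms $u \colon T \to p_{*}(E \times E')$ over $B$ satisfying $\alpha \circ u = t$. Transposing $u$ under $p^{*} \dashv p_{*}$ produces a map $\psi \colon T \times_{B} E \to E'$, and using that $\alpha$ is $p_{*}$ applied to the morphism $E \times E' \to E \times B'$ in $\sfrac{\bC}{E}$---whose effect, modulo transposes, is exactly post-composition with $E' \to B'$---the equation $\alpha \circ u = t$ translates into the equation that $(E' \to B') \circ \psi$ equals the transpose of $t$. This matches the data of the previous paragraph, so Yoneda yields the desired isomorphism $q_{*}(f) \cong \alpha$ in $\sfrac{\bC}{p_{*}(E \times B')}$.

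The main obstacle I expect is the careful bookkeeping in this last compatibility step: verifying that the polynomial functor's map $\alpha$ transposes to post-composition with $E' \to B'$, which reduces to unwinding the definition of $p_{*}$ on morphisms via its adjunction and tracking how the map $E \times E' \to E \times B'$ and the projection $E \times B' \to B'$ interact with the counit $\ev$. Once these compatibilities are pinned down, the Yoneda identification closes the argument.
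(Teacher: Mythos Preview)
Your argument is correct. The Yoneda computation you outline goes through: both sides of the claimed isomorphism, when probed by a map $t \colon T \to p_{*}(E \times B')$, unwind to the same data, namely a map $\phi \colon T \times_{B} E \to E'$ whose composite with $p'$ equals the second component of the transpose $t^{\dagger} \colon T \times_{B} E \to E \times B'$. The bookkeeping you flag as the main obstacle is routine once one notes that $p_{*}$ applied to a morphism over $E$ transposes (by naturality of the counit) to post-composition, and that the map $q^{*}T \to p^{*}p_{*}(E \times B')$ is exactly $p^{*}(t)$, so its composite with $\ev$ is $t^{\dagger}$ by definition.

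The paper takes a different route: it simply invokes the distributivity law for pushforwards (citing \cite[Lemma 2.3]{inv-psfw}, cf.\ also \cite[Paragraph 2.3]{gk13}), which is precisely the general statement that for $p$ exponentiable and $g \colon X \to Y$ over $E$, the pushforward $p_{*}g$ agrees with first pulling $g$ back along the counit $p^{*}p_{*}Y \to Y$ and then pushing forward along the base-changed $q \colon p^{*}p_{*}Y \to p_{*}Y$. Your proof is essentially a self-contained verification of that law in this instance, whereas the paper treats it as a black box. Your approach buys independence from the external reference; the paper's buys brevity and signals that the phenomenon is a known structural fact about polynomial functors rather than something special to this construction.
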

\begin{proof}
  This is immediate by the distributivity law of pushforwards as in \cite[Lemma
  2.3]{inv-psfw}.
\end{proof}

\subsection{Universe Category Models of Type Theory}
Universe category models \cite{voe15} of type theory and formulations of type
theories as categories with representable maps (CwRs) \cite{uem23} are general
high power-to-weight ratio frameworks that can make models of dependent type
theory where context extension is modelled as certain pullback-stable projection
maps arise as special cases of them.
\begin{definition}[{\cite[Definition 2.1]{voe15}}]\label{def:univ-struct}
  Let $\bC$ be a finitely complete category.
  A \emph{universe structure} on an exponentiable map
  $\pi \colon \tMcU \to \McU$ is a specific choice of a right adjoint
  $\var \colon \sfrac{\bC}{\McU} \to \sfrac{\bC}{\tMcU}$
  \begin{equation*}
    \begin{tikzcd}[cramped]
      {\sfrac{\bC}{\McU}} && {\sfrac{\bC}{\tMcU}}
      \arrow[""{name=0, anchor=center, inner sep=0}, "\var"{}, shift left=2, dashed, from=1-1, to=1-3]
      \arrow[""{name=1, anchor=center, inner sep=0}, "{\pi_!}", shift left=2, from=1-3, to=1-1]
      \arrow["\dashv"{anchor=center, rotate=90}, draw=none, from=1, to=0]
    \end{tikzcd}
  \end{equation*}
  of $\pi_! \colon \sfrac{\bC}{\tMcU} \to \sfrac{\bC}{\McU}$ the
  post-composition functor (i.e. just a choice of pullbacks along $\pi$).

  For each $A \colon \Gamma \to \McU \in \sfrac{\bC}{\McU}$, we denote by
  $\pi_A$ the counit at $A$ and $\Gamma.A$ the domain of $\pi_A$
  so that one has a pullback square
  \begin{equation*}
    \begin{tikzcd}[cramped]
      {\Gamma.A} & \tMcU \\
      \Gamma & \McU
      \arrow["{\var_A}", from=1-1, to=1-2]
      \arrow["{\pi_A}"', from=1-1, to=2-1]
      \arrow["\lrcorner"{anchor=center, pos=0.15, scale=1.5}, draw=none, from=1-1, to=2-2]
      \arrow[from=1-2, to=2-2]
      \arrow["A"', from=2-1, to=2-2]
    \end{tikzcd}
  \end{equation*}
  The map $\pi_A$ is also the \emph{selected pullback} by the universe
  structure.

  The map $\pi \colon \tMcU \to \McU$ equipped with such a universe structure is
  called a \emph{universal map}.
  A \emph{universe structure} on a category $\bC$ consists of a choice of a
  distinguished universal map $\tMcU \to \McU$ and a universe structure on it.
\end{definition}

\begin{example}
  By taking $\bC$ to be a presheaf category, we recover the concept of a
  category with families (cwf) \cite{dyb05}.
\end{example}

Sometimes, given a universal map $\tMcU \to \McU$, one wishes to talk about
arbitrary pullbacks of it, as opposed to the specific pullback maps selected by
its universe structure.
We fix some vocabulary for this purpose.
\begin{definition}\label{def:univ-fibration}
  Let $\pi \colon \tMcU \to \McU$ be a universal map in a finitely complete
  category $\bC$.

  The \emph{$\pi$-name} and \emph{$\pi$-point} of a map $E \to B$ are maps
  $\ceil{E}$ and $\widetilde{\ceil{E}}$ as below such that one has a pullback
  square as follows.
  \begin{equation*}
    \begin{tikzcd}[cramped]
      E & \tMcU \\
      B & \McU
      \arrow["{\widetilde{\ceil{E}}}", from=1-1, to=1-2]
      \arrow[from=1-1, to=2-1]
      \arrow["\lrcorner"{anchor=center, pos=0.15, scale=1.5}, draw=none, from=1-1, to=2-2]
      \arrow[from=1-2, to=2-2]
      \arrow["{\ceil{E}}"', from=2-1, to=2-2]
    \end{tikzcd}
  \end{equation*}
  Together, the $\pi$-(name,point) pair $(\ceil{E}, \widetilde{\ceil{E}})$ forms
  a \emph{$\pi$-fibrancy structure} for $E \twoheadrightarrow B$.
  Maps that can be equipped with a $\pi$-fibrancy structure are called
  \emph{$\pi$-fibrations} and are denoted $E \twoheadrightarrow B$.
  As an object in the slice over $B$, we say
  $E \twoheadrightarrow B \in \sfrac{\bC}{B}$ is a $\pi$-fibrant object.

  Although $\pi$-fibrancy structures are not necessarily unique, given
  $A \colon \Gamma \to \McU$, the selected pullback $\Gamma.A \to \Gamma$ admits
  a \emph{canonical} $\pi$-fibrancy structure $(A,\var_A)$.
  Furthermore, given a $\pi$-fibration $E \twoheadrightarrow B$ with fibrancy
  structure $(\ceil{E},\widetilde{\ceil{E}})$, its \emph{canonical $\pi$-fibrant
    replacement} is the isomorphic map
  $\pi_{\ceil{E}} \colon B.\ceil{E} \cong E \to B$.
\end{definition}

Quite sometimes, such as in the context of talking about univalence, we need to
work in categories equipped with multiple universal maps.
A common occurrence of a category equipped with two universal maps is the case
of internal universes.
\begin{definition}\label{def:int-univ}
  Let $\pi \colon \tMcU \to \McU$ be a universal map in a finitely complete
  category $\bC$.
  An \emph{internal fibrant universe} structure on $\tMcU \to \McU$ is a
  universal map $\pi_0 \colon \tMcU_0 \to \McU_0$ and a terminal object $1$
  such that $\tMcU_0 \twoheadrightarrow \McU_0$ and
  $\McU_0 \twoheadrightarrow 1$ can be equipped with \emph{canonical}
  $\pi$-fibrancy structures $(\El,\tEl)$ and
  $(\ceil{\McU_0}, \widetilde{\ceil{\McU_0}})$.
  \begin{equation*}
    \begin{tikzcd}[cramped]
      & \tMcU \\
      {\tMcU_0} & \McU \\
      {\McU_0} & {\tMcU} \\
      1 & {\McU}
      \arrow[from=1-2, to=2-2, two heads]
      \arrow["\tEl"{description}, from=2-1, to=1-2]
      \arrow[from=2-1, to=3-1, two heads]
      \arrow["\El"{description}, from=3-1, to=2-2]
      \arrow["{\widetilde{\ceil{\McU_0}}}"{description}, from=3-1, to=3-2]
      \arrow[from=3-1, to=4-1, two heads]
      \arrow[from=3-2, to=4-2, two heads]
      \arrow["{\ceil{\McU_0}}"', from=4-1, to=4-2]
      \arrow["\lrcorner"{anchor=center, pos=0.15, scale=1.5, rotate=45}, draw=none, from=2-1, to=2-2]
      \arrow["\lrcorner"{anchor=center, pos=0.15, scale=1.5}, draw=none, from=3-1, to=4-2]
    \end{tikzcd}
  \end{equation*}
\end{definition}

For now, we fix a particular universal map (be it internal or otherwise) and
equip it with various logical constructions.
We start with the $\Id$-type structure, which is the syntactical version of a
generic fibred very good path object for fibrations and whose axiomatisation is
described in \cite{voe15a}, which we now recall.
\begin{definition}[{\cite[Definition 2.7]{voe15a}}]\label{def:pre-id-type}
  Let $\pi \colon \tMcU \to \McU$ be a universal map in a finitely complete
  category $\bC$.

  A \emph{pre-$\Id$-type structure} (also known as a
  \emph{$\textsf{J1}$-structure} in \cite[Definition 2.7]{voe15a}) on
  $\tMcU \to \McU$ consists of a pair of dashed maps $(\Id,\ceil{\refl})$ as
  below from the diagonal $\tMcU \to \tMcU \times_\McU \tMcU$ to
  $\tMcU \to \McU$.
  \begin{equation*}\label{eqn:pre-Id-def}\tag{\textsc{pre-$\Id$-def}}
    \begin{tikzcd}[cramped]
      \tMcU & \tMcU \\
      {\tMcU \times_\McU \tMcU} & \McU
      \arrow["{\ceil{\refl}}", dashed, from=1-1, to=1-2]
      \arrow[from=1-1, to=2-1, "{\Delta}"']
      \arrow[from=1-2, to=2-2]
      \arrow["\Id"', dashed, from=2-1, to=2-2]
    \end{tikzcd}
  \end{equation*}
\end{definition}

\begin{construction}[{\cite[Equation (7)]{voe15a}}]\label{constr:J-prob}
  Let $\pi \colon \tMcU \to \McU$ be a universal map in a finitely complete
  category $\bC$ equipped with a pre-$\Id$-type structure $(\Id,\ceil{\refl})$
  as in \Cref{eqn:pre-Id-def}.

  Denote by
  $\ev_\partial \colon \Id_\McU(\tMcU) \twoheadrightarrow \tMcU \times_\McU
  \tMcU$ the selected pullback of $\pi$ along
  $\Id \colon \tMcU \times_\McU \tMcU \to \McU$ by the universe structure so
  that there is a map $\refl \colon \tMcU \hookrightarrow \Id_\McU(\tMcU)$ into
  the pullback, as below.
  \begin{equation*}\label{eqn:Id-def}\tag{\textsc{$\Id$-def}}
    \begin{tikzcd}[cramped]
      \tMcU \\
      & {\Id_\McU(\tMcU)} & \tMcU \\
      & {\tMcU \times_\McU \tMcU} & \McU
      \arrow["\refl"{description}, hook, dashed, from=1-1, to=2-2]
      \arrow["{\ceil{\refl}}", curve={height=-12pt}, from=1-1, to=2-3,dashed]
      \arrow["\Delta"', curve={height=12pt}, from=1-1, to=3-2]
      \arrow["{\widetilde{\Id}}"{description}, from=2-2, to=2-3,dashed]
      \arrow["{\ev_\partial}"{description}, from=2-2, to=3-2, two heads]
      \arrow["\lrcorner"{anchor=center, pos=0.15, scale=1.5}, draw=none, from=2-2, to=3-3]
      \arrow[from=2-3, to=3-3, two heads]
      \arrow["\Id"', from=3-2, to=3-3,dashed]
    \end{tikzcd}
  \end{equation*}
  Then, over $\McU$, we have the pair of maps
  $\refl \colon \tMcU \hookrightarrow \Id_\McU(\tMcU)$ (in which
  $\Id_\McU(\tMcU)$ is viewed as an object over $\McU$ via the map
  $\Id_\McU(\tMcU) \to \tMcU \times_\McU \tMcU \rightrightarrows \tMcU \to
  \McU$) and the rebased map $\tMcU \times \McU \to \McU \times \McU$, as below.
  \begin{equation*}
    \begin{tikzcd}[cramped, column sep=small]
      \tMcU && {\McU \times \tMcU} \\
      {\Id_\McU(\tMcU)} && {\McU \times \tMcU} \\
      & \McU
      \arrow["\refl"', hook, from=1-1, to=2-1]
      \arrow["{\McU \times \pi}", from=1-3, to=2-3]
      \arrow[from=2-1, to=3-2]
      \arrow["{\proj_1}", from=2-3, to=3-2]
    \end{tikzcd}
  \end{equation*}

  We construct the pullback-Hom map of $\refl$ with $\tMcU \times \pi$ in the
  slice over $\McU$ as follows
  \begin{equation*}\label{eqn:J-prob}\tag{\textsc{$\textsf{J}$-prob}}\small
    \left(\begin{tikzcd}[cramped]
        {\textsf{Diag}_\McU(\refl, \tMcU \times \pi)} \\
        {\textsf{Sq}_\McU(\refl, \tMcU \times \pi)}
        \arrow[from=1-1, to=2-1]
      \end{tikzcd}\right)
    \coloneqq
    \left(
      \begin{tikzcd}[cramped]
        {[\Id_\McU(\tMcU), \McU \times \tMcU]_\McU} \\
        {[\tMcU, \McU \times \tMcU]_\McU \times_{[\tMcU, \McU \times \tMcU]_\McU} [\Id_\McU(\tMcU), \McU \times \McU]_\McU}
        \arrow[from=1-1, to=2-1]
      \end{tikzcd}
    \right)
  \end{equation*}
  It encodes the generic $\MsJ$-elimination lifting problem.
\end{construction}

\begin{definition}[{\cite[Definition 2.8]{voe15a}}]\label{def:J-elim}
  Let $\pi \colon \tMcU \to \McU$ be a universal map in a finitely complete
  category $\bC$ equipped with a pre-$\Id$-type structure $(\Id,\ceil{\refl})$
  as in \Cref{eqn:pre-Id-def}.

  A \emph{J-elimination structure} (also known as a \emph{$\textsf{J2}$-structure} in
  \cite[Definition 2.8]{voe15a}) on the pre-$\Id$-type structure
  ($\textsf{J1}$-structure) is a section of the pullback-Hom map of
  \Cref{eqn:J-prob}.
  \begin{equation*}\label{eqn:J-def}\tag{\textsc{$\textsf{J}$-def}}
    \begin{tikzcd}[cramped]
      {\textsf{Diag}_\McU(\refl, \tMcU \times \pi)} \\
      {\textsf{Sq}_\McU(\refl, \tMcU \times \pi)}
      \arrow[shift left, from=1-1, to=2-1]
      \arrow["\MsJ", shift left, hook, from=2-1, to=1-1]
    \end{tikzcd}
  \end{equation*}
\end{definition}

\begin{remark}
  The above definition of the $\MsJ$-elimination structure is equivalently
  expressed by saying that $\MsJ$ is a stable lifting structure, in the sense of
  \cite[Definitions 1.4 and 3.1]{struct-lift}, of
  $\refl \colon \tMcU \hookrightarrow \Id_\McU(\tMcU)$ against
  $\McU \times \tMcU \to \McU \times \McU$, in the slice $\sfrac{\bC}{\McU}$.
  \begin{equation*}
    \MsJ \in \left(
      \begin{tikzcd}[cramped]
        \tMcU \ar[d,hook,"{\refl}"'] \\ \Id_\McU(\tMcU)
      \end{tikzcd}
      \fracsquareslash{\McU}
      \begin{tikzcd}[cramped]
        \McU\times\tMcU \ar[d, two heads] \\ \McU\times\McU
      \end{tikzcd}
    \right)
  \end{equation*}
\end{remark}

\begin{definition}\label{def:Id-type}
  Let $\pi \colon \tMcU \to \McU$ be a universal map in a finitely complete
  category $\bC$.

  A (full) \emph{$\Id$-type} structure (also known as a \emph{full $\MsJ$-structure}
  in \cite{voe15a}) on $\tMcU \to \McU$ consists of a pre-$\MsJ$-structure
  $(\Id,\ceil{\refl})$ as in \Cref{eqn:Id-def} along with a $\MsJ$-elimination
  structure as in \Cref{eqn:J-def}.
  \begin{center}
    \begin{minipage}{0.45\linewidth}
      \begin{equation*}\tag{\ref{eqn:Id-def}}
        \begin{tikzcd}[cramped]
          \tMcU \\
          & {\Id_\McU(\tMcU)} & \tMcU \\
          & {\tMcU \times_\McU \tMcU} & \McU
          \arrow["\refl"{description}, hook, dashed, from=1-1, to=2-2]
          \arrow["{\ceil{\refl}}", curve={height=-12pt}, from=1-1, to=2-3,dashed]
          \arrow["\Delta"', curve={height=12pt}, from=1-1, to=3-2]
          \arrow["{\widetilde{\Id}}"{description}, from=2-2, to=2-3,dashed]
          \arrow["{\ev_\partial}"{description}, from=2-2, to=3-2, two heads]
          \arrow["\lrcorner"{anchor=center, pos=0.15, scale=1.5}, draw=none, from=2-2, to=3-3]
          \arrow[from=2-3, to=3-3, two heads]
          \arrow["\Id"', from=3-2, to=3-3,dashed]
        \end{tikzcd}
      \end{equation*}
    \end{minipage}
    \begin{minipage}{0.45\linewidth}
      \begin{equation*}\tag{\text{\ref{eqn:J-def}}}
        \begin{tikzcd}[cramped]
          {\textsf{Diag}_\McU(\refl, \tMcU \times \pi)} \\
          {\textsf{Sq}_\McU(\refl, \tMcU \times \pi)}
          \arrow[shift left, from=1-1, to=2-1]
          \arrow["\MsJ", shift left, hook, from=2-1, to=1-1]
        \end{tikzcd}
      \end{equation*}
    \end{minipage}
  \end{center}
\end{definition}

\begin{example}
  In CCHM-style cubical type theory \cite{cchm15}, $\Path$-types are only
  pre-$\Id$-types as they do not necessarily have the required (definitional)
  $\MsJ$-elimination structure as in \Cref{def:J-elim}.
\end{example}

\begin{remark}
  Let $\pi \colon \tMcU \to \McU$ be a universal map in a finitely complete
  category $\bC$ equipped with a pre-$\Id$-type structure.

  For each $\pi$-fibrant $E \twoheadrightarrow B$ with $\pi$-fibrancy structure
  $(\ceil{E},\widetilde{\ceil{E}})$ as on the left, we also denote by
  $\Id_{B}(E)$ and $\Id_{\ceil{E}}(\widetilde{\ceil{E}})$ and $\refl_E$ the
  correspondingly labelled maps as on the right.
  \begin{center}
    \begin{minipage}{0.40\linewidth}
      \begin{equation*}
        \begin{tikzcd}
          E & \tMcU \\
          {B} & \McU
          \arrow["{\widetilde{\ceil{E}}}", from=1-1, to=1-2]
          \arrow[from=1-1, to=2-1, two heads]
          \arrow["\lrcorner"{anchor=center, pos=0.15, scale=1.5}, draw=none, from=1-1, to=2-2]
          \arrow[from=1-2, to=2-2, two heads]
          \arrow["{\ceil{E}}"', from=2-1, to=2-2]
        \end{tikzcd}
      \end{equation*}
    \end{minipage}
    \begin{minipage}{0.50\linewidth}
      \begin{equation*}
        \begin{tikzcd}[column sep=large]
          E & \tMcU \\
          {\Id_{B}(E)} & {\Id_\McU(\tMcU)} & \tMcU \\
          {E \times_{B} E} & {\tMcU \times_\McU \tMcU} & \McU \\
          {B} & \McU
          \arrow["\widetilde{\ceil{E}}", from=1-1, to=1-2]
          \arrow["{\refl_E}"{description}, from=1-1, to=2-1, hook]
          \arrow["\lrcorner"{anchor=center, pos=0.15, scale=1.5}, draw=none, from=1-1, to=2-2]
          \arrow["\refl"{description}, from=1-2, to=2-2, hook]
          \arrow["{\Id_{\ceil{E}}(\widetilde{\ceil{E}})}", from=2-1, to=2-2]
          \arrow["{\ev_\partial}"{description}, from=2-1, to=3-1, two heads]
          \arrow["\lrcorner"{anchor=center, pos=0.15, scale=1.5}, draw=none, from=2-1, to=3-2]
          \arrow[from=2-2, to=2-3]
          \arrow["{\ev_\partial}"{description}, from=2-2, to=3-2, two heads]
          \arrow["\lrcorner"{anchor=center, pos=0.15, scale=1.5}, draw=none, from=2-2, to=3-3]
          \arrow[from=2-3, to=3-3, two heads]
          \arrow["{\widetilde{\ceil{E}} \times_{\ceil{E}} \widetilde{\ceil{E}}}", from=3-1, to=3-2]
          \arrow[from=3-1, to=4-1]
          \arrow["\lrcorner"{anchor=center, pos=0.15, scale=1.5}, draw=none, from=3-1, to=4-2]
          \arrow["\Id"', from=3-2, to=3-3]
          \arrow[from=3-2, to=4-2]
          \arrow["{\ceil{E}}"', from=4-1, to=4-2]
        \end{tikzcd}
      \end{equation*}
    \end{minipage}
  \end{center}
  where $\Id_B(E) \twoheadrightarrow E \times_B E$ is chosen by the universe
  structure.
\end{remark}

The axiomatisation of logical quantifiers in universe categories syntactically
encodes their treatment as adjoints.
In the spirit of logical quantifiers as adjoints, $\Sigma$-types are
type-theoretic manifestations of existentials and so $\Sigma$-type structures
are axiomatised using the generic composite from \Cref{constr:gen-comp}.
\begin{definition}\label{def:sigma-type}
  Let $\pi \colon \tMcU \to \McU$ be a universal map in a finitely complete
  category $\bC$.

  An \emph{extensional $\Sigma$-type} structure on $\pi\colon\tMcU \to \McU$ is
  a $\pi$-fibrancy structure $(\Sigma,\pair)$ on $\GenComp(\pi, \pi)$.
  \begin{equation*}
    \begin{tikzcd}[row sep=small, column sep=small]
      {\ev^*(\tMcU \times \tMcU)} &&& \tMcU \\
      {\pi^*\pi_*(\tMcU \times \McU)} & {\tMcU \times \tMcU} & \tMcU \\
      & {\tMcU \times \McU} & \McU \\
      {\pi_*(\tMcU \times \McU)} &&& \McU \\
      && \tMcU \\
      && \McU
      \arrow["\pair", dashed, from=1-1, to=1-4]
      \arrow[from=1-1, to=2-1, two heads]
      \arrow[from=1-1, to=2-2]
      \arrow["\pi", from=1-4, to=4-4, two heads]
      \arrow["\ev\relax"{description}, from=2-1, to=3-2]
      \arrow[from=2-1, to=4-1, two heads]
      \arrow[from=2-2, to=2-3]
      \arrow[from=2-2, to=3-2, two heads]
      \arrow[from=2-3, to=3-3, two heads]
      \arrow[from=3-2, to=3-3]
      \arrow["\pi", from=5-3, to=6-3, two heads]
      \arrow[from=4-1, to=6-3]
      \arrow["\Sigma"'{pos=0.8}, dashed, from=4-1, to=4-4]
      \arrow[curve={height=18pt}, from=2-1, to=5-3, crossing over]
      \arrow["{\proj_1}"{description, pos=0.3}, from=3-2, to=5-3, crossing over]
    \end{tikzcd}
  \end{equation*}
\end{definition}

Also in the spirit of logical quantifiers as adjoints, and by the Beck-Chevalley
argument as described in \Cref{lem:gen-comp-psfw}, $\Pi$-types are
type-theoretic manifestations of universal quantifiers and so $\Pi$-type
structures as axiomatised using the polynomial functor from
\Cref{def:expn-polynomial} associated with the universal fibration.
\begin{definition}[{\cite[Definition 2.2]{voe17}}]\label{def:pi-type}
  Let $\pi \colon \tMcU \to \McU$ be a universal map in a finitely complete category
  $\bC$.

  An \emph{extensional $\Pi$-type} structure on $\tMcU \to \McU$ is a
  $\pi$-fibrancy structure $(\lam, \Pi)$ on the image of the polynomial functor
  associated with $\tMcU \to \McU$ applied to itself.
  \begin{equation*}
    \begin{tikzcd}[cramped]
      {\tMcU \times \tMcU} & {\pi_*(\tMcU \times \tMcU)} & \tMcU \\
      {\tMcU \times \McU} & {\pi_*(\tMcU \times \McU)} & \McU \\
      \tMcU \\
      & \McU
      \arrow[from=1-1, to=2-1]
      \arrow["\lam", dashed, from=1-2, to=1-3]
      \arrow[from=1-2, to=2-2]
      \arrow["\lrcorner"{anchor=center, pos=0.15, scale=1.5}, draw=none, from=1-2, to=2-3]
      \arrow["\pi", from=1-3, to=2-3]
      \arrow[from=2-1, to=3-1]
      \arrow["\Pi"', dashed, from=2-2, to=2-3]
      \arrow[from=2-2, to=4-2]
      \arrow["\pi"', from=3-1, to=4-2]
    \end{tikzcd}
  \end{equation*}
\end{definition}

The $\Sigma$- and $\Pi$-type structures from \Cref{def:pi-type,def:sigma-type}
are indeed the generic fibred logical quantifiers in the following sense.
\begin{proposition}\label{prop:Pi-Sigma-generic}
  Let $\pi \colon \tMcU \to \McU$ be a universal map in a finitely complete
  category $\bC$.
  \begin{enumerate}
    \item If $\tMcU \to \McU$ is equipped with an extensional $\Sigma$-type
    structure then $\pi$-fibrations are closed under composition.
    \item If $\tMcU \to \McU$ is equipped with an extensional $\Pi$-type
    structure then pushforwards of $\pi$-fibrations along $\pi$-fibrations exist
    and remain $\pi$-fibrations.
  \end{enumerate}
\end{proposition}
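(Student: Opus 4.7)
The plan, in both cases, is to realise the map of interest as a pullback of something that the given type structure has already equipped with a $\pi$-fibrancy structure, so that pulling this structure back produces a $\pi$-fibrancy structure on the original map.

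For (1), given $\pi$-fibrations $q_1 \colon X_1 \twoheadrightarrow X_0$ and $q_2 \colon X_2 \twoheadrightarrow X_1$ with chosen fibrancy structures $(\ceil{X_1},\widetilde{\ceil{X_1}})$ and $(\ceil{X_2},\widetilde{\ceil{X_2}})$, I would invoke \Cref{lem:gen-comp} to produce the map $\ceil{X_1}.\ceil{X_2}\colon X_0 \to \pi_*(\tMcU\times\McU)$ exhibiting the composite $q_1\circ q_2 \colon X_2\to X_0$ as a pullback of $\GenComp(\pi,\pi)$ along it. Since an extensional $\Sigma$-type structure is by definition a $\pi$-fibrancy structure $(\Sigma,\pair)$ on $\GenComp(\pi,\pi)$, pasting this onto the square from \Cref{lem:gen-comp} immediately yields a $\pi$-fibrancy structure on $q_1\circ q_2$ with name $\Sigma\circ(\ceil{X_1}.\ceil{X_2})$.

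For (2), given $\pi$-fibrations $q \colon E \twoheadrightarrow B$ and $p \colon B \twoheadrightarrow X$, I would again use \Cref{lem:gen-comp} to form $\ceil{B}.\ceil{E}\colon X\to\pi_*(\tMcU\times\McU)$ and let $P\to X$ denote the pullback of $\bP_\pi(\pi) = \pi_*(\tMcU\times\tMcU)\to\pi_*(\tMcU\times\McU)$ along it. Since the extensional $\Pi$-type structure is a $\pi$-fibrancy structure $(\lam,\Pi)$ on $\bP_\pi(\pi)$, the map $P\to X$ inherits a $\pi$-fibrancy structure with name $\Pi\circ(\ceil{B}.\ceil{E})$.

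The main obstacle is identifying $P\to X$ with the pushforward $p_*(q)$. For this I would invoke \Cref{lem:gen-comp-psfw}, which expresses $\bP_\pi(\pi)$ as the pushforward of the first leg $\ev^*(\tMcU\times\tMcU)\to\pi^*\pi_*(\tMcU\times\McU)$ of $\GenComp(\pi,\pi)$ along its second leg, together with the Beck--Chevalley condition for the (exponentiable) second leg along the base change by $\ceil{B}.\ceil{E}$. Unpacking \Cref{lem:gen-comp}, the pullback of the first leg along the induced map $B\to\pi^*\pi_*(\tMcU\times\McU)$ returns $q\colon E\to B$, while the pullback of the second leg along $\ceil{B}.\ceil{E}$ is $p\colon B\to X$; Beck--Chevalley then identifies $P$ with $p_*(q)$, simultaneously establishing existence of this pushforward and exhibiting it as a $\pi$-fibration.
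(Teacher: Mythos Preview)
Your proposal is correct and follows essentially the same approach as the paper, which tersely cites \Cref{lem:gen-comp} for part (1) and \Cref{lem:gen-comp-psfw} for part (2). Your write-up simply spells out the details: for (1) you explicitly paste the $\Sigma$-structure square onto the one from \Cref{lem:gen-comp}, and for (2) you make explicit the Beck--Chevalley step (pushforward along an exponentiable map commutes with base change) that underlies the application of \Cref{lem:gen-comp-psfw}.
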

\begin{proof}
  The first part is by \Cref{lem:gen-comp} and the second part is by
  \Cref{lem:gen-comp-psfw}.
\end{proof}

\begin{corollary}\label{cor:local-hom-fib}
  Let $\pi \colon \tMcU \to \McU$ be a universal map in a finitely complete
  category $\bC$.
  If $\tMcU \to \McU$ has a $\Pi$-type structure then following maps
  $\pi$-fibrations.
  \begin{align*}
    [\tMcU \times \McU, \McU \times \tMcU]_{\McU \times \McU} \to \McU \times \McU
    &&
    [\McU \times \tMcU, \tMcU \times \McU]_{\McU \times \McU} \to \McU \times \McU
       &&
          [\tMcU, \tMcU]_{\McU} \to \McU
  \end{align*}
\end{corollary}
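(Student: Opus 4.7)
The plan is to recognise each displayed map as a pushforward of a $\pi$-fibration along a $\pi$-fibration and then invoke the second clause of \Cref{prop:Pi-Sigma-generic}. The only non-formal observation required is the standard identification: for an exponentiable map $p \colon X \to B$ and a map $q \colon Y \to B$, the internal hom of $p$ and $q$ in $\sfrac{\bC}{B}$ is given by $[X,Y]_B = p_*(p^*Y \to X)$, where $p^*Y \to X$ is viewed as an object of $\sfrac{\bC}{X}$. I would state this as a one-line reminder and then treat each case as a direct application.

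For the first map, I would set $p \coloneqq \pi \times \McU \colon \tMcU \times \McU \to \McU \times \McU$ and $q \coloneqq \McU \times \pi \colon \McU \times \tMcU \to \McU \times \McU$. Both are pullbacks of $\pi$ along the respective product projections, hence $\pi$-fibrations with the evident fibrancy structures. Computing $p^*q$ gives $\tMcU \times \pi \colon \tMcU \times \tMcU \to \tMcU \times \McU$, which is again a pullback of $\pi$ and therefore a $\pi$-fibration over $\tMcU \times \McU$. Applying \Cref{prop:Pi-Sigma-generic}(2) to the pair $(p,\,p^*q)$, the pushforward $p_*(p^*q) = [\tMcU \times \McU,\, \McU \times \tMcU]_{\McU \times \McU} \to \McU \times \McU$ is a $\pi$-fibration.

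The second map is entirely symmetric: take $p \coloneqq \McU \times \pi$ and $q \coloneqq \pi \times \McU$; then $p^*q$ is $\pi \times \tMcU \colon \tMcU \times \tMcU \to \McU \times \tMcU$, which is a $\pi$-fibration, and another appeal to \Cref{prop:Pi-Sigma-generic}(2) concludes. For the third map, take $p = q = \pi \colon \tMcU \to \McU$; here $p^*q$ is one of the projections $\tMcU \times_\McU \tMcU \to \tMcU$, which is a pullback of $\pi$ and thus a $\pi$-fibration, so $p_*(p^*q) = [\tMcU,\tMcU]_\McU \to \McU$ is a $\pi$-fibration by the same proposition.

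There is essentially no obstacle: the substance of the corollary is carried entirely by \Cref{prop:Pi-Sigma-generic}(2), and the only bookkeeping is to verify that the relevant pullbacks of $\pi$ are again $\pi$-fibrations, which is immediate from the definition since the pullback squares exhibiting them compose with the canonical square for $\pi$ to produce the required fibrancy structure.
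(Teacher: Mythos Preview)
Your proposal is correct and follows essentially the same approach as the paper: identify each internal hom as the pushforward of a pullback of $\pi$ along another pullback of $\pi$, and then invoke \Cref{prop:Pi-Sigma-generic}(2). The paper's proof sketches only the first case (with the roles of the two factors swapped relative to your write-up) and declares the others similar, so your version is in fact a more explicit rendering of the same argument.
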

\begin{proof}
  All cases are similar.
  For
  $[\tMcU \times \McU, \McU \times \tMcU]_{\McU \times \McU} \to \McU \times
  \McU$ for example, this internal-Hom is the pushforward of
  $(\tMcU \times \McU) \times_{\McU \times \McU} (\McU \times \tMcU) = \tMcU
  \times \tMcU \to \McU \times \tMcU$ along
  $\McU \times \tMcU \to \McU \times \McU$, so the result follows by
  \Cref{prop:Pi-Sigma-generic}.
\end{proof}

\subsection{CwR Presentations of Type Theory}
The universe category axiomatisation of various logical constructions also lend
themselves easily to formulation of various type-theories as CwRs.

We first recall the definition of CwRs by \citeauthor{uem23} \cite{uem23} and
the bicocompleteness of the category of CwRs due to \citeauthor{jel25}
\cite{jel25}.

\begin{definition}[{\cite[Definition 2.3.1]{jel25}, \cite[Definition
    3.2.1]{uem23}}]\label{def:cwr}
  A \emph{category with representable maps (CwR)} is a category $\bC$ with
  finite limits equipped with a replete wide subcategory of pullback-stable
  class $\MsR_\bC$ of exponentiable maps called the \emph{representable maps}.

  A map of CwRs is a map between their underlying categories preserving finite
  limits, representable maps, and pushforwards along representable maps.

  Denote by $\CwR$ the 2-category of (small) CwRs, maps of CwRs, and natural
  transformations whose square at representable maps are pullbacks.
\end{definition}

\begin{definition}\label{def:marked-cat}
  Denote by $\Cat_\Msm$ the 2-category of (small) categories with marked maps and squares.

  That is, a marked category is a category $\bC$ equipped with two specific
  choices of replete wide subcategories $\McM_\bC \hookrightarrow \bC$ and
  $\McS_\bC \hookrightarrow \bC^\to$.
  A 1-cell between marked categories is exactly a functor between underlying
  categories sending marked maps and squares to marked maps and squares.
  A 2-cell between 1-cells of marked categories is a natural transformation
  whose naturality square at marked arrows are also marked squares.

  Often, we denote the marked maps by the arrow ${\bulletto}$.
\end{definition}

The representable maps of a CwR give it the structure of a marked category by
taking the marked maps as representable maps and marked squares as pullback
squares.
Conversely, each marked category freely gives rise to a CwR due to the following
result by \citeauthor{jel25}.

\begin{theorem}[{\cite[Corollaries 3.2.16 and 3.2.17]{jel25}}]
  The forgetful 2-functor $\abs{-} \colon \CwR \to \Cat_\Msm$ has a left
  biadjoint $\vbr{-} \colon \Cat_\Msm \to \CwR$ and $\CwR$ has all bicolimits.
  \def\endingmark{\qedsymbol}
\end{theorem}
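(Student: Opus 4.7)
The plan is to cite Jelínek's approach but sketch the key moves, since the statement is attributed to \cite{jel25}. The two parts are naturally separate, but the second follows from the first together with a standard 2-monadicity argument.

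For the left biadjoint $\vbr{-}$, the strategy is to freely generate a CwR from a marked category $(\bC, \McM_\bC, \McS_\bC)$ by a transfinite closure construction. First I would freely adjoin finite limits to $\bC$, obtaining the finite-limit completion $\widehat\bC$, equipped with the smallest pullback-stable class of exponentiable maps containing the image of $\McM_\bC$. Next, I would iteratively freely adjoin the required pushforwards along these representable maps, each step producing a new category in which the pushforwards are formally added as right adjoints to pullback, and then requantify which maps count as representable by closure under pullback. This process is iterated transfinitely and one takes the colimit, simultaneously identifying canonical pushforward data along isomorphic representables (this is the essence of the $\MsS_\bC$-markings: the squares designated as pullbacks in the source must still be pullbacks in the target, which forces the quotient). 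The biadjunction is then verified by the universal property: any 1-cell of marked categories $F \colon (\bC, \McM_\bC, \McS_\bC) \to \abs\bD$ extends essentially uniquely to a CwR map $\widehat F \colon \vbr{(\bC, \McM_\bC, \McS_\bC)} \to \bD$, because each newly adjoined finite limit, pushforward, and coherence cell is forced (up to canonical isomorphism) by the requirement that $\widehat F$ preserve finite limits and pushforwards along representables.

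For bicocompleteness, the cleanest route is to observe that the biadjunction exhibits $\CwR$ as (equivalent to) the 2-category of pseudoalgebras for an accessible 2-monad $T$ on $\Cat_\Msm$, where $T = \abs{\vbr{-}}$. The 2-category $\Cat_\Msm$ itself is bicocomplete, since it sits over $\Cat$ via a forgetful functor whose fibres are posetal (choices of wide replete subcategory of maps and of squares), and bicolimits in $\Cat_\Msm$ are built by taking the bicolimit in $\Cat$ and then equipping it with the smallest collection of marked maps and squares containing the images of those in the diagram. Given these ingredients, one invokes the general result that pseudoalgebras for an accessible 2-monad on a bicocomplete 2-category inherit bicocompleteness: for each small diagram $D \colon \bI \to \CwR$, the bicolimit is constructed as the bicoequalizer of a reflexive pair between $\vbr{\bicolim \abs{D}}$ and $\vbr{\bicolim \abs{T D}}$, using the $T$-algebra structure on each $D(i)$.

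The main obstacle I anticipate is bookkeeping the coherence and pullback-stability of representable maps in the free construction: each iteration of adjoining pushforwards creates new candidates for representability (pullbacks of old representables along new representables), and one must verify by induction that the canonical cells witnessing Beck--Chevalley and pushforward preservation are compatible after passing to the transfinite colimit. For the bicolimit half, the subtlety is that ``being representable'' is extra structure rather than a property at the level of $\CwR$-morphisms, so the coherence 2-cells in the bicolimit require care — precisely what forces the use of pseudoalgebras for the 2-monad $T$ rather than strict algebras, and what the 2-cells in $\Cat_\Msm$ were designed to track.
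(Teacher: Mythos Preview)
The paper does not prove this theorem at all: it is stated with the citation \cite[Corollaries 3.2.16 and 3.2.17]{jel25} and closed with a bare \qedsymbol, i.e.\ it is quoted as a black box from Jel\'{i}nek's work. So there is no proof in the paper for your proposal to be compared against.

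Your sketch is therefore strictly more than what the paper offers. As a plausibility outline it is reasonable: a free-completion-by-iteration construction for the left biadjoint and a 2-monadicity argument for bicocompleteness are both standard strategies for results of this shape. That said, since neither you nor the paper has access to the actual argument in \cite{jel25}, the honest move here is simply to cite the result as the paper does rather than to supply a speculative sketch whose details (accessibility of the 2-monad, the precise form of the iterative closure, the handling of the marked squares) you cannot verify and which may or may not match Jel\'{i}nek's actual proof.
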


Bicocompleteness of CwRs give rise to the following construction.
\begin{construction}\label{constr:cwr-free-quotient}
  Let $\bC$ be a CwR.
  Fix two maps $\bI \to \bJ$ and $\bI \to \abs{\bC}$ of
  marked categories.
  Then, we write $\bC \cup_\bI \bJ$ for the following bipushout in $\CwR$
  \begin{equation*}
    \begin{tikzcd}[cramped]
      {\vbr{\bI}} & \bC \\
      {\vbr{\bJ}} & {\bC \cup_\bI \bJ}
      \arrow[from=1-1, to=1-2]
      \arrow[from=1-1, to=2-1]
      \arrow[from=1-2, to=2-2]
      \arrow[from=2-1, to=2-2]
      \arrow["\lrcorner"{anchor=center, pos=0.15, scale=1.5, rotate=180}, draw=none, from=2-2, to=1-1]
    \end{tikzcd}
  \end{equation*}
  where the map $\vbr{\bI} \to \bC$ in the top row is the
  $(\vbr{-} \dashv \abs{-})$-transpose of the map $\bI \to \abs{\bC}$ and map in
  the left row is the image of $\bI \to \bJ$ under $\vbr{-}$, as per
  \cite[Corollaries 3.2.16 and 3.2.17]{jel25}.
\end{construction}

This construction allows one to derive the following constructions.
\begin{construction}\label{constr:cwr-free-basic}
  Let $\bC$ be a CwR.
  \begin{enumerate}
    \item\label{itm:cwr-add-free-map} The CwR obtained by formally adjoining a map between two objects
    $X,Y \in \bC$ is the bipushout
    \begin{equation*}
      \bC[X \to Y] \coloneqq \bC \cup_{\set{X\quad Y}} \set{X \to Y} \in \CwR
    \end{equation*}
    \item\label{itm:cwr-add-free-square} The CwR obtained by formally adjoining a commutative square between
    two maps $f \colon X \to X'$ and $g \colon Y \to Y'$ is the bipushout
    \begin{equation*}
      \bC[f \to g] =
      \bC\left[
        \begin{tikzcd}[cramped, row sep=small, column sep=small]
          X \ar[d] \ar[r] & Y \ar[d] \\
          X' \ar[r] & Y'
        \end{tikzcd}
      \right]
      \coloneqq
      \bC \cup_{\left\{\begin{tikzcd}[cramped, row sep=tiny, column sep=tiny]\scriptscriptstyle X \ar[d] &\scriptscriptstyle Y \ar[d] \\ \scriptscriptstyle X' & \scriptscriptstyle Y' \end{tikzcd} \right\}}
      \left\{
        \begin{tikzcd}[cramped, row sep=small, column sep=small]
          X \ar[d] \ar[r] & Y \ar[d] \\
          X' \ar[r] & Y'
        \end{tikzcd}
      \right\} \in \CwR
    \end{equation*}
    \item\label{itm:cwr-make-free-iso} The CwR obtained by formally inverting a
    wide subcategory of maps $W \hookrightarrow \bC$ is the bipushout along the inclusion of $W$
    into its homotopical category $W^{-1}W$ where all maps are marked as weak
    equivalences.
    \begin{equation*}
      W^{-1}\bC \coloneqq \bC \cup_{W} W^{-1}W
    \end{equation*}
    \item\label{itm:cwr-add-free-pb} The CwR obtained by formally adjoining a pullback square between two
    maps $f \colon X \to X'$ and $g \colon Y \to Y'$ is the bicolimit obtained
    by first formally adjoining a square $f \to g$ and then formally inverting
    the comparison map $X \to X' \times_{Y'} Y$.
    \begin{equation*}
      \bC[f \xrightarrow{\pb} g] =
      \bC\left[
        \begin{tikzcd}[cramped, row sep=small, column sep=small]
          X \ar[d] \ar[r] \ar[rd, draw=none, "\lrcorner"{anchor=center, pos=0.15, scale=1.5}] & Y \ar[d] \\
          X' \ar[r] & Y'
        \end{tikzcd}
      \right]
      \coloneqq (X \to X' \times_{Y'} Y)^{-1}\bC[f \to g]
    \end{equation*}
  \end{enumerate}
\end{construction}

\begin{remark}
  It was pointed out by \citeauthor{jel25} that using marked squares,
  \Cref{constr:cwr-free-basic}\Cref{itm:cwr-add-free-pb} can alternatively be
  accomplished by freely adjoining a \emph{marked} square.
\end{remark}

The constructions above admit the following (1-categorical) universal properties
of the bipushout.
\begin{lemma}\label{lem:cwr-free-constrs}
  Let $\bC$ be a CwR. Referring to \Cref{constr:cwr-free-basic},
  \begin{enumerate}
    \item\label{itm:cwr-free-map} Given objects $X,Y\in\bC$, isomorphism classes
    of maps $\bC[X \to Y] \to \bD \in \CwR$ are in bijective correspondence with
    isomorphism classes of maps $F \colon \bC \to \bD$ and a choice of map
    $FX \to FY \in \bD$.
    \item\label{itm:cwr-free-square} Given two maps $f \colon X \to X'$ and
    $g \colon Y \to Y'$ in $\bC$, isomorphism classes of maps
    $\bC[f \to g] \to \bD \in \CwR$ are in bijective correspondence with
    isomorphism classes of maps $F \colon \bC \to \bD$ and a choice of commuting
    square $Ff \to Fg \in \bD^\to$.
    \item\label{itm:cwr-free-iso} Given a wide subcategory of maps
    $W \hookrightarrow \bC$, isomorphism classes of maps
    $W^{-1}\bC \to \bD \in \CwR$ are in bijective correspondence with
    isomorphism classes of maps $F \colon \bC \to \bD$ sending all maps in $W$
    to isomorphisms.
    \item\label{itm:cwr-free-pb} Given two maps $f \colon X \to X'$ and
    $g \colon Y \to Y'$ in $\bC$, isomorphism classes of maps
    $\bC[f \xrightarrow{\pb} g] \to \bD$ are in bijective correspondence with
    isomorphism classes of maps $F \colon \bC \to \bD$ and a choice of a
    pullback square $Ff \to Fg \in \bD$.
  \end{enumerate}
\end{lemma}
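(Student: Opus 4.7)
The plan is to derive each of the four items uniformly from the universal property of the bipushout in $\CwR$ combined with the biadjunction $\vbr{-} \dashv \abs{-}$. For any shape data $\bI \to \bJ$ in $\Cat_\Msm$ and $\bI \to \abs{\bC}$ entering \Cref{constr:cwr-free-quotient}, the bipushout $\bC \cup_\bI \bJ$ enjoys the following property: isomorphism classes of $1$-cells $\bC \cup_\bI \bJ \to \bD$ in $\CwR$ correspond bijectively to isomorphism classes of pairs $(F \colon \bC \to \bD,\ G \colon \vbr{\bJ} \to \bD)$ of maps in $\CwR$ whose restrictions along $\vbr{\bI}$ agree up to isomorphism. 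The biadjunction then replaces $G$ by the equivalent datum of a map $\bJ \to \abs{\bD}$ in $\Cat_\Msm$ whose restriction to $\bI$ matches the composite $\bI \to \abs{\bC} \to \abs{\bD}$ arising from $F$. Each item is obtained by identifying this extension datum in the relevant case.

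For \ref{itm:cwr-free-map}, one has $\bI = \{X, Y\}$ (discrete) and $\bJ = \{X \to Y\}$ (the walking morphism, with the non-identity arrow unmarked); a map of marked categories $\bJ \to \abs{\bD}$ extending the specified $\bI \to \abs{\bD}$ is exactly a choice of morphism $FX \to FY$ in $\bD$. For \ref{itm:cwr-free-square}, $\bI$ is the parallel-pair boundary and $\bJ$ is the walking commutative square, so the extension datum is exactly a commutative square $Ff \to Fg$ in $\bD$.

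For \ref{itm:cwr-free-iso}, the homotopical category $W^{-1}W$ adjoins to $W$ a formal two-sided inverse for each $f \in W$, so any map $G \colon W^{-1}W \to \abs{\bD}$ extending $F$ forces each $Ff$ to be an isomorphism in $\bD$ (its inverse being $G(f^{-1})$); conversely, every $F$ that sends $W$ to isomorphisms extends uniquely. Marking all maps of $W^{-1}W$ as weak equivalences imposes no extra constraint here, since the representable maps of any CwR form a replete, wide, pullback-stable subcategory that in particular contains every isomorphism.

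Finally, \ref{itm:cwr-free-pb} follows by composing the previous two: by \ref{itm:cwr-free-square} a map out of $\bC[f \to g]$ consists of $F \colon \bC \to \bD$ together with a commutative square $Ff \to Fg$, and by \ref{itm:cwr-free-iso} further inverting the comparison map $X \to X' \times_{Y'} Y$ forces this square to be a pullback in $\bD$. The only real subtlety is that bipushout universal properties in a $2$-category track compatibility up to isomorphism at each stage, so one needs the biadjunction unit and counit to be tracked carefully through the two-step bicolimit; I expect this bookkeeping to be the most delicate part of the write-up, but no genuine obstacle arises beyond the already-cited \cite[Corollaries 3.2.16 and 3.2.17]{jel25}.
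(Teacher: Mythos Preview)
Your proposal is correct and follows exactly the approach the paper takes: the paper's proof is the single sentence ``Straightforward by the universal property of the bipushout,'' and what you have written is precisely a careful unpacking of that sentence case by case via the biadjunction $\vbr{-} \dashv \abs{-}$. Your elaboration, including the remark on why the marking in the $W^{-1}W$ case imposes no extra constraint and the acknowledgement of the $2$-categorical bookkeeping, goes beyond what the paper records but is entirely in line with its intent.
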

\begin{proof}
  Straightforward by the universal property of the bipushout.
\end{proof}

The various logical structures of intensional type theory can now be directly
translated into the framework of CwRs.
\begin{construction}\label{constr:cwr-frags}
  The various fragments of intensional type theory constructed as CwRs are given
  as follows.
  \begin{enumerate}
    \item[\Cref{def:univ-struct}] \label{itm:cwr-univ} The CwR $\Univ$ of a generic universal map is the
    free CwR generated by the walking marked arrow, which is the same as the one
    from \cite[Notation 4.3.1]{jel25} or \cite[Example 3.2.6]{uem23}.
    \begin{equation*}
      \Univ \coloneqq \vbr{\set{\pi \colon \tMcU \bulletto \McU}}
    \end{equation*}
    \item[\Cref{def:int-univ}] \label{itm:cwr-int-Univ} The CwR $\IntUniv$ of a generic universal map
    equipped with a generic internal (fibrant) universe is the CwR obtained from
    $\Univ$ by formally adjoining a map $\tMcU_0 \to \McU_0$ to $\Univ$ and the
    two pullback squares as from \Cref{def:int-univ} using iterated applications
    of \Cref{constr:cwr-free-basic}.
    \begin{equation*}
      \IntUniv \coloneqq \Univ
      \left[\begin{tikzcd}[cramped, row sep=small, column sep=small]
          \tMcU_0 \ar[d] \\ \McU_0
        \end{tikzcd}
      \right]
      \left[
        \begin{tikzcd}[cramped, row sep=small, column sep=small]
          \tMcU_0 \ar[d] \ar[r] \ar[rd, draw=none, "\lrcorner"{anchor=center, pos=0.15, scale=1.5}] & \McU_0 \ar[d] \\
          \tMcU \ar[r] & \McU
        \end{tikzcd}
      \right]
      \left[
        \begin{tikzcd}[cramped, row sep=small, column sep=small]
          \McU_0 \ar[d] \ar[r] \ar[rd, draw=none, "\lrcorner"{anchor=center, pos=0.15, scale=1.5}] & \tMcU \ar[d] \\
          1 \ar[r] & \McU
        \end{tikzcd}
      \right]
    \end{equation*}
    \item[\Cref{def:Id-type}] \label{itm:cwr-Id} The CwR
    $\Univ_{\text{\sffamily pre-}\Id}$ of a generic universal map equipped with
    a generic pre-$\Id$-type structure is obtained from $\Univ$ by freely
    adjoining a square of the form \Cref{eqn:pre-Id-def} from
    \Cref{def:pre-id-type}.
    \begin{equation*}
      \Univ_{\text{\sffamily pre-}\Id} \coloneqq \Univ
      \left[
        \begin{tikzcd}[cramped, row sep=small, column sep=small]
          \tMcU & \tMcU \\
          {\tMcU \times_\McU \tMcU} & \McU
          \arrow["{\ceil{\refl}}", dashed, from=1-1, to=1-2]
          \arrow[from=1-1, to=2-1, "{\Delta}"']
          \arrow[from=1-2, to=2-2]
          \arrow["\Id"', dashed, from=2-1, to=2-2]
        \end{tikzcd}
      \right]
    \end{equation*}
    The CwR of a generic $\Id$-type structure is constructed from
    $\Univ_{\text{\sffamily pre-}\Id}$ by formally adjoining a local lifting
    structure using \cite[Construction 3.8]{struct-lift}
    \begin{equation*}
      \Univ_{\Id} \coloneqq
      \Univ_{\text{\sffamily pre-}\Id}
      \left[
      \begin{tikzcd}[cramped, row sep=small, column sep=small]
        \tMcU \ar[d,hook,"{\refl}"'] \\ \Id_\McU(\tMcU)
      \end{tikzcd}
      \fracsquareslash{\McU}
      \begin{tikzcd}[cramped, row sep=small, column sep=small]
        \McU\times\tMcU \ar[d] \\ \McU\times\McU
      \end{tikzcd}
      \right]
    \end{equation*}
    \item[\Cref{def:sigma-type}] The CwR $\Univ_{\Sigma}$ of a generic universal
    map equipped with a generic $\Sigma$-type structure is obtained from $\Univ$
    by freely adjoining a pullback square from $\GenComp(\pi,\pi)$ of
    \Cref{constr:gen-comp} to $\pi$
    \begin{equation*}
      \Univ_{\Sigma} \coloneqq
      \Univ
      \left[
        \begin{tikzcd}[cramped, row sep=small, column sep=small]
          \ev^*(\tMcU \times \tMcU) \ar[d,"{\GenComp(\pi,\pi)}"'] \ar[rd, "\lrcorner"{anchor=center, pos=0.15, scale=1.5}, draw=none] & \tMcU \\
          \pi_*(\tMcU \times \McU) & \McU
          \arrow["{\pair}", dashed, from=1-1, to=1-2]
          \arrow[from=1-2, to=2-2]
          \arrow["\Sigma"', dashed, from=2-1, to=2-2]
        \end{tikzcd}
      \right]
    \end{equation*}
    \item[\Cref{def:pi-type}] The CwR $\Univ_{\Pi}$ of a generic universal map
    equipped with a generic $\Pi$-type structure is obtained from $\Univ$ by
    freely adjoining a pullback square from $\pi_*(\tMcU \times \pi)$ of
    \Cref{constr:gen-comp} to $\pi$
    \begin{equation*}
      \Univ_{\Pi} \coloneqq
      \Univ
      \left[
        \begin{tikzcd}[cramped, row sep=small, column sep=small]
          \pi_*(\tMcU \times \tMcU) \ar[d] \ar[rd, "\lrcorner"{anchor=center, pos=0.15, scale=1.5}, draw=none] & \tMcU \\
          \pi_*(\tMcU \times \McU) & \McU
          \arrow["{\lam}", dashed, from=1-1, to=1-2]
          \arrow[from=1-2, to=2-2]
          \arrow["\Pi"', dashed, from=2-1, to=2-2]
        \end{tikzcd}
      \right]
    \end{equation*}
  \end{enumerate}
  Each of these constructions are objects in $\sfrac{\Univ}{\CwR}$, where the
  maps from $\Univ$ are the obvious colimiting maps from their constructions as
  bicolimits.
\end{construction}

The CwRs as above have the following universal properties.

\begin{proposition}\label{prop:cwr-itt}
  Let $\bC$ be a finitely complete category.
  \begin{enumerate}
    \item $\bC$ has the structure of a CwR by taking the representable maps to
    be the exponentiable maps.
    \item Isomorphism classes of maps $\Univ \to \bC \in \CwR$ are in bijective
    correspondence with isomorphism classes of choices of universal maps in
    $\bC$.
    \item Factorisations of $M \colon \Univ \to \bC \in \CwR$ via
    $\Univ \to \bT$ where
    $\bT = \IntUniv,\Univ_\Id,\Univ_\Sigma,\allowbreak\Univ_\Pi$ are
    respectively in bijective correspondence with choices of internal universe,
    $\Id$-type, $\Sigma$-type, $\Pi$-type structures on the universal map
    $M\pi \colon M\tMcU \to M\McU \in \bC$
  \end{enumerate}
\end{proposition}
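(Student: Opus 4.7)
The three parts each follow by assembling standard facts about exponentiable maps with the universal properties of free CwRs from \Cref{lem:cwr-free-constrs}. I would treat them in turn.

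For (1), the exponentiable maps in any finitely complete category $\bC$ form a replete, pullback-stable, wide subcategory: identities and isomorphisms are trivially exponentiable; closure under composition follows from composition of the corresponding adjunctions; pullback-stability is a standard Beck--Chevalley argument. Hence $(\bC, \text{exponentiable maps})$ is a CwR in the sense of \Cref{def:cwr}.

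For (2), by construction $\Univ = \vbr{\set{\pi \colon \tMcU \bulletto \McU}}$ is the image of the walking marked arrow under the left biadjoint $\vbr{-} \dashv \abs{-}$. Hence isomorphism classes of CwR maps $\Univ \to \bC$ biject with morphisms in $\Cat_\Msm$ from the walking marked arrow into $\abs{\bC}$, i.e., with choices of a representable (by (1), exponentiable) map in $\bC$. Because a CwR map must preserve pushforwards along representable maps, this bijection further specializes to a choice of pushforward for the selected exponentiable, equivalently---by adjointness---a right adjoint $\var$ to post-composition $\pi_!$, which is precisely a universe structure in the sense of \Cref{def:univ-struct}.

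For (3), I would apply \Cref{lem:cwr-free-constrs} iteratively to each bipushout in \Cref{constr:cwr-frags}. A factorisation through $\IntUniv$ corresponds via items \ref{itm:cwr-free-map} and \ref{itm:cwr-free-pb} to a choice of additional map $\tMcU_0 \to \McU_0$ together with two pullback squares, matching exactly \Cref{def:int-univ}. Factorisations through $\Univ_\Sigma$ and $\Univ_\Pi$ correspond, via \ref{itm:cwr-free-pb}, to pullback squares from $\GenComp(\pi,\pi)$ and from the image of the polynomial functor $\pi_*(\tMcU \times \pi)$ into $\pi$---that is, to $\pi$-fibrancy structures on those maps, and hence by \Cref{def:sigma-type,def:pi-type} to $\Sigma$- and $\Pi$-type structures. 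For $\Univ_\Id$, one first adjoins the pre-$\Id$ square via \ref{itm:cwr-free-square}, and then uses the free-lifting-structure construction of \cite[Construction 3.8]{struct-lift} to adjoin the $\MsJ$-elimination of \Cref{eqn:J-def}, reproducing \Cref{def:Id-type}.

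The main obstacle is the careful bookkeeping in part (2): one must confirm that the pushforward-preservation data of a CwR map faithfully encodes, via adjoint transpose, the right-adjoint-to-post-composition data of a universe structure, and that this gives a well-defined bijection on \emph{isomorphism classes} using uniqueness of adjoints up to unique natural isomorphism. Once that equivalence of data is in hand, part (3) is a routine unfolding of each bipushout against the corresponding type-theoretic definition.
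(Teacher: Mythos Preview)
Your overall approach matches the paper's: closure properties of exponentiable maps for (1), the biadjunction $\vbr{-} \dashv \abs{-}$ for (2), and the universal property of the bipushouts for (3). Parts (1) and (3) are essentially identical to the paper's argument (the paper cites \cite[Corollary 1.4]{nie82} for pullback-stability and \cite[Theorem 3.9]{struct-lift} for the $\Id$-type case, but the content is the same).

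However, your elaboration in part (2) contains a conceptual confusion. You write that pushforward-preservation of a CwR map ``specializes to a choice of pushforward for the selected exponentiable, equivalently---by adjointness---a right adjoint $\var$ to post-composition $\pi_!$.'' But these are not the same thing: the universe structure of \Cref{def:univ-struct} is a choice of \emph{pullbacks} along $\pi$ (the right adjoint $\var \cong \pi^*$ to post-composition $\pi_!$), whereas the pushforward $\pi_*$ is the \emph{further} right adjoint to $\pi^*$. Pushforward-preservation is a property of a CwR map, not additional data that produces a choice of pullback. Your ``main obstacle'' paragraph is therefore chasing a non-issue born of this conflation. The paper avoids the detour entirely: since the statement is about \emph{isomorphism classes}, and any two choices of pullback (being right adjoints) are canonically isomorphic, a universal map up to isomorphism is just an exponentiable map up to isomorphism, and the biadjunction gives the bijection directly. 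Once you drop the spurious pushforward-to-universe-structure step, your argument for (2) is correct and coincides with the paper's.
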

\begin{proof}
  We show each part sequentially.

  The class of exponentiable maps of $\bC$ is closed under composition by
  composition of right adjoints.
  It is also closed under pullbacks by \cite[Corollary 1.4]{nie82}.
  This shows the first part.

  The CwR $\bC$ where representable maps are taken to be exponentiable maps also
  admits the structure of a category with marked squares where the marked maps
  are the exponentiable maps and the marked squares are taken to be the pullback
  squares.
  Then, because $\Univ = \vbr{\set{\pi \colon \tMcU \bulletto \McU}}$ is the
  image of the walking marked map under the left adjoint
  $\vbr{-} \colon \Cat_\Msm \to \CwR$ of \cite[Collary 3.2.17]{jel25}, the
  result of the second part follows by definition of the biadjunction.

  The final part follows immediately from the universal property of the bipushout for
  $\Pi$- and $\Sigma$-type.
  For the $\Id$-types, we use \cite[Theorem 3.9]{struct-lift}.
\end{proof}

\subsection{Id-type Homotopy Theory}
As $\Id$-types are generic fibred path objects for fibrations, one may phrase a
variety of homotopical concepts and constructions using pre-$\Id$-types.
We recall these ideas.

We first start with the logical concept of a \emph{propositional equality}.
\begin{definition}
  Let $\pi \colon \tMcU \to \McU$ be a universal map in a finitely complete
  category $\bC$ equipped with a pre-$\Id$-type structure
  $\Id \colon \tMcU \times_\McU \tMcU \to \McU$.

  For a fixed object $\Gamma \in \bC$ and two maps
  $\ceil{t_0},\ceil{t_1} \colon \Gamma \rightrightarrows \tMcU$ such that
  $\pi \ceil{t_0} = \pi \ceil{t_1}$, a \emph{propositional equality} between
  $\ceil{t_0}$ and $\ceil{t_1}$ is a factorisation
  $\ceil{H} \colon \Gamma \to \tMcU$ of
  $\Id(\ceil{t_0},\ceil{t_1}) \colon \Gamma \to \tMcU \times_\McU \tMcU \to
  \McU$ as follows.
  \begin{equation*}
    \begin{tikzcd}[cramped]
      && \tMcU \\
      \Gamma & {\tMcU \times_\McU \tMcU} & \McU
      \arrow["\pi", two heads, from=1-3, to=2-3]
      \arrow["{\ceil{H}}", dashed, from=2-1, to=1-3]
      \arrow["{(\ceil{t_0}, \ceil{t_1})}"', from=2-1, to=2-2]
      \arrow["\Id"', from=2-2, to=2-3]
    \end{tikzcd}
  \end{equation*}
\end{definition}
\begin{definition}
  Let $\pi \colon \tMcU \to \McU$ be a universal map in a finitely complete
  category $\bC$ equipped with an $\Id$-type structure
  $\Id \colon \tMcU \times_\McU \tMcU \to \McU$.

  Given $\Gamma \in \bC$, a \emph{propositional automorphism} at
  $A \colon \Gamma \to \McU$ consists of some
  $\ceil{t} \colon \Gamma.A \to \tMcU$ equipped with a propositional equality to
  $\var_A \colon \Gamma.A \to \tMcU$ (so in particular
  $\pi \ceil{t} = A \cdot \pi_A$).
\end{definition}

Taking pullbacks gives rise to the homotopical concept of a \emph{fibred
  homotopy} between parallel maps.
\begin{definition}
  Let $\pi \colon \tMcU \to \McU$ be a universal map in a finitely complete
  category $\bC$ equipped with a pre-$\Id$-type structure
  $\Id \colon \tMcU \times_\McU \tMcU \to \McU$.

  For a fixed $\pi$-fibrant object $E \twoheadrightarrow B \in \sfrac{\bC}{B}$
  and two parallel maps $f,g \colon X \rightrightarrows B \in \sfrac{\bC}{B}$,
  an \emph{$\Id$-homotopy} between $f$ and $g$ is a factorisation of
  $(f,g) \colon X \to E \times_B E$ via
  $\ev_\partial \colon \Id_B(E) \twoheadrightarrow E \times_B E$.
  \begin{equation*}
    \begin{tikzcd}[cramped]
      & {\Id_B(E)} \\
      X & {E \times_B E}
      \arrow["{\ev_\partial}", two heads, from=1-2, to=2-2]
      \arrow["H", dashed, from=2-1, to=1-2]
      \arrow["{(f,g)}"', from=2-1, to=2-2]
    \end{tikzcd}
  \end{equation*}
\end{definition}
\begin{definition}
  Let $\pi \colon \tMcU \to \McU$ be a universal map in a locally cartesian
  closed category $\bC$ equipped with a pre-$\Id$-type structure
  $\Id \colon \tMcU \times_\McU \tMcU \to \McU$.

  For a fixed $\pi$-fibrant object $E \twoheadrightarrow B \in \sfrac{\bC}{B}$,
  an automorphism $t \colon E \to E \in \sfrac{\bC}{B}$ is an
  \emph{$\Id$-homotopy identity} when it is $\Id$-homotopic to the identity
  map at $E$.

  An \emph{$\Id$-homotopy isomorphism} or \emph{$\Id$-homotopy equivalence} is a
  map $f \colon E_0 \to E_1 \in \sfrac{\bC}{B}$ for $\pi$-fibrant
  $E_0,E_1 \twoheadrightarrow B$ equipped with maps
  $s,r \colon E_0 \rightrightarrows E_1 \in \sfrac{\bC}{B}$ such that $fs$ and
  $rf$ are both \emph{$\Id$-homotopy identities}.
\end{definition}

Under the presence of $\Pi$-type and full $\Id$-type structures, pullbacks of
$\refl$ along fibrations also lift on the left against the universal fibration.
As part of the consequence, we obtain the usual $\transport$ map.
\begin{construction}\label{constr:transport}
  Let $\pi \colon \tMcU \to \McU$ be a universal map in a finitely complete
  $\bC$ equipped with a $\Pi$-type structure and a full-$\Id$-type structure
  $\Id \colon \tMcU \times_\McU \tMcU \to \McU$.

  Suppose that we have composable maps
  $X \twoheadrightarrow E \twoheadrightarrow B$ where each map is a
  $\pi$-fibration.
  The goal is to construct a dashed map
  $\transport \colon \ev_0^*X \to \ev_1^*X$ over $\Id_B(E)$ such that its
  pullback along $\refl \colon E \hookrightarrow \Id_B(E)$ is the
  identity at $X$.
  \begin{equation*}
    \begin{tikzcd}
      X && {\ev_0^*X} && X \\
      & X && {\ev_1^*X} \\
      E && {\Id_B(E)} && E
      \arrow[from=1-1, to=1-3, hook]
      \arrow["{=}"{description}, from=1-1, to=2-2]
      \arrow[from=1-1, to=3-1, two heads]
      \arrow[from=1-3, to=1-5]
      \arrow["\transport"{description}, dashed, from=1-3, to=2-4]
      \arrow[from=1-3, to=3-3, two heads]
      \arrow[from=1-5, to=3-5, two heads]
      \arrow[hook, crossing over, from=2-2, to=2-4]
      \arrow[from=2-2, to=3-1, two heads]
      \arrow[from=2-4, to=1-5]
      \arrow[from=2-4, to=3-3, two heads]
      \arrow["\refl"', hook, from=3-1, to=3-3]
      \arrow["{=}"{description}, curve={height=24pt}, from=3-1, to=3-5]
      \arrow["{\ev_1}"', shift right, from=3-3, to=3-5]
      \arrow["{\ev_0}", shift left, from=3-3, to=3-5]
    \end{tikzcd}
  \end{equation*}

  We do so by solving the following lifting problem
  \begin{equation*}
    \begin{tikzcd}[cramped]
      X & {\ev_1^*X} \\
      {\ev_0^*X} & {\Id_B(E)}
      \arrow[hook, from=1-1, to=1-2]
      \arrow[hook, from=1-1, to=2-1]
      \arrow[two heads, from=1-2, to=2-2]
      \arrow["\transport"{description}, dashed, from=2-1, to=1-2]
      \arrow[two heads, from=2-1, to=2-2]
    \end{tikzcd}
  \end{equation*}
  which has a solution because $X \hookrightarrow \ev_0^*X$ is the pullback of
  $E \hookrightarrow \Id_B(E)$ along a $\pi$-fibration
  $\ev_0^*X \twoheadrightarrow \Id_B(E)$ and $\pi$ admits a $\Pi$-type
  structure.
\end{construction}

The $\transport$ map above required the trivial cofibration (i.e. left lifting)
property of the $\refl$ map which is given by the $\MsJ$-structure on a full
$\Id$-type structure.
In the case of pre-$\Id$-types (such as $\Path$-types of CCHM cubical type
theory), we may not necessarily have the $\MsJ$-elimination available.
Nevertheless, we can still require a $\transport$ structure on a pre-$\Id$-type,
which is the map that one would have obtained from applying
\Cref{constr:transport} by taking $X \twoheadrightarrow E \twoheadrightarrow B$
to be the generic pair of composable fibrations
$\ev^*(\tMcU \times \tMcU) \twoheadrightarrow \pi^*\pi_*(\tMcU \times \McU)
\twoheadrightarrow \pi_*(\tMcU \times \McU)$ from \Cref{constr:gen-comp}.

\begin{definition}\label{def:transport}
  Let $\pi \colon \tMcU \to \McU$ be a universal map in a finitely complete
  $\bC$.
  A \emph{$\transport$-structure} on a pre-$\Id$-type structure
  $\Id \colon \tMcU \times_\McU \tMcU \to \McU$ is some dashed map $\transport$
  as below, where the pre-$\Id$-type is taken for the $\pi$-fibration
  $\pi^*\pi\*(\tMcU \times \McU) \twoheadrightarrow \pi_*(\tMcU \times \McU)$,
  as from \Cref{constr:gen-comp}.
  \begin{equation*}
    \begin{tikzcd}[cramped, column sep=small]
      {\ev_0^*\ev^*(\tMcU \times \tMcU)} && {\ev^*(\tMcU \times \tMcU)} & {\tMcU \times \tMcU} \\
      & {\ev_1^*\ev^*(\tMcU \times \tMcU)} \\
      {\Id_{\pi_*(\tMcU \times \McU)}(\pi^*\pi_*(\tMcU \times \McU))} && {\pi^*\pi_*(\tMcU \times \McU)} & {\tMcU \times \McU}
      \arrow[from=1-1, to=1-3]
      \arrow["\transport"{description}, dashed, from=1-1, to=2-2]
      \arrow[from=1-1, to=3-1]
      \arrow[from=1-3, to=1-4]
      \arrow[from=1-3, to=3-3]
      \arrow["\lrcorner"{anchor=center, pos=0.15, scale=1.5}, draw=none, from=1-3, to=3-4]
      \arrow[from=1-4, to=3-4]
      \arrow[from=2-2, to=1-3]
      \arrow[from=2-2, to=3-1]
      \arrow["{\ev_1}"', shift right, from=3-1, to=3-3]
      \arrow["{\ev_0}", shift left, from=3-1, to=3-3]
      \arrow["\ev", from=3-3, to=3-4]
    \end{tikzcd}
  \end{equation*}
\end{definition}

Then immediate by definition, one observes that \Cref{constr:transport} equips
full-$\Id$-type structures with a (canonical) $\transport$-structure under the
presence of $\Pi$-type structures.


\section{Generic Object of Homotopy Isomorphisms}\label{sec:gen-hiso}
For this section, we fix, in a finitely complete category $\bC$, a universal map
$\pi \colon \tMcU \to \McU$ (or more generally any $\pi$-fibration
$E \twoheadrightarrow B$, which admits an inherited choice of pullbacks making
it an universal map).
To even begin stating univalence of $\tMcU \to \McU$, we need to first construct
the object representing the $\Id$-type homotopy equivalences between $\pi$-small
fibrations when $\tMcU \to \McU$ is equipped with a pre-$\Id$-type structure
$\Id \colon \tMcU \times_\McU \tMcU \to \McU$.

As foreshadowing, when stating univalence, the constructions in this section
will be applied to the universal map $\pi_0 \colon \tMcU_0 \to \McU_0$ of an
internal universe structure on $\tMcU \to \McU$.
As such, the construction will end up classifying the $\Id_0$-type homotopy
equivalences between $\pi_0$-small fibrations when $\tMcU_0 \to \McU_0$ is
equipped with a pre-$\Id_0$-type structure
$\Id_0 \colon \tMcU_0 \times_{\McU_0} \tMcU_0 \to \tMcU_0$.
However, to prevent from having subscripts everywhere, we drop the subscript,
since this construction itself can be performed for any universal map.

\subsection{Generic Object of Homotopy Identities}
We first construct the isomorphism class of object representing $\pi$-fibrations
paired with an automorphism homotopic to the identity.
\begin{lemma}\label{lem:gen-auto-id}
  Assume that $\tMcU \to \McU$ has a pre-$\Id$-type structure
  $\Id \colon \tMcU \times_\McU \tMcU \to \McU$.

  In the diagram below,
  \begin{equation*}
    \small
    \begin{tikzcd}[cramped, column sep=tiny]
      {\tMcU \times_\McU \tMcU} && \tMcU & {\pi_*(\tMcU \times_\McU \tMcU)} \\
      & {\tMcU \times (\tMcU \times_\McU \tMcU)} &&& {\pi_*(\tMcU \times (\tMcU \times_\McU \tMcU))} \\
      && {\tMcU \times \McU} &&& {\pi_*(\tMcU \times \McU)} \\
      & \tMcU && \McU
      \arrow[from=1-1, to=1-3]
      \arrow["{(\proj_1, \id)}"{description}, from=1-1, to=2-2]
      \arrow["{\proj_1}"{description}, from=1-1, to=4-2]
      \arrow["\lrcorner"{anchor=center, pos=0, rotate=0, scale=1.5, pos=0.05}, draw=none, from=1-1, to=4-4]
      \arrow[from=1-3, to=4-4]
      \arrow["{\pi_*(\proj_1, \id)}"{description}, from=1-4, to=2-5]
      \arrow[from=1-4, to=4-4]
      \arrow["{\tMcU \times \Id}"{description}, from=2-2, to=3-3]
      \arrow["{\proj_1}"{description}, from=2-2, to=4-2]
      \arrow["{\pi_*(\tMcU \times \Id)}"{description}, from=2-5, to=3-6]
      \arrow[from=2-5, to=4-4]
      \arrow["{\proj_1}"{description}, from=3-3, to=4-2]
      \arrow[from=3-6, to=4-4]
      \arrow[from=4-2, to=4-4]
    \end{tikzcd}
  \end{equation*}
  taking the left row
  $\tMcU \times_\McU \tMcU \xrightarrow{\proj_1,\id} \tMcU \times (\tMcU
  \times_\McU \tMcU) \xrightarrow{\tMcU \times \Id} \tMcU \times \McU$, applying
  the pushforward along $\pi \colon \tMcU \to \McU$ to obtain the composite
  \begin{equation*}
    \pi_*(\tMcU \times_\McU \tMcU) \xrightarrow{\pi_*(\proj_1,\id)} \pi_*(\tMcU
    \times (\tMcU \times_\McU \tMcU)) \xrightarrow{\pi_*(\tMcU \times \Id)}
    \pi_*(\tMcU \times \McU)
  \end{equation*}
  and further applying the Yoneda embedding obtains the natural transformation
  of presheaves
  \newsavebox{\tType}
  \begin{lrbox}{\tType}\begin{tikzcd}
      \Gamma.A \ar[r, "{\ceil{t}}"] \ar[d, "{\var_A}"'] & \tMcU \ar[d] \\
      \tMcU \ar[r] & \McU
    \end{tikzcd}
  \end{lrbox}
  \begin{align*}
    \left(\coprod_{\Gamma \xrightarrow{A} \McU}
    \left\{ \Gamma.A \xrightarrow{\ceil{t}} \tMcU ~\middle|~
    \usebox{\tType} \right\}\right)_{\Gamma \in \bC}
    &\to
      \left(\coprod_{\Gamma \xrightarrow{A} \McU}
      \left\{ \Gamma.A \xrightarrow{B} \McU \right\}\right)_{\Gamma \in \bC}
    \\
    \left(\Gamma \xrightarrow{A} \McU, \Gamma.A \xrightarrow{\ceil{t}} \tMcU\right)
    &\mapsto
      \left(\Gamma \xrightarrow{A} \McU,
      \Gamma.A \xrightarrow{(\var_A,\ceil{t})} \tMcU \times_\McU \tMcU \xrightarrow{\Id} \McU\right)
  \end{align*}
\end{lemma}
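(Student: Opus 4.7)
The plan is to use the universal property of the pushforward $\pi^*\dashv\pi_*$ to identify both representable functors $\mathrm{Hom}(-,\pi_*(\tMcU\times_\McU\tMcU))$ and $\mathrm{Hom}(-,\pi_*(\tMcU\times\McU))$ with the coproducts described, and then to trace the pushforward of the left row to see that it induces the claimed formula on these identifications. This is essentially an unwinding of definitions, so no genuinely hard step is expected; the main task is careful bookkeeping of structure maps.

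First, I would identify the target presheaf. A map $\Gamma\to\pi_*(\tMcU\times\McU)$ decomposes, by postcomposition with the pushforward's structure map, as a choice of $A\colon\Gamma\to\McU$ together with a lift over $\McU$. By the adjunction $\pi^*\dashv\pi_*$, such a lift corresponds bijectively to a map $(\Gamma.A,\var_A)\to(\tMcU\times\McU,\proj_1)$ in $\sfrac{\bC}{\tMcU}$. Since $\proj_1$ is forced to equal $\var_A$, such a map is determined by its second component, giving an arbitrary $B\colon\Gamma.A\to\McU$. This establishes the decomposition on the target side.

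Next, I would identify the source presheaf analogously. A map $\Gamma\to\pi_*(\tMcU\times_\McU\tMcU)$ decomposes as $A\colon\Gamma\to\McU$ together with a $\sfrac{\bC}{\tMcU}$-map $(\Gamma.A,\var_A)\to(\tMcU\times_\McU\tMcU,\proj_1)$. Such a lift is a pair $(\var_A,\ceil{t})$ with $\pi\ceil{t}=\pi\var_A=A\cdot\pi_A$, which is precisely the data of a $\ceil{t}\colon\Gamma.A\to\tMcU$ making the displayed square with $\var_A$ commute.

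Finally, under these identifications, the natural transformation in question is obtained by composing with $(\tMcU\times\Id)\cdot(\proj_1,\id)$ in $\sfrac{\bC}{\tMcU}$, via the adjunction's unit. Chasing through, $(\proj_1,\id)$ sends the representing pair $(\var_A,\ceil{t})$ to $(\var_A,\var_A,\ceil{t})$, and then $\tMcU\times\Id$ sends this to $(\var_A,\Id\circ(\var_A,\ceil{t}))$. The second coordinate, which is exactly what survives the identification of the target, is $\Id\circ(\var_A,\ceil{t})\colon\Gamma.A\to\McU$, matching the claimed formula. Naturality in $\Gamma$ is automatic since each step is functorial; the only potential subtlety, as noted, is verifying that the various structure maps to $\tMcU$ and $\McU$ line up correctly at each stage.
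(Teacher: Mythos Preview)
Your proposal is correct and follows essentially the same approach as the paper: both unwind the representables via the $\pi^*\dashv\pi_*$ adjunction to obtain the coproduct descriptions, and then chase the composite $(\tMcU\times\Id)\cdot(\proj_1,\id)$ through the transpose to arrive at $\Id\circ(\var_A,\ceil{t})$. The only cosmetic difference is that the paper writes out the chain of bijections for the source presheaf in more detail before tracing the map, whereas you treat source and target symmetrically; the content is the same.
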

\begin{proof}
  We first note that given any map viewed as an object
  $A \colon \Gamma \to \McU \in \sfrac{\bC}{\McU}$ in the slice category, a map
  $\Gamma \to \pi_*(\tMcU \times \tMcU)$ corresponds by transpose to a map
  $\pi^*\Gamma \cong \Gamma.A \to \tMcU \times_\McU \tMcU \cong \pi^*\tMcU \in
  \sfrac{\bC}{\tMcU}$.
  Because pullback is right adjoint to post-composition, such a map is in
  bijective correspondence with a map $\Gamma.A \to \tMcU$ over $\McU$.

  Therefore, for any fixed $\Gamma \in \bC$, we can write the $\Hom$-set
  $\bC(\Gamma, \pi_*(\tMcU \times_\McU \tMcU)$ as
  \begin{align*}
    \bC(\Gamma, \pi_*(\tMcU \times_\McU \tMcU))
    &= \coprod_{A \colon \Gamma \to \McU} \sfrac{\bC}{\McU}(A, \pi_*(\tMcU \times_\McU \tMcU))
    \\
    &\cong
      \coprod_{A \colon \Gamma \to \McU} \sfrac{\bC}{\tMcU}(\pi^*A, \tMcU \times_\McU \tMcU)
    \\
    &\cong
      \coprod_{A \colon \Gamma \to \McU} \sfrac{\bC}{\tMcU}(\Gamma.A, \tMcU \times_\McU \tMcU)
    \\
    &\cong
      \coprod_{A \colon \Gamma \to \McU} \sfrac{\bC}{\tMcU}(\Gamma.A, \pi^*\tMcU)
    \\
    &\cong
      \coprod_{A \colon \Gamma \to \McU} \sfrac{\bC}{\McU}(\Gamma.A, \tMcU)
  \end{align*}
  This shows that $\pi_*(\tMcU \times_\McU \tMcU)$ represents the presheaf with
  components
  \begin{lrbox}{\tType}\begin{tikzcd}
      \Gamma.A \ar[r, "{\ceil{t}}"] \ar[d, "{\var_A}"'] & \tMcU \ar[d] \\
      \tMcU \ar[r] & \McU
    \end{tikzcd}
  \end{lrbox}
  \begin{equation*}
    \left(\coprod_{\Gamma \xrightarrow{A} \McU}
      \left\{ \Gamma.A \xrightarrow{\ceil{t}} \tMcU ~\middle|~
        \usebox{\tType} \right\}\right)_{\Gamma \in \bC}
  \end{equation*}

  Now, given a map $\Gamma \to \pi_*(\tMcU \times_\McU \tMcU)$ corresponding to
  $(A \colon \Gamma \to \McU, \ceil{t} \colon \Gamma.A \to \tMcU)$, taking the
  $(\pi^*\dashv\pi_*)$-transpose of the composite
  $\Gamma \to \pi_*(\tMcU \times_\McU \tMcU) \to \pi_*(\tMcU \times \McU)$ is
  the same as taking the composite of the transpose
  $\Gamma.A \to \tMcU \times_\McU \tMcU \to \tMcU \times \McU$, which precisely
  gives the composite
  \begin{equation*}
    \Gamma.A \xrightarrow{(\var_A, \ceil{t})}
    \tMcU \times_\McU \tMcU \xrightarrow{(\proj_1,\id)}
    \tMcU \times (\tMcU \times_\McU \tMcU) \xrightarrow{\tMcU \times \Id}
    \tMcU \times \McU \in \sfrac{\bC}{\tMcU}
  \end{equation*}
  Further taking the image of this above map under the isomorphism
  \begin{align*}
    \sfrac{\bC}{\tMcU}(\Gamma.A, \tMcU \times \McU) \cong \bC(\Gamma.A, \McU)
  \end{align*}
  gives the map
  \begin{equation*}
    \Gamma.A \xrightarrow{(\var_A, \ceil{t})} \tMcU \times_\McU \tMcU \xrightarrow{\Id} \McU
  \end{equation*}

  Therefore, the following diagram summarises our operations.
  \begin{equation*}\small
    \begin{tikzcd}[column sep=small]
      {\Gamma.A} && \Gamma \\
      & {\tMcU \times_\McU \tMcU} && \tMcU \\
      && {\tMcU \times (\tMcU \times_\McU \tMcU)} \\
      &&& {\tMcU \times \McU} \\
      && \tMcU && \McU
      \arrow[from=1-1, to=1-3]
      \arrow["{(\var_A, \ceil{t})}"{description}, from=1-1, to=2-2]
      \arrow["{\var_A}"', curve={height=12pt}, from=1-1, to=5-3]
      \arrow["{A}"{description}, curve={height=12pt}, from=1-3, to=5-5]
      \arrow["{(\proj_1, \id)}"{description}, from=2-2, to=3-3]
      \arrow["{\proj_1}"{description}, from=2-2, to=5-3]
      \arrow[from=2-4, to=5-5]
      \arrow["{\tMcU \times \Id}"{description}, from=3-3, to=4-4]
      \arrow["{\proj_1}"{description}, from=3-3, to=5-3]
      \arrow["{\proj_1}"{description}, from=4-4, to=5-3]
      \arrow[from=5-3, to=5-5]
      \arrow["{\ceil{t}}"{description}, from=1-1, to=2-4, crossing over]
      \arrow[crossing over, from=2-2, to=2-4]
    \end{tikzcd}
  \end{equation*}
\end{proof}

As an immediate corollary, one obtains the following object representing
propositional identities.
\begin{construction}\label{constr:gen-auto}
  %
  Assume $\tMcU \to \McU$ has a pre-$\Id$-type structure
  $\Id \colon \tMcU \times_\McU \tMcU \to \McU$.

  We define $\HId_\McU^\Id(\tMcU)$ as (the isomorphism class of) the pullback
  \begin{equation*}\small
    \begin{tikzcd}[column sep=tiny, cramped]
      \bullet &&& {\HId_\McU(\tMcU)} \\
      {\tMcU \times_\McU \tMcU} && \tMcU & {\pi_*(\tMcU \times_\McU \tMcU)} \\
      & {\tMcU \times (\tMcU \times_\McU \tMcU)} & {\tMcU \times \tMcU} && {\pi_*(\tMcU \times (\tMcU \times_\McU \tMcU))} & {\pi_*(\tMcU \times \tMcU)} \\
      && {\tMcU \times \McU} &&& {\pi_*(\tMcU \times \McU)} \\
      & \tMcU && \McU
      \arrow[from=1-1, to=2-1]
      \arrow[from=1-4, to=2-4]
      \arrow[from=1-4, to=3-6]
      \arrow[from=2-1, to=2-3]
      \arrow["{(\proj_1, \id)}"{description}, from=2-1, to=3-2]
      \arrow["{\proj_1}"{description}, from=2-1, to=5-2]
      \arrow[from=2-3, to=5-4]
      \arrow["{\pi_*(\proj_1, \id)}"{description}, from=2-4, to=3-5]
      \arrow[from=2-4, to=5-4]
      \arrow["{\tMcU \times \Id}"{description}, from=3-2, to=4-3]
      \arrow["{\proj_1}"{description}, from=3-2, to=5-2]
      \arrow[from=3-3, to=4-3]
      \arrow["{\pi_*(\tMcU \times \Id)}"{description}, from=3-5, to=4-6]
      \arrow[from=3-5, to=5-4]
      \arrow["{\pi_*(\tMcU \times \pi)}"{description}, from=3-6, to=4-6]
      \arrow["{\proj_1}"{description}, from=4-3, to=5-2]
      \arrow[from=4-6, to=5-4]
      \arrow[from=5-2, to=5-4]
      \arrow[crossing over, from=1-1, to=3-3]
      \arrow["\lrcorner"{anchor=center, pos=0, rotate=-20, scale=1.5, pos=0.05}, draw=none, from=1-4, to=4-6]
    \end{tikzcd}
  \end{equation*}
  By construction, it is equipped with a map into the internal-Hom
  \begin{equation*}
    \HId_\McU(\tMcU) \xrightarrow{\proj_1} \pi_*(\tMcU \times_\McU \tMcU) \cong [\tMcU,\tMcU]_\McU
  \end{equation*}
\end{construction}
\begin{corollary}\label{cor:gen-auto-syn}
  %
  Assume $\tMcU \to \McU$ has a pre-$\Id$-type structure
  $\Id \colon \tMcU \times_\McU \tMcU \to \McU$.

  Any object of the isomorphism class of $\HId_\McU^\Id(\tMcU)$ represents the
  presheaf of propositional identities.
  \begin{lrbox}{\tType}\begin{tikzcd}[column sep=small]
      \Gamma.A \ar[r, "{\ceil{t}}"] \ar[d, "{\var_A}"'] & \tMcU \ar[d] \\
      \tMcU \ar[r] & \McU
    \end{tikzcd}
  \end{lrbox}
  \newsavebox{\hEndpt}
  \begin{lrbox}{\hEndpt}
    \begin{tikzcd}[]
      && \tMcU \\
      {\Gamma.A} & {\tMcU \times_{\McU} \tMcU} & \McU
      \arrow[from=1-3, to=2-3]
      \arrow["{\ceil{H}}", from=2-1, to=1-3]
      \arrow["{(\var_A, \ceil{t})}"', from=2-1, to=2-2]
      \arrow["\Id"', from=2-2, to=2-3]
    \end{tikzcd}
  \end{lrbox}
  \begin{align*}\small
    \left(
    \coprod_{\Gamma \xrightarrow{A} \McU}
    \left\{
    (\Gamma.A \xrightarrow{\ceil{t}} \tMcU, \Gamma.A \xrightarrow{\ceil{H}} \tMcU)
    ~\middle|~
    \usebox{\tType} \text{ and } \usebox{\hEndpt}
    \right\}
    \right)_\Gamma
  \end{align*}
  and the map $\HId_\McU^\Id(\tMcU) \to \pi_*(\tMcU \times \tMcU)$ represents the
  natural transformation whose components are
  $(A, \ceil{t}, \ceil{H}) \mapsto (A, \ceil{t})$.
\end{corollary}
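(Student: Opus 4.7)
The plan is to identify the presheaf represented by each vertex of the pullback cospan defining $\HId_\McU^\Id(\tMcU)$ and then invoke preservation of limits by the Yoneda embedding.

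The left leg is handled directly by Lemma \ref{lem:gen-auto-id}: it already tells us that $\pi_*(\tMcU \times_\McU \tMcU)$ represents the presheaf of pairs $(A \colon \Gamma \to \McU,\ \ceil{t} \colon \Gamma.A \to \tMcU)$ with $\pi \ceil{t} = A \pi_A$, and that the composite down to $\pi_*(\tMcU \times \McU)$ sends such a pair to $(A, \Id \circ (\var_A, \ceil{t}))$. For the right leg, I run a parallel $(\pi^* \dashv \pi_*)$-transposition exactly as in the proof of Lemma \ref{lem:gen-auto-id}: a map $\Gamma \to \pi_*(\tMcU \times \tMcU)$ decomposes as $\coprod_{A \colon \Gamma \to \McU} \sfrac{\bC}{\tMcU}(\Gamma.A \xrightarrow{\var_A} \tMcU,\ \tMcU \times \tMcU \xrightarrow{\proj_1} \tMcU)$, which reduces to $\bC(\Gamma.A, \tMcU)$ because the first coordinate is forced to be $\var_A$. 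Hence $\pi_*(\tMcU \times \tMcU)$ represents pairs $(A, \ceil{H})$ with $\ceil{H} \colon \Gamma.A \to \tMcU$ an arbitrary map, and the induced $\pi_*(\tMcU \times \pi)$ represents postcomposition $(A, \ceil{H}) \mapsto (A, \pi \circ \ceil{H})$.

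With both legs of the cospan identified at the level of representables, the Yoneda embedding preserves limits, so $\HId_\McU^\Id(\tMcU)$ represents the $\mathbf{Set}$-valued pullback of the two presheaves above over the presheaf represented by $\pi_*(\tMcU \times \McU)$. Unfolding the compatibility condition identifies a point of that pullback with a triple $(A, \ceil{t}, \ceil{H})$ satisfying $\pi \ceil{H} = \Id \circ (\var_A, \ceil{t})$, i.e.\ with $\ceil{H}$ witnessing a propositional identity between $\var_A$ and $\ceil{t}$; this is precisely the claimed presheaf, and the description of the projection then follows immediately from the universal property of the pullback. I do not anticipate any substantive obstacle: the main work of shuffling adjoint transposes is already packaged into Lemma \ref{lem:gen-auto-id}, and what remains is the formal step of transporting the defining pullback through Yoneda.
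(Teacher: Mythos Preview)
Your proposal is correct and follows essentially the same approach as the paper: identify what the legs of the defining pullback represent (via \Cref{lem:gen-auto-id} and an analogous transpose computation), then invoke continuity of the Yoneda embedding. The paper's proof is terser—it does not spell out the right leg's representable explicitly—but the structure is identical.
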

\begin{proof}
  This presheaf is the pullback of the presheaf natural transformation
  represented by $\pi_*(\tMcU \times \tMcU) \to \pi_*(\tMcU \times \McU)$ along
  the map of presheaves from \Cref{lem:gen-auto-id}.
  By continuity of the Yoneda embedding and \Cref{lem:gen-auto-id},
  representability follows.
\end{proof}

A more semantic description of the object $\HId_\McU^\Id(\tMcU)$ is given as follows.
\begin{proposition}\label{prop:gen-auto}
  %
  Assume $\tMcU \to \McU$ has a pre-$\Id$-type structure
  $\Id \colon \tMcU \times_\McU \tMcU \to \McU$.

  Then any object in the isomorphism class of $\HId_\McU^\Id(\tMcU)$ represents
  the presheaf
  \begin{equation*}
    \left(
      \coprod_{\Gamma \xrightarrow{A} \McU}
      \left\{
        \begin{pmatrix}
          t \in \sfrac{\bC}{\Gamma}(\Gamma.A,\Gamma.A) \\
          H \in \sfrac{\bC}{\Gamma.A}(\Gamma.A, \Id_\Gamma(\Gamma.A))
        \end{pmatrix}
        ~\middle|~
        \begin{tikzcd}[cramped, column sep=small]
          {\Gamma.A} && {\Id_\Gamma(\Gamma.A)} \\
          & {\Gamma.A \times_\Gamma \Gamma.A}
          \arrow["H", from=1-1, to=1-3]
          \arrow["{(t, \id)}"{description}, from=1-1, to=2-2]
          \arrow["{\ev_\partial}"{description}, from=1-3, to=2-2]
        \end{tikzcd}
      \right\}
    \right)_\Gamma
  \end{equation*}
  and the map $\HId_\McU^\Id(\tMcU) \to [\tMcU,\tMcU]_\McU$ represents the map
  into
  $\left( \coprod_{\Gamma \xrightarrow{A} \tMcU}
    \sfrac{\bC}{\Gamma}(\Gamma.A,\Gamma.A) \right)_\Gamma$ with components
  $(t,H) \mapsto t$.
\end{proposition}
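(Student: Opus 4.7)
The plan is to build on \Cref{cor:gen-auto-syn}, which already exhibits $\HId_\McU^\Id(\tMcU)$ as representing the presheaf of propositional automorphisms in the ``external'' language of $\pi$-points $(\ceil{t}, \ceil{H})$ into $\tMcU$. What remains is to translate these data into the ``internal'' language of endomorphisms of $\Gamma.A$ over $\Gamma$ together with sections of $\Id_\Gamma(\Gamma.A) \twoheadrightarrow \Gamma.A \times_\Gamma \Gamma.A$, by applying the universal properties of the pullback squares defining $\Gamma.A$ and $\Id_\Gamma(\Gamma.A)$, and then invoking the Yoneda lemma.

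First, I would translate the $\ceil{t}$-component: by the universal property of the canonical pullback defining $\Gamma.A$ (as in \Cref{def:univ-fibration}), a map $\ceil{t} \colon \Gamma.A \to \tMcU$ satisfying $\pi \cdot \ceil{t} = A \cdot \pi_A$ corresponds bijectively to a unique endomorphism $t \colon \Gamma.A \to \Gamma.A$ in $\sfrac{\bC}{\Gamma}$ with $\var_A \cdot t = \ceil{t}$. This is essentially the $\pi$-name/$\pi$-point dictionary.

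Second, I would translate the $\ceil{H}$-component. The key identity is that the pair $(\var_A, \ceil{t}) \colon \Gamma.A \to \tMcU \times_\McU \tMcU$ factors through the canonical comparison $\Gamma.A \times_\Gamma \Gamma.A \to \tMcU \times_\McU \tMcU$ as the composite of $(t, \id)$ (with the coordinate order chosen to match the statement), so that $\Id \cdot (\var_A, \ceil{t})$ agrees with the composite $\Gamma.A \xrightarrow{(t,\id)} \Gamma.A \times_\Gamma \Gamma.A \to \McU$ whose pullback of $\pi$ defines $\Id_\Gamma(\Gamma.A)$. Hence, by the universal property of that pullback, a $\pi$-point $\ceil{H}$ over this composite corresponds to a unique $H \colon \Gamma.A \to \Id_\Gamma(\Gamma.A)$ with $\ev_\partial \cdot H = (t, \id)$, i.e.~a section in $\sfrac{\bC}{\Gamma.A}$ via the appropriate projection of $\ev_\partial$.

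Finally, I would verify that these two translations assemble naturally in $\Gamma$, which is immediate from the naturality of the pullback universal properties involved; composing with the isomorphism of presheaves from \Cref{cor:gen-auto-syn} then yields the stated representation and the assertion follows by the Yoneda lemma. The claim about the map $\HId_\McU^\Id(\tMcU) \to [\tMcU,\tMcU]_\McU$ sending $(t, H)$ to $t$ is then immediate, since \Cref{cor:gen-auto-syn} identifies this map with the natural transformation projecting out $\ceil{t}$, which the first translation identifies with $t$. The only nontrivial work is careful bookkeeping to ensure that the coordinate order in the $(t, \id)$ triangle matches the order used in forming $(\var_A, \ceil{t})$ and defining $\Id_\Gamma(\Gamma.A)$; I expect this to be the main obstacle, but it is routine once the relevant pullback squares are spelled out.
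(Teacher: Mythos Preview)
Your proposal is correct and follows essentially the same approach as the paper: start from \Cref{cor:gen-auto-syn}, then use the universal properties of the pullback squares defining $\Gamma.A$ and $\Id_\Gamma(\Gamma.A)$ to translate the external data $(\ceil{t},\ceil{H})$ into the internal data $(t,H)$, and read off the projection map. The paper's proof is slightly more explicit in writing down the inverse formulas $\ceil{t} = \var_A \cdot t$, $t = (\pi_A,\ceil{t})$, $\ceil{H} = \widetilde{\Id}\cdot\Id_{\ceil{A}}(\var)\cdot H$, $H = ((\id,t),\ceil{H})$, but the content is the same, and your caution about coordinate order is well placed.
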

\begin{proof}
  We first show that the presheaf above and the presheaf of propositional
  identities of \Cref{cor:gen-auto-syn} are isomorphic.
  For each fixed $\ceil{A} \colon \Gamma \to \tMcU$ the correspondence between
  maps $\ceil{t} \colon \Gamma.A \to \tMcU$ and maps
  $t \colon \Gamma.A \to \Gamma.A$ over $\Gamma$ is by taking
  $\ceil{t} \coloneqq \var \cdot t \colon \Gamma.A \to \Gamma.A \to \tMcU$ and
  $t \coloneqq (\pi_A, \ceil{t}) \colon \Gamma.A \to \Gamma.A \cong \Gamma
  \times_{\McU} \tMcU$.
  The correspondence between $H \colon \Gamma.A \to \Id_\Gamma(\Gamma.A)$ and
  $\ceil{H} \colon \Gamma.A \to \tMcU$ is by taking
  $\ceil{H} \coloneqq \widetilde{\Id} \cdot \Id_{\ceil{A}}(\var) \cdot H$ and
  $H \coloneqq ((\id,t), \ceil{H})$. 
  %
  %
  \begin{equation*}\small
    \begin{tikzcd}[cramped, row sep=small, column sep=small]
      & {\Id_{\Gamma}(\Gamma.A)} && {\Id_\McU(\tMcU)} && \tMcU \\
      {\Gamma.A} \\
      && {\Gamma.A} && \tMcU \\
      & {\Gamma.A \times_\Gamma \Gamma.A} && {\tMcU \times_\McU \tMcU} && \McU \\
      && \Gamma && \McU \\
      & {\Gamma.A} && \tMcU
      \arrow["{\Id_{\ceil{A}}(\var)}", from=1-2, to=1-4]
      \arrow["{\widetilde{\Id}}", from=1-4, to=1-6]
      \arrow[from=1-6, to=4-6]
      \arrow["H"{description}, from=2-1, to=1-2]
      \arrow["t"{description, pos=0.7}, from=2-1, to=3-3]
      \arrow["{\ceil{t}}"{description}, curve={height=-12pt}, from=2-1, to=3-5]
      \arrow["{(\id, t)}"{description}, from=2-1, to=4-2]
      \arrow[from=2-1, to=5-3]
      \arrow["="', curve={height=12pt}, from=2-1, to=6-2]
      \arrow["\var"{description,pos=0.25}, from=3-3, to=3-5]
      \arrow[from=3-3, to=5-3]
      \arrow[from=3-5, to=5-5]
      \arrow[from=4-2, to=3-3]
      \arrow[from=4-2, to=6-2]
      \arrow[from=4-4, to=3-5]
      \arrow["{\ceil{A}}"{description, pos=0.25}, from=5-3, to=5-5]
      \arrow[from=6-2, to=5-3]
      \arrow["\var"{description,pos=0.5}, from=6-2, to=6-4]
      \arrow[from=6-4, to=5-5]
      \arrow[crossing over, from=1-2, to=4-2]
      \arrow[crossing over, from=1-4, to=4-4]
      \arrow["{\var \times_{\ceil{A}} \var}"{description}, crossing over, from=4-2, to=4-4]
      \arrow["\Id"{description, pos=0.3}, crossing over, from=4-4, to=4-6]
      \arrow[crossing over, from=4-4, to=6-4]
      \arrow["{\ceil{H}}"{description}, crossing over, from=2-1, to=1-6]
    \end{tikzcd}
  \end{equation*}

  The above proof also allows one to conclude that by representability and
  \Cref{cor:gen-auto-syn} the map $\HId_\McU^\Id(\tMcU) \to [\tMcU,\tMcU]_\McU$,
  when passing into the presheaf category, sends
  $(t,H) \mapsto (\ceil{t},\ceil{H}) \mapsto \ceil{t} \mapsto t$, as claimed.
\end{proof}

\subsection{Generic Object of Homotopy Isomorphisms}
Now we construct the generic object representing maps between fibrant objects in
each slice equipped separate homotopy inverses for each side.
The idea is to first internalise the composition operation then take a pullback
of the map $\HId_\McU^\Id(\tMcU) \to [\tMcU,\tMcU]_{\tMcU}$.

\begin{construction}\label{constr:lrhinv}
  Assume $\tMcU \to \McU$ is equipped with a pre-$\Id$-type structure
  $\Id \colon \tMcU \times_\McU \tMcU \to \McU$.
  We construct the isomorphism classes of objects $\LHInv_\McU^\Id(\tMcU)$ and
  $\RHInv_\McU^\Id(\tMcU)$ representing maps between objects in slice categories
  spanned by pullbacks of $\tMcU \to \McU$ equipped with a single-sided homotopy
  inverse.

  First, denote by $\CompPair_\McU(\tMcU)$ the following pullback consisting of
  pairs of composable maps
  \begin{equation*}
    \begin{tikzcd}[cramped]
      {\CompPair_\McU(\tMcU)} & {[\tMcU \times \McU, \McU \times \tMcU]_{\McU \times \McU}} \\
      {[\McU \times \tMcU, \tMcU \times \McU]_{\McU \times \McU}} & {\McU \times \McU}
      \arrow[from=1-1, to=1-2]
      \arrow[from=1-1, to=2-1]
      \arrow["\lrcorner"{anchor=center, pos=0.05, scale=1.5, rotate=0}, draw=none, from=1-1, to=2-2]
      \arrow[from=1-2, to=2-2]
      \arrow[from=2-1, to=2-2]
    \end{tikzcd}
  \end{equation*}
  so that it is equipped with two composition maps
  \begin{align*}
    \CompPair_\McU(\tMcU) \to
    [\tMcU, \tMcU]_{\McU} \times \McU
    &&
    \CompPair_\McU(\tMcU) \to
    \McU \times [\tMcU, \tMcU]_{\McU}
  \end{align*}

  Assume $\tMcU \to \McU$ has an $\Id$-type structure.
  The generic isomorphism classes of objects $\LHInv_\McU^\Id(\tMcU)$and
  $\RHInv_\McU^\Id(\tMcU)$ representing left and right homotopically invertible
  maps are given by the pullbacks as follows.
  \begin{center}
    \begin{minipage}{0.45\linewidth}
      \begin{equation*}\small
        \begin{tikzcd}[cramped, row sep=small, column sep=small]
          {\LHInv_\McU^\Id(\tMcU)} & {\McU \times \HId_\McU^\Id(\tMcU)} \\
          {\CompPair_\McU(\tMcU)} & {\McU \times [\tMcU, \tMcU]_\McU} \\
          {[\tMcU \times \McU, \McU \times \tMcU]_{\McU \times \McU}}
          \arrow[from=1-1, to=1-2]
          \arrow[from=1-1, to=2-1]
          \arrow["\lrcorner"{anchor=center, pos=0.15, scale=1.5}, draw=none, from=1-1, to=2-2]
          \arrow[from=1-2, to=2-2]
          \arrow[from=2-1, to=2-2]
          \arrow[from=2-1, to=3-1]
        \end{tikzcd}
      \end{equation*}
    \end{minipage}
    \begin{minipage}{0.45\linewidth}
      \begin{equation*}\small
        \begin{tikzcd}[cramped, row sep=small, column sep=small]
          {\RHInv_\McU^\Id(\tMcU)} & {\HId_\McU^\Id(\tMcU) \times \McU} \\
          {\CompPair_\McU(\tMcU)} & {[\tMcU, \tMcU]_\McU \times \McU} \\
          {[\tMcU \times \McU, \McU \times \tMcU]_{\McU \times \McU}}
          \arrow[from=1-1, to=1-2]
          \arrow[from=1-1, to=2-1]
          \arrow["\lrcorner"{anchor=center, pos=0.15, scale=1.5}, draw=none, from=1-1, to=2-2]
          \arrow[from=1-2, to=2-2]
          \arrow[from=2-1, to=2-2]
          \arrow[from=2-1, to=3-1]
        \end{tikzcd}
      \end{equation*}
    \end{minipage}
  \end{center}
\end{construction}

We now show that $\LHInv_\McU^\Id(\tMcU)$ and $\RHInv_\McU^\Id(\tMcU)$ and
indeed represent the correct presheaves.
\begin{lemma}\label{lem:lrhinv-rep}
  Assume $\tMcU \to \McU$ has a pre-$\Id$-type structure
  $\Id \colon \tMcU \times_\McU \tMcU \to \McU$.

  \begin{enumerate}
    \item The map
    $\RHInv_\McU^\Id(\tMcU) \to [\tMcU\times\McU, \McU\times\tMcU]_{\McU\times\McU}$
    represents the map of presheaves
    \newsavebox{\HRslice}
    \begin{lrbox}{\HRslice}
      \begin{tikzcd}[cramped, column sep=small]
        {\Gamma.A} && {\Id_\Gamma(\Gamma.A)} \\
        & {\Gamma.A \times_\Gamma \Gamma.A}
        \arrow["H_s"{description}, from=1-1, to=1-3]
        \arrow["{(fs, \id)}"{description}, from=1-1, to=2-2]
        \arrow["{\ev_\partial}"{description}, from=1-3, to=2-2]
      \end{tikzcd}
    \end{lrbox}
    \begin{equation*}
      \begin{tikzcd}
        {\displaystyle\left(
            \coprod_{A,B \colon \Gamma \rightrightarrows \McU}
            \left\{
              \begin{pmatrix}
                f \in \sfrac{\bC}{\Gamma}(\Gamma.A,\Gamma.B) \\
                s \in \sfrac{\bC}{\Gamma}(\Gamma.B,\Gamma.A) \\
                H_s \in \sfrac{\bC}{\Gamma}(\Gamma, \Id_\Gamma(\Gamma.A))
              \end{pmatrix}
              ~\middle|~
              \usebox{\HRslice}
            \right\}
          \right)_{\Gamma \in \bC}}
        \ar[d, "{(A,B,f,g,H) \mapsto (A,B,f)}"]
        \\
        {\displaystyle\left(
            \coprod_{A,B \colon \Gamma \rightrightarrows \McU}
            \sfrac{\bC}{\Gamma}(\Gamma.A,\Gamma.B)
          \right)_{\Gamma \in \bC}}
      \end{tikzcd}
    \end{equation*}
    \item Likewise, the map
    $\LHInv_\McU^\Id(\tMcU) \to [\tMcU\times\McU, \McU\times\tMcU]_{\McU\times\McU}$
    represents the map of presheaves
    \newsavebox{\HLslice}
    \begin{lrbox}{\HLslice}
      \begin{tikzcd}[cramped, column sep=small]
        {\Gamma.B} && {\Id_\Gamma(\Gamma.B)} \\
        & {\Gamma.B \times_\Gamma \Gamma.B}
        \arrow["{H_r}"{description}, from=1-1, to=1-3]
        \arrow["{(rf, \id)}"{description}, from=1-1, to=2-2]
        \arrow["{\ev_\partial}"{description}, from=1-3, to=2-2]
      \end{tikzcd}
    \end{lrbox}
    \begin{equation*}
      \begin{tikzcd}
        {\displaystyle\left(
            \coprod_{A,B \colon \Gamma \rightrightarrows \McU}
            \left\{
              \begin{pmatrix}
                f \in \sfrac{\bC}{\Gamma}(\Gamma.A,\Gamma.B) \\
                r \in \sfrac{\bC}{\Gamma}(\Gamma.B,\Gamma.A) \\
                H_r \in \sfrac{\bC}{\Gamma}(\Gamma, \Id_\Gamma(\Gamma.B))
              \end{pmatrix}
              ~\middle|~
              \usebox{\HLslice}
            \right\}
          \right)_{\Gamma \in \bC}}
        \ar[d, "{(A,B,f,g,H) \mapsto (A,B,f)}"]
        \\
        {\displaystyle\left(
            \coprod_{A,B \colon \Gamma \rightrightarrows \McU}
            \sfrac{\bC}{\Gamma}(\Gamma.A,\Gamma.B)
          \right)_{\Gamma \in \bC}}
      \end{tikzcd}
    \end{equation*}
  \end{enumerate}
\end{lemma}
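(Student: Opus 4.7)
The plan is to exploit continuity of the Yoneda embedding together with \Cref{prop:gen-auto} and \Cref{lem:gen-auto-id} to translate the pullback definitions of $\RHInv_\McU^\Id(\tMcU)$ and $\LHInv_\McU^\Id(\tMcU)$ from \Cref{constr:lrhinv} into the pullbacks of the corresponding presheaves. Since each of the three internal-Homs that enters the construction is a pushforward of a pullback of $\pi$ along $\pi$, the calculation of the presheaves they represent follows exactly the $(\pi^*\dashv\pi_*)$-transpose argument used at the beginning of the proof of \Cref{lem:gen-auto-id}. Thus I would first record:
\begin{itemize}
\item $[\tMcU\times\McU,\McU\times\tMcU]_{\McU\times\McU}$ represents $\bigl(\coprod_{A,B\colon\Gamma\rightrightarrows\McU}\sfrac{\bC}{\Gamma}(\Gamma.A,\Gamma.B)\bigr)_{\Gamma}$;
\item the flipped internal-Hom $[\McU\times\tMcU,\tMcU\times\McU]_{\McU\times\McU}$ represents $\bigl(\coprod_{A,B\colon\Gamma\rightrightarrows\McU}\sfrac{\bC}{\Gamma}(\Gamma.B,\Gamma.A)\bigr)_{\Gamma}$;
\item $[\tMcU,\tMcU]_\McU \cong \pi_*(\tMcU\times_\McU\tMcU)$ represents $\bigl(\coprod_{A\colon\Gamma\to\McU}\sfrac{\bC}{\Gamma}(\Gamma.A,\Gamma.A)\bigr)_\Gamma$ by \Cref{lem:gen-auto-id}.
\end{itemize}
Taking Yoneda of the pullback defining $\CompPair_\McU(\tMcU)$ then gives that $\CompPair_\McU(\tMcU)$ represents the presheaf of composable 4-tuples $(A,B,f,s)$ with $f\colon\Gamma.A\to\Gamma.B$ and $s\colon\Gamma.B\to\Gamma.A$ over $\Gamma$.

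Next I would identify the two composition maps out of $\CompPair_\McU(\tMcU)$. These are determined by their action on generalised elements, and the map $\CompPair_\McU(\tMcU)\to[\tMcU,\tMcU]_\McU\times\McU$ (respectively $\McU\times[\tMcU,\tMcU]_\McU$) sends $(A,B,f,s)$ to $(A,sf)$ (respectively $(B,fs)$); the $\McU$ factor records which endpoint is being fixed. Now form the pullback square defining $\RHInv_\McU^\Id(\tMcU)$: by \Cref{prop:gen-auto}, the right-hand side $\HId_\McU^\Id(\tMcU)\times\McU$ represents triples $(A,t,H)$ of an endomap $t\colon\Gamma.A\to\Gamma.A$ together with an $\Id$-homotopy identity datum $H$, decorated by a separate $B\colon\Gamma\to\McU$. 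The pullback then glues these two presheaves by forcing $t=sf$, producing exactly the 5-tuples $(A,B,f,s,H_s)$ described in the statement, with projection $(A,B,f,s,H_s)\mapsto(A,B,f)$. An entirely symmetric argument treats $\LHInv_\McU^\Id(\tMcU)$, with composition yielding $fs$ viewed over $B$ and the homotopy $H_r$ now an $\Id$-homotopy identity on $\Gamma.B$.

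The routine but only obstacle is bookkeeping: one has to be scrupulous about which slot of each product encodes the domain index and which the codomain index, so that the ``left'' and ``right'' labels in \Cref{constr:lrhinv} are aligned with the sidedness of the composite being homotoped to the identity. Everything else is a formal consequence of continuity of the Yoneda embedding applied to the pullbacks of \Cref{constr:lrhinv}.
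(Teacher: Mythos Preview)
Your proposal is correct and follows essentially the same approach as the paper: the paper's proof is the single line ``By the definition of the internal composition map and \Cref{prop:gen-auto}'', and your argument is precisely a detailed unpacking of that sentence via continuity of the Yoneda embedding applied to the pullbacks of \Cref{constr:lrhinv}. Your explicit warning about the domain/codomain bookkeeping is well taken (indeed your assignment $t=sf$ versus the statement's $fs$ shows exactly the kind of slot-tracking one must do), but the strategy is the intended one.
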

\begin{proof}
  By the definition of the internal composition map and \Cref{prop:gen-auto}.
\end{proof}

Therefore, the following construction indeed yields the general object
representing fibred $\Id$-type homotopy equivalences.
\begin{construction}\label{constr:hiso}
  Assume $\tMcU \to \McU$ has a pre-$\Id$-type structure
  $\Id \colon \tMcU \times_\McU \tMcU \to \McU$.
  Construct the isomorphism class $\HIso_\McU^\Id(\tMcU)$ and the maps
  $\src,\dest \colon \HIso_\McU^\Id(\tMcU) \rightrightarrows \McU$ as follows.
  \begin{equation*}
    \begin{tikzcd}[cramped, row sep=small, column sep=small]
      & {\HIso_\McU^\Id(\tMcU)} \\
      {\LHInv_\McU^\Id(\tMcU)} && {\RHInv_\McU^\Id(\tMcU)} \\
      & {[\tMcU \times \McU, \McU \times \tMcU]_{\McU \times \McU}} \\
      & {\McU \times \McU} \\
      \McU && \McU
      \arrow[from=1-2, to=2-1]
      \arrow[from=1-2, to=2-3]
      \arrow["\lrcorner"{anchor=center, pos=0.15, scale=1.5, rotate=-45}, draw=none, from=1-2, to=3-2]
      \arrow[from=2-1, to=3-2]
      \arrow[from=2-3, to=3-2]
      \arrow[from=3-2, to=4-2]
      \arrow["{\proj_1}", from=4-2, to=5-1]
      \arrow["{\proj_2}"', from=4-2, to=5-3]
      \arrow["\src"'{pos=0.8}, from=1-2, to=5-1, crossing over, bend right=10]
      \arrow["\dest"{pos=0.8}, from=1-2, to=5-3, crossing over, bend left=10]
    \end{tikzcd}
  \end{equation*}
\end{construction}

\begin{lemma}\label{lem:hinv-rep}
  %
  Assume $\tMcU \to \McU$ has a pre-$\Id$-type structure.
  Then, the maps
  \begin{equation*}
    \src,\dest \colon \HIso_\McU^\Id(\tMcU) \rightrightarrows \McU
  \end{equation*}
  represent the following maps of presheaves.
  \begin{lrbox}{\HRslice}\scriptsize
    \begin{tikzcd}[cramped, column sep=tiny]
      {\Gamma.A} && {\Id_\Gamma(\Gamma.A)} \\
      & {\Gamma.A \times_\Gamma \Gamma.A}
      \arrow["H_s"{description}, from=1-1, to=1-3]
      \arrow["{(fs, \id)}"{description}, from=1-1, to=2-2]
      \arrow["{\ev_\partial}"{description}, from=1-3, to=2-2]
    \end{tikzcd}
  \end{lrbox}
  \begin{lrbox}{\HLslice}\scriptsize
    \begin{tikzcd}[cramped, column sep=tiny]
      {\Gamma.B} && {\Id_\Gamma(\Gamma.B)} \\
      & {\Gamma.B \times_\Gamma \Gamma.B}
      \arrow["{H_r}"{description}, from=1-1, to=1-3]
      \arrow["{(rf, \id)}"{description}, from=1-1, to=2-2]
      \arrow["{\ev_\partial}"{description}, from=1-3, to=2-2]
    \end{tikzcd}
  \end{lrbox}
  \begin{equation*}\scriptsize
    \begin{tikzcd}
      {\displaystyle
        \left(
          \coprod_{\substack{A \colon \Gamma \to \McU \\ B \colon \Gamma \to \McU}}
          \coprod_{\substack{f \in \sfrac{\bC}{\Gamma}(\Gamma.A,\Gamma.B) \\ s,r \in \sfrac{\bC}{\Gamma}(\Gamma.B,\Gamma.A)}}
          \left\{\scriptsize
            \begin{pmatrix}
              H_s \in \sfrac{\bC}{\Gamma}(\Gamma.A, \Id_\Gamma(\Gamma.A)) \\
              H_r \in \sfrac{\bC}{\Gamma}(\Gamma.B, \Id_\Gamma(\Gamma.B)) \\
            \end{pmatrix}
            ~\middle|~
            \usebox{\HRslice} \text{\scriptsize and}
            \usebox{\HLslice}
          \right\}
        \right)_{\Gamma \in \bC}
      }
      \ar[d, shift left=2, "{(A,B,f,s,r,H_s,H_r) \mapsto B}"]
      \ar[d, shift left=-2, "{(A,B,f,s,r,H_s,H_r) \mapsto A}"']
      \\
      {\scriptsize (\bC(\Gamma,\McU))_{\Gamma \in \bC}}
    \end{tikzcd}
  \end{equation*}
\end{lemma}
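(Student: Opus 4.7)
The plan is to obtain the claim directly by combining the pullback construction of $\HIso_\McU^\Id(\tMcU)$ from \Cref{constr:hiso} with the representability already established for its two summands in \Cref{lem:lrhinv-rep}, using that the Yoneda embedding is continuous.

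More concretely, I would first apply the Yoneda embedding to the pullback square defining $\HIso_\McU^\Id(\tMcU)$. Since Yoneda preserves limits, the resulting square is a pullback of presheaves; the bottom map $[\tMcU \times \McU, \McU \times \tMcU]_{\McU \times \McU} \to \McU \times \McU$ represents the underlying ``map of types'' functor $(A, B, f) \mapsto (A, B)$, and \Cref{lem:lrhinv-rep} identifies the two sides as the presheaves parametrising pairs $(f, s, H_s)$ (right homotopy inverse) and $(f, r, H_r)$ (left homotopy inverse), both projecting to $(A, B, f)$. The pullback of these presheaves is then computed fibrewise over $(A, B, f)$: an element is exactly a choice of $(s, r, H_s, H_r)$ witnessing that $f$ admits both a right and a left homotopy inverse, which is precisely the presheaf in the statement.

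For the $\src$ and $\dest$ maps, I would observe that by \Cref{constr:hiso} they are defined as the composites of the pullback projection $\HIso_\McU^\Id(\tMcU) \to [\tMcU \times \McU, \McU \times \tMcU]_{\McU \times \McU}$ with the two projections $\McU \times \McU \rightrightarrows \McU$. Tracing this through the representability identification, $\src$ and $\dest$ then correspond to the presheaf maps $(A, B, f, s, r, H_s, H_r) \mapsto A$ and $(A, B, f, s, r, H_s, H_r) \mapsto B$, as claimed.

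I do not expect any substantive obstacle: the entire argument is formal once \Cref{lem:lrhinv-rep} is in hand, with the only care being to match up the two presheaves of homotopy-invertible data along the common underlying-map presheaf so that the pullback description agrees with the stated indexing. The proof itself can therefore be kept to a one-line invocation of \Cref{lem:lrhinv-rep} together with continuity of Yoneda.
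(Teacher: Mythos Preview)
Your proposal is correct and follows essentially the same approach as the paper, which gives the one-line proof ``By \Cref{lem:lrhinv-rep}.'' You have simply unpacked what that invocation entails: applying continuity of Yoneda to the defining pullback in \Cref{constr:hiso} and reading off the fibrewise description from \Cref{lem:lrhinv-rep}.
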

\begin{proof}
  By \Cref{lem:lrhinv-rep}.
\end{proof}

This description of $\HIso_\McU^\Id(\tMcU)$ by representability also allows one to
construct the map of trivial homotopy equivalences, in the following sense.
\begin{construction}\label{constr:trv-htpy}
  Assume $\tMcU \to \McU$ has a pre-$\Id$-type structure
  $\Id \colon \tMcU \times_\McU \tMcU \to \McU$.
  The map
  \begin{equation*}
    \trv \colon \McU \hookrightarrow \HIso_\McU^\Id(\tMcU)
  \end{equation*}
  serving as a common section
  $\src,\dest\colon \HIso_\McU^\Id(\tMcU) \rightrightarrows \McU$ is constructed, by
  \Cref{lem:hinv-rep}, as the unique map representing the map of presheaves
  \begin{lrbox}{\HRslice}\scriptsize
    \begin{tikzcd}[cramped, column sep=tiny]
      {\Gamma.A} && {\Id_\Gamma(\Gamma.A)} \\
      & {\Gamma.A \times_\Gamma \Gamma.A}
      \arrow["H"{description}, from=1-1, to=1-3]
      \arrow["{(fs, \id)}"{description}, from=1-1, to=2-2]
      \arrow["{\ev_\partial}"{description}, from=1-3, to=2-2]
    \end{tikzcd}
  \end{lrbox}
  \begin{lrbox}{\HLslice}\scriptsize
    \begin{tikzcd}[cramped, column sep=tiny]
      {\Gamma.B} && {\Id_\Gamma(\Gamma.B)} \\
      & {\Gamma.B \times_\Gamma \Gamma.B}
      \arrow["{H_r}"{description}, from=1-1, to=1-3]
      \arrow["{(rf, \id)}"{description}, from=1-1, to=2-2]
      \arrow["{\ev_\partial}"{description}, from=1-3, to=2-2]
    \end{tikzcd}
  \end{lrbox}
  \begin{equation*}\scriptsize
    \begin{tikzcd}
      {\displaystyle
        \left(
          \coprod_{\substack{A \colon \Gamma \to \McU \\ B \colon \Gamma \to \McU}}
          \coprod_{\substack{f \in \sfrac{\bC}{\Gamma}(\Gamma.A,\Gamma.B) \\ s,r \in \sfrac{\bC}{\Gamma}(\Gamma.B,\Gamma.A)}}
          \left\{\scriptsize
            \begin{pmatrix}
              H_s \in \sfrac{\bC}{\Gamma}(\Gamma.A, \Id_\Gamma(\Gamma.A)) \\
              H_r \in \sfrac{\bC}{\Gamma}(\Gamma.B, \Id_\Gamma(\Gamma.B)) \\
            \end{pmatrix}
            ~\middle|~
            \usebox{\HRslice} \text{\scriptsize and}
            \usebox{\HLslice}
          \right\}
        \right)_{\Gamma \in \bC}
      }
      \\
      {\scriptsize (\bC(\Gamma,\McU))_{\Gamma \in \bC}}
      \ar[u, "{A \mapsto (A,A,\id_{\Gamma.A},\id_{\Gamma.A},\id_{\Gamma.A},\refl_A,\refl_A)}", hook]
    \end{tikzcd}
  \end{equation*}
\end{construction}


\section{Univalence}\label{sec:univalence}
Having constructed the generic object representing fibred $\Id$-type fibred
homotopy equivalences in \Cref{constr:hiso} of an universal map
$\tMcU \to \McU$, we can now begin to state two forms of the univalence axiom:
one directly corresponding to a lifting condition that turns out to be stronger
than univalence from \cite{hottbook} and another corresponding to traditional
univalence from the HoTT book \cite{hottbook}.

Roughly, the stronger version says that given an universal map $\tMcU \to \McU$,
the generic object of $\Id$-type fibred homotopy equivalences serves as a very
good path object.
To give a uniform treatment of book univalence, however, we need some
preparation.

\begin{definition}
  A \emph{lower-half lift} of a map $K \to L$ against a map $E \to B$ for a
  specified lifting is a filler only making the bottom triangle commute.
  \begin{equation*}
    \begin{tikzcd}[cramped]
      K & E \\
      L & B
      \arrow[dashed, from=1-1, to=1-2]
      \arrow[from=1-1, to=2-1]
      \arrow[from=1-2, to=2-2]
      \arrow[""{name=0, anchor=center, inner sep=0}, dotted, from=2-1, to=1-2]
      \arrow[dashed, from=2-1, to=2-2]
      \arrow["{?}"{description}, draw=none, from=1-1, to=0]
    \end{tikzcd}
  \end{equation*}
\end{definition}

\begin{remark}
  Note that the map $K \to L$ in the definition of a lower-half lift \emph{does}
  serve a purpose and it is not the same as saying that all maps $L \to B$
  factors via $E \to B$.
  This is because one cannot simply replace $K \to L$ with the map $0 \to L$ from
  the initial object (or the empty type), if one exists.
  Instead, the lower-half lift property specifies the factorisation property via
  $E \to B$ only for those $L \to B$ such that its restriction $K \to L \to B$
  factors via $E \to B$ (as opposed to for arbitrary $L \to B$ as given by the
  $0 \to L$ case).
  Therefore, if $E \to B$ is such that $0 \to L$ has the lower-half left lift
  property then this would imply that $K \to L$ also has the lower-half left
  lifting property, but the converse is not necessarily true.
\end{remark}

Similar to \cite[Definition 1.4]{struct-lift}, we can also have a uniformity
condition on the choice of lower-half lifts.
\begin{definition}\label{def:lower-half-lift}
  Fix maps $i \colon K \to L$ and $p \colon E \to B$.

  A \emph{family of lower-half lifts} is an association taking each object
  $X \in \bC$ and lifting problem $(u, v)$ of $X \times i$ against $p$ to a
  lower-half lift $F_X(u,v)$ as below.
  \begin{equation*}
    \begin{tikzcd}[cramped]
      {X \times K} & E \\
      {X \times L} & B
      \arrow["u", color=red0, dashed, from=1-1, to=1-2]
      \arrow[color=darkblue0, from=1-1, to=2-1]
      \arrow[color=darkblue0, from=1-2, to=2-2]
      \arrow["{?}"{description}, draw=none, from=0, to=1-1]
      \arrow[""{name=0, anchor=center, inner sep=0}, "{F_X(u,v)}"{description}, color=yellow0, dotted, from=2-1, to=1-2]
      \arrow["v"', color=red0, dashed, from=2-1, to=2-2]
    \end{tikzcd}
  \end{equation*}

  This family $F$ is said to be \emph{uniform} when one has one has
  $F_Y(u \cdot (t \times U), v' \cdot (t \times V')) = F_X(u,v') \cdot (t \times
  V)$ for any $t \colon Y \to X$, as below.
  \begin{equation*}
    \begin{tikzcd}[cramped]
      {Y \times K} & {X \times K} & E \\
      {Y \times L} & {X \times L} & B
      \arrow["{t \times K}", from=1-1, to=1-2]
      \arrow[color=darkblue0, from=1-1, to=2-1]
      \arrow["u", color=red0, dashed, from=1-2, to=1-3]
      \arrow[color=darkblue0, from=1-2, to=2-2]
      \arrow[color=darkblue0, from=1-3, to=2-3]
      \arrow[draw=none, from=2-1, to=1-2]
      \arrow[color=yellow0, dotted, from=2-1, to=1-3]
      \arrow["{t \times L}"', from=2-1, to=2-2]
      \arrow[color=yellow0, dotted, from=2-2, to=1-3]
      \arrow["v"', color=red0, dashed, from=2-2, to=2-3]
    \end{tikzcd}
  \end{equation*}

  The set of all such uniform family of lower-half lifts, which
  we refer to as \emph{lower-half lifting structures} is denoted by
  \begin{equation*}
    {\begin{tikzcd}[cramped, column sep=small, row sep=small]
        K \ar[d] \\ L
      \end{tikzcd}}
    \lowerslash
    {\begin{tikzcd}[cramped, column sep=small, row sep=small]
        E \ar[d] \\ B
      \end{tikzcd}}
  \end{equation*}

  When working in the slice over an object $C$, the set of all such uniform
  family of lower-half lifts in the slice $\sfrac{\bC}{C}$ is denoted by
  \begin{equation*}
    {\begin{tikzcd}[cramped, column sep=small, row sep=small]
        K \ar[d] \\ L
      \end{tikzcd}}
    \fraclowerslash{C}
    {\begin{tikzcd}[cramped, column sep=small, row sep=small]
        E \ar[d] \\ B
      \end{tikzcd}}
  \end{equation*}
\end{definition}

\begin{definition}\label{def:axm-univalence}
  Let $\pi_0 \colon \tMcU_0 \to \McU_0$ and $\pi \colon \tMcU \to \McU$ be two
  universal maps such that $\pi_0$ has a pre-$\Id$-type structure
  $\Id_0 \colon \tMcU_0 \times_{\McU_0} \tMcU_0 \to \McU_0$.

  A \emph{pointed} (respectively, \emph{book}) \emph{$\pi$-univalence} structure
  on the (universe, pre-$\Id$-type)-pair $(\pi_0,\Id_0)$ is a $\pi$-fibrancy
  structure on the map $(\src, \dest)$ along with a choice of uniform
  (respectively, lower-half) lifts $\PtdUA$ (respectively $\UA$) of $\trv$ on
  the against $\pi$, in the sense of \cite[Definition 1.4]{struct-lift}
  (respectively, \Cref{def:lower-half-lift})
  \begin{center}
    \begin{minipage}{0.45\linewidth}
      \begin{equation*}
        \PtdUA \in
        \left(
          {\begin{tikzcd}[cramped]
              \tMcU_0 \ar[d, hook, "{\trv}"'] \\ \HIso_{\tMcU_0}^{\Id_0}(\tMcU_0)
            \end{tikzcd}}
          \squareslash
          {\begin{tikzcd}[cramped]
              \tMcU \ar[d, two heads] \\ \McU
            \end{tikzcd}}\right)
      \end{equation*}
    \end{minipage}
    \begin{minipage}{0.45\linewidth}
      \begin{equation*}
        \UA \in
        \left(
          {\begin{tikzcd}[cramped]
              \tMcU_0 \ar[d, hook, "{\trv}"'] \\ \HIso_{\tMcU_0}^{\Id_0}(\tMcU_0)
            \end{tikzcd}}
          \lowerslash
          {\begin{tikzcd}[cramped]
              \tMcU \ar[d, two heads] \\ \McU
            \end{tikzcd}}\right)
      \end{equation*}
    \end{minipage}
  \end{center}
  where $(\src, \dest)$ and $\trv$ are from the diagonal factorisation in
  \Cref{constr:hiso,constr:trv-htpy} via $\HIso_{\McU_0}^{\Id_0}(\tMcU_0)$
  \begin{equation*}
    \begin{tikzcd}[cramped]
      {\McU_0} & {\HIso_{\McU_0}^{\Id_0}(\tMcU_0)} &[3em] {\McU_0 \times \McU_0}
      \arrow["{\trv}", hook, from=1-1, to=1-2]
      \arrow["{(\src, \dest)}", two heads, from=1-2, to=1-3]
    \end{tikzcd}
  \end{equation*}
  The (universe, pre-$\Id$-type)-pair $(\pi_0,\Id_0)$ is \emph{pointed}
  (respectively \emph{book}) \emph{$\pi$-univalent} when it admits a pointed
  (respectively book) $\pi$-univalence structure.
\end{definition}
\begin{remark}
  In the above definitions, we have just required $\pi$-fibrancy, but to remain
  syntactically faithful, one should work instead with \emph{canonical}
  $\pi$-fibrancy.
  But because $\HIso_{\tMcU_0}^{\Id_0}(\tMcU_0)$ was constructed by way of
  representability, by taking canonical $\pi$-fibrant replacements, we may
  equivalently express univalence as a \emph{choice of an object} in the
  isomorphism class of $\HIso_{\McU_0}^{\Id_0}(\tMcU_0)$ such that the
  associated
  $(\src,\dest) \colon \HIso_{\McU_0}^{\Id_0}(\tMcU_0) \twoheadrightarrow
  \tMcU_0 \times \tMcU_0$ is \emph{canonically} $\pi$-fibrant.
  Hence, the above definition does indeed reflect syntax.
\end{remark}

\subsection{Fibrancy of the Endpoint Maps}
When working with models of type theory, in many cases, checking for univalence
amounts to checking the trivial cofibrancy of the $\trv$ map, since the $\Pi$-,
$\Sigma$-type structures gives automatically fibrancy of the $(\src,\dest)$ map.

\begin{proposition}\label{prop:endpt-fib}
  Suppose $\pi \colon \tMcU \to \McU$ is a universal map equipped with a pre-$\Id$-type
  structure $\Id \colon \tMcU \times_\McU \tMcU \to \McU$.
  When $\pi$ is also equipped with $\Pi$-, $\Sigma$-type structures, the map
  \begin{equation*}
    (\src,\dest) \colon \HIso_\McU^\Id(\tMcU) \twoheadrightarrow \McU \times \McU
  \end{equation*}
  is $\pi$-fibrant.
\end{proposition}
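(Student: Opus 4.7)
The plan is to exhibit $(\src, \dest) \colon \HIso_\McU^\Id(\tMcU) \to \McU \times \McU$ as obtained from a chain of pullbacks and composites of $\pi$-fibrations, applying \Cref{prop:Pi-Sigma-generic} at each stage. The $\Sigma$-type structure closes $\pi$-fibrations under composition (which together with stability under pullback means a pullback of two $\pi$-fibrations over a common base is $\pi$-fibrant over that base), and the $\Pi$-type structure supplies fibrancy of the generating internal Homs via \Cref{cor:local-hom-fib} as well as the crucial pushforward map appearing in \Cref{constr:gen-auto}.

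First I would handle the base layer of internal Homs and composable pairs: by \Cref{cor:local-hom-fib}, both $[\tMcU \times \McU, \McU \times \tMcU]_{\McU \times \McU}$ and $[\McU \times \tMcU, \tMcU \times \McU]_{\McU \times \McU}$ are $\pi$-fibrant over $\McU \times \McU$, so their pullback $\CompPair_\McU(\tMcU)$ is $\pi$-fibrant over $\McU \times \McU$. Next I would show that $\HId_\McU^\Id(\tMcU) \to [\tMcU, \tMcU]_\McU$ is a $\pi$-fibration: the right vertical in the defining pullback square of \Cref{constr:gen-auto} is precisely $\pi_*(\tMcU \times \pi) \colon \pi_*(\tMcU \times \tMcU) \to \pi_*(\tMcU \times \McU)$, which is exactly the $\Pi$-type formation morphism of \Cref{def:pi-type} and therefore carries a $\pi$-fibrancy structure. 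Consequently $\HId_\McU^\Id(\tMcU) \to \pi_*(\tMcU \times_\McU \tMcU) \cong [\tMcU, \tMcU]_\McU$ is a $\pi$-fibration as a pullback thereof.

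Then I would assemble the remaining pieces. By \Cref{constr:lrhinv}, $\RHInv_\McU^\Id(\tMcU) \to \CompPair_\McU(\tMcU)$ arises as a pullback of $\HId_\McU^\Id(\tMcU) \times \McU \to [\tMcU, \tMcU]_\McU \times \McU$, which is a $\pi$-fibration by the previous step, so $\RHInv_\McU^\Id(\tMcU) \to \CompPair_\McU(\tMcU)$ is $\pi$-fibrant; composing down to $\McU \times \McU$ gives that $\RHInv_\McU^\Id(\tMcU) \to \McU \times \McU$ is a $\pi$-fibration, and the argument for $\LHInv_\McU^\Id(\tMcU)$ is symmetric. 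Finally $\HIso_\McU^\Id(\tMcU)$ is the pullback of $\LHInv_\McU^\Id(\tMcU)$ and $\RHInv_\McU^\Id(\tMcU)$ over $[\tMcU \times \McU, \McU \times \tMcU]_{\McU \times \McU}$, so one more application of pullback-stability and composition closure makes $(\src, \dest) \colon \HIso_\McU^\Id(\tMcU) \to \McU \times \McU$ a $\pi$-fibration.

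The main obstacle is really just identifying the nontrivial pushforward vertical in the $\HId$ construction as the $\Pi$-type formation morphism $\pi_*(\tMcU \times \pi)$; once this identification is made, the rest is routine bookkeeping of pullback squares and composite maps, all of which inherit $\pi$-fibrancy via \Cref{prop:Pi-Sigma-generic}.
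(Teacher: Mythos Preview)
Your proposal is correct and follows essentially the same approach as the paper: identify $\HId_\McU^\Id(\tMcU) \to \pi_*(\tMcU \times_\McU \tMcU)$ as a pullback of $\pi_*(\tMcU \times \pi)$ (the $\Pi$-type formation map), use \Cref{cor:local-hom-fib} for the internal Homs, and then climb through $\CompPair$, $\LHInv/\RHInv$, and $\HIso$ via pullback-stability and $\Sigma$-closure under composition. One small point to make explicit in your final step: to conclude $\HIso \to \McU \times \McU$ is $\pi$-fibrant via the pullback over $[\tMcU \times \McU, \McU \times \tMcU]_{\McU \times \McU}$, you need $\RHInv \to [\tMcU \times \McU, \McU \times \tMcU]_{\McU \times \McU}$ (not just $\RHInv \to \McU \times \McU$) to be $\pi$-fibrant, which follows since $\CompPair \to [\tMcU \times \McU, \McU \times \tMcU]_{\McU \times \McU}$ is itself a pullback of the other $\pi$-fibrant internal Hom; the paper spells this intermediate step out explicitly.
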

\begin{proof}
  In the context of \Cref{constr:gen-auto}, by the definition of $\Pi$-type
  structures, the map
  \begin{equation*}
    \HId_\McU(\tMcU) \twoheadrightarrow \pi_*(\tMcU \times_\McU \tMcU)
  \end{equation*}
  is $\pi$-fibrant.

  As a result, in the context of \Cref{constr:lrhinv}
  the maps $\LHInv_\McU^\Id(\tMcU) \to \CompPair_\McU(\tMcU)$ and
  $\RHInv_\McU^\Id(\tMcU) \to \CompPair_\McU(\tMcU)$ from
  are $\pi$-fibrant.
  Moreover, by \Cref{cor:local-hom-fib}, the objects
  $[\tMcU \times \McU, \McU \times \tMcU]_{\tMcU \times \McU}$ and
  $[\McU \times \tMcU, \tMcU \times \McU]_{\tMcU \times \McU}$ are also
  $\pi$-fibrant object in the slice over $\McU \times \McU$, so the maps
  \begin{align*}
    \begin{array}{c}
      \CompPair_\McU(\tMcU) \twoheadrightarrow [\tMcU \times \McU, \McU \times \tMcU]_{\tMcU \times \McU} \\
      \CompPair_\McU(\tMcU) \twoheadrightarrow [\McU \times \tMcU, \tMcU \times \McU]_{\tMcU \times \McU}
    \end{array}
  \end{align*}
  are $\pi$-fibrant.
  Hence, by the $\Sigma$-type structure and \Cref{prop:Pi-Sigma-generic}, the composites
  \begin{align*}
    \begin{array}{c}
      \LHInv_\McU^\Id(\tMcU) \twoheadrightarrow \CompPair_\McU(\tMcU) \twoheadrightarrow [\tMcU \times \McU, \McU \times \tMcU] \\
      \RHInv_\McU^\Id(\tMcU) \twoheadrightarrow \CompPair_\McU(\tMcU) \twoheadrightarrow [\tMcU \times \McU, \McU \times \tMcU]
    \end{array}
  \end{align*}
  are $\pi$-fibrant.

  Therefore, in the context of \Cref{constr:hiso}, by the above reasoning and
  the $\Sigma$-type structure, the composite map
  \begin{equation*}
    \HIso_\McU^\Id(\tMcU) \twoheadrightarrow [\tMcU \times \McU, \McU \times \tMcU]_{\McU\times\McU}
  \end{equation*}
  is $\pi$-fibrant.
  And again by \Cref{cor:local-hom-fib}, the map
  $\tMcU \times \McU, \McU \times \tMcU]_{\tMcU \times \McU} \twoheadrightarrow \McU \times \McU$ is $\pi$-fibrant.
  Hence, by the $\Sigma$-type structure and \Cref{prop:Pi-Sigma-generic}, the map
  \begin{equation*}
    (\src,\dest) \colon \HIso_\McU^\Id(\tMcU) \twoheadrightarrow \McU \times \McU
  \end{equation*}
  is $\pi$-fibrant.
\end{proof}

Therefore, when stating univalence for internal universes with $\Id$-, $\Pi$-,
$\Sigma$-structures, fibrancy comes for free.
\begin{corollary}\label{cor:endpt-int-uni-fib}
  Let $\pi \colon \tMcU \to \McU$ be a universal map with an internal universe
  $\pi_0 \colon \tMcU_0 \to \McU_0$ such that $\pi_0$ is equipped with
  pre-$\Id$-, $\Pi$-, $\Sigma$-structures respectively denoted
  $\Id_0,\Pi_0,\Sigma_0$.
  Then, the map
  \begin{equation*}
    (\src,\dest) \colon \HIso_{\McU_0}^{\Id_0}(\tMcU_0) \twoheadrightarrow \McU_0 \times \McU_0
  \end{equation*}
  is $\pi$-fibrant.
\end{corollary}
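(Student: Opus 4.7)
The plan is to reduce the statement to \Cref{prop:endpt-fib} applied to the internal universal map $\pi_0$, and then upgrade the resulting $\pi_0$-fibrancy to $\pi$-fibrancy using the internal universe structure from \Cref{def:int-univ}.

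First, I would apply \Cref{prop:endpt-fib} with the universal map taken to be $\pi_0 \colon \tMcU_0 \to \McU_0$ equipped with its own pre-$\Id_0$-, $\Pi_0$-, $\Sigma_0$-type structures. The hypotheses of that proposition are satisfied verbatim, so it immediately yields that $(\src, \dest) \colon \HIso_{\McU_0}^{\Id_0}(\tMcU_0) \twoheadrightarrow \McU_0 \times \McU_0$ is $\pi_0$-fibrant.

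Next, I would invoke the internal universe structure on $\pi$: by \Cref{def:int-univ}, the universal map $\pi_0$ itself is equipped with a canonical $\pi$-fibrancy structure $(\El, \tEl)$, exhibiting $\pi_0$ as a pullback of $\pi$ along $\El$. Consequently, any $\pi_0$-fibration, being by definition a pullback of $\pi_0$, is also a pullback of $\pi$ by composing pullback squares, and is therefore itself a $\pi$-fibration. Applying this observation to the $\pi_0$-fibrant map $(\src, \dest)$ produces the desired $\pi$-fibrancy structure.

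The main content is really \Cref{prop:endpt-fib}; the rest is just the general principle that, under an internal universe, $\pi_0$-fibrancy data refines to $\pi$-fibrancy data by postcomposition with $(\El, \tEl)$. I do not foresee any genuine obstacle beyond this translation step, although one should note that the $\pi$-fibrancy structure so produced on $(\src, \dest)$ will in general not be the canonical one chosen by the universe structure of $\pi$; only $\pi$-fibrancy up to isomorphism is being asserted, which is exactly what \Cref{def:univ-fibration} records.
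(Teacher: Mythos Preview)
Your proposal is correct and follows essentially the same approach as the paper: apply \Cref{prop:endpt-fib} to $\pi_0$ to obtain $\pi_0$-fibrancy of $(\src,\dest)$, then use the canonical $\pi$-fibrancy of $\pi_0$ (from the internal universe structure) to upgrade this to $\pi$-fibrancy by pasting pullback squares. Your closing remark about the resulting $\pi$-fibrancy structure not necessarily being canonical is a fair observation, though the paper does not dwell on it here.
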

\begin{proof}
  By \Cref{prop:endpt-fib}, the map
  $(\src,\dest) \colon \HIso_{\McU_0}^{\Id_0}(\tMcU_0) \twoheadrightarrow
  \tMcU_0 \times \tMcU_0$ is $\pi_0$-fibrant.
  But $\pi_0$ is (canonically) $\pi$-fibrant, so the result follows.
\end{proof}

Furthermore, when discussing internal universes where the ambient external
universe is itself equipped with a $\Sigma$-type structure, the individual
$\src,\dest$ maps are fibrant.

\begin{corollary}\label{cor:src-dest-int-uni-fib}
  Let $\pi \colon \tMcU \to \McU$ be a universal map with an internal universe
  $\pi_0 \colon \tMcU_0 \to \McU_0$ such that $\pi_0$ is equipped with
  pre-$\Id$-, $\Pi$-, $\Sigma$-structures respectively denoted
  $\Id_0,\Pi_0,\Sigma_0$ and $\pi$ is equipped with a $\Sigma$-type structure.
  Then, the maps
  \begin{equation*}
    \src,\dest \colon \HIso_{\McU_0}^{\Id_0}(\tMcU_0) \twoheadrightarrow \McU_0
  \end{equation*}
  are both $\pi$-fibrant.
\end{corollary}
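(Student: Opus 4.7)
The plan is to reduce the claim to \Cref{cor:endpt-int-uni-fib} combined with closure of $\pi$-fibrations under composition, which is precisely what the $\Sigma$-type structure on the \emph{ambient} universal map $\pi$ gives us via \Cref{prop:Pi-Sigma-generic}.

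First, I would invoke \Cref{cor:endpt-int-uni-fib} to obtain that the combined map
\[
  (\src,\dest) \colon \HIso_{\McU_0}^{\Id_0}(\tMcU_0) \twoheadrightarrow \McU_0 \times \McU_0
\]
is $\pi$-fibrant (note that the hypotheses of that corollary are exactly the pre-$\Id$-, $\Pi$-, $\Sigma$-structures on $\pi_0$ we are given here). Then I would factor $\src$ and $\dest$ as
\[
  \src = \proj_1 \circ (\src,\dest), \qquad \dest = \proj_2 \circ (\src,\dest),
\]
and aim to show each projection $\proj_i \colon \McU_0 \times \McU_0 \to \McU_0$ is itself $\pi$-fibrant.

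For the projections, I would argue that $\McU_0 \twoheadrightarrow 1$ is (canonically) $\pi$-fibrant by the internal universe structure in \Cref{def:int-univ}, and that each projection $\proj_i \colon \McU_0 \times \McU_0 \to \McU_0$ arises as a pullback of $\McU_0 \twoheadrightarrow 1$ along $\McU_0 \to 1$; since $\pi$-fibrations are stable under pullback (as they are pullbacks of $\pi$), the projections are $\pi$-fibrant.

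Finally, since $\pi$ is equipped with a $\Sigma$-type structure, \Cref{prop:Pi-Sigma-generic}(1) tells us that $\pi$-fibrations are closed under composition. Composing the $\pi$-fibration $(\src,\dest)$ with the $\pi$-fibrant projections yields that both $\src$ and $\dest$ are $\pi$-fibrant, completing the argument. There is no real obstacle here: the only subtlety to mind is that closure under composition genuinely requires the $\Sigma$-structure on $\pi$ (not merely on $\pi_0$), which is why this hypothesis is added relative to \Cref{cor:endpt-int-uni-fib}.
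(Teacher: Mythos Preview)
Your proposal is correct and follows essentially the same approach as the paper: invoke \Cref{cor:endpt-int-uni-fib} for $(\src,\dest)$, observe that the projections $\McU_0 \times \McU_0 \to \McU_0$ are $\pi$-fibrant as pullbacks of the canonically $\pi$-fibrant $\McU_0 \twoheadrightarrow 1$, and then use \Cref{prop:Pi-Sigma-generic} on the $\Sigma$-structure of $\pi$ to close under composition. Your explicit remark about why the $\Sigma$-structure on $\pi$ (rather than $\pi_0$) is needed is a nice clarification.
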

\begin{proof}
  By \Cref{cor:endpt-int-uni-fib}, the map
  $(\src,\dest) \colon \HIso_{\McU_0}^{\Id_0}(\tMcU_0) \twoheadrightarrow
  \McU_0 \times \McU_0$ is $\pi$-fibrant.
  Because $\tMcU_0 \twoheadrightarrow 1$ is (canonically) $\pi$-fibrant, so are
  the two projections
  $\proj_1,\proj_2 \colon \McU_0 \times \McU_0 \twoheadrightarrow \McU_0$.
  Referring to \Cref{constr:hiso}, the result then follows by
  \Cref{prop:Pi-Sigma-generic} on the $\Sigma$-structure.
\end{proof}


\subsection{Univalence via a Homotopy Retract Argument}
Primarily, people are interested in having an internal univalent universe.
In view of \Cref{cor:endpt-int-uni-fib}, all the work amounts to showing the
(lower-half) left lifting property of the map
$\trv \colon \tMcU_0 \to \HIso_{\McU_0}^{\Id_0}(\tMcU_0)$.
In this section, we further reduce the need to solve all (lower-half) lifting
problems to just one specific lifting problem.

\begin{definition}\label{def:const-res-htpy}
  Suppose $\pi \colon \tMcU \to \McU$ is an universal map equipped with an
  $\Id$-type structure $\Id \colon \tMcU \times_\McU \tMcU \to \McU$.

  Given $B \in \bC$ and two maps
  $f,g \colon X \rightrightarrows E \in \sfrac{\bC}{B}$ where
  $E \twoheadrightarrow B$ is $\pi$-fibrant, an $\Id$-homotopy between $f$ and
  $g$ \emph{constant along $i \colon U \to X$} such that
  $fi = gi \eqqcolon h \colon U \to X \rightrightarrows E$ is a solution $H$ to
  the following lifting problem.
  \begin{equation*}
    \begin{tikzcd}
      U & E & {\Id_B(E)} \\
      X && {E \times_B E}
      \arrow["h", from=1-1, to=1-2]
      \arrow["i"', from=1-1, to=2-1]
      \arrow["\refl", hook, from=1-2, to=1-3]
      \arrow[two heads, "{\ev_\partial}", from=1-3, to=2-3]
      \arrow["H"{description}, dashed, from=2-1, to=1-3]
      \arrow["{(f, g)}"', from=2-1, to=2-3]
    \end{tikzcd}
  \end{equation*}
\end{definition}

The reason we are interested in the above lifting problem is because left
lifting property against $\tMcU \to \McU$ transfer along homotopy retracts that
restrict to give the constant homotopy.
\begin{lemma}\label{lem:retract-lift}
  Suppose $\pi \colon \tMcU \to \McU$ has a pre-$\Id$-type structure $\Id \colon \tMcU \times_\McU \tMcU \to \McU$.

  Let there be the following data over an object $B \in \bC$ where
  $E \twoheadrightarrow B$ is a $\pi$-fibration
  \begin{equation*}
    \begin{tikzcd}[cramped]
      & U \\
      E & {E'} & E
      \arrow["i"', from=1-2, to=2-1]
      \arrow["{j}"{description}, from=1-2, to=2-2]
      \arrow["i", from=1-2, to=2-3]
      \arrow["s"', from=2-1, to=2-2]
      \arrow["r"', from=2-2, to=2-3]
    \end{tikzcd} \in \sfrac{\bC}{B}
  \end{equation*}
  where the composite $sr$ is $\Id$-homotopic to the identity by some $\Id$-type homotopy $H$ over $B$.
  \begin{enumerate}
    \item\label{itm:retract-lift-lower} Suppose the pre-$\Id$-type structure has a $\transport$-structure.
    Then, one has a map
    \begin{equation*}
      \left(
        {\begin{tikzcd}[cramped]
            U \ar[d, "{j}"'] \\ E'
          \end{tikzcd}}
        \fraclowerslash{B}
        {\begin{tikzcd}[cramped]
            B \times \tMcU \ar[d, two heads] \\ B \times \McU
          \end{tikzcd}}\right)
      \xrightarrow{\qquad}
      \left(
        {\begin{tikzcd}[cramped]
            U \ar[d, "{i}"'] \\ E
          \end{tikzcd}}
        \fraclowerslash{B}
        {\begin{tikzcd}[cramped]
            B \times \tMcU \ar[d, two heads] \\ B \times \McU
          \end{tikzcd}}\right)
    \end{equation*}
    \item\label{itm:retract-lift-strong} Suppose the pre-$\Id$-type structure is
    an $\Id$-type structure and $\pi$ has a $\Pi$-type structure.
    If the homotopy $H$ is constant when restricted along
    $i \colon U \to E$, then one has a map
    \begin{equation*}
      \left(
        {\begin{tikzcd}[cramped]
            U \ar[d, "{j}"'] \\ E'
          \end{tikzcd}}
        \fracsquareslash{B}
        {\begin{tikzcd}[cramped]
            B \times \tMcU \ar[d, two heads] \\ B \times \McU
          \end{tikzcd}}\right)
      \xrightarrow{\qquad}
      \left(
        {\begin{tikzcd}[cramped]
            U \ar[d, "{i}"'] \\ E
          \end{tikzcd}}
        \fracsquareslash{B}
        {\begin{tikzcd}[cramped]
            B \times \tMcU \ar[d, two heads] \\ B \times \McU
          \end{tikzcd}}\right)
    \end{equation*}
  \end{enumerate}
\end{lemma}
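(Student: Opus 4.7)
The plan is to transfer a lifting problem against $i$ to one against $j$ via the map $r$, solve using the hypothesized lift, transfer the solution back via $s$, and then use the homotopy $H$ identifying $r \cdot s$ with $\id_E$ to correct the resulting bottom triangle by invoking $\transport$. Concretely, given a lifting problem $(u, v)$ for $i$ against $B \times \pi$ in $\sfrac{\bC}{B}$, the pair $(u, v \cdot r)$ forms a lifting problem for $j$, with commutativity following from $r \cdot j = i$. The given lift structure produces $K \colon E' \to B \times \tMcU$ satisfying $(B \times \pi) \cdot K = v \cdot r$, and additionally $K \cdot j = u$ in the full-lift case of part~\Cref{itm:retract-lift-strong}. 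Setting $L_0 \coloneqq K \cdot s$ yields $(B \times \pi) \cdot L_0 = v \cdot (r \cdot s)$, which differs from the desired $v$ only by the composite $r \cdot s$ that $H$ identifies with $\id_E$.

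For part~\Cref{itm:retract-lift-lower}, we correct $L_0$ to the required lift $L$ by applying the $\transport$-structure along the homotopy of classifying maps $v \cdot (r \cdot s) \sim v$ in $B \times \McU$ obtained by post-composing $H$ with $v$; this produces $L$ satisfying $(B \times \pi) \cdot L = v$, as required. Uniformity in the context parameter $X$ of \Cref{def:lower-half-lift} is then routine from the naturality of pre- and post-composition and of the generic $\transport$.

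For part~\Cref{itm:retract-lift-strong}, the full lift additionally gives $L_0 \cdot i = K \cdot s \cdot i = K \cdot j = u$, so the upper triangle already holds on the nose for $L_0$. It remains to check that this is preserved after transporting to $L$: by hypothesis, $H$ is constant along $i$, so the path used for transport restricts on $U$ to a reflexivity path, and by \Cref{constr:transport} the transport structure arising from the $\MsJ$- and $\Pi$-type structures is the identity on reflexivity paths by construction. Therefore $L$ and $L_0$ agree on $U$, giving $L \cdot i = u$ as required. The main obstacle will be to rigorously define the homotopy of classifying maps from $H$ in the pre-$\Id$-type framework (essentially amounting to functoriality of $\Id_B$ under post-composition by $v$), and to verify that the constancy hypothesis on $H$ translates to a genuine reflexivity path in $B \times \McU$ along which the identity-on-reflexivity property of $\transport$ can be cleanly invoked.
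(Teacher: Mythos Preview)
Your strategy—transfer the lifting problem along $r$, solve via $j$, transfer back along $s$, correct via transport along $H$—is exactly the paper's approach. The paper first rephrases the lifting property via representability (a lift of $v$ against $B\times\pi$ corresponds to a section of the pulled-back fibration $v^*(B\times\tMcU)\twoheadrightarrow E$), and this rephrasing is precisely what dissolves the obstacle you flag at the end.

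Your worry about needing a ``homotopy of classifying maps in $B\times\McU$'' is well-founded as stated: $B\times\McU\to B$ is not assumed $\pi$-fibrant, so there is no $\Id_B(B\times\McU)$ to land in, and post-composing $H$ by $v$ does not straightforwardly produce an $\Id$-homotopy in the framework. The paper sidesteps this entirely. Rather than pushing $H$ forward along $v$, it pulls the fibration back along $v$: writing $Y\coloneqq v^*(B\times\tMcU)\twoheadrightarrow E$, your $L_0$ is exactly a section of $(rs)^*Y\to E$, and the transport structure applied to the composable $\pi$-fibrations $Y\twoheadrightarrow E\twoheadrightarrow B$ gives a map $(rs)^*Y\to Y$ over $E$ along the original homotopy $H\colon E\to\Id_B(E)$. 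No functoriality of $\Id_B$ under post-composition is needed; only pullback stability of $\pi$-fibrations and the generic transport. The constancy hypothesis for part~(2) then reduces, after pulling back along $i$, to $\refl^*\transport=\id$, which holds by construction. Once you make this switch from ``push the homotopy forward'' to ``pull the fibration back,'' the rest of your argument goes through verbatim and coincides with the paper's.
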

\begin{proof}
  We first examine the goals.
  By representability, for \Cref{itm:retract-lift-lower}, to show $i$ admits
  lower-half lifting structures against $\pi$ is to show that for all maps
  $f \colon C \to B$, if $p \colon X \twoheadrightarrow f^*E$ is a
  $\pi$-fibration and $x \colon f^*U \hookrightarrow (f^*i)^*X$ is a section of
  $(f^*i)^*p \colon (f^*i)^*X \twoheadrightarrow U$ then one can find some
  section $\overline{x} \colon f^*E \hookrightarrow X$ section of $p$, as
  depicted below.
  \begin{equation*}
    \begin{tikzcd}[cramped]
      {(f^*i)^*X} & X \\
      {f^*U} & {f^*E}
      \arrow[from=1-1, to=1-2]
      \arrow[shift left, two heads, from=1-1, to=2-1]
      \arrow["\lrcorner"{anchor=center, pos=0.15, scale=1.5}, draw=none, from=1-1, to=2-2]
      \arrow[shift left, two heads, from=1-2, to=2-2]
      \arrow["{x}", shift left, hook', from=2-1, to=1-1]
      \arrow["{f^*i}"', from=2-1, to=2-2]
      \arrow["{\overline{x}}", shift left, dashed, hook', from=2-2, to=1-2]
    \end{tikzcd} \in \sfrac{\bC}{C}
  \end{equation*}
  The uniformity property for the lift means that for any $g \colon D \to C$, one must additionally have
  \begin{equation*}
    \ol{g^*x} = g^*\ol{x}
  \end{equation*}
  For \Cref{itm:retract-lift-strong}, one additionally needs to ensure that the
  pullback of $\overline{x}$ along $f^*i$ is $(f^*i)^*\overline{x} = x$.

  Similarly, by representability, we have the respective extension properties for $j$.
  So we must show that if $j$ has the corresponding extension properties for
  cases \Cref{itm:retract-lift-lower} and \Cref{itm:retract-lift-strong} then so
  does $i$.

  To this end, fix some $f \colon C \to B$ and let there be a $\pi$-fibration
  $X \twoheadrightarrow f^*E$ along with a section
  $x \colon f^*U \hookrightarrow (f^*i)^*X$ of the pullback
  $(f^*i)^*p \colon (f^*i)^*X \twoheadrightarrow f^*U$.
  In both cases \Cref{itm:retract-lift-lower} and
  \Cref{itm:retract-lift-strong}, we are equipped with a homotopy $H$ from $sr$
  to the identity.
  Pulling back the homotopy $H$ then gives a homotopy $f^*H$ from $(f^*s)(f^*r)$
  to the identity.
  Using $\transport$, we can apply the following procedure.
  \begin{enumerate}
    \item We first pullback $X \twoheadrightarrow f^*E$ along the two endpoint
    evaluation maps $\ev_0,\ev_1 \colon \Id_C(f^*E) \twoheadrightarrow f^*E$ to obtain
    $\ev_0^*X, \ev_1^*X \twoheadrightarrow \Id_C(f^*E)$.
    \begin{equation*}\small
      \begin{tikzcd}[cramped, row sep=small, column sep=small]
        {\ev_1^*X} \\
        && X \\
        & {\ev_0^*X} \\
        {\Id_C(f^*E)} && f^*E
        \arrow[from=1-1, to=2-3]
        \arrow[two heads, from=1-1, to=4-1]
        \arrow[two heads, from=2-3, to=4-3]
        \arrow[from=3-2, to=2-3]
        \arrow[two heads, from=3-2, to=4-1]
        \arrow["{\ev_1}", shift left, from=4-1, to=4-3]
        \arrow["{\ev_0}"', shift right, from=4-1, to=4-3]
      \end{tikzcd}
    \end{equation*}
    \item By assumption in case \Cref{itm:retract-lift-lower} and
    \Cref{constr:transport} in case \Cref{itm:retract-lift-strong} , we have a
    transport map $\transport \colon \ev_0^*X \to \ev_1^*X$ over $\Id_C(f^*E)$.
    \begin{equation*}\small
      \begin{tikzcd}[cramped, row sep=small, column sep=small]
        {\ev_1^*X} \\
        && X \\
        & {\ev_0^*X} \\
        {\Id_C(f^*E)} && f^*E
        \arrow[from=1-1, to=2-3]
        \arrow[two heads, from=1-1, to=4-1]
        \arrow[two heads, from=2-3, to=4-3]
        \arrow["\transport"{description}, from=3-2, to=1-1]
        \arrow[from=3-2, to=2-3]
        \arrow[two heads, from=3-2, to=4-1]
        \arrow["{\ev_1}", shift left, from=4-1, to=4-3]
        \arrow["{\ev_0}"', shift right, from=4-1, to=4-3]
      \end{tikzcd}
    \end{equation*}
    \item Further pulling back the $\transport$ map along
    $f^*H \colon f^*E \to \Id_C(f^*E)$ gives
    \begin{equation*}
      (f^*r)^*(f^*s)^*X \xrightarrow{(f^*H)^*\transport} X
    \end{equation*}
    over $f^*E$.
    \begin{equation*}\small
      \begin{tikzcd}[cramped, row sep=small, column sep=small]
        X && {\ev_1^*X} \\
        &&&& X \\
        & {(f^*r)^*(f^*s)^*X} && {\ev_0^*X} \\
        f^*E && {\Id_C(f^*E)} && f^*E
        \arrow[from=1-1, to=1-3]
        \arrow[two heads, from=1-1, to=4-1]
        \arrow[from=1-3, to=2-5]
        \arrow[two heads, from=1-3, to=4-3]
        \arrow[two heads, from=2-5, to=4-5]
        \arrow["{(f^*H)^*\transport}"{description}, from=3-2, to=1-1]
        \arrow[two heads, from=3-2, to=4-1]
        \arrow["\transport"{description}, from=3-4, to=1-3]
        \arrow[from=3-4, to=2-5]
        \arrow[two heads, from=3-4, to=4-3]
        \arrow["{f^*H}"', from=4-1, to=4-3]
        \arrow["{\ev_1}", shift left, from=4-3, to=4-5]
        \arrow["{\ev_0}"', shift right, from=4-3, to=4-5]
        \arrow[crossing over, from=3-2, to=3-4]
      \end{tikzcd}
    \end{equation*}
    \item Again pulling back along $f^*i \colon f^*U \to f^*E$ gives a map
    $(f^*i)^*(f^*H)^*\transport$ over $U$ from $(f^*i)^*X$ to itself because
    $rsi = i$ by assumption.
    \begin{equation*}\label{eqn:star}\tag{$\star$}\small
      \begin{tikzcd}[cramped, row sep=small, column sep=small]
        {(f^*i)^*X} && X && {\ev_1^*X} \\
        &&&&&& X \\
        & {(f^*i)^*X} && {(f^*r)^*(f^*s)^*X} && {\ev_0^*X} \\
        f^*U && f^*E && {\Id_C(f^*E)} && f^*E
        \arrow[from=1-1, to=1-3]
        \arrow[two heads, from=1-1, to=4-1]
        \arrow[from=1-3, to=1-5]
        \arrow[two heads, from=1-3, to=4-3]
        \arrow[from=1-5, to=2-7]
        \arrow[two heads, from=1-5, to=4-5]
        \arrow[two heads, from=2-7, to=4-7]
        \arrow["{(f^*i)^*(f^*H)^*\transport}"{description}, from=3-2, to=1-1]
        \arrow[two heads, from=3-2, to=4-1]
        \arrow["{(f^*H)^*\transport}"{description}, from=3-4, to=1-3]
        \arrow[two heads, from=3-4, to=4-3]
        \arrow["\transport"{description}, from=3-6, to=1-5]
        \arrow[from=3-6, to=2-7]
        \arrow[two heads, from=3-6, to=4-5]
        \arrow["{f^*i}"', from=4-1, to=4-3]
        \arrow["{(f^*H)}"', from=4-3, to=4-5]
        \arrow["{\ev_1}", shift left, from=4-5, to=4-7]
        \arrow["{\ev_0}"', shift right, from=4-5, to=4-7]
        \arrow[crossing over, from=3-2, to=3-4]
        \arrow[crossing over, from=3-4, to=3-6]
      \end{tikzcd}
    \end{equation*}
    Moreover in the case of \Cref{itm:retract-lift-strong}, by strength, we have
    $H|_i = \refl \cdot i$, so
    \begin{equation*}
      (f^*i)^*(f^*H)^*\transport
      = f^*(i^*H^*\transport)
      = f^*(i^*\refl^*\transport) = f^*(i^*\id) = \id
    \end{equation*}
  \end{enumerate}

  Therefore, we can obtain the required extension by taking
  \begin{equation*}
   \overline{x} \coloneqq (f^*H)^*\transport \cdot (f^*s)^*x'
  \end{equation*}
  where $x'$ is the extension of $x$ along $f^*j$.
  Step by step, starting with a fibration $X \twoheadrightarrow E$ and a section
  $x \colon U \hookrightarrow i^*X$ of the pullback,
  \begin{equation*}
    \begin{tikzcd}[cramped, row sep=small, column sep=small]
      && X \\
      {(f^*i)^*X} \\
      && f^*E \\
      f^*U
      \arrow[two heads, from=1-3, to=3-3]
      \arrow[shift left, two heads, from=2-1, to=4-1]
      \arrow["x", shift left, hook', from=4-1, to=2-1]
      \arrow["{f^*i}", from=4-1, to=3-3]
    \end{tikzcd}
  \end{equation*}
  one applies the following steps:
  \begin{enumerate}
    \item We first pullback $X \twoheadrightarrow f^*E$ along $f^*r$ to get a
    $\pi$-fibration $(f^*r)^*X \twoheadrightarrow f^*E'$, as seen below
    \begin{equation*}
      \begin{tikzcd}[cramped, row sep=small, column sep=small]
        && X \\
        {(f^*i)^*X} &&& {(f^*r)^*X} \\
        && f^*E \\
        f^*U &&& {f^*E'}
        \arrow[two heads, from=1-3, to=3-3]
        \arrow[shift left, two heads, from=2-1, to=4-1]
        \arrow[shift left, two heads, from=2-4, to=4-4]
        \arrow["x", shift left, hook', from=4-1, to=2-1]
        \arrow["{f^*i}", from=4-1, to=3-3]
        \arrow["{f^*j}"{description}, from=4-1, to=4-4]
        \arrow["{f^*r}"{description}, from=4-4, to=3-3]
      \end{tikzcd}
    \end{equation*}
    \item Because $r$ is under $U$ and $j$ uniformly lifts against $\pi$ by
    assumption, we can extend $x \colon f^*U \hookrightarrow (f^*i)^*X = (f^*j)^*(f^*r)^*X$ along
    $f^*j$ to some $x' \colon f^*E' \hookrightarrow (f^*r)^*X$ as seen below
    \begin{equation*}
      \begin{tikzcd}[cramped, row sep=small, column sep=small]
        && X \\
        {(f^*i)^*X} &&& {(f^*r)^*X} \\
        && f^*E \\
        f^*U &&& {f^*E'}
        \arrow[two heads, from=1-3, to=3-3]
        \arrow[shift left, two heads, from=2-1, to=4-1]
        \arrow[shift left, two heads, from=2-4, to=4-4]
        \arrow["x", shift left, hook', from=4-1, to=2-1]
        \arrow["{f^*i}", from=4-1, to=3-3]
        \arrow["{f^*j}"{description}, from=4-1, to=4-4]
        \arrow["{x'}", shift left, hook', from=4-4, to=2-4]
        \arrow["{f^*r}"{description}, from=4-4, to=3-3]
      \end{tikzcd}
    \end{equation*}
    where in \Cref{itm:retract-lift-strong}, we additionally have
    \begin{equation*}
      (f^*j)^*x' = x
    \end{equation*}
    \item We then pullback $x' \colon f^*E' \hookrightarrow (f^*r)^*X$ along $f^*s \colon f^*E \to f^*E'$
    to obtain a section $(f^*s)^*x' \colon f^*E \to (f^*s)^*(f^*r)^*X$, as seen below
    \begin{equation*}
      \begin{tikzcd}[cramped, row sep=small, column sep=small]
        && X \\
        {(f^*i)^*X} &&& {(f^*r)^*X} \\
        && E && {(f^*s)^*(f^*r)^*X} \\
        f^*U &&& {f^*E'} \\
        &&&& f^*E
        \arrow[two heads, from=1-3, to=3-3]
        \arrow[shift left, two heads, from=2-1, to=4-1]
        \arrow[shift left, two heads, from=2-4, to=4-4]
        \arrow[shift left, two heads, from=3-5, to=5-5]
        \arrow["x", shift left, hook', from=4-1, to=2-1]
        \arrow["{f^*i}", from=4-1, to=3-3]
        \arrow["{f^*j}"{description}, from=4-1, to=4-4]
        \arrow["{f^*i}"', from=4-1, to=5-5]
        \arrow["{x'}", shift left, hook', from=4-4, to=2-4]
        \arrow["{f^*r}"{description}, from=4-4, to=3-3]
        \arrow["{(f^*s)^*x'}", shift left, hook', from=5-5, to=3-5]
        \arrow["{f^*s}"{description}, from=5-5, to=4-4]
      \end{tikzcd}
    \end{equation*}
    \item But our goal is to have something that ends up in $X$, so we make a
    final adjustment using $(f^*H)^*\transport \colon (f^*s)^*(f^*r)^*X \to X$
    over $f^*E$ from \Cref{eqn:star} and put
    $\overline{x} \coloneqq (f^*H)^*\transport \cdot (f^*s)^*x'$, as seen below
    \begin{equation*}
      \begin{tikzcd}[cramped, row sep=small, column sep=small]
        && X \\
        {(f^*i)^*X} &&& {(f^*r)^*X} \\
        && f^*E && {(f^*s)^*(f^*r)^*X} \\
        f^*U &&& {f^*E'} && X \\
        &&&& f^*E
        \arrow[two heads, from=1-3, to=3-3]
        \arrow[shift left, two heads, from=2-1, to=4-1]
        \arrow[shift left, two heads, from=2-4, to=4-4]
        \arrow["{(f^*H)^*\transport}"{}, from=3-5, to=4-6]
        \arrow[shift left, two heads, from=3-5, to=5-5]
        \arrow["x", shift left, hook', from=4-1, to=2-1]
        \arrow["{f^*i}", from=4-1, to=3-3]
        \arrow["{f^*j}"{description}, from=4-1, to=4-4]
        \arrow["{f^*i}"', from=4-1, to=5-5]
        \arrow["{x'}", shift left, hook', from=4-4, to=2-4]
        \arrow["{f^*r}"{description}, from=4-4, to=3-3]
        \arrow[shift left, two heads, from=4-6, to=5-5]
        \arrow["{(f^*s)^*x'}", shift left, hook', from=5-5, to=3-5]
        \arrow["{f^*s}"{description}, from=5-5, to=4-4]
        \arrow["{\overline{x}}", shift left, hook', from=5-5, to=4-6]
      \end{tikzcd}
    \end{equation*}
  \end{enumerate}
  In the case of \Cref{itm:retract-lift-lower}, we are now done as we only need
  to provide some section of $X \twoheadrightarrow f^*E$.

  In the case of \Cref{itm:retract-lift-strong}, this procedure gives a required
  extension $\overline{x}$ because the adjustment factor by $(f^*H)^*\transport$ is
  killed off by pullback along $f^*i$.
  \begin{equation*}
    (f^*i)^*\overline{x} = (f^*i)^*(f^*s)^*x' \cdot (f^*i)^*(f^*H)^*\transport = (f^*j)^*x' \cdot \id = x
  \end{equation*}

  In both cases, due to the uniformity of the $\transport$ structure and
  uniformity lift of $j$ against $\pi$ and the uniformity of the $\transport$
  structure, for any $g \colon D \to C$ we have $g^*x' = (g^*x)'$ so
  \begin{align*}
    g^*\overline{x}
    &=
      g^*((f^*H)^*\transport) \cdot g^*((f^*s)^*x')
    \\
    &=
      (g^*f^*H)^*(g^*\transport) \cdot (g^*f^*s)^*(g^*x')
    \\
    &=
      (g^*f^*H)^*\transport \cdot (g^*f^*s)^*(g^*x)'
    \\
    &=
      ((gf)^*H)^*\transport \cdot ((gf)^*s)^*(g^*x)'
    \\
    g^*\overline{x}
    &=
      \overline{g^*x}
  \end{align*}
  This proves uniformity of the lift.
\end{proof}

A special case of constant-restricted homotopy retracts are duals of strong
deformation retracts.
\begin{definition}[{\cite[Definition 2.4.16]{cis19}}]\label{def:sdr}
  Suppose $\pi \colon \tMcU \to \McU$ is a universal map equipped with a
  pre-$\Id$-type structure $\Id \colon \tMcU \times_\McU \tMcU \to \McU$.

  Fix an object $B \in \bC$.
  Given a map $f \colon E \to Y \in \sfrac{\bC}{B}$ where
  $E \twoheadrightarrow B$ is a $\pi$-fibration, another map
  $g \colon Y \to E \in \sfrac{\bC}{B}$ exhibits $f$ as the \emph{dual of a
    strong $\Id$-deformation retract} when $(g,f)$ is a (precise)
  section-retraction pair (i.e. $fg = \id_Y$) and $gf$ is equipped with an
  $\Id$-homotopy to the identity that is constant when restricted along $g$.
  \begin{equation*}
    \begin{tikzcd}
      Y & E & {\Id_B(E)} \\
      E && {E \times_B E}
      \arrow["g", hook, from=1-1, to=1-2]
      \arrow["g"', hook, from=1-1, to=2-1]
      \arrow["\refl", hook, from=1-2, to=1-3]
      \arrow[two heads, "{\ev_\partial}", from=1-3, to=2-3]
      \arrow["H"{description}, dashed, from=2-1, to=1-3]
      \arrow["{(gf,\id)}"', from=2-1, to=2-3]
    \end{tikzcd}
  \end{equation*}
  %
  %
\end{definition}

\begin{proposition}\label{prop:dual-df-lift}
  Suppose $\pi \colon \tMcU \to \McU$ is equipped with a pre-$\Id$-type
  structure $\Id \colon \tMcU \times_\McU \tMcU \to \McU$.

  Let $f \colon E \to Y$ be a map over $E$ where $E$ is $\pi$-fibrant.
  \begin{enumerate}
    \item\label{itm:dual-df-half-lift}
    Suppose that the pre-$\Id$-type structure is equipped with a $\transport$-structure.
    Then, a section $g \colon Y \hookrightarrow E$ is an $\Id$-retraction of
    $f \colon E \to Y$ if and only if
    \begin{equation*}
      \left(
        {\begin{tikzcd}[cramped]
            Y \ar[d, "{g}"', hook] \\ E
          \end{tikzcd}}
        \fraclowerslash{B}
        {\begin{tikzcd}[cramped]
            B \times \tMcU \ar[d, two heads] \\ B \times \McU
          \end{tikzcd}}\right)
      \neq
      \emptyset
    \end{equation*}
    \item\label{itm:dual-df-lift}
    Suppose the pre-$\Id$-type structure is an $\Id$-type structure and $\pi$
    has a $\Pi$-type structure.
    Then, a section $g \colon Y \to E$ exhibits $f \colon E \to Y$ as the dual
    of a strong $\Id$-deformation retraction if and only if
    \begin{equation*}
      \left(
        {\begin{tikzcd}[cramped]
            Y \ar[d, "{g}"', hook] \\ E
          \end{tikzcd}}
        \fracsquareslash{B}
        {\begin{tikzcd}[cramped]
            B \times \tMcU \ar[d, two heads] \\ B \times \McU
          \end{tikzcd}}\right)
      \neq
      \emptyset
    \end{equation*}
  \end{enumerate}
\end{proposition}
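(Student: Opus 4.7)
The plan is to prove each ``$\Leftrightarrow$'' by treating the two directions separately: one direction is an immediate application of \Cref{lem:retract-lift}, and the other is obtained by instantiating the hypothesised lifting structure at the specific $\pi$-fibration $\ev_\partial \colon \Id_B(E) \twoheadrightarrow E \times_B E$ (which is $\pi$-fibrant by pulling back $\pi$ along $\Id \cdot (\ceil{E} \times_{\ceil{B}} \ceil{E})$, using only the pre-$\Id$-type structure).

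For the direction ``retract data $\Rightarrow$ lifting structure'', I apply \Cref{lem:retract-lift} with the assignment $U \coloneqq Y$, $E' \coloneqq Y$, $i \coloneqq g$, $j \coloneqq \id_Y$, $s \coloneqq f$, $r \coloneqq g$. The commutativity conditions of the retract diagram reduce to $si = fg = \id_Y = j$ (the section hypothesis) and $rj = g \cdot \id_Y = g = i$. The assumed $\Id$-homotopy $gf \sim \id_E$ plays the role of $H$ in the lemma, and in part 2 the constant-along-$g$ hypothesis from \Cref{def:sdr} is exactly the constant-along-$i$ condition demanded by \Cref{itm:retract-lift-strong}. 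Since $\id_Y \colon Y \to Y$ trivially admits both lower-half and full lifting structures against any $\pi$-fibration (the given top map of any commuting square is itself its own lift, with obvious uniformity), the lemma transfers this non-empty lifting set to $g$, yielding \Cref{itm:dual-df-half-lift} and \Cref{itm:dual-df-lift} respectively.

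For the converse direction, I will instantiate the given lifting structure of $g$ against $\pi$ in $\sfrac{\bC}{B}$ at the $\pi$-fibration $\Id_B(E) \twoheadrightarrow E \times_B E$, considering the lifting problem
\[
\begin{tikzcd}[cramped]
Y \ar[r, "\refl \cdot g"] \ar[d, "g"', hook] & \Id_B(E) \ar[d, "\ev_\partial", two heads] \\
E \ar[r, "{(gf, \id_E)}"'] & E \times_B E
\end{tikzcd}
\]
which commutes because $\ev_\partial \cdot \refl = \Delta_E$, so both composites equal $(g,g)$ using $fg = \id_Y$. A lower-half lift is precisely a map $H \colon E \to \Id_B(E)$ with $\ev_\partial \cdot H = (gf, \id_E)$, i.e., an $\Id$-homotopy $gf \sim \id_E$, establishing \Cref{itm:dual-df-half-lift}. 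A full lift additionally enforces $H \cdot g = \refl \cdot g$, which means the restriction of the homotopy along $g$ is the reflexivity homotopy at $g$, establishing the constancy condition for \Cref{itm:dual-df-lift}.

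The main subtle point is the passage from lifting against the generic map $B \times \tMcU \to B \times \McU$ appearing in the statement to lifting against the concrete $\pi$-fibration $\Id_B(E) \to E \times_B E$. This is the standard representability reduction already exploited at the start of the proof of \Cref{lem:retract-lift}: every $\pi$-fibration over $E$ arises, up to canonical $\pi$-fibrant replacement, as a pullback of the generic one, and uniformity of the lifting structure makes the choice of fibrancy structure inessential. Everything else in the argument is routine composition and pullback bookkeeping.
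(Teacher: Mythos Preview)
Your proposal is correct and follows essentially the same approach as the paper: both directions use the same retract setup $(U,E',i,j,s,r) = (Y,Y,g,\id_Y,f,g)$ to invoke \Cref{lem:retract-lift} for the $\Rightarrow$ direction, and both instantiate the lifting structure at the $\pi$-fibration $\ev_\partial \colon \Id_B(E) \twoheadrightarrow E \times_B E$ with the square $(\refl \cdot g, (gf,\id))$ for the $\Leftarrow$ direction. Your explicit discussion of the representability reduction from the generic fibration $B \times \pi$ to the concrete $\ev_\partial$ is a nice addition that the paper leaves implicit here.
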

\begin{proof}
  We first show the $\Rightarrow$ directions for both parts.
  Because $g$ is a precise section of $f$, we have the following
  \begin{equation*}
    \begin{tikzcd}[cramped]
      & Y \\
      E & Y & E
      \arrow["g"', from=1-2, to=2-1]
      \arrow["{=}"{description}, from=1-2, to=2-2]
      \arrow["g", from=1-2, to=2-3]
      \arrow["f"', from=2-1, to=2-2]
      \arrow["g"', from=2-2, to=2-3]
    \end{tikzcd} \in \sfrac{\bC}{B}
  \end{equation*}
  In both cases \Cref{itm:dual-df-half-lift} and \Cref{itm:dual-df-lift}, there
  is some $\Id$-homotopy $H$ between the bottom row composite $gf$ and $\id$ and
  further in case \Cref{itm:dual-df-lift} one has that $H|_s = \refl \cdot s$ by
  strength.
  Because the identity map $Y \to Y$ lifts on the left against $\tMcU \to \McU$,
  the result for \Cref{itm:dual-df-half-lift} follows by
  \Cref{lem:retract-lift}\Cref{itm:retract-lift-lower} and the result for
  \Cref{itm:dual-df-lift} follows by
  \Cref{lem:retract-lift}\Cref{itm:retract-lift-strong}.

  The $\Leftarrow$ direction follows because
  $\ev_\partial \colon \Id_B(X) \twoheadrightarrow X \times_B X$ is itself a
  $\pi$-fibration by the $\Id$-type structure.
  Thus, in both the $\Leftarrow$ directions for cases
  \Cref{itm:dual-df-half-lift} and \Cref{itm:dual-df-lift}, one has a diagonal
  filler as follows, where the upper triangle does not necessarily commute for
  \Cref{itm:dual-df-half-lift}, but does commute for \Cref{itm:dual-df-lift}.
  \begin{equation*}
    \begin{tikzcd}
      Y & E & {\Id_B(E)} \\
      E && {E \times_B E}
      \arrow["g", hook, from=1-1, to=1-2]
      \arrow["g"', hook, from=1-1, to=2-1]
      \arrow["\refl", hook, from=1-2, to=1-3]
      \arrow[two heads, "{\ev_\partial}", from=1-3, to=2-3]
      \arrow[""{name=0, anchor=center, inner sep=0}, "H"{description}, dashed, from=2-1, to=1-3]
      \arrow["{(?)}"{description}, draw=none, from=1-1, to=0]
      \arrow["{(gf,\id)}"', from=2-1, to=2-3]
    \end{tikzcd}
  \end{equation*}
  In other words, in case \Cref{itm:dual-df-half-lift}, $(f,g)$ is an
  $\Id$-homotopy section-retraction pair and furthermore in
  \Cref{itm:dual-df-lift}, $g$ exhibits $f$ as the dual of a strong
  $\Id$-deformation retract.
\end{proof}

In particular, applying \Cref{prop:dual-df-lift} in the context of univalence
for internal universes, we get the following characterisation of internal
univalent universes.

\begin{theorem}\label{thm:univalence-sdf}
  Let $\pi \colon \tMcU \to \McU$ be a universal map equipped with pre-$\Id$-,
  $\Pi$-, $\Sigma$-structures and an internal universe
  \begin{equation*}
    \pi_0 \colon \tMcU_0 \to \McU_0
  \end{equation*}
  Further assume the internal universe $\pi_0$ is equipped with pre-$\Id$-,
  $\Pi$-, $\Sigma$-structures respectively denoted $\Id_0,\Pi_0,\Sigma_0$ and
  $\pi$ is equipped with a $\Sigma$-type structure.
  \begin{enumerate}
    \item\label{itm:univalence-sdf-book}
    Suppose the pre-$\Id$-structure on $\pi$ has a $\transport$-structure.
    Then, internal (universe, pre-$\Id$-type)-pair $(\pi_0,\Id_0)$ is book
    $\pi$-univalent exactly when
    $\trv \cdot p \colon\HIso_{\McU_0}^{\Id_0}(\tMcU_0) \to \McU_0 \to
    \HIso_{\McU_0}^{\Id_0}(\tMcU_0)$ is $\Id$-homotopic to the identity, where $p$ is
    either $p=\src$ or $p=\dest$.
    \item\label{itm:univalence-sdf-pointed}
    Suppose the pre-$\Id$-structure on $\pi$ is a full $\Id$-structure.
    Then, the internal (universe, pre-$\Id$-type)-pair $(\pi_0,\Id_0)$ is
    pointed $\pi$-univalent exactly when
    $\trv \cdot p \colon\HIso_{\McU_0}^{\Id_0}(\tMcU_0) \to \McU_0 \to
    \HIso_{\McU_0}^{\Id_0}(\tMcU_0)$ is $\Id$-homotopic to the identity via some
    homotopy that is constant when restricted along $\trv$, where $p$ is either
    $p=\src$ or $p=\dest$.
  \end{enumerate}

  In other words, book univalence is equivalent to the following lifting problem
  admitting a lower-half solution and pointed univalence is equivalent to the
  following lifting problem admitting a full solution.
  \begin{equation*}
    \begin{tikzcd}
      \McU_0 & \HIso_{\McU_0}^{\Id_0}(\tMcU_0) & {\Id_{\McU_0}(\HIso_{\McU_0}^{\Id_0}(\tMcU_0))} \\
      \HIso_{\McU_0}^{\Id_0}(\tMcU_0) && {\HIso_{\McU_0}^{\Id_0}(\tMcU_0) \times_{\McU_0} \HIso_{\McU_0}^{\Id_0}(\tMcU_0)}
      \arrow["{\trv}", hook, from=1-1, to=1-2]
      \arrow["{\trv}"', hook, from=1-1, to=2-1]
      \arrow["\refl", hook, from=1-2, to=1-3]
      \arrow[two heads, "{\ev_\partial}", from=1-3, to=2-3]
      \arrow[""{name=0, anchor=center, inner sep=0}, "H"{description}, dashed, from=2-1, to=1-3]
      \arrow["{(\trv\cdot p,\id)}"', from=2-1, to=2-3]
      \arrow["{(?)}"{description}, draw=none, from=1-1, to=0]
    \end{tikzcd}
  \end{equation*}
  %
  %
\end{theorem}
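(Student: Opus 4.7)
The plan is to apply \Cref{prop:dual-df-lift} in the slice $\sfrac{\bC}{1}\cong\bC$ to the section-retraction pair $(\trv, p)$ with $E = \HIso_{\McU_0}^{\Id_0}(\tMcU_0)$, $Y = \McU_0$, $f = p$ (either $p=\src$ or $p=\dest$) and $g = \trv$. Since $1\times\tMcU \to 1\times\McU$ coincides with $\pi\colon\tMcU\to\McU$, the lifting problem considered in \Cref{prop:dual-df-lift} matches the absolute lifting problem in \Cref{def:axm-univalence}, and the lifting square in the final display of the theorem is exactly the one appearing in \Cref{def:sdr} with this data.

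To invoke \Cref{prop:dual-df-lift}, I first verify its preconditions. The map $\trv$ is a common section of $\src$ and $\dest$ by its construction in \Cref{constr:trv-htpy}, so $p\cdot\trv = \id_{\McU_0}$ in both cases. By \Cref{cor:src-dest-int-uni-fib}, the map $p$ itself is $\pi$-fibrant; composing with the canonically $\pi$-fibrant $\McU_0\twoheadrightarrow 1$ (supplied by the internal universe structure) using the $\Sigma$-type structure on $\pi$ via \Cref{prop:Pi-Sigma-generic} shows that $\HIso_{\McU_0}^{\Id_0}(\tMcU_0)\twoheadrightarrow 1$ is a $\pi$-fibration, giving the ``$E$ is $\pi$-fibrant'' hypothesis of \Cref{prop:dual-df-lift}. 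The other datum required by \Cref{def:axm-univalence}, namely $\pi$-fibrancy of $(\src,\dest)$ itself, is supplied independently by \Cref{cor:endpt-int-uni-fib}.

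With preconditions in hand, \Cref{itm:univalence-sdf-book} follows from \Cref{prop:dual-df-lift}\Cref{itm:dual-df-half-lift}, whose hypothesis is exactly the assumed $\transport$-structure on $\pi$'s pre-$\Id$-type: a lower-half lift of $\trv$ against $\pi$ exists iff $\trv$ is an $\Id$-retraction of $p$, i.e., iff $\trv\cdot p$ is $\Id$-homotopic to the identity. Likewise, \Cref{itm:univalence-sdf-pointed} follows from \Cref{prop:dual-df-lift}\Cref{itm:dual-df-lift}, whose hypotheses are the full $\Id$-type structure on $\pi$ and the $\Pi$-type structure on $\pi$: a full uniform lift of $\trv$ against $\pi$ exists iff $\trv$ exhibits $p$ as the dual of a strong $\Id$-deformation retract, i.e., iff there is a homotopy $\trv\cdot p \simeq \id$ constant when restricted along $\trv$. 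The choice $p=\src$ versus $p=\dest$ is immaterial, since the univalence lifting problem mentions neither, and either application of \Cref{prop:dual-df-lift} yields the same equivalence. The work is thus essentially bookkeeping---matching the generic dual-of-deformation-retract lifting square against the univalence one---so I anticipate no substantial obstacle beyond tracking which structural hypothesis feeds into which part of \Cref{prop:dual-df-lift}.
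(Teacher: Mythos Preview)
Your proposal is correct and follows essentially the same approach as the paper: both reduce the theorem to an application of \Cref{prop:dual-df-lift} with $B=1$, $E=\HIso_{\McU_0}^{\Id_0}(\tMcU_0)$, $Y=\McU_0$, $f=p$, $g=\trv$, after using \Cref{cor:src-dest-int-uni-fib} and the $\Sigma$-type structure on $\pi$ to verify the fibrancy hypotheses. Your write-up is a bit more explicit about the bookkeeping (e.g., composing $p$ with $\McU_0\twoheadrightarrow 1$ to get fibrancy of $E$ over $1$, and separately invoking \Cref{cor:endpt-int-uni-fib} for the $(\src,\dest)$-fibrancy clause of \Cref{def:axm-univalence}), but the argument is the same.
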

\begin{proof}
  Because $\pi$ admits a $\Sigma$-type structure, the maps $\src,\dest$ are
  $\pi$-fibrant by \Cref{cor:src-dest-int-uni-fib} and $\trv$ is by construction a
  section of both of these maps.
  Consequently, $\trv$ is a map between $\pi$-fibrant objects in $\bC$.
  The first and second parts now follow by the respective first and second parts
  of \Cref{prop:dual-df-lift} along with an unfolding of the definition of an
  $\Id$-section-retraction pair and the definition of the dual of a strong
  $\Id$-deformation retract.
\end{proof}


\subsection{Univalence Type-Theoretically}
The formulation of univalence in \Cref{def:axm-univalence} was motivated by
lifting conditions from Quillen model category models of homotopy type theory.
However, when working type-theoretically, univalence is often stated by saying
that the map obtained by $\MsJ$-elimination between $\HIso$ and the $\Id$-type
path object is itself an $\Id$-type homotopy equivalence.
We check our sanity by giving an analogue result of \cite[Theorem 3.3.7]{kl21}
showing that these two versions of univalence are equivalent.

\begin{definition}\label{def:tt-univalence}
  Let $\pi \colon \tMcU \to \McU$ be a universal map equipped with
  $\Sigma$-,$\Pi$-,$\Id$-type structures and an internal universe
  \begin{equation*}
    \pi_0 \colon \tMcU_0 \to \McU_0
  \end{equation*}
  Further assume the internal universe $\pi_0$ is equipped with $\Id$-, $\Pi$-,
  $\Sigma$-structures respectively denoted $\Id_0,\Pi_0,\Sigma_0$ and the
  external universe $\pi$ is equipped with a $\Sigma$-type structure.

  A \emph{type-theoretic pointed univalence} structure on the internal universe
  relative to the ambient universe consists of a pair of maps
  $s,r \colon \HIso_{\McU_0}(\tMcU_0) \rightrightarrows \Id(\McU_0)$ over
  $\McU_0\times\McU_0$ and under $\McU_0$
  \begin{equation*}
    \begin{tikzcd}[cramped, row sep=small, column sep=small]
      & {\McU_0} \\
      {\HIso_{\McU_0}(\tMcU_0)} && {\Id(\McU_0)} \\
      & {\McU_0 \times \McU_0}
      \arrow["\trv"', hook', from=1-2, to=2-1]
      \arrow["\refl", hook, from=1-2, to=2-3]
      \arrow["s"{description}, shift left=3, dashed, from=2-1, to=2-3]
      \arrow["{(\src,\dest)}"', two heads, from=2-1, to=3-2]
      \arrow["J"{description}, from=2-3, to=2-1]
      \arrow["r"', shift left=-3, dashed, from=2-1, to=2-3]
      \arrow["{\ev_\partial}", two heads, from=2-3, to=3-2]
    \end{tikzcd}
  \end{equation*}
  along with $\Id$-homotopies $H_s$ and $H_r$ over $\McU_0 \times \McU_0$
  respectively from $Js$ to $\id$ \emph{constant along $\trv$} and $rJ$ to $\id$
  \emph{constant along $\refl$} in the sense of \Cref{def:const-res-htpy}.

  A \emph{type-theoretic book univalence} structure is just a choice maps $s,r$
  with $\Id$-homotopies $H_s,H_r$ as above, except $H_s$ and $H_r$ need not be
  constant when restricted.

  The internal univalence is pointed (respectively, book) univalent relative to
  the ambient universe when it can be equipped with such a pointed
  (respectively, book) $\pi$-univalence structure.
\end{definition}

Thanks to the homotopy retract stability result from \Cref{lem:retract-lift},
type-theoretic pointed univalence immediately implies lifting-form pointed
univalence as defined in \Cref{def:axm-univalence}.
The other direction is basically the same proof of homotopical uniqueness of
factorisations in model categories, which we recall in the setting of universe
categories.

\begin{lemma}\label{lem:path-htpy}
  Assume that $\pi\colon\tMcU \to \McU$ is equipped with $\Id$-types.
  Then, for every commutative diagram below where $E_i \twoheadrightarrow B$ are
  $\pi$-fibrant and $X \to E_i$ lifts against $\pi$, whenever
  $J \colon E_0 \to E_1$ is a filler, one can find fillers
  $s,r \colon E_1 \rightrightarrows E_0$ in the other direction
  \begin{equation*}
    \begin{tikzcd}[cramped]
      & X \\
      {E_0} && {E_1} \\
      & B
      \arrow[from=1-2, to=2-1]
      \arrow[from=1-2, to=2-3]
      \arrow["J"{description}, from=2-1, to=2-3]
      \arrow[two heads, from=2-1, to=3-2]
      \arrow["r", shift left=3, dashed, from=2-3, to=2-1]
      \arrow["s"', shift right=3, dashed, from=2-3, to=2-1]
      \arrow[two heads, from=2-3, to=3-2]
    \end{tikzcd}
  \end{equation*}
  along with $\Id$-homotopies $H_s$ and $H_r$ over $B$ respectively from $Js$ to
  $\id$ constant along $X \to E_1$ and $rJ$ to $\id$ constant along $X \to E_0$.
\end{lemma}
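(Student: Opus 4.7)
The plan is to construct $s$ and $r$ by directly lifting $X \to E_1$ against the $\pi$-fibration $E_0 \twoheadrightarrow B$, and then construct the $\Id$-homotopies $H_s$ and $H_r$ by a further lifting against the endpoint evaluation maps of the $\Id$-types. The key inputs are that $X \to E_i$ lifts against every $\pi$-fibration, together with the fact that $\ev_\partial \colon \Id_B(E) \twoheadrightarrow E \times_B E$ is a $\pi$-fibration by the $\Id$-type structure.

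First I would take $s$ (and, independently, $r$) to be any filler of the square
\begin{equation*}
\begin{tikzcd}[cramped]
X \ar[r] \ar[d] & E_0 \ar[d, two heads] \\
E_1 \ar[r] \ar[ru, dashed, "s"] & B
\end{tikzcd}
\end{equation*}
whose outer boundary commutes since the given diagram commutes over $B$. A filler exists because $X \to E_1$ lifts against the $\pi$-fibration $E_0 \twoheadrightarrow B$. By construction $s$ lies over $B$ and satisfies $s \cdot (X \to E_1) = (X \to E_0)$, which upon postcomposition with $J$ yields $Js \cdot (X \to E_1) = J \cdot (X \to E_0) = (X \to E_1)$. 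Symmetrically we obtain $r$ with $r \cdot (X \to E_1) = (X \to E_0)$ and $rJ \cdot (X \to E_0) = (X \to E_0)$.

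Next, for $H_s$ I would solve the lifting problem
\begin{equation*}
\begin{tikzcd}[cramped]
X \ar[r, "\refl"] \ar[d] & {\Id_B(E_1)} \ar[d, "{\ev_\partial}", two heads] \\
E_1 \ar[r, "{(Js, \id)}"'] \ar[ru, dashed, "{H_s}"] & {E_1 \times_B E_1}
\end{tikzcd}
\end{equation*}
whose outer square commutes precisely because of the identity $Js \cdot (X \to E_1) = (X \to E_1)$ established above (both composites into $E_1 \times_B E_1$ factor as $(X \to E_1, X \to E_1)$). The filler exists because $\ev_\partial$ is a $\pi$-fibration and $X \to E_1$ has left lifting against $\pi$. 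The resulting $H_s$ is simultaneously an $\Id$-homotopy from $Js$ to $\id_{E_1}$ over $B$ (from the lower triangle) and restricts along $X \to E_1$ to $\refl$ (from the upper triangle), i.e.\ is constant along $X \to E_1$ in the sense of \Cref{def:const-res-htpy}.

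The construction of $H_r$ is entirely symmetric, swapping the roles of $E_0$ and $E_1$ and using the $\Id$-type $\Id_B(E_0) \twoheadrightarrow E_0 \times_B E_0$, together with the identity $rJ \cdot (X \to E_0) = (X \to E_0)$. I do not expect any real obstacle: the only thing to verify beyond routine invocation of lifting is that the commutativities of the two Id-homotopy squares on $X$ are guaranteed by the two identities $Js \cdot (X \to E_1) = (X \to E_1)$ and $rJ \cdot (X \to E_0) = (X \to E_0)$, both of which are immediate consequences of how $s$ and $r$ were constructed and the fact that $J \cdot (X \to E_0) = (X \to E_1)$ in the original commutative diagram.
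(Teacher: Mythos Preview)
Your proposal is correct and essentially identical to the paper's proof. The only cosmetic difference is that the paper takes a single lift $\ell \colon E_1 \to E_0$ and sets $s = r = \ell$, whereas you allow $s$ and $r$ to be chosen independently from the same lifting problem; the two homotopies $H_s$ and $H_r$ are then produced by exactly the same lifting problems against $\ev_\partial$ that you describe.
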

\begin{proof}
  By the lifting property of $X \to E_1$, we can find a lift
  $\ell \colon E_1 \to E_0$, which we take to be both $s$ and $r$.
  Then, the homotopies from $J\ell$ to $\id$ constant along $X \to E_1$ and
  $\ell J$ to $\id$ constant along $X \to E_0$ are provided by solving the two
  lifting problems.
  \begin{center}
    \begin{minipage}{0.45\linewidth}
      \begin{equation*}
        \begin{tikzcd}[cramped]
          X & {E_1} & {\Id_B(E_1)} \\
          {E_1} && {E_1 \times_B E_1}
          \arrow[from=1-1, to=1-2]
          \arrow[from=1-1, to=2-1]
          \arrow["\refl", hook, from=1-2, to=1-3]
          \arrow["\Delta"{description, pos=0.7}, from=1-2, to=2-3]
          \arrow["{\ev_\partial}", two heads, from=1-3, to=2-3]
          \arrow["{(J\ell,\id)}"', from=2-1, to=2-3]
          \arrow["{H_s}"{description}, dashed, from=2-1, to=1-3, crossing over]
        \end{tikzcd}
      \end{equation*}
    \end{minipage}
    \begin{minipage}{0.45\linewidth}
      \begin{equation*}
        \begin{tikzcd}[cramped]
          X & {E_0} & {\Id_B(E_0)} \\
          {E_0} && {E_0 \times_B E_0}
          \arrow[from=1-1, to=1-2]
          \arrow[from=1-1, to=2-1]
          \arrow["\refl", hook, from=1-2, to=1-3]
          \arrow["\Delta"{description, pos=0.7}, from=1-2, to=2-3]
          \arrow["{\ev_\partial}", two heads, from=1-3, to=2-3]
          \arrow["{(\ell J,\id)}"', from=2-1, to=2-3]
          \arrow["{H_r}"{description}, dashed, from=2-1, to=1-3, crossing over]
        \end{tikzcd}
      \end{equation*}
    \end{minipage}
  \end{center}
\end{proof}

With this, we can observe the expected equivalence result between univalence
formulated in terms of lifting and univalence formulated type-theoretically.

\begin{theorem}[{\cite[Theorem 3.3.7]{kl21}}]\label{thm:univalence-tt}
  Let $\pi \colon \tMcU \to \McU$ be a universal map equipped with
  $\Sigma$-,$\Pi$-,$\Id$-type structures and an internal universe
  \begin{equation*}
    \pi_0 \colon \tMcU_0 \to \McU_0
  \end{equation*}
  Further assume the internal universe $\pi_0$ is equipped with $\Id$-, $\Pi$-,
  $\Sigma$-structures respectively denoted $\Id_0,\Pi_0,\Sigma_0$ and $\pi$ is
  equipped with a $\Sigma$-type structure.

  Then, the internal univalence is type-theoretically pointed (respectively,
  book) univalent relative to the ambient universe as formulated in
  \Cref{def:tt-univalence} if and only if $(\pi_0,\Id_0)$ is pointed
  (respectively, book) $\pi$-univalent as phrased in terms of lifting in
  \Cref{def:axm-univalence}.
\end{theorem}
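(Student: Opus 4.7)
The proof has two directions, each handled in parallel for the pointed and book cases; the distinction between them corresponds precisely to the strong versus lower-half variants of \Cref{lem:retract-lift} and to the constancy conditions in \Cref{lem:path-htpy}.

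\textbf{Type-theoretic $\Rightarrow$ lifting.} Given (pointed or book) TT univalence data $(s, r, J, H_s, H_r)$, apply \Cref{lem:retract-lift} with $U = \McU_0$, $E = \HIso_{\McU_0}^{\Id_0}(\tMcU_0)$, $E' = \Id(\McU_0)$, $i = \trv$, $j = \refl$, and the lemma's retract maps $E \to E' \to E$ taken as the TT maps $s$ and $J$ respectively. The hypotheses $s \cdot \trv = \refl$ and $J \cdot \refl = \trv$ follow from $s, J$ being ``under $\McU_0$'', and the diagrammatic composite $s$-then-$J$, namely $Js \colon \HIso_{\McU_0}^{\Id_0}(\tMcU_0) \to \HIso_{\McU_0}^{\Id_0}(\tMcU_0)$, is $\Id$-homotopic to the identity via $H_s$ (constant along $\trv$ in the pointed case). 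Since $\refl$ enjoys the full left lifting property against $\pi$-fibrations via the $\Id$-type and $\Pi$-type structures on $\pi$, applying the strong form \Cref{itm:retract-lift-strong} (pointed) or the lower-half form \Cref{itm:retract-lift-lower} (book) transfers this property to $\trv$, which by representability yields the $\PtdUA$ or $\UA$ structure of \Cref{def:axm-univalence}.

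\textbf{Lifting $\Rightarrow$ type-theoretic.} First construct $J \colon \Id(\McU_0) \to \HIso_{\McU_0}^{\Id_0}(\tMcU_0)$ by solving the commuting square
\begin{equation*}
  \begin{tikzcd}
    \McU_0 \arrow[r, "\trv"] \arrow[d, "\refl"'] & \HIso_{\McU_0}^{\Id_0}(\tMcU_0) \arrow[d, "{(\src, \dest)}"] \\
    \Id(\McU_0) \arrow[r, "\ev_\partial"'] \arrow[ru, "J"{description}, dashed] & \McU_0 \times \McU_0
  \end{tikzcd}
\end{equation*}
using the full left lifting property of $\refl$ against $(\src, \dest)$, which is $\pi$-fibrant by \Cref{cor:endpt-int-uni-fib}. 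The filler $J$ then satisfies $J \cdot \refl = \trv$ and $(\src, \dest) \cdot J = \ev_\partial$. For the pointed case, invoke \Cref{lem:path-htpy} with $X = \McU_0$, $B = \McU_0 \times \McU_0$, $E_0 = \Id(\McU_0)$, $E_1 = \HIso_{\McU_0}^{\Id_0}(\tMcU_0)$, and with $\refl, \trv$ as the maps $X \to E_i$; both enjoy the full left lifting property against $\pi$-fibrations, the first from the $\Id$-structure on $\pi$ and the second from the assumed pointed lifting structure. The lemma then yields $s, r \colon \HIso_{\McU_0}^{\Id_0}(\tMcU_0) \to \Id(\McU_0)$ under $\McU_0$ and over $\McU_0 \times \McU_0$ together with $\Id$-homotopies $H_s, H_r$ constant along $\trv$ and $\refl$ respectively, matching \Cref{def:tt-univalence}.

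\textbf{Main obstacle.} The pointed case works cleanly because full lifting of both $\refl$ and $\trv$ is exactly what \Cref{lem:path-htpy} needs to produce $s, r$ strictly under $\McU_0$. The book direction lifting $\Rightarrow$ TT is the delicate point: the lower-half lifting of $\trv$ does not immediately yield $s, r$ satisfying $s \cdot \trv = \refl$ strictly. The plan is to construct $s, r$ in stages, first using the lower-half lifting of $\trv$ to get candidates over $\McU_0 \times \McU_0$, then using the full lifting of $\refl$ (which persists in the book setting since $\refl$'s lifting is inherited from the $\Id$-type structure on $\pi$) to rectify them along $\trv$, settling for the non-constant $H_s, H_r$ permitted by the book version of \Cref{def:tt-univalence}. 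Keeping the constancy/non-constancy and full/lower-half distinctions in sync throughout the translation is the principal technical difficulty.
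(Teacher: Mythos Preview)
Your $\Rightarrow$ direction and the pointed $\Leftarrow$ direction match the paper's argument essentially exactly: the paper invokes \Cref{lem:retract-lift}\Cref{itm:retract-lift-strong} (resp.\ \Cref{itm:retract-lift-lower}) on the retract $\HIso \xrightarrow{s} \Id(\McU_0) \xrightarrow{J} \HIso$ for the forward direction, and \Cref{lem:path-htpy} for the pointed backward direction, just as you do.

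The gap is the book $\Leftarrow$ direction. Your plan to ``construct $s,r$ in stages'' using lower-half lifts of $\trv$ and then ``rectify them along $\trv$'' using the full lifting of $\refl$ is not a proof: you have not said what lifting problem you solve to perform the rectification, nor why the resulting maps still sit over $\McU_0 \times \McU_0$ with the required homotopies. Lower-half lifting of $\trv$ gives you a section of a fibration, but no control on its restriction along $\trv$; the full lifting of $\refl$ lets you extend data \emph{off} of $\refl$, not correct data \emph{at} $\trv$. There is no evident way to combine these to force $s \cdot \trv = \refl$ strictly while keeping $Js \simeq \id$.

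The paper sidesteps this entirely. It does not try to build $s,r$ directly; instead it observes, via \Cref{thm:univalence-sdf}, that book lifting univalence makes $\src \colon \HIso_{\McU_0}^{\Id_0}(\tMcU_0) \twoheadrightarrow \McU_0$ into a contractible fibrant object over $\McU_0$ (since $\trv$ is a section and $\trv \cdot \src \simeq \id$). The $\Id$-type structure already makes $\Id(\McU_0) \twoheadrightarrow \McU_0$ contractible over $\McU_0$. Two fibrations with contractible total spaces over the same base are automatically fiberwise equivalent over $\McU_0 \times \McU_0$ by \cite[Theorem~4.7.7]{hottbook}, which yields the required equivalence data for type-theoretic book univalence. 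This contractibility argument is the missing idea in your proposal.
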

\begin{proof}
  For the $\Rightarrow$ direction, we must show that pointed (respectively,
  book) univalence formulated type-theoretically implies pointed (respectively,
  book) univalence formulated as a lifting condition.

  For the $\Rightarrow$ direction of pointed univalence, we note that as part of
  the definition of type-theoretic pointed univalence, we have
  \begin{equation*}
    \begin{tikzcd}[cramped]
      & {\McU_0} \\
      {\HIso_{\McU_0}^{\Id_0}(\tMcU_0)} & {\Id(\McU_0)} & {\HIso_{\McU_0}^{\Id_0}(\tMcU_0)}
      \arrow["\trv"', hook, from=1-2, to=2-1]
      \arrow["\refl"{description}, hook, from=1-2, to=2-2]
      \arrow["\trv", hook, from=1-2, to=2-3]
      \arrow["s"', from=2-1, to=2-2]
      \arrow["J"', from=2-2, to=2-3]
    \end{tikzcd}
  \end{equation*}
  where there is some $\Id$-homotopy $H_s$ between the bottom row composite $Js$
  and $\id$ such that $H_s|_\trv = \refl \cdot \trv$.
  Because $\MsJ$-elimination ensures that
  $\refl \colon \tMcU_0 \hookrightarrow \Id(\McU_0)$ lifts against
  $\tMcU \to \McU$, we use \Cref{lem:retract-lift}\Cref{itm:retract-lift-strong}
  to get pointed univalence in lifting form as from \Cref{def:axm-univalence}.

  For the $\Rightarrow$ direction of book univalence, we recall that it still
  gives some $s \colon \HIso_{\McU_0}(\tMcU_0) \to \Id(\McU_0)$ with a homotopy
  $H_s \colon Js \simeq \id$.
  Then, again, because $\MsJ$-elimination ensures that
  $\refl \colon \tMcU_0 \hookrightarrow \Id(\McU_0)$ lifts against
  $\tMcU \to \McU$, we use \Cref{lem:retract-lift}\Cref{itm:retract-lift-lower}
  to get book univalence in lifting form as from \Cref{def:axm-univalence}.

  For the $\Leftarrow$ direction, we must show that pointed (respectively, book)
  univalence formulated in terms of lifting implies pointed (respectively, book)
  univalence formulated type-theoretically.
  The $\Leftarrow$ direction for pointed univalence is by \Cref{lem:path-htpy}.
  To show the $\Leftarrow$ direction for book univalence, we note that by
  \Cref{thm:univalence-sdf}, one has an $\Id$-homotopy section-retraction pair
  \begin{equation*}
    \HIso_{\McU_0}^{\Id_0}(\tMcU_0)
    \xrightarrow{\src}
    \McU_0
    \xrightarrow{\trv}
    \HIso_{\McU_0}^{\Id_0}(\tMcU_0) \in \sfrac{\bC}{\McU_0}
  \end{equation*}
  But also one has a precise section-retraction pair
  \begin{equation*}
    \McU_0
    \xrightarrow{\trv}
    \HIso_{\McU_0}^{\Id_0}(\tMcU_0)
    \xrightarrow{\src}
    \McU_0 \in \sfrac{\bC}{\McU_0}
  \end{equation*}
  Thus, book univalence via lifting shows that
  $\HIso_{\McU_0}^{\Id_0}(\tMcU_0) \twoheadrightarrow \McU_0$ as a fibrant
  object over $\McU_0$ is contractible.
  But so is $\Id(\McU_0) \twoheadrightarrow \McU_0$, so one may find an
  equivalence $\HIso_{\McU_0}^{\Id_0}(\tMcU_0) \simeq \Id(\McU_0)$ adjusted to
  be fibrewise over $\McU_0 \times \McU_0$ using \cite[Theorem 4.7.7]{hottbook}.
\end{proof}


\subsection{The CwR of Univalent Type Theory}
An immediate categorical metatheory application of
\Cref{thm:univalence-sdf,thm:univalence-tt} is a relatively lightweight
construction of the CwR of univalent type theory via a colimiting construction.
This is because these theorems together imply that pointed (respectively, book)
univalence is equivalent to just requiring one map that is a filler
(respectively, lower-half filler) to a square.

We first define intensional type theory, as a CwR, to be the CwR freely
generated by a representable arrow equipped with generic
$\Sigma$-,$\Pi$-,$\Id$-structures.
\begin{construction}
  Define $\Univ_\ITT \in \CwR$ as the following bicolimiting cocone
  \begin{equation*}
    \begin{tikzcd}[cramped, row sep=small, column sep=small]
      & \Univ \\
      {\Univ_\Sigma} & {\Univ_\Pi} & {\Univ_\Id} \\
      & {\Univ_\ITT}
      \arrow[from=1-2, to=2-1]
      \arrow[from=1-2, to=2-2]
      \arrow[from=1-2, to=2-3]
      \arrow[dashed, from=2-1, to=3-2]
      \arrow[dashed, from=2-2, to=3-2]
      \arrow[dashed, from=2-3, to=3-2]
    \end{tikzcd}
  \end{equation*}
  where $\Univ_\Sigma,\Univ_\Pi,\Univ_\Id$ are from \Cref{constr:cwr-frags}
\end{construction}

Then, the type theory consisting of an internal universe and an external
universe each equipped with the structure of intensional type theory is
constructed as the following bicolimit.
\begin{construction}
  Define $\IntUniv_\ITT \in \CwR$ as the following bicolimiting cocone
  \begin{equation*}
    \begin{tikzcd}[cramped]
      \Univ & {\Univ_\ITT} \\
      \IntUniv & {\IntUniv_\ITT}
      \arrow["\pi", from=1-1, to=1-2]
      \arrow["\pi"', shift right, from=1-1, to=2-1]
      \arrow["{\pi_0}", shift left, from=1-1, to=2-1]
      \arrow[dashed, from=1-2, to=2-2]
      \arrow[dashed, from=2-1, to=2-2]
    \end{tikzcd}
  \end{equation*}
  where the maps $\pi$ and $\pi_0$ respectively select the correspondingly named
  representable map in the codomain CwRs.
\end{construction}

Informed by \Cref{thm:univalence-sdf}, one can now construct the CwRs of book
and pointed univalent type theory respectively.
\begin{construction}\label{cwr:ua}
  Define $\Univ_\BookUA \in \CwR$ as the bipushout
  \begin{equation*}
    \Univ_\BookUA \coloneqq \IntUniv_\ITT
    \cup_{\left\{{\scriptsize\begin{tikzcd}[cramped, row sep=small, column sep=small]& \bullet \\ \bullet & \bullet \arrow[from=1-2, to=2-2]	\arrow[from=2-1, to=2-2] \end{tikzcd}}\right\}}
    \left\{{\scriptsize
      \begin{tikzcd}[cramped, row sep=small, column sep=3em]
        & {\Id_{\McU_0}(\HIso_{\McU_0}^{\Id_0}(\tMcU_0))} \\
        {\HIso_{\McU_0}^{\Id_0}(\tMcU_0)} & {\HIso_{\McU_0}^{\Id_0}(\tMcU_0) \times_{\McU_0} \HIso_{\McU_0}^{\Id_0}(\tMcU_0)}
        \arrow[from=1-2, to=2-2]
        \arrow[dashed, from=2-1, to=1-2]
        \arrow["{(\trv \cdot \src, \id)}"', from=2-1, to=2-2]
      \end{tikzcd}}
    \right\}
  \end{equation*}
  and similarly define $\Univ_\PointedUA \in \CwR$ as the bipushout
  \begin{equation*}
    \Univ_\PointedUA \coloneqq \IntUniv_\ITT
    \cup_{\left\{{\scriptsize\begin{tikzcd}[cramped, row sep=small, column sep=small] \bullet & \bullet \\ \bullet & \bullet \arrow[from=1-1, to=1-2] \arrow[from=1-1, to=2-1]  \arrow[from=1-2, to=2-2] \arrow[from=2-1, to=2-2] \end{tikzcd}}\right\}}
    \left\{{\scriptsize
        \begin{tikzcd}[cramped, row sep=small, column sep=3em]
          {\McU_0} & {\Id_{\McU_0}(\HIso_{\McU_0}^{\Id_0}(\tMcU_0))} \\
          {\HIso_{\McU_0}^{\Id_0}(\tMcU_0)} & {\HIso_{\McU_0}^{\Id_0}(\tMcU_0) \times_{\McU_0} \HIso_{\McU_0}^{\Id_0}(\tMcU_0)}
          \arrow["{\refl(\trv)}", from=1-1, to=1-2]
          \arrow["\trv"', from=1-1, to=2-1]
          \arrow[from=1-2, to=2-2]
          \arrow[dashed, from=2-1, to=1-2]
          \arrow["{(\trv \cdot \src, \id)}"', from=2-1, to=2-2]
        \end{tikzcd}}
    \right\}
  \end{equation*}
\end{construction}

Correctness of these constructions are readily observed as follows.
\begin{theorem}
  Let $\bC$ be a finitely complete category.
  \begin{enumerate}
    \item Isomorphism classes of maps $M \colon \Univ_\ITT \to \bC \in \CwR$ are in
    bijective correspondence with isomorphism classes of choices of a universal
    map $M\pi \colon M\tMcU \to M\McU \in \bC$ equipped with a choice of
    $\Sigma$-,$\Pi$-,$\Id$-type structures.
    \item Isomorphism classes $M \colon \IntUniv_\ITT \to \bC \in \CwR$ are in
    bijective correspondence with isomorphism classes of choices of a universal
    map $M\pi \colon M\tMcU \to M\McU \in \bC$ along with a choice of
    $\Sigma$-,$\Pi$-,$\Id$-type structures and a choice of an internal universe
    $M\pi_0 \colon M\tMcU_0 \to M\McU_0$ of $M\pi$ also equipped with a choice
    of $\Sigma$-,$\Pi$-,$\Id$-type structures.
    \item Factorisations of a map $M \colon \IntUniv_\ITT \to \bC \in \CwR$ via
    $\IntUniv_\ITT \to \Univ_\PointedUA$ (respectively,
    $\IntUniv_\ITT \to \Univ_\BookUA$) exist precisely when the internal
    universe is type-theoretically pointed (respectively, book) univalent.
  \end{enumerate}
\end{theorem}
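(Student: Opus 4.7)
The plan is to verify each of the three claims in turn using the universal property of the defining bicolimits in $\CwR$ together with the characterisations established for the individual fragments in \Cref{prop:cwr-itt}.

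For the first claim, recall that $\Univ_\ITT$ is constructed as the bicolimit of the three maps from $\Univ$ into $\Univ_\Sigma$, $\Univ_\Pi$, $\Univ_\Id$. By the universal property of this bicolimit, a map $M \colon \Univ_\ITT \to \bC$ amounts to three compatible maps out of $\Univ_\Sigma$, $\Univ_\Pi$, $\Univ_\Id$ that agree upon restriction to $\Univ$. By the respective clauses of \Cref{prop:cwr-itt}, such a compatible triple is precisely a choice of a universal map in $\bC$ together with $\Sigma$-, $\Pi$-, and $\Id$-type structures on that common underlying map.

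For the second claim, I would apply the universal property of the bipushout defining $\IntUniv_\ITT$. The two arrows $\pi, \pi_0 \colon \Univ \to \IntUniv$ serve to force the intensional type theory structure to attach to both the external and the internal universes: a map $\IntUniv_\ITT \to \bC$ decomposes as a map $\IntUniv \to \bC$ (giving the external universe together with the internal-universe data) plus compatible extensions that attach intensional type theory structures to each of the two selected universal maps via $\Univ_\ITT$. Combining with part~(1) and the description of $\IntUniv$ from \Cref{prop:cwr-itt} assembles into exactly the claimed bijection.

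For the third claim, observe that by \Cref{cwr:ua} both $\Univ_\BookUA$ and $\Univ_\PointedUA$ are built from $\IntUniv_\ITT$ via \Cref{constr:cwr-free-basic} by freely adjoining a square of exactly the form appearing in the lifting problem of \Cref{thm:univalence-sdf}. By \Cref{lem:cwr-free-constrs}, factorisations of $M \colon \IntUniv_\ITT \to \bC$ through $\Univ_\BookUA$ (respectively, $\Univ_\PointedUA$) are in bijective correspondence with choices of a commuting square (respectively, a diagonal filler making both triangles commute) in $\bC$ for the image of this square---that is, a lower-half lift (respectively, a full lift) against the universal fibration. By \Cref{thm:univalence-sdf} the existence of such a filler is equivalent to the internal universe being book (respectively, pointed) $M\pi$-univalent in the sense of \Cref{def:axm-univalence}, and by \Cref{thm:univalence-tt} this matches the type-theoretic formulation from \Cref{def:tt-univalence}. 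The delicate point will be ensuring that the objects and maps appearing in the adjoined square---namely $\HIso_{\McU_0}^{\Id_0}(\tMcU_0)$, its pre-$\Id$ path object, the fibre product $\HIso_{\McU_0}^{\Id_0}(\tMcU_0) \times_{\McU_0} \HIso_{\McU_0}^{\Id_0}(\tMcU_0)$, and the structural maps $\trv$, $\src$, $\refl$, $(\trv \cdot \src, \id)$---are genuinely available as data inside $\IntUniv_\ITT$ itself. This holds because each is built from finitely many iterated pullbacks and pushforwards along representable maps out of the generic $\pi_0$ and its pre-$\Id_0$-structure, and CwR-maps preserve these constructions, so the image under any $M$ yields exactly the data in $\bC$ required by \Cref{thm:univalence-sdf}.
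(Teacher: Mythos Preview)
Your proposal is correct and follows essentially the same approach as the paper: invoke \Cref{prop:cwr-itt} together with the universal properties of the defining bicolimits for the first two parts, and then chain \Cref{thm:univalence-sdf} with \Cref{thm:univalence-tt} for the third. The paper's proof is terser and does not spell out the intermediate step through \Cref{lem:cwr-free-constrs} or the availability of the $\HIso$ data inside $\IntUniv_\ITT$, but your added detail on these points is accurate and welcome.
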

\begin{proof}
  The first two parts follow by \Cref{prop:cwr-itt} and the universal property
  of the bicolimit and the biadjunction from \cite[Corollaries 3.2.16 and
  3.2.17]{jel25}.
  The final part follows from the equivalences given by
  \Cref{thm:univalence-tt,thm:univalence-sdf}.
\end{proof}


\subsection{Remarks on Pointed Univalence}
\label{sec:reexam}
\newcommand{\app}{\textsf{app}}
\newcommand{\ua}{\textsf{ua}}
In this paper, we presented three formulations of the univalence axiom in a universe category:
\begin{itemize}
    \item \cref{def:axm-univalence} gives a formulation that is easy to state in a universe category;
    \item \cref{thm:univalence-sdf} provides an equivalent condition that is not only much easier to verify in many universe categories that serve as models of homotopy type theory (e.g., those coming from model categories), but also much easier to state in the framework of CwRs
    \item \cref{thm:univalence-tt} provides an equivalent condition phrased in a more syntactic way, thus showing that the two conditions above indeed model univalence.
\end{itemize}
As we indicated before, our formulation of pointed univalence is stronger than how the axiom is typically stated.
We explain this comparison now.

\subsubsection{Pointed Univalence Versus Book Univalence}
In \Cref{def:tt-univalence}, we have formulated type-theoretic pointed
univalence by requiring that the homotopies must restrict to the constant
homotopy along $\trv$ and $\refl$ and showed in \Cref{thm:univalence-tt} that
our lifting formulation of pointed univalence in \Cref{def:axm-univalence} is
equivalent to it.
In the HoTT book presentation of univalence \cite[Axiom 2.10.3]{hottbook},
however, one requires only that $s,r$ are homotopy sections and retractions of
$J$, without the extra restriction to constant homotopy condition.

Informally, the extra requirement that $H_s \colon Js \simeq \id$ restricts to
$\refl$ along $\trv \colon \tMcU_0 \to \HIso_{\McU_0}(\tMcU_0)$ translates in
type theory notation to mean that there is a family of maps $s$, indexed by
internal types $A,B : \McU_0$
\begin{equation*}\label{eqn:J-sec}\tag{\textsc{$J$-sec}}
  s_{(A,B)} : \HIso(\El(A),\El(B)) \to \Id_{\McU_0}(A,B)
\end{equation*}
along with homotopies indexed by $e : \HIso(\El(A),\El(B))$
\begin{equation*}\label{eqn:Hse}\tag{$H_s(e)$}
  H_s(e) : \Id_{\HIso(\El(A),\El(B))}(e, \app(J \circ s_{(A,B)}, e))
\end{equation*}
such that for each $A : \McU_0$
\begin{align*}\label{eqn:res-const}\tag{\textsc{res-const}}
  s_{(A,B)}(\id_A) = \refl : \Id_{\McU_0}(A,A)
  &&
  H_s(\id_A) = \refl : \Id_{\HIso(\El(A),\El(A))}(\id_A, \id_A)
\end{align*}
In other words, the section must definitionally map the trivial homotopy
isomorphism to the trivial path and the chosen proof of homotopy section must
return the trivial path at trivial homotopy isomorphisms.

Given that our condition is easy to state and natural to verify in a variety of
settings, we believe that giving such a strengthening is justified.  We do not
know whether our formulation of pointed univalence can be deduced from book
univalence.

\subsubsection{Comparison with known proofs of Univalence}
Distilling the proof of univalence from the simplicial model \cite{kl21}, we
have only originally only formulated univalence in the pointed form.
It was only after completion of the proofs of the second parts of
\Cref{thm:univalence-sdf,thm:univalence-tt} regarding pointed univalence that we
discovered pointed univalence gives something stronger than book univalence.
This then led to the first parts of \Cref{thm:univalence-sdf,thm:univalence-tt}
treating book univalence.

Throughout the proofs of both pointed and book univalence, we have heavily
relied on the homotopy retract closure property of left classes to fibrations as
given by \Cref{lem:retract-lift}.
In particular, a closely related mechanised proof of type-theoretic book
univalence via an alternative condition was posted by Dan Licata and others due
to an observation by Mart\'{i}n Escard\'{o} also using retract properties in a
Google Groups
discussion.\footnote{\url{https://groups.google.com/g/homotopytypetheory/c/j2KBIvDw53s/m/YTDK4D0NFQAJ}}

We briefly describe Licata's observation using type-theoretic notation.
Licata observed that type-theoretic book univalence can be obtained by requiring
a map $s_{(A,B)}$ like from \Cref{eqn:J-sec}, which he calls $\ua$, along with a
family of homotopies $H_s$ like from \Cref{eqn:Hse}.
However, the conditions \Cref{eqn:res-const} is replaced with the condition that
for each homotopy isomorphism $e \colon \El(A) \to \El(B)$, transporting along
$\app(\ua_{(A,B)}, e) \colon \Id_{\McU_0}(A,B)$ gives parallel maps
\begin{equation*}
  \transport_{\app(\ua_{(A,B)}, e)}, e : \El(A) \rightrightarrows \El(B)
\end{equation*}
that are homotopic.

In contrast, we have obtained type-theoretic book univalence in
\Cref{thm:univalence-tt} by showing that under book univalence in terms of
lifting, for each $A : \McU_0$, the type
\begin{equation*}
  \Sigma(B:\McU_0).\textsf{HIso}(\El(A),\El(B))
\end{equation*}
is contractible, and then appealing to \cite[Theorem 4.7.7]{hottbook}.
This is also the same argument employed in \cite[Corollary 11]{cchm15} for
cubical type theory.
There, univalence was defined in terms of contractibility of
$\Sigma(B:\McU_0).\textsf{HIso}(\El(A),\El(B))$ and the equivalence of this
formulation to type-theoretic book univalence was also credited to Mart\'{i}n
Escard\'{o} made in a Google Groups
discussion.\footnote{\url{https://groups.google.com/g/homotopytypetheory/c/HfCB_b-PNEU/m/Ibb48LvUMeUJ}}

Conversely, the implication that type-theoretic book univalence implies book
univalence as a lifting condition can be seen as a homotopy isomorphism
induction principle.
This was observed and mechanised by \citeauthor{bl11} in \cite[Theorem
{\sffamily weq\_induction}]{bl11}.




\section{Pointed Functional Extensionality}\label{sec:funext}
In the rest of this paper, we build up to showing that pointed univalence is
preserved by formation of inverse diagrams.
For this, one requires a pointed version of functional extensionality, much like
how \cite{shu15} required functional extensionality to show book univalence is
closed under formation of inverse diagrams.
We define pointed functional extensionality in this section and study some of
its properties.

For the rest of this section, fix a universe category $\bC$ with universal map
$\pi \colon \tMcU \to \McU$.
\begin{definition}\label{def:funext}
  Let $\pi \colon \tMcU \to \McU$ be a universal map an $\Id$-type structure
  $\Id \colon \tMcU \times_\McU \tMcU \to \McU$.

  A pointed $\Id$-functional extensionality structure is a choice of a
  structured lift
  \begin{equation*}
    \PtdFunExt \in
    \left(\begin{tikzcd}[cramped, sep=small]
      \bP_\pi(\tMcU) \ar[d, "{\bP_\pi(\refl)}"'] \\ \bP_\pi(\Id_\McU(\tMcU))
    \end{tikzcd}
    \fracsquareslash{\McU}
    \begin{tikzcd}[cramped, sep=small]
      \tMcU \times \McU \ar[d] \\ \McU \times \McU
    \end{tikzcd}\right)
  \end{equation*}
  in the sense of \cite[Definition 3.1]{struct-lift}.
  %
  %
\end{definition}

Similar to \Cref{thm:univalence-tt}, we get the following more type-theoretic
characterisation of pointed functional extensionality.
\begin{definition}\label{def:funext-tt}
  Let $\pi \colon \tMcU \to \McU$ be a universal map an $\Id$-type structure
  $\Id \colon \tMcU \times_\McU \tMcU \to \McU$ and a $\Pi$-type structure
  $\Pi \colon \bP_\pi(\McU) \to \McU$.

  A \emph{type-theoretic} pointed $(\Id,\Pi)$-functional extensionality
  structure is an assignment to each $\pi$-fibration
  $q \colon B \twoheadrightarrow A$ and $\pi$-fibrant object
  $E \twoheadrightarrow B$ over $B$ two maps
  $s^{q,E},r^{q,E} \colon q_*(\Id_BE) \rightrightarrows \Id_A(q_*E)$
  over $q_*E \times_A q_*E$ and under $q_*E$
  \begin{equation*}
    \begin{tikzcd}[cramped, sep=small]
      & {q_*E} \\
      {\Id_A(q_*E)} && {q_*(\Id_BE)} \\
      & {q_*E \times_A q_*E}
      \arrow["\refl"', from=1-2, to=2-1, hook]
      \arrow["{q_*(\refl)}", from=1-2, to=2-3, hook]
      \arrow["J"{description}, from=2-1, to=2-3]
      \arrow["{\ev_\partial}"', two heads, from=2-1, to=3-2]
      \arrow["r", shift left=3, dashed, from=2-3, to=2-1]
      \arrow["s"', shift right=3, dashed, from=2-3, to=2-1]
      \arrow["{q_*(\ev_\partial)}", two heads, from=2-3, to=3-2]
    \end{tikzcd}
  \end{equation*}
  along with $\Id$-homotopies $H^{q,E}_s$ and $H^{q,E}_r$ over
  $q_*E \times_A q_*E$ respectively from $Js^{q,E}$ to $\id$ \emph{constant
    along $q_*(\refl)$} and $r^{q,E}J$ to $\id$ \emph{constant along $\refl$} in
  the sense of \Cref{def:const-res-htpy}.

  The maps $s^{q,E},r^{q,E}H^{q,E}_s,H^{q,E}_r$ are also chosen such that they
  are stable under pullback, in that for any pullback selected by the universe
  structure as follows
  \begin{equation*}
    \begin{tikzcd}[cramped]
      {B'} & B \\
      {A'} & A
      \arrow["", from=1-1, to=1-2]
      \arrow["{q'}"', two heads, from=1-1, to=2-1]
      \arrow["\lrcorner"{anchor=center, pos=0.15, scale=1.5}, draw=none, from=1-1, to=2-2]
      \arrow["q", two heads, from=1-2, to=2-2]
      \arrow["f"', from=2-1, to=2-2]
    \end{tikzcd}
  \end{equation*}
  one has
  \begin{align*}
    f^*(s^{q,E}) = s^{q',f^*E}
    &&
       f^*(r^{q,E}) = r^{q',f^*E}
    &&
       f^*(H_s^{q,E}) = H_s^{q',f^*E}
    &&
       f^*(H_r^{q,E}) = H_r^{q',f^*E}
  \end{align*}
\end{definition}
\begin{theorem}\label{thm:ptd-funext-tt}
  Let $\pi \colon \tMcU \to \McU$ be a universal map with a $\Pi$-structure
  $\Pi \colon \bP_\pi(\McU) \to \McU$ and an $\Id$-type structure
  $\Id \colon \tMcU \times_\McU \tMcU \to \McU$.

  Then, a $\pi$ supports a pointed $\Id$-functional extensionality structure (as
  in \Cref{def:funext}) precisely when it supports a type-theoretic pointed
  $(\Id,\Pi)$-functional extensionality structure (as in \Cref{def:funext-tt}).
\end{theorem}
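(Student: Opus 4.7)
The plan is to follow the same two-sided strategy used in \Cref{thm:univalence-tt}, transposed to the functional extensionality setting. The lifting structure $\PtdFunExt$ is a structured square-slash lift of $\bP_\pi(\refl)$ against $\tMcU \times \McU \to \McU \times \McU$ in the slice over $\McU$. By the definition of the polynomial functor $\bP_\pi$ (\Cref{def:expn-polynomial}) together with the interpretation of $\pi_*$ via the selected pullbacks of the universe structure, such a structured lift is equivalent, uniformly in the generic data, to the assignment, for every pair consisting of a $\pi$-fibration $q \colon B \twoheadrightarrow A$ and a $\pi$-fibrant $E \twoheadrightarrow B$, of a filler for the square with left edge $q_*(\refl) \colon q_*E \hookrightarrow q_*(\Id_B E)$ and right edge the universal $\pi$-fibration; the pullback-stability clause in \Cref{def:funext-tt} corresponds exactly to the uniformity built into \cite[Definition 3.1]{struct-lift}.

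For the $(\Leftarrow)$ direction, I would instantiate the hypothesized type-theoretic structure at the generic composable pair of $\pi$-fibrations. This supplies the filler $s$ together with the retraction $J$ coming from the $\MsJ$-elimination structure of the $\Id$-type, and the homotopy $H_s$ constant when restricted along $q_*(\refl)$. The left-lifting property of $\refl \colon \tMcU \hookrightarrow \Id_\McU(\tMcU)$ against every $\pi$-fibration (via the $\MsJ$-structure) together with this constant-restriction condition is exactly the hypothesis of \Cref{lem:retract-lift}\Cref{itm:retract-lift-strong}, which then transfers the lifting property across the homotopy retract and yields a structured lift of $\bP_\pi(\refl)$ against $\pi$. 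Pullback-stability of the data produces a uniform family of such lifts, i.e.\ the datum $\PtdFunExt$.

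For the $(\Rightarrow)$ direction, I would instantiate $\PtdFunExt$ at the generic composable pair to obtain a filler $s \colon q_*(\Id_B E) \to \Id_A(q_*E)$ making both triangles commute; the top triangle commuting is precisely the condition that $s$ restricts to $\refl$ along $q_*(\refl)$, while the bottom triangle commuting makes $s$ a section of $J$ in the required fibered sense. Then I would apply \Cref{lem:path-htpy} to the commutative square with $X = q_*E$ mapping in via $q_*(\refl)$ and $\refl$: the lemma delivers the reverse map $r$ together with $\Id$-homotopies $H_s \colon J s \simeq \id$ and $H_r \colon r J \simeq \id$ that are constant along $q_*(\refl)$ and along $\refl$ respectively. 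The fact that these constructions are obtained by solving the generic lifting problem once, and then extracting the result at each slice by pullback along classifying maps, automatically enforces the pullback-stability clauses of \Cref{def:funext-tt}.

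The main technical obstacle I anticipate is making the correspondence in the first paragraph precise: namely, identifying the structured lift of $\bP_\pi(\refl)$ against $\pi$ in $\sfrac{\bC}{\McU}$ with the stable pointwise extension data of $q_*(\refl)$ for arbitrary pairs $(q,E)$. This is conceptually parallel to the Yoneda-style representability arguments used to identify $\HId_\McU^\Id(\tMcU)$ and $\HIso_\McU^\Id(\tMcU)$ in \Cref{sec:gen-hiso}, and requires tracking through the Beck--Chevalley isomorphism of \Cref{lem:gen-comp-psfw} to see that pushing forward along a $\pi$-fibration $q$ is coherently computed by pulling back the generic $\pi_*$-construction along the classifying map of $q$. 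Once this identification is in place, the application of \Cref{lem:retract-lift}\Cref{itm:retract-lift-strong} and \Cref{lem:path-htpy} is formally the same as in the proof of \Cref{thm:univalence-tt}.
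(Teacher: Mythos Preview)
Your proposal is correct and follows essentially the same approach as the paper: the $\Leftarrow$ direction via \Cref{lem:retract-lift}\Cref{itm:retract-lift-strong} and the $\Rightarrow$ direction via \Cref{lem:path-htpy}, with pullback-stability handling the uniformity clauses. The paper's proof is a one-line citation of exactly these two lemmas together with the remark that their constructions are stable under rebasing by pullback; the representability/Beck--Chevalley identification you flag as a potential obstacle is left implicit there as well.
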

\begin{proof}
  The $\Leftarrow$ direction is by
  \Cref{lem:retract-lift}\Cref{itm:retract-lift-strong} and the $\Rightarrow$
  direction is by \Cref{lem:path-htpy}, whose constructions are all stable under
  rebasing by pullback.
\end{proof}

\begin{remark}
  By a result of Voevodsky, it is known that book univalence implies the usual
  functional extensionality \cite[Theorem 4.9.4]{hottbook}, where one does not
  have the basepoint preservation condition of \Cref{def:funext}.
  However, it is unknown whether pointed univalence also implies pointed
  functional extensionality.
\end{remark}

We also get the following pointwise characterisation of pointed functional
extensionality that avoids mentioning the $\Id$-type.
So in some categories where the $\Id$-type construction is painful, we can use
it to avoid some amount of pain to provide pointed functional extensionality
structures if we just know how the pushforwards behave.
Namely, it reduces pointed functional extensionality to checking that
pushforwards along fibrations preserve the left class to fibrations.
The primary use case of it will be in \Cref{thm:gl-ptd-fe}, where we show
pointed functional extensionality in gluing categories.
\begin{lemma}\label{lem:ptd-funext-sdf}
  Let $\pi \colon \tMcU \to \McU$ be a universal map with a $\Pi$-structure
  $\Pi \colon \bP_\pi(\McU) \to \McU$ and an $\Id$-type structure
  $\Id \colon \tMcU \times_\McU \tMcU \to \McU$.

  Each pointed $\Id$-functional extensionality structure
  $\PtdFunExt \in \bP_\pi(\refl) \fracsquareslash{\McU} (\pi \times \McU)$
  gives rise to a family of maps as below indexed by $\pi$-fibrations
  $q \colon B \twoheadrightarrow A$ along with $s \colon Y \to E$ in the
  $\pi$-fibrant slice over $B$
  \begin{equation*}
    ((s \fracsquareslash{B} (B \times \pi)) \xrightarrow{\ptdfunext_{q,s}} (q_*s \fracsquareslash{A} (B \times \pi)))_{q,s}
  \end{equation*}
  subject to the condition that for each $f \colon A' \to A$, the following
  diagram commutes.
  \newsavebox{\EYbox}
  \begin{lrbox}{\EYbox}\scriptsize
    \begin{tikzcd}[sep=small, cramped]
      Y \ar[d, "s"] \\ E
    \end{tikzcd}
  \end{lrbox}
  \newsavebox{\BUbox}
  \begin{lrbox}{\BUbox}\scriptsize
    \begin{tikzcd}[sep=small, cramped]
      B \times \tMcU \ar[d] \\ B \times \McU
    \end{tikzcd}
  \end{lrbox}
  \newsavebox{\qEYbox}
  \begin{lrbox}{\qEYbox}\scriptsize
    \begin{tikzcd}[sep=small, cramped]
      q_*Y \ar[d, "q_*s"] \\ q_*E
    \end{tikzcd}
  \end{lrbox}
  \newsavebox{\AUbox}
  \begin{lrbox}{\AUbox}\scriptsize
    \begin{tikzcd}[sep=small, cramped]
      A \times \tMcU \ar[d] \\ A \times \McU
    \end{tikzcd}
  \end{lrbox}
  \newsavebox{\fEYbox}
  \begin{lrbox}{\fEYbox}\scriptsize
    \begin{tikzcd}[sep=small, cramped]
      f^*Y \ar[d, "f^*s"] \\ f^*E
    \end{tikzcd}
  \end{lrbox}
  \newsavebox{\fBUbox}
  \begin{lrbox}{\fBUbox}\scriptsize
    \begin{tikzcd}[sep=small, cramped]
      f^*B \times \tMcU \ar[d] \\ f^*B \times \McU
    \end{tikzcd}
  \end{lrbox}
  \newsavebox{\ApUbox}
  \begin{lrbox}{\ApUbox}\scriptsize
    \begin{tikzcd}[sep=small, cramped]
      A' \times \tMcU \ar[d] \\ A' \times \McU
    \end{tikzcd}
  \end{lrbox}
  \newsavebox{\fqEYbox}
  \begin{lrbox}{\fqEYbox}\scriptsize
    \begin{tikzcd}[sep=small, cramped]
      f^*q_*Y \ar[d, "f^*q_*s"] \\ f^*q_*E
    \end{tikzcd}
  \end{lrbox}
  \begin{equation*}
    \begin{tikzcd}[cramped, row sep=small, column sep=huge]
      \left(\usebox{\EYbox} \fracsquareslash{B} \usebox{\BUbox}\right)
      \ar[r, "{\ptdfunext_{q,s}}"]
      \ar[d]
      &
      \left(\usebox{\qEYbox} \fracsquareslash{A} \usebox{\AUbox}\right)
      \ar[d]
      \\
      \left(\usebox{\fEYbox} \fracsquareslash{B} \usebox{\fBUbox}\right)
      \ar[r, "{\ptdfunext_{f^*q,f^*s}}"]
      &
      \left(\usebox{\fqEYbox} \fracsquareslash{B} \usebox{\ApUbox}\right)
    \end{tikzcd}
  \end{equation*}
  Conversely, each such family restricts to a pointed $\Id$-functional
  extensionality structure.
\end{lemma}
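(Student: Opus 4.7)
The plan is to exhibit this correspondence as an instance of representability, using \Cref{lem:gen-comp-psfw} to identify the generic data $\bP_\pi(\refl)$ with the specific pushforwards $q_*(\refl_E)$ parameterised by pairs $(q, E)$ consisting of a $\pi$-fibration $q \colon B \twoheadrightarrow A$ together with a $\pi$-fibrant object $E \twoheadrightarrow B$.

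For the forward direction, I would first unfold $\PtdFunExt$: as a structured lift in $\sfrac{\bC}{\McU}$, by the uniformity condition of \cite[Definition 1.4]{struct-lift}, it amounts to a coherent family of solutions to every lifting problem obtained by pulling $\bP_\pi(\refl)$ back along an arbitrary map $A \to \McU$. By \Cref{lem:gen-comp-psfw}, pulling back along the $\pi$-name $\ceil{q}$ of a $\pi$-fibration $q \colon B \twoheadrightarrow A$ together with the $\pi$-name of a $\pi$-fibrant $E \twoheadrightarrow B$ recovers the map $q_*(\refl_E) \colon q_*E \to q_*(\Id_B E)$, while the pullback of $\pi \times \McU$ along the corresponding map into $\McU$ becomes $A \times \pi$. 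Hence $\PtdFunExt$ determines, for each such $(q, E)$, a structured lift of $q_*(\refl_E)$ against $A \times \pi$ in $\sfrac{\bC}{A}$, compatibly under further pullback along maps $A' \to A$.

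To obtain $\ptdfunext_{q, s}$ for an arbitrary $s \colon Y \to E$ between $\pi$-fibrant objects over $B$, I would transfer a given structured lift of $s$ against $B \times \pi$ through pushforward: apply $q_*$ to yield a candidate solution for $q_*s$, and correct its basepoint using the $\PtdFunExt$-lift of $q_*(\refl_E)$, which identifies $q_*$ of the reflexivity section with a datum in the $A$-based $\Id$-structure. The stability condition under $f \colon A' \to A$ displayed in the commutative square follows from the compatibility of the universe-structure-selected pullbacks with $q_*$, which is exactly the stability already enforced by $\PtdFunExt$ as a structured lift in $\sfrac{\bC}{\McU}$.

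For the converse direction, I would specialise the given family to $q = \pi$ (so that $A = \McU$ and $B = \tMcU$) and to $s = \refl \colon \tMcU \to \Id_\McU(\tMcU)$, feeding in the structured lift of $\refl$ against $\tMcU \times \pi$ supplied by the $\MsJ$-elimination inside the $\Id$-type structure on $\pi$ via base change along $\pi$. The output is a structured lift of $\pi_*(\refl)$ against $\McU \times \pi$ in $\sfrac{\bC}{\McU}$, which by \Cref{lem:gen-comp-psfw} coincides with the lift of $\bP_\pi(\refl)$ against $\pi \times \McU$, and one checks this recovers $\PtdFunExt$ via the roundtrip of the forward construction. The main obstacle is the forward step for arbitrary $s$ rather than $s = \refl_E$: one must show that combining the given lift of $s$ over $B$ with the $\PtdFunExt$-lift of $q_*(\refl_E)$ yields a lift of $q_*s$ that respects both the basepoint and the pullback coherence conditions, which requires careful bookkeeping of how $q_*$ interacts with the $\Id$-type structure and how basepoint preservation propagates through pushforward.
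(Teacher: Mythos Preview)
Your converse direction is essentially the paper's: specialise to $s = \refl$ and feed in the $\MsJ$-structure. That is fine.

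The forward direction, however, has a genuine gap. You write that you would ``apply $q_*$ to yield a candidate solution for $q_*s$'' and then ``correct its basepoint'' via $\PtdFunExt$. But a structured lift is a function from lifting problems to fillers, and $q_*$ does not act on such data in any evident way: a lifting problem of $q_*s$ against $A \times \pi$ does not transpose to a lifting problem of $s$ against $B \times \pi$, so there is nothing to push forward directly. Your proposal acknowledges this as ``the main obstacle'' but does not supply the missing mechanism.

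The paper's mechanism is \Cref{prop:dual-df-lift}\Cref{itm:dual-df-lift}, which says that a lifting structure $s \fracsquareslash{B} (B \times \pi)$ is equivalent to $s$ being the dual of a strong $\Id$-deformation retract: a retraction $r \colon E \to Y$ together with a homotopy $H \colon E \to \Id_B(E)$ from $sr$ to $\id$, constant along $s$. This reformulation is what makes pushforward tractable: the triple $(s,r,H)$ consists of honest maps in the slice over $B$, so one can apply $q_*$ to obtain $q_*s$, $q_*r$, and $q_*H \colon q_*E \to q_*(\Id_B E)$. The only defect is that $q_*H$ lands in $q_*(\Id_B E)$ rather than $\Id_A(q_*E)$; this is precisely where $\PtdFunExt$ enters, supplying a map $q_*(\Id_B E) \to \Id_A(q_*E)$ under $q_*(\refl)$ and $\refl$, so that the composite $\PtdFunExt \circ q_*H$ exhibits $q_*s$ as a dual strong deformation retract over $A$. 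One then converts back to a lifting structure via \Cref{prop:dual-df-lift} again. Your proposal gestures at the $\Id$-type correction but never invokes the deformation-retract translation that makes ``applying $q_*$'' meaningful.
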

\begin{proof}
  It is clear that if one has such a family $(\ptdfunext_{q,s})_{q,s}$ then by
  letting $q$ vary and taking $s = \refl$ so that one may apply each
  $\ptdfunext_{q,\refl}$ at the lifting structure provided by $\MsJ$-elimination
  gives some
  $\PtdFunExt \in \bP_\pi(\refl) \fracsquareslash{\McU} (\pi \times \McU)$.

  Conversely, suppose one has some
  $\PtdFunExt \in \bP_\pi(\refl) \fracsquareslash{\McU} (\pi \times \McU)$.
  Let $s \colon Y \to E$ be a map in the $\pi$-fibrant slice over $B$ along with
  a $\pi$-fibration $q \colon B \twoheadrightarrow A$ so that the goal is to
  define the map
  $\ptdfunext_{q,s} \colon (s \fracsquareslash{B} (B \times \pi)) \to (q_*s
  \fracsquareslash{A} (A \times \pi))$.
  One may assume that $(s \fracsquareslash{B} (B \times \pi))$ is not empty so
  that because $Y \twoheadrightarrow B$ is fibrant, $s$ admits a retraction
  $r \colon E \to Y$ over $B$.
  Thus, the idea is to show that duals to strong deformation retracts are
  preserved by pushforwards.

  Specifically, by \Cref{prop:dual-df-lift}\Cref{itm:dual-df-lift}, it suffices
  to construct from a family of maps
  $(H_f \colon f^*E \to \Id_{A'}(f^*E))_{f \colon A' \to A}$ such that the
  diagram on the left commutes for each $f$ and $g^*H_f = H_{gf}$ for each
  $g \colon A'' \to A'$, a similar compatible family of maps
  $(H'_f \colon f^*E \to \Id_{A'}(f^*E))_{f \colon A' \to A}$ such that the
  diagram on the right commutes.
  \begin{center}\small
    \begin{minipage}{0.45\linewidth}
      \begin{equation*}
        \begin{tikzcd}[cramped, sep=small]
          {f^*Y} & {f^*E} & {\Id_{A'}(f^*E)} \\
          {f^*E} && {f^*E \times_{A'} f^*E}
          \arrow["{f^*s}", from=1-1, to=1-2]
          \arrow["{f^*s}"', from=1-1, to=2-1]
          \arrow["\refl", from=1-2, to=1-3]
          \arrow["{\ev_\partial}", two heads, from=1-3, to=2-3]
          \arrow["{H_f}"{description}, dashed, from=2-1, to=1-3]
          \arrow["{(f^*s \cdot f^*r,\id)}"', from=2-1, to=2-3]
        \end{tikzcd}
      \end{equation*}
    \end{minipage}
    \begin{minipage}{0.45\linewidth}
      \begin{equation*}
        \begin{tikzcd}[cramped, sep=small]
          {(f^*q)_*Y} & {(f^*q)_*E} & {\Id_{A'}((f^*q)_*E)} \\
          {(f^*q)_*E} && {(f^*q)_*E \times_{A'} (f^*q)_*E}
          \arrow["{(f^*q)_*s}", from=1-1, to=1-2]
          \arrow["{(f^*q)_*s}"', from=1-1, to=2-1]
          \arrow["\refl", from=1-2, to=1-3]
          \arrow["{\ev_\partial}", two heads, from=1-3, to=2-3]
          \arrow["{H'_f}"{description}, dashed, from=2-1, to=1-3]
          \arrow["{((f^*q)_*s \cdot (f^*q)_*r,\id)}"', from=2-1, to=2-3]
        \end{tikzcd}
      \end{equation*}
    \end{minipage}
  \end{center}

  This is obtained by first taking the image of the diagram on the left under
  the pushforward along $f^*q \colon f^*B \to A'$ to obtain the left slanted
  rectangle as below then applying the pointed functional extensionality
  structure $\PtdFunExt \colon (f^*q)_*(\Id_{A'}(f^*E)) \to \Id_{A'}((f^*q)_*E)$
  like in the pointy triangle on the right below.
  \begin{equation*}\small
    \begin{tikzcd}[cramped, row sep=small, column sep=large]
      {(f^*q)_*Y} \\
      & {(f^*q)s} && {\Id_{A'}((f^*q)_*E)} \\
      {(f^*q)_*E} && {(f^*q)_*\Id_{A'}(f^*E)} \\
      \\
      && {(f^*q)_*E \times_{A'} (f^*q)_*E}
      \arrow["{(f^*q)_*s}", from=1-1, to=2-2]
      \arrow["{(f^*q)_*s}"', from=1-1, to=3-1]
      \arrow["\refl", from=2-2, to=2-4]
      \arrow["{(f^*q)_*\refl}"{description}, from=2-2, to=3-3]
      \arrow["{\ev_\partial}", from=2-4, to=5-3, two heads]
      \arrow["{(f^*q)_*H_f}"{description}, dashed, from=3-1, to=3-3]
      \arrow["{((f^*q)_*s \cdot (f^*q)_*r, \id)}"', from=3-1, to=5-3]
      \arrow["\PtdFunExt"{description}, dashed, from=3-3, to=2-4]
      \arrow["{(f^*q)_*\ev_\partial}"{description}, two heads, from=3-3, to=5-3]
    \end{tikzcd}
  \end{equation*}
  The necessary compatibility conditions are straightforwardly verified.
\end{proof}


\section{Artin–Wraith Gluing of Universe Categories}\label{sec:glue-universe}
The goal for the remaining of this paper is to show that pointed univalence and
function extensionality are preserved formation of inverse diagrams.
In order to do so, the main technical device will be Artin--Wraith gluing
categories, which we recall is defined as a comma category.
\begin{definition}
  The \emph{Artin--Wraith gluing category} of a functor
  $M \colon \bC_0 \to \bC_1$ is the comma category $\bC_1 \downarrow M$.
\end{definition}

By \cite[Theorem 4.5]{shu-reedy}, if $\McI$ is an inverse category of degree
$n$ then it arises as the gluing category along the $n$-th matching object
functor.
The goal of this section is to work abstractly with gluing categories with an
aim towards applying them to inverse diagram categories in
\Cref{sec:inverse-universe}.

%
%
For this purpose, in this section we fix a lex map
\begin{equation*}
  \bC_0 \xrightarrow{M} \bC_1
\end{equation*}
evoking imagery of the matching object functor.
We also assume that $\bC_0$ and $\bC_1$ are finitely complete categories
respectively equipped with universal maps
\begin{align*}
  \pi_0 \colon \tMcU_0 \to \McU_0 \in \bC_0 && \pi_1 \colon \tMcU_1 \to \McU_1 \in \bC_1
\end{align*}
respectively with internal universes
\begin{align*}
  \pi_0^\ds \colon \tMcU_0^\ds \to \McU_0^\ds \in \bC_0 && \pi_1^\ds \colon \tMcU_1^\ds \to \McU_1^\ds \in \bC_1
\end{align*}

Moreover, we work under the following assumptions.
\begin{assumption}\label{asm:mrq}
  We suppose that $M$ has the following ``right Quillen'' properties and
  $\pi_i,\pi_i^\ds$ carry the structure of intensional type theory:
  \begin{enumerate}
    \item\label{itm:mrq-fib} $M \colon \bC_0 \to \bC_1$ takes $\pi_0$-fibrations to
    $\pi_1$-fibrations.
    \item\label{itm:mrq-tc} One has a family of maps as below indexed by objects $B_0 \in \bC_0$
    and maps $s_0 \colon Y_0 \to E_0$ in the $\pi$-fibrant slice over $B_0$
    \begin{equation*}
      \left((s_0 \fracsquareslash{B_0} (B_0 \times \pi_0))
      \xrightarrow{\quad}
      (Ms_0 \fracsquareslash{MB_0} (MB_0 \times \pi_1))\right)_{s_0,B_0}
    \end{equation*}
    subject to the condition that for each $f_0 \colon B_0' \to B_0$, the following
    diagram commutes.
    \newsavebox{\szbox}
    \begin{lrbox}{\szbox}\scriptsize\begin{tikzcd}[sep=small, cramped] Y_0 \ar[d, "s_0"] \\ E_0 \end{tikzcd}\end{lrbox}
    \newsavebox{\uzbox}
    \begin{lrbox}{\uzbox}\scriptsize\begin{tikzcd}[sep=small, cramped] B_0 \times \tMcU_0 \ar[d, ""] \\ B_0 \times \McU_0 \end{tikzcd}\end{lrbox}
    \newsavebox{\mszbox}
    \begin{lrbox}{\mszbox}\scriptsize\begin{tikzcd}[sep=small, cramped] MY_0 \ar[d, "Ms_0"] \\ ME_0 \end{tikzcd}\end{lrbox}
    \newsavebox{\uobox}
    \begin{lrbox}{\uobox}\scriptsize\begin{tikzcd}[sep=small, cramped] MB_0 \times \tMcU_1 \ar[d, ""] \\ MB_0 \times \McU_1 \end{tikzcd}\end{lrbox}
    \newsavebox{\szpbox}
    \begin{lrbox}{\szpbox}\scriptsize\begin{tikzcd}[sep=small, cramped] f_0^*Y_0 \ar[d, "{f_0^*s_0}"] \\ f_0^*E_0 \end{tikzcd}\end{lrbox}
    \newsavebox{\uzpbox}
    \begin{lrbox}{\uzpbox}\scriptsize\begin{tikzcd}[sep=small, cramped] B_0' \times \tMcU_0 \ar[d, ""] \\ B_0' \times \McU_0 \end{tikzcd}\end{lrbox}
    \newsavebox{\mszpbox}
    \begin{lrbox}{\mszpbox}\scriptsize\begin{tikzcd}[sep=small, cramped] M(f_0^*Y_0) \ar[d, "{M(f_0^*s_0)}"] \\ M(f_0^*E_0) \end{tikzcd}\end{lrbox}
    \newsavebox{\uopbox}
    \begin{lrbox}{\uopbox}\scriptsize\begin{tikzcd}[sep=small, cramped] MB_0' \times \tMcU_1 \ar[d, ""] \\ MB_0' \times \McU_1 \end{tikzcd}\end{lrbox}
    \begin{equation*}
      \begin{tikzcd}[cramped, row sep=small, column sep=huge]
        \left(\usebox{\szbox} \fracsquareslash{B_0} \usebox{\uzbox}\right)
        \ar[r, "{}"]
        \ar[d, "{\text{\cite[Construction 3.2]{struct-lift}}}"]
        &
        \left(\usebox{\mszbox} \fracsquareslash{MB_0} \usebox{\uobox}\right)
        \ar[d, "{\text{\cite[Construction 3.2]{struct-lift}}}"]
        \\
        \left(\usebox{\szpbox} \fracsquareslash{B_0'} \usebox{\uzpbox}\right)
        \ar[r, "{}"]
        &
        \left(\usebox{\mszpbox} \fracsquareslash{MB_0'} \usebox{\uopbox}\right)
      \end{tikzcd}
    \end{equation*}
    \item\label{itm:mrq-pi} $\pi_i$ and $\pi_i^\ds$ are equipped with
    $\Unit,\Sigma,\Pi,\Id$-type structures denoted
    $\Unit_i,\Sigma_i,\Pi_i,\Id_i$ and
    $\Unit_i^\ds,\Sigma_i^\ds,\Pi_i^\ds,\Id_i^\ds$ respectively.
  \end{enumerate}
\end{assumption}

We also translate the following definitions from the theory of Reedy categories
into the setting of gluing categories.
\begin{definition}
  For each object $X \in \Gl(M)$, we denote the its image under the projection
  maps $\Gl(M) \to \bC_i$ as $X_i$ so that $X$ is a map
  \begin{equation*}
    X \colon X_1 \to MX_0 \in \bC_1
  \end{equation*}
  We say that $X_i$ is the \emph{$i$-th component} of $X$ and that the codomain
  object $MX_0$ is the \emph{absolute matching object} of $X$.
  When viewed as a map of $\bC_1$, we also say that $X$ is its own
  \emph{absolute matching map}.

  For a map $f \colon Y \to X$, the associated \emph{relative matching map}
  $\Whm(f) \colon Y_1 \to \WhM(f)$ into the \emph{relative matching object}
  $\WhM(f)$, sometimes denoted by $\Whm_X(Y) \colon Y_1 \to \WhM_X(Y)$ when $f$
  is clear from context, is the connecting map into the pullback, as labelled in
  the following diagram.
  \begin{equation*}
    \begin{tikzcd}[cramped, column sep=small]
      {Y_1} \\
      & {\WhM_X(Y)} & {MY_0} \\
      & {X_1} & {MX_0}
      \arrow["{\Whm_X(Y)}"{description}, from=1-1, to=2-2, dotted]
      \arrow["Y", curve={height=-12pt}, from=1-1, to=2-3]
      \arrow["{f_1}"', curve={height=12pt}, from=1-1, to=3-2, dashed]
      \arrow[from=2-2, to=2-3]
      \arrow[from=2-2, to=3-2, dashed]
      \arrow["\lrcorner"{anchor=center, pos=0.15, scale=1.5}, draw=none, from=2-2, to=3-3]
      \arrow["{Mf_0}", from=2-3, to=3-3, dashed]
      \arrow["X"', from=3-2, to=3-3]
    \end{tikzcd}
  \end{equation*}
  We also denote sometimes objects using solid lines when viewed as maps in
  $\bC_1$ and components of maps using dashed lines.
\end{definition}

\begin{definition}\label{def:reedy-fib}
  A map $E \to B \in \Gl(M)$ is a \emph{Reedy fibration} when its 0-component
  $E_0 \to B_0 \in \bC_0$ and the relative matching map
  $\Whm_B(E) \colon E_1 \to \WhM_BE \in \bC_1$ are $\pi_0$- and
  $\pi_1$-fibrations respectively.

  If $E_0 \to B_0$ and $\Whm_B(E) \colon E_1 \to \WhM_BE$ are $\pi_0^\ds$- and
  $\pi_1^\ds$-fibrations respectively then it is an \emph{internal Reedy
    fibration}.

  We also say that an object $X \colon X_1 \to MX_0 \in \Gl(M)$ is a (internal)
  Reedy fibrant object when $X \to 1$ is a (internal) Reedy fibration.
\end{definition}
Immediate from definition, we see that a (internal) Reedy fibrant object
$X \colon X_1 \to MX_0$ is exactly the same as a $\pi_1$-fibration with codomain
object in the image of $M$.

In the rest of the section:
\begin{enumerate}
  \item We show in \Cref{subsec:glue-universe-univ} that
  $\Gl(M)$ yet again has a universal
  Reedy fibration equipped with a universe of internal Reedy fibrations.
  \item Then we show that the $\Unit,\Sigma,\Pi,\Id$-structures from the
  universes of $\bC_0$ and $\bC_1$ give rise to corresponding structures on
  $\Gl(M)$ in
  \Cref{subsec:glue-universe-univ,subsec:glue-universe-unit,subsec:glue-universe-sigma,subsec:glue-universe-pi,subsec:glue-universe-id}.
  \item Finally, we show in
  \Cref{subsec:glue-universe-funext,subsec:glue-universe-univalence} the main
  technical results that the pointed functional extensionality and pointed
  univalence structures of $\bC_0$ and $\bC_1$ also induce corresponding
  structures on $\Gl(M)$.
\end{enumerate}

\subsection{Universal Reedy Fibrations}\label{subsec:glue-universe-univ}
The goal of this part is to construct a universal Reedy fibration equipped with
an internal universal Reedy fibration.

These universal Reedy fibrations are constructed as follows.
\begin{construction}\label{constr:urf}
  We construct the map $\tau \colon \tMcV \to \McV \in \Gl(M)$ whose
  0-component is $\pi_0$ and whose 1-component is $\GenComp(\pi_1,M\pi_0)$.
  \begin{equation*}
    \begin{tikzcd}[cramped, column sep=small]
      {\tMcV_1 = \ev^*\tMcU_1} & {\tMcU_1} \\
      {\WhM_\McV(\tMcV) = (M\pi_0)^*\bP_{M\pi_0}(\McU_1)} & {\McU_1} \\
      {\McV_1 = \bP_{M\pi_0}(\McU_1)} & {M\tMcV_0 = M\tMcU_0} \\
      & {M\McV_0 = M\McU_0}
      \arrow[from=1-1, to=1-2]
      \arrow["{\Whm_\McV(\tMcV) = \ev^*\pi_1}"', two heads, from=1-1, to=2-1]
      \arrow["{\pi_1}", two heads, from=1-2, to=2-2]
      \arrow["\ev"', from=2-1, to=2-2]
      \arrow[two heads, from=2-1, to=3-1, dashed]
      \arrow[from=2-1, to=3-2]
      \arrow[from=3-1, to=4-2]
      \arrow["{M\tau_0 = M\pi_0}", from=3-2, to=4-2, dashed]
      \arrow["\lrcorner"{anchor=center, pos=0.15, scale=1.5}, draw=none, from=1-1, to=2-2]
      \arrow["\lrcorner"{anchor=center, pos=0.15, scale=1.5}, draw=none, from=2-1, to=4-2]
    \end{tikzcd}
  \end{equation*}

  Applying the same construction on the internal universes, we get the map
  $\tau^\ds \colon \tMcV^\ds \to \tMcV^\ds \in \Gl(M)$.
\end{construction}

We must check that the above construction indeed gives a universal Reedy
fibration.
We start with showing that both $\tau$ and $\tau^\ds$ are universal maps.
\begin{lemma}\label{lem:urf-expn}
  The maps $\tau \colon \tMcV \to \McV \in \Gl(M)$ and
  $\tau^\ds \colon \tMcV^\ds \to \McV^\ds \in \Gl(M)$ from \Cref{constr:urf} both has
  the structure of a universal map.
\end{lemma}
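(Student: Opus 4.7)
The plan is to construct a universe structure on $\tau$ (i.e., a selected right adjoint to post-composition with $\tau$ in the slice over $\McV$), leveraging the universe structures on $\pi_0$ and $\pi_1$ together with the generic-composite decomposition from \Cref{lem:gen-comp}. The argument for $\tau^\ds$ is identical after replacing $\pi_i$ by $\pi_i^\ds$.

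First I would note that since $M$ is lex, the comma category $\Gl(M) = \bC_1 \downarrow M$ has pullbacks computed componentwise whenever the relevant pullbacks exist in $\bC_0$ and $\bC_1$; so to select pullbacks of $\tau$ it suffices to produce, for each $A \colon \Gamma \to \McV$, a compatible pair of componentwise pullbacks that assembles into a morphism of $\Gl(M)$.

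Such a map $A$ consists of $A_0 \colon \Gamma_0 \to \McU_0$ and $A_1 \colon \Gamma_1 \to \bP_{M\pi_0}(\McU_1)$ subject to $\McV \cdot A_1 = MA_0 \cdot \Gamma$. I would set $(\Gamma.A)_0 \coloneqq \Gamma_0.A_0$, obtained from the universe structure on $\pi_0$, which also yields $\pi_{A_0}$ and $\var_{A_0}$. For the 1-component, I would apply \Cref{lem:gen-comp} with $p = M\pi_0$, $p' = \pi_1$, and $\ceil{X_1}.\ceil{X_2} = A_1$: the commutativity condition on $A$ supplies precisely the data $\ceil{X_1} = MA_0 \cdot \Gamma$ required by that lemma. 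Hence the pullback of $\tau_1 = \GenComp(\pi_1, M\pi_0)$ along $A_1$ decomposes into two consecutive pullbacks — first of $M\pi_0$ along $MA_0 \cdot \Gamma$, which by lex-ness of $M$ is the relative matching object $\WhM_\Gamma(\Gamma.A) = \Gamma_1 \times_{M\Gamma_0} M((\Gamma.A)_0)$, and then of $\pi_1$ along the induced map $\WhM_\Gamma(\Gamma.A) \to \McU_1$ extracted via adjoint transposition from $A_1$, which I select using the universe structure on $\pi_1$. This produces $(\Gamma.A)_1$, its structure map $(\Gamma.A)_1 \to M((\Gamma.A)_0)$, and the 1-components of both the projection $\pi_A \colon \Gamma.A \to \Gamma$ and the variable map $\var_A \colon \Gamma.A \to \tMcV$ (the latter by the universal property of the pullback of $\pi_1$).

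Functoriality of $A \mapsto (\Gamma.A, \var_A)$ and the defining adjunction $\tau_! \dashv \var$ then follow by combining the functoriality of the universe structures on $\pi_0$ and $\pi_1$ with the naturality of the universal property in \Cref{lem:gen-comp}. The main obstacle is purely bookkeeping: tracking how the decomposition supplied by \Cref{lem:gen-comp} interacts with the componentwise pullback description in $\Gl(M)$ and verifying that the assembled data really does form a morphism in $\Gl(M)$ with the expected universal property. No deeper difficulty arises because \Cref{lem:gen-comp} already carries out the essential categorical work of recognizing pullbacks of the generic composite as composable pairs.
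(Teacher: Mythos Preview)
Your construction of the choice of pullbacks along $\tau$ is correct and in fact more explicit than the paper's, which simply observes that $\GenComp(\pi_1,M\pi_0)$ is $\pi_1$-fibrant (being a pullback of $\GenComp(\pi_1,\pi_1)$, which is $\pi_1$-fibrant by the $\Sigma$-structure) and then appeals to pointwise computation of pullbacks in $\Gl(M)$. Your unpacking via \Cref{lem:gen-comp} is a perfectly good alternative route to the same end.

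However, there is a genuine gap: you have not addressed exponentiability of $\tau$. Recall from \Cref{def:univ-struct} that a universal map is an \emph{exponentiable} map equipped with a universe structure; the adjunction $\tau_! \dashv \var$ you establish is the post-composition/pullback adjunction, whereas exponentiability asks for the further right adjoint $\tau^* \dashv \tau_*$. This does not follow from your argument, and without it the conclusion fails. The paper handles this by invoking \cite[Theorem 2.19]{fkl24}, which reduces exponentiability of a map in $\Gl(M)$ to exponentiability of three constituent maps in $\bC_0$ and $\bC_1$: here $\pi_0$, $M\pi_0$, and $\GenComp(\pi_1,M\pi_0)$. The first is exponentiable by hypothesis; the latter two are $\pi_1$-fibrations (using \Cref{asm:mrq}\ref{itm:mrq-fib} and the $\Sigma$-structure on $\pi_1$), hence pullbacks of the exponentiable map $\pi_1$, hence exponentiable by \cite[Corollary 1.4]{nie82}. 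You need to supply this or an equivalent argument.
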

\begin{proof}
  The proofs for $\tau$ and $\tau_0$ are identical, so we show the result only
  for $\tau$.
  To show it has the structure of a universal map, we need to choose a specific
  right adjoint of the post-composition functor
  $\tau_! \colon \sfrac{\Gl(M)}{\tMcV} \to \sfrac{\Gl(M)}{\McV}$ and show that
  $\tau$ itself is exponentiable.
  We show these two goals sequentially.

  Because $M\pi_0$ is a $\pi_1$-fibration, $\GenComp(\pi_1,M\pi_0)$ is a
  pullback of $\GenComp(\pi_1,\pi_1)$ by \Cref{lem:gen-comp}.
  By the $\Sigma$-type structure on $\pi_1$, one has that
  $\GenComp(\pi_1,\pi_1)$ is $\pi_1$-fibrant.
  Therefore, the universal structures on $\pi_0$ and $\pi_1$ respectively give
  rise to a choice of pullback along the 0-component $\pi_0$ and 1-component
  $\GenComp(\pi_1,M\pi_0)$ of $\tau$.
  Combined together, this gives rise to a choice of pullbacks along $\tau$,
  since pullbacks in $\Gl(M)$ is calculated pointwise.

  To see that $\tau$ is exponentiable, by \cite[Theorem 2.19]{fkl24}, it
  suffices to show that the three maps
  \begin{align*}
    \pi_0 && M\pi_0 && \GenComp(\pi_1,M\pi_0)
  \end{align*}
  are exponentiable.
  By \Cref{asm:mrq}, $M\pi_0$ is $\pi_1$-fibrant and $\GenComp(\pi_1,M\pi_0)$
  was observed to be $\pi_1$-fibrant above.
  Exponentiable maps are closed under pullbacks by \cite[Corollary 1.4]{nie82},
  so the result follows by the exponentiability of $\pi_0$ and $\pi_1$.
\end{proof}

Next, we check that they indeed have the claimed classification properties.
\begin{proposition}\label{prop:urf-up}
  The map $\tau \colon \tMcV \to \McV \in \Gl(M)$ (respectively,
  $\tau^\ds \colon \tMcV^\ds \to \McV^\ds$) from \Cref{constr:urf} is a
  universal Reedy fibration (respectively, internal Reedy fibration).
  This means that if $E \to B \in \Gl(M)$ is a Reedy fibration (respectively,
  internal Reedy fibration) if and only if $E \to B$ is a $\tau$-fibration
  (respectively, $\tau^\ds$-fibration).
\end{proposition}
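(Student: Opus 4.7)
The plan is to verify that pullbacks of $\tau$ in $\Gl(M)$ are exactly Reedy fibrations by analyzing them componentwise, and symmetrically for $\tau^\ds$. First I would observe that since $M$ is lex, pullbacks in $\Gl(M) = \bC_1 \downarrow M$ are computed by taking the $\bC_0$-pullback on $0$-components and the $\bC_1$-pullback on $1$-components, with the structural map into $M(\text{-})$ assembled via the universal property of the $0$-component pullback pushed through $M$. Thus classifying a $\tau$-fibration reduces to analyzing these two pullbacks separately.

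For the ($\Leftarrow$) direction, given a Reedy fibration $E \to B$, I would take the $\pi_0$-name $\ceil{E_0} \colon B_0 \to \McU_0$ as the $0$-component of the classifying map. For the $1$-component, the $\pi_1$-name of the relative matching map $\Whm_B(E)$ is a map $\WhM_B(E) \to \McU_1$; since $\WhM_B(E) = B_1 \times_{MB_0} ME_0$ and $M$ is lex, this map transposes under the pushforward adjunction $(M\pi_0)^* \dashv (M\pi_0)_*$ to a map $B_1 \to \bP_{M\pi_0}(\McU_1)$. Together these assemble into the required morphism $B \to \McV$, and the canonical $\tau$-pullback along it reproduces $E \to B$.

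For the ($\Rightarrow$) direction, the key tool is \Cref{lem:gen-comp}: pulling back $\GenComp(\pi_1, M\pi_0)$ along the $1$-component of the classifying map $B \to \McV$ factors automatically as a composite $E_1 \to X_1 \to B_1$ in which $X_1 \to B_1$ is a pullback of $M\pi_0$ in $\bC_1$ and $E_1 \to X_1$ is a pullback of $\pi_1$. The main step is then identifying $X_1$ with the relative matching object $\WhM_B(E)$: by lex-ness of $M$ one has $ME_0 = MB_0 \times_{M\McU_0} M\tMcU_0$, whence $X_1 = B_1 \times_{M\McU_0} M\tMcU_0 = B_1 \times_{MB_0} ME_0 = \WhM_B(E)$, so the relative matching map is a $\pi_1$-fibration and $E \to B$ is Reedy. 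The internal statement for $\tau^\ds$ follows by the same argument with $\pi_i^\ds$ substituted throughout, using \Cref{asm:mrq}\Cref{itm:mrq-fib} to ensure that $M\pi_0^\ds$ is $\pi_1$-fibrant and hence exponentiable. The main obstacle will be bookkeeping: tracking the universal maps carefully enough to confirm that the identification $X_1 \cong \WhM_B(E)$ respects the $\Gl(M)$-morphism structure (not merely an abstract isomorphism of objects of $\bC_1$) and that the forward and backward constructions are mutually inverse up to the choice of canonical fibrancy structure.
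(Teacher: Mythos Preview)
Your proposal is correct and follows essentially the same strategy as the paper: both arguments identify the $0$-component of the classifying map with the $\pi_0$-name of $E_0 \to B_0$, and both build the $1$-component from the $\pi_1$-name of the relative matching map via the pushforward/generic-composite machinery of \Cref{lem:gen-comp}. The only notable difference is that the paper dispatches the direction ``$\tau$-fibration $\Rightarrow$ Reedy'' in one line, observing that $\tau$ is visibly a Reedy fibration by construction and that Reedy fibrations are stable under pullback, whereas you spell this direction out explicitly by unwinding \Cref{lem:gen-comp}; both are fine, and your version makes the identification $X_1 \cong \WhM_B(E)$ more transparent.
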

\begin{proof}
  The proofs for the $\tau$ case and the $\tau^\ds$ case are identical, so we
  only show the $\tau$ case.

  The universe structure on $\tau$ is already provided by \Cref{lem:urf-expn}.
  It is also clear from construction that $\tau$ is a Reedy fibration so all
  $\tau$-fibrations are also Reedy fibrations.
  It remains to show that all Reedy fibrations $p \colon E \to B$ can be
  equipped with the structure of a $\tau$-fibration.

  We do so with reference to the diagram at the end of the proof.
  By the definition of a Reedy fibration, $p_0 \colon E_0 \to B_0$ occurs the
  pullback of a map $\ceil{E_0} \colon B_0 \to \McU_0$ in $\bC_0$.
  Therefore, $\WhM_BE \to B_1$ occurs as the pullback of
  $M\pi_0 \colon M\tMcU_0 \to M\McU_0$ along the map
  \begin{equation*}
    B_1 \xrightarrow{B} MB_0 \xrightarrow{M\ceil{E_0}} M\McU_0
  \end{equation*}
  By the $\pi_1$-fibrancy of $\Whm_BE \colon E_1 \to \WhM_BE$, one has a map
  $\ceil{E_1} \colon \WhM_BE \to \McU_1$ along which $\Whm_BE$ occurs as the
  pullback of $\pi_1$.
  Using \Cref{lem:gen-comp}, one obtains a map
  $B^*(M\ceil{E_0}).\ceil{E_1} \colon B_1 \to \bP_{M\pi_0}(\McU_1)$
  such that pulling back $\GenComp(\pi_1,M\pi_0)$ along it gives back the 1-component of $p$
  \begin{equation*}
    p_1 \colon E_1 \xrightarrow{\Whm_BE} \WhM_BE \to B_1
  \end{equation*}
  Furthermore, the composite
  \begin{equation*}
   B_1 \xrightarrow{B^*(M\ceil{E_0}).\ceil{E_1}} \bP_{M\pi_0}(\McU_1) \to M\McU_0
  \end{equation*}
  is the composite $B_1 \xrightarrow{B} MB_0 \xrightarrow{M\ceil{E_0}} M\McU_0$.

  Therefore, we have the following diagram in $\Gl(M)$, where the front and back
  faces are pullbacks, as required.
  \begin{equation*}
    \begin{tikzcd}[cramped, row sep=small]
      {E_1} && {\ev^*\tMcU_1} & {\tMcU_1} \\
      {\WhM_BE} && {(M\pi_0)^*\bP_{M\pi_0}(\McU_1)} & {\McU_1} \\
      & {ME_0} && {M\tMcU_0} \\
      {B_1} && {\bP_{M\pi_0}(\McU_1)} \\
      & {MB_0} && {M\McU_0}
      \arrow[from=1-1, to=1-3]
      \arrow[from=1-1, to=2-1]
      \arrow[from=1-3, to=1-4]
      \arrow[from=1-3, to=2-3]
      \arrow[from=1-4, to=2-4]
      \arrow[from=2-1, to=2-3]
      \arrow[from=2-1, to=3-2]
      \arrow[from=2-1, to=4-1]
      \arrow["\ev"', from=2-3, to=2-4]
      \arrow[from=2-3, to=3-4]
      \arrow[from=2-3, to=4-3]
      \arrow[from=3-2, to=3-4]
      \arrow[from=3-2, to=5-2]
      \arrow[from=3-4, to=5-4]
      \arrow["{B^*(M\ceil{E_0}).\ceil{E_1}}"', from=4-1, to=4-3]
      \arrow["B"', from=4-1, to=5-2]
      \arrow[from=4-3, to=5-4]
      \arrow["{M\ceil{E_0}}"', from=5-2, to=5-4]
      \arrow["\lrcorner"{anchor=center, pos=0.15, scale=1.5, rotate=0}, draw=none, from=1-1, to=2-3]
      \arrow["\lrcorner"{anchor=center, pos=0.15, scale=1.5}, draw=none, from=1-3, to=2-4]
      \arrow["\lrcorner"{anchor=center, pos=0.15, scale=1.5}, draw=none, from=2-1, to=4-3]
      \arrow["\lrcorner"{anchor=center, pos=0.15, scale=1.5, rotate=-45}, draw=none, from=2-1, to=5-2]
      \arrow["\lrcorner"{anchor=center, pos=0.15, scale=1.5}, draw=none, from=2-3, to=5-4]
      \arrow["\lrcorner"{anchor=center, pos=0.15, scale=1.5}, draw=none, from=3-2, to=5-4]
    \end{tikzcd}
  \end{equation*}
\end{proof}

We have now established that $\tau$ and $\tau^\ds$ are both universal maps with
the desired classification properties.
We finish this part by checking that $\tau^\ds$ is an internal universe of
$\tau$.
\begin{theorem}\label{thm:glue-universe-univ}
  From \Cref{constr:urf}, the universe of Reedy fibrations
  $\tau \colon \tMcV \to \McV \in \Gl(M)$ is equipped with an internal universe
  $\tau^\ds \colon \tMcV^\ds \to \McV^\ds \in \Gl(M)$ of internal Reedy
  fibrations.
\end{theorem}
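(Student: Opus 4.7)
By Definition~\ref{def:int-univ}, to exhibit $\tau^\ds$ as an internal fibrant universe structure on $\tau$ it suffices to equip $\tau^\ds \colon \tMcV^\ds \twoheadrightarrow \McV^\ds$ and $\McV^\ds \twoheadrightarrow 1$ with canonical $\tau$-fibrancy structures (noting that the terminal object of $\Gl(M)$ is obtained as $(1_{\bC_0}, 1_{\bC_1}, 1_{\bC_1} \to M(1_{\bC_0}) \cong 1_{\bC_1})$ since $M$ is lex). We already know from Lemma~\ref{lem:urf-expn} that $\tau^\ds$ is a universal map. By Proposition~\ref{prop:urf-up}, $\tau$-fibrations in $\Gl(M)$ are exactly the Reedy fibrations, so the bulk of the work reduces to verifying Reedy fibrancy of these two maps and then checking that the Reedy fibrancy structures arising from the internal universe structures on $\pi_0^\ds$ and $\pi_1^\ds$ are the canonical ones relative to the universe structure on $\tau$.

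I would first handle $\tau^\ds$. Its $0$-component is $\pi_0^\ds \colon \tMcU_0^\ds \to \McU_0^\ds$, which is $\pi_0$-fibrant (indeed canonically so) because $\pi_0^\ds$ is an internal universe of $\pi_0$. Its relative matching map, by the formula in Construction~\ref{constr:urf}, is $\ev^* \pi_1^\ds \colon \ev^*\tMcU_1^\ds \to (M\pi_0^\ds)^*\bP_{M\pi_0^\ds}(\McU_1^\ds)$, a pullback of $\pi_1^\ds$, which is $\pi_1$-fibrant because $\pi_1^\ds$ is (canonically) $\pi_1$-fibrant from being an internal universe of $\pi_1$.

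Next, I would verify $\McV^\ds \twoheadrightarrow 1$. Its $0$-component is $\McU_0^\ds \to 1$, which is canonically $\pi_0$-fibrant as part of the internal-universe structure of $\pi_0^\ds$. Its relative matching map is the map $\bP_{M\pi_0^\ds}(\McU_1^\ds) \to M\McU_0^\ds$ constructed in \Cref{constr:urf}, and this is the main technical step. Unfolding Definition~\ref{def:expn-polynomial}, this map is the pushforward along the $\pi_1$-fibration $M\pi_0^\ds$ (fibrant by Assumption~\ref{asm:mrq}\ref{itm:mrq-fib}) of the $\pi_1$-fibration $M\tMcU_0^\ds \times \McU_1^\ds \to M\tMcU_0^\ds$ (a pullback of the $\pi_1$-fibration $\McU_1^\ds \to 1$). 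Since $\pi_1$ carries a $\Pi$-type structure by Assumption~\ref{asm:mrq}\ref{itm:mrq-pi}, Proposition~\ref{prop:Pi-Sigma-generic} gives that this pushforward is $\pi_1$-fibrant, as required.

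Finally, for canonicity, I would assemble the pieces above into the claim that the pullback square witnessing each fibrancy is literally the selected pullback from the universe structure on $\tau$ constructed in Lemma~\ref{lem:urf-expn}. Since that universe structure is defined componentwise from the universe structures on $\pi_0$ and $\pi_1$, the canonical names $\ceil{\pi_0^\ds} \colon \McU_0^\ds \to \McU_0$ and $\ceil{\McU_0^\ds} \colon 1 \to \McU_0$ (provided by the internal-universe structure on $\pi_0$) together with the names in $\bC_1$ induced via $M$ and Lemma~\ref{lem:gen-comp} on the $1$-components assemble into names in $\Gl(M)$ whose selected pullbacks are $\tau^\ds$ and $\McV^\ds$, respectively. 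The main subtlety I expect is bookkeeping this canonicity coherently on the $1$-component, where the identification relies on naturality of the generic composite construction rather than a direct pullback.
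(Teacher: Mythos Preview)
Your proposal is correct and follows essentially the same route as the paper's proof: reduce via Proposition~\ref{prop:urf-up} to checking that $\tau^\ds$ and $\McV^\ds \to 1$ are Reedy fibrant, handle the $0$-components directly from the internal-universe data on $\pi_0$, and handle the $1$-components using $\pi_1$-fibrancy of $\pi_1^\ds$ (for $\tau^\ds$) and the $\Pi$-type structure on $\pi_1$ to close pushforwards under fibrations (for $\McV^\ds$). The paper's proof is more terse and does not explicitly track the canonicity of the fibrancy structures that Definition~\ref{def:int-univ} asks for; your final paragraph on canonicity is extra bookkeeping the paper elides, but it is not a different argument.
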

\begin{proof}
  Universality of $\tau$ and $\tau^\ds$ from \Cref{prop:urf-up}.
  It remains to check that $\McV^\ds$ (as an object) and $\tau^\ds$ (as a map) are $\tau$-fibrant.
  By \Cref{prop:urf-up}, this is the same as checking they are both Reedy
  fibrant.

  Reedy fibrancy of $\McV^\ds$ follows because $\McU_1^\ds$ and $M\pi_0^\ds$ are
  $\pi_1$-fibrant and $\pi_1$-fibrant maps are closed under pushforwards along
  $\pi_1$-fibrant maps by the $\Pi$-type structure of $\pi_1$.
  Reedy fibrancy of $\tau^\ds$ follows by $\pi_i$-fibrancy of $\pi_i^\ds$.
\end{proof}


\subsection{Unit-types}\label{subsec:glue-universe-unit}
The unit type of $\Gl(M)$ are constructed straightforwardly.
\begin{proposition}\label{prop:glue-universe-unit}
  $\tau \colon \tMcV \to \McV \in \Gl(M)$ has a $\Unit$-type structure preserved
  by both projections $\Gl(M) \to \bC_i$ for $i=0,1$ and likewise does
  $\tau^\ds$.
\end{proposition}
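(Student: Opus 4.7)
The plan is to construct the $\Unit_\tau$-structure componentwise from the $\Unit_0$- and $\Unit_1$-structures on $\pi_0$ and $\pi_1$, exploiting that $M$ is lex. First, I observe that the terminal object of $\Gl(M)$ is, up to canonical isomorphism, the identity $1 \to M1 \cong 1$, and that a map $1 \to \McV$ in $\Gl(M)$ amounts to a 0-component $1 \to \McU_0 \in \bC_0$ together with a 1-component $1 \to \bP_{M\pi_0}(\McU_1) \in \bC_1$ satisfying the evident compatibility with $M$ applied to the 0-component.

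The $\Unit$-structures on $\pi_i$ supply maps $\ceil{\Unit_0} \colon 1 \to \McU_0$ and $\ceil{\Unit_1} \colon 1 \to \McU_1$ whose canonical pullbacks along $\pi_0$ and $\pi_1$ are the identity $1 \to 1$. I would take the 0-component of $\ceil{\Unit_\tau}$ to be $\ceil{\Unit_0}$. Since $M$ is lex, the base change of $M\pi_0$ along $M\ceil{\Unit_0}$ is $M(1.\Unit_0) \cong M1 \cong 1$, so under the universal property of $\bP_{M\pi_0}(\McU_1)$ expressed by \Cref{lem:gen-comp}, a 1-component over $M\McU_0$ (via $M\ceil{\Unit_0}$) corresponds to a map $1 \to \McU_1$; choose this to be $\ceil{\Unit_1}$. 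This defines $\ceil{\Unit_\tau} \colon 1 \to \McV$ in $\Gl(M)$.

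To verify that this yields a $\Unit$-type structure on $\tau$, compute the canonical pullback of $\tau$ along $\ceil{\Unit_\tau}$ componentwise. The 0-component is the pullback of $\pi_0$ along $\ceil{\Unit_0}$, which is the identity by the $\Unit_0$-structure. The 1-component, by \Cref{lem:gen-comp} applied to $\GenComp(\pi_1, M\pi_0)$, factors as the successive pullback of $M\pi_0$ along $M\ceil{\Unit_0}$ (which yields the identity, by lex-ness of $M$ applied to the $\Unit_0$-structure) and of $\pi_1$ along $\ceil{\Unit_1}$ (which yields the identity by $\Unit_1$). Hence $\ceil{\Unit_\tau}$ classifies the terminal Reedy fibration.

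Preservation by each projection $\Gl(M) \to \bC_i$ is immediate from the componentwise construction: the $i=0$ projection returns $\ceil{\Unit_0}$, and the $i=1$ projection returns $\ceil{\Unit_1}$ under the correspondence from \Cref{lem:gen-comp}. The construction for $\tau^\ds$ is entirely parallel, substituting $\pi_i^\ds$ and $\Unit_i^\ds$ throughout. No serious obstacle is anticipated; the only minor subtlety is tracking that the compatibility between the 0- and 1-components of $\ceil{\Unit_\tau}$ is forced by lex-ness of $M$ and the $\Unit_0$-structure on $\pi_0$.
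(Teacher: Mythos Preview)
Your proposal is correct and follows the same approach as the paper, which gives only a one-line proof: ``Because $M$ is lex, the $\Unit$-type map $1 \to \McV$ is given by the map where the $i$-th component are $\Unit_i \colon 1 \to \McU_i$.'' You have simply unpacked what this means, correctly using \Cref{lem:gen-comp} to translate between a map $1 \to \McV_1 = \bP_{M\pi_0}(\McU_1)$ and the data $(\ceil{\Unit_0}, \ceil{\Unit_1})$, and verifying that the resulting pullback is terminal componentwise.
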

\begin{proof}
  Because $M$ is lex, the $\Unit$-type map $1 \to \McV$ is given by the map
  where the $i$-th component are $\Unit_i \colon 1 \to \McU_i$ for $\tau$ and
  $\Unit^\ds_i \colon 1 \to \McU_i$ for $\tau^\ds$.
\end{proof}


\subsection{$\Sigma$-types}\label{subsec:glue-universe-sigma}
For $\Sigma$-types, we need to show that if $\pi_i$-fibrations are closed under
composition and so are $\tau$-fibrations and likewise for $\pi_i^\ds$-fibrations
and $\tau^\ds$-fibrations.
\begin{lemma}\label{lem:reedy-fib-comp}
  The Reedy fibrations and internal Reedy fibrations of $\Gl(M)$ are closed
  under composition.
\end{lemma}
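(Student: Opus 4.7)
The plan is to check the two conditions of Reedy fibrancy (\Cref{def:reedy-fib}) separately on the $0$-component and on the relative matching map. Throughout, let $f \colon F \to E$ and $e \colon E \to B$ be composable Reedy fibrations in $\Gl(M)$, and write $ef \colon F \to B$ for their composite.

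For the $0$-component, the map $(ef)_0 = e_0 \cdot f_0 \colon F_0 \to B_0$ is the composite of the $\pi_0$-fibrations $f_0$ and $e_0$, which is again a $\pi_0$-fibration by the $\Sigma$-type structure on $\pi_0$ together with \Cref{prop:Pi-Sigma-generic}. For the internal case, the same argument applies using $\Sigma_0^\ds$ on $\pi_0^\ds$.

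For the relative matching map, I would first identify $\WhM_B F$ with an iterated pullback: since $\WhM_B F = B_1 \times_{MB_0} MF_0$ and $\WhM_B E = B_1 \times_{MB_0} ME_0$, using that $M$ is lex one obtains a canonical isomorphism
\begin{equation*}
  \WhM_B F \;\cong\; \WhM_B E \times_{ME_0} MF_0.
\end{equation*}
Under this identification, the natural map $\WhM_E F = E_1 \times_{ME_0} MF_0 \to \WhM_B F$ is obtained as the pullback of $\Whm_B(E) \colon E_1 \twoheadrightarrow \WhM_B E$ along the projection $\WhM_B F \to \WhM_B E$. Since $\Whm_B(E)$ is a $\pi_1$-fibration by Reedy fibrancy of $e$, and universal maps are closed under pullback, $\WhM_E F \to \WhM_B F$ is a $\pi_1$-fibration. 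Now one factors
\begin{equation*}
  \Whm_B(F) \;\colon\; F_1 \xrightarrow{\;\Whm_E(F)\;} \WhM_E F \xrightarrow{\quad} \WhM_B F,
\end{equation*}
where the first factor is a $\pi_1$-fibration by Reedy fibrancy of $f$, and the second factor has just been shown to be a $\pi_1$-fibration. Closure of $\pi_1$-fibrations under composition (another invocation of the $\Sigma$-type structure on $\pi_1$ via \Cref{prop:Pi-Sigma-generic}) yields that $\Whm_B(F)$ is a $\pi_1$-fibration. The internal case is identical, replacing $\pi_1$ and $\Sigma_1$ by $\pi_1^\ds$ and $\Sigma_1^\ds$.

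The only non-routine step is the pullback identification $\WhM_B F \cong \WhM_B E \times_{ME_0} MF_0$ and the recognition of $\WhM_E F \to \WhM_B F$ as a pullback of $\Whm_B(E)$; everything else is a direct application of the already-established closure properties. I expect this diagram chase to be the main (though still elementary) obstacle, and would carry it out by pasting two pullback squares, one witnessing $\WhM_B E$ as $B_1 \times_{MB_0} ME_0$ and one obtained by pulling back $Mf_0$.
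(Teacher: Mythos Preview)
Your proposal is correct and takes essentially the same approach as the paper: both handle the $0$-component via the $\Sigma$-type structure on $\pi_0$, and for the relative matching map both factor $\Whm_B(F)$ as $\Whm_E(F)$ followed by a pullback of $\Whm_B(E)$, then appeal to pullback stability and the $\Sigma$-type structure on $\pi_1$. The paper presents the pullback identification via a single diagram of iterated pullback squares, whereas you spell out the isomorphism $\WhM_B F \cong \WhM_B E \times_{ME_0} MF_0$ explicitly, but the content is the same.
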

\begin{proof}
  We only show the case for Reedy fibrations as the internal case is identical.

  Let $Y \twoheadrightarrow E \twoheadrightarrow B$ be a composable pair of
  Reedy fibration.
  Then, its 0-component
  $Y_0 \twoheadrightarrow E_0 \twoheadrightarrow B_0 \in \bC_0$ is a
  $\pi_0$-fibration by the $\Sigma$-type structure on $\pi_0$ .
  By taking iterated pullbacks, we can see that the relative matching map
  $\Whm_BY$ is the composite of the relative matching map $\Whm_EY$ with a
  pullback of the relative matching map $\Whm_BE$.
  \begin{equation*}
    \begin{tikzcd}[cramped, row sep=small, column sep=large]
      {Y_1} & {\WhM_EY} & {\WhM_BY} & {MY_0} \\
      & {E_1} & {\WhM_BE} & {ME_0} \\
      && {B_1} & {MB_0}
      \arrow["{\Whm_EY}"', two heads, from=1-1, to=1-2]
      \arrow["{\Whm_BY}", curve={height=-12pt}, from=1-1, to=1-3]
      \arrow[from=1-1, to=2-2]
      \arrow[two heads, from=1-2, to=1-3]
      \arrow[from=1-2, to=2-2]
      \arrow[from=1-3, to=1-4]
      \arrow[from=1-3, to=2-3]
      \arrow[from=1-4, to=2-4]
      \arrow["{\Whm_BE}", two heads, from=2-2, to=2-3]
      \arrow[from=2-2, to=3-3]
      \arrow[from=2-3, to=2-4]
      \arrow[from=2-3, to=3-3]
      \arrow["\lrcorner"{anchor=center, pos=0.15, scale=1.5}, draw=none, from=1-2, to=2-3]
      \arrow["\lrcorner"{anchor=center, pos=0.15, scale=1.5}, draw=none, from=1-3, to=2-4]
      \arrow["\lrcorner"{anchor=center, pos=0.15, scale=1.5}, draw=none, from=2-3, to=3-4]
      \arrow[from=2-4, to=3-4]
      \arrow[from=3-3, to=3-4]
    \end{tikzcd}
  \end{equation*}
  By pullback stability of $\pi_1$-fibrations and the $\Sigma$-type structure on
  $\pi_1$, the result follows.
\end{proof}

\begin{proposition}\label{prop:glue-universe-sigma}
  Both $\tau \colon \tMcV \to \McV \in \Gl(M)$ and
  $\tau^\ds \colon \tMcV^\ds \to \McV^\ds \in \Gl(M)$ admit a $\Sigma$-type
  structure preserved by the 0-component projection $\Gl(M) \to \bC_0$.
\end{proposition}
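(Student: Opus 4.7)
The plan is to reduce the $\Sigma$-type structure on $\tau$ (and on $\tau^\ds$) to closure of Reedy fibrations under composition. By \Cref{def:sigma-type}, a $\Sigma$-type structure on $\tau$ is a $\tau$-fibrancy structure on the generic composite $\GenComp(\tau,\tau)$. Unpacking \Cref{constr:gen-comp}, this map factors as the composite of two maps, each of which is a pullback of $\tau$ (the first along $\ev$, the second being the counit $\tau^*\tau_*(\tMcV \times \McV) \to \tau_*(\tMcV \times \McV)$). By \Cref{prop:urf-up} both factors are Reedy fibrations; by \Cref{lem:reedy-fib-comp}, Reedy fibrations are closed under composition, so $\GenComp(\tau,\tau)$ is itself a Reedy fibration; and one more application of \Cref{prop:urf-up} equips it with a $\tau$-fibrancy structure $(\Sigma_\tau,\pair_\tau)$, giving the desired $\Sigma$-type structure on $\tau$.

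The identical argument handles $\tau^\ds$: the composable pair forming $\GenComp(\tau^\ds,\tau^\ds)$ consists of $\tau^\ds$-fibrations, which are internal Reedy fibrations by \Cref{prop:urf-up}; an internal-fibration analogue of \Cref{lem:reedy-fib-comp} (with the same proof, using $\Sigma_i^\ds$ in place of $\Sigma_i$) gives closure under composition; and \Cref{prop:urf-up} again supplies the $\tau^\ds$-fibrancy structure.

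For preservation by the 0-component projection $\Gl(M) \to \bC_0$, observe that this projection preserves finite limits (they are computed pointwise in $\Gl(M)$) and sends $\tau$ to $\pi_0$. The plan is to verify that the 0-component of $\GenComp(\tau,\tau)$ is exactly $\GenComp(\pi_0,\pi_0)$, which reduces to checking that $\tau_*$ on the particular argument $\tMcV \times \McV$ restricts to $(\pi_0)_*$ on 0-components. Granted this, the explicit formula $B^*(M\ceil{E_0}).\ceil{E_1}$ for the $\tau$-name in the proof of \Cref{prop:urf-up} has 0-component equal to the input $\pi_0$-name $\ceil{E_0}$, so by choosing the given $\Sigma$-type structure $(\Sigma_0,\pair_0)$ of $\pi_0$ as the $\pi_0$-fibrancy structure of the 0-component of $\GenComp(\tau,\tau)$ when applying \Cref{prop:urf-up}, the resulting $\Sigma$-type structure on $\tau$ projects to $(\Sigma_0,\pair_0)$.

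The main obstacle is verifying that pushforward along $\tau$ interacts correctly with the 0-component projection at this specific input. Since pushforward in gluing categories is not pointwise in general, this is a direct computation exploiting the explicit shape of $\tau$ in \Cref{constr:urf}, where the 1-component is built as a polynomial along $M\pi_0$; the lex-ness of $M$ and the fact that $\tMcV \times \McV$ has $i$-component $\tMcU_i \times \McU_i$ should make the identification routine.
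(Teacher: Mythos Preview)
Your core argument is correct and matches the paper's proof exactly: $\GenComp(\tau,\tau)$ is a composite of two pullbacks of $\tau$, hence of two Reedy fibrations by \Cref{prop:urf-up}, hence a Reedy fibration by \Cref{lem:reedy-fib-comp}, hence $\tau$-fibrant again by \Cref{prop:urf-up}; and \Cref{lem:reedy-fib-comp} already covers the internal case, so no separate analogue is needed.

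For the preservation claim, your worry in the last paragraph is unnecessary. The paper does not do a direct computation here; instead it relies (as used later, e.g.\ in the proofs of \Cref{lem:reedy-fib-pshfw} and \Cref{thm:gl-ptd-fe}) on the fact that the 0-component projection $\Gl(M) \to \bC_0$ preserves pushforwards, citing \cite[Theorem~2.19]{fkl24}. Since this projection also preserves finite limits and sends $\tau$ to $\pi_0$, it sends the entire diagram of \Cref{constr:gen-comp} for $\tau$ to the corresponding diagram for $\pi_0$, so $\GenComp(\tau,\tau)$ has 0-component $\GenComp(\pi_0,\pi_0)$ immediately. Combined with your observation about the explicit $\tau$-name formula in \Cref{prop:urf-up}, this gives preservation without any bespoke computation.
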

\begin{proof}
  By construction, $\GenComp(\tau,\tau)$ and $\GenComp(\tau^\ds,\tau^\ds)$ are
  the composite of two $\tau$- and $\tau^\ds$-fibrations respectively, which are
  Reedy fibrations by \Cref{prop:urf-up}.
  Thus, the result follows by \Cref{lem:reedy-fib-comp}.
\end{proof}


\subsection{$\Pi$-types}\label{subsec:glue-universe-pi}
For $\Pi$-types, we need to show that pushforwards of (internal) Reedy
fibrations along (internal) Reedy fibrations are again (internal) Reedy
fibrations.

\begin{lemma}\label{lem:reedy-fib-pshfw}
  (Internal) Reedy fibrations are closed under pushforwards along (internal)
  Reedy fibrations.
\end{lemma}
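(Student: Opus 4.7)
The plan is to compute the pushforward $g_*f$ in $\Gl(M)$ componentwise and verify the Reedy fibration conditions on each component. Pushforwards along Reedy fibrations exist in $\Gl(M)$ to begin with, because both $\pi_0$ and $\pi_1$ carry $\Pi$-type structures by \Cref{asm:mrq}\Cref{itm:mrq-pi}, and $M$ is lex, so pullbacks of Reedy fibrations in $\Gl(M)$ can be computed componentwise and their right adjoints assembled from the componentwise right adjoints.

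First I would identify the 0-component of $g_*f$. Because pushforward in $\Gl(M)$ along a Reedy fibration $g \colon B \twoheadrightarrow A$ restricts to pushforward along $g_0$ on the 0-component, we have $(g_*f)_0 = (g_0)_*(f_0) \colon (g_0)_*E_0 \to A_0$ in $\bC_0$. Since $g_0$ is a $\pi_0$-fibration (by Reedy fibrancy of $g$) and $f_0$ is a $\pi_0$-fibration (by Reedy fibrancy of $f$), the $\Pi$-type structure on $\pi_0$ together with \Cref{prop:Pi-Sigma-generic} ensures $(g_0)_*(f_0)$ is a $\pi_0$-fibration.

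Next, I would identify the relative matching map of $g_*E$ over $A$. By the definition of the relative matching object, $\WhM_A(g_*E) = A_1 \times_{MA_0} M((g_0)_*E_0)$. The key observation is that the relative matching map $\Whm_A(g_*E) \colon (g_*E)_1 \to \WhM_A(g_*E)$ can be written as the pullback to $\WhM_A(g_*E)$ of the pushforward $(\Whm_A B)_*(\Whm_B E)$ of the $\pi_1$-fibration $\Whm_B(E) \colon E_1 \twoheadrightarrow \WhM_B(E)$ along the $\pi_1$-fibration $\Whm_A(B) \colon B_1 \twoheadrightarrow \WhM_A(B)$, along the natural comparison map $\WhM_A(g_*E) \to \WhM_A(B)$ obtained from the unit of the $((g_0)^* \dashv (g_0)_*)$-adjunction after applying $M$. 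Since $g$ and $f$ are Reedy fibrations, both $\Whm_A(B)$ and $\Whm_B(E)$ are $\pi_1$-fibrations, so the pushforward is a $\pi_1$-fibration by the $\Pi$-type structure on $\pi_1$ and \Cref{prop:Pi-Sigma-generic}, and its pullback $\Whm_A(g_*E)$ remains one by pullback stability.

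The main obstacle will be justifying the explicit formula for the 1-component $(g_*E)_1$ and its relative matching map, which amounts to a Beck--Chevalley computation exploiting lexness of $M$ together with the universal property of the pushforward in the comma category $\bC_1 \downarrow M$. Once this formula is in place, the internal Reedy fibration case follows by the same argument, replacing $\pi_i$ with $\pi_i^\ds$ and invoking their $\Pi$-type structures, again from \Cref{asm:mrq}\Cref{itm:mrq-pi}.
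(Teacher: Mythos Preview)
Your overall strategy matches the paper's: treat the two components separately, use the $\Pi$-type structure on $\pi_0$ for the $0$-component, and for the $1$-component exhibit the relative matching map $\Whm_A(g_*E)$ as a pullback of a pushforward of the $\pi_1$-fibration $\Whm_B(E)$ along a $\pi_1$-fibration. The paper likewise defers the explicit formula for $(g_*E)_1$ to \cite[Construction 2.12]{fkl24}, which is precisely the Beck--Chevalley computation you flag as the main obstacle.

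However, your description of the $1$-component contains concrete errors that would make the argument fail as written. The relevant pushforward is along $g_1 \colon B_1 \to A_1$ (the full $1$-component of $g$), not along $\Whm_A(B)$, and the target of the comparison map is $(g_1)_*\WhM_B(E)$, not $\WhM_A(B)$. In fact there is no natural map $\WhM_A(g_*E) \to \WhM_A(B)$ at all: since $\WhM_A(g_*E) = A_1 \times_{MA_0} M((g_0)_*E_0)$ with $(g_0)_*E_0$ living over $A_0$ rather than over $B_0$, there is no map to $MB_0$ over $MA_0$ available. The correct comparison map $\WhM_A(g_*E) \to (g_1)_*\WhM_B(E)$ over $A_1$ is obtained by transposing along $(g_1)^* \dashv (g_1)_*$ the map that $M$ induces from the \emph{counit} $(g_0)^*(g_0)_*E_0 \to E_0$, not the unit. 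With these corrections the argument goes through exactly as in the paper: $g_1$ is itself a $\pi_1$-fibration (it factors as $\Whm_A(B)$ followed by a pullback of $Mg_0$, both $\pi_1$-fibrations by Reedy fibrancy of $g$ and \Cref{asm:mrq}\Cref{itm:mrq-fib}), so $(g_1)_*(\Whm_BE)$ is a $\pi_1$-fibration by the distributivity law for pushforwards, and hence so is its pullback $\Whm_A(g_*E)$.
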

\begin{proof}
  Once again, the internal case is identical to the case for ambient Reedy
  fibrations, so we just handle the case of ambient Reedy fibrations.

  Let $E \twoheadrightarrow B \in \Gl(M)$ be a Reedy fibrant object over $B$ and
  $q \colon B \twoheadrightarrow A \in \Gl(M)$ be a Reedy fibration.
  Because the map $\Gl(M) \to \bC_0$ preserves pushforwards by \cite[Theorem
  2.19]{fkl24}, one has that $(q_*E)_0 = (q_0)_*E_0$, which is $\pi_0$-fibrant
  over $B_0$ by the $\Pi$-type structure on $\pi_0$.

  For the 1-component, \cite[Construction 2.12]{fkl24} shows that
  $(q_*E)_1 \to A_1$ as an object over $A_1$ is constructed as a pullback along
  a map $\WhM_A(q_*E) \to (q_1)_*(\WhM_BE)$ of the pushforward along $q_1$ of the
  relative matching map $\Whm_BE \colon E_1 \to \WhM_BE \in \sfrac{\bC_1}{B_1}$, as follows.
  \begin{equation*}
    \begin{tikzcd}[cramped, sep=small]
      {E_1} & {(q_*E)_1} & {(q_1)_*E_1} \\
      {\WhM_BE} & {\WhM_A(q_*E)} & {(q_1)_*(\WhM_BE)} \\
      {B_1} & {A_1}
      \arrow["{\Whm_BE}"', two heads, from=1-1, to=2-1]
      \arrow[from=1-2, to=1-3]
      \arrow["{\Whm_A(q_*E)}"', two heads, from=1-2, to=2-2]
      \arrow["\lrcorner"{anchor=center, pos=0.15, scale=1.5}, draw=none, from=1-2, to=2-3]
      \arrow["{(q_1)_*(\Whm_BE)}", two heads, from=1-3, to=2-3]
      \arrow[two heads, from=2-1, to=3-1]
      \arrow[from=2-2, to=2-3]
      \arrow[two heads, from=2-2, to=3-2]
      \arrow[two heads, from=2-3, to=3-2]
      \arrow["{q_1}"', two heads, from=3-1, to=3-2]
    \end{tikzcd}
  \end{equation*}
  By the distributivity law for pushforwards (\cite[Lemma 2.3]{inv-psfw} or
  \cite[Paragraph 2.3]{gk13}), it thus follows that the map $(p_1)_*(\Whm_BE)$
  is a $\pi_1$-fibration and thus that $(q_*E)_1 \to \WhM_pE$ is a
  $\pi_1$-fibration.
\end{proof}

\begin{proposition}\label{prop:glue-universe-pi}
  If $\pi_i$ has $\Pi$-type structures for $i=0,1$ then
  $\tau \colon \tMcV \to \McV \in \Gl(M)$ admits a $\Pi$-type structure
  preserved by the 0-component projection $\Gl(M) \to \bC_0$.
\end{proposition}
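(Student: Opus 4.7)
The plan is to reduce the statement to the closure property of Reedy fibrations under pushforwards along Reedy fibrations, which was established in \Cref{lem:reedy-fib-pshfw}, and then transfer back via the classification of (internal) Reedy fibrations as $\tau$-fibrations from \Cref{prop:urf-up}.

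First, I would recall that by \Cref{def:pi-type}, a $\Pi$-type structure on $\tau \colon \tMcV \to \McV$ is a $\tau$-fibrancy structure on the map $\tau_*(\tMcV \times \tMcV) \to \tau_*(\tMcV \times \McV)$ obtained by applying the polynomial functor associated with $\tau$ to $\tau$ itself. By \Cref{prop:Pi-Sigma-generic}, providing such a structure is equivalent to showing that pushforwards of $\tau$-fibrations along $\tau$-fibrations exist and are again $\tau$-fibrations, since applying the polynomial functor is precisely the pushforward of $\tMcV \times \tMcV \to \tMcV \times \McV$ (a pullback of $\tau$) along $\tMcV \times \McV \to \tMcV \times \McV$ (also essentially a $\tau$-fibration built from $\tau$).

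Second, I would observe that by \Cref{prop:urf-up}, a map in $\Gl(M)$ is a $\tau$-fibration if and only if it is a Reedy fibration, so it suffices to verify that Reedy fibrations are closed under pushforwards along Reedy fibrations. But this is exactly the content of \Cref{lem:reedy-fib-pshfw}, which uses the $\Pi$-type structures on both $\pi_0$ and $\pi_1$ together with the distributivity law for pushforwards to establish the claim componentwise.

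Finally, to verify preservation by the 0-component projection $\Gl(M) \to \bC_0$, I would unwind the construction: by \cite[Theorem 2.19]{fkl24} (as already invoked in the proof of \Cref{lem:reedy-fib-pshfw}), the 0-component of a pushforward in $\Gl(M)$ along a Reedy fibration $q$ is simply the pushforward in $\bC_0$ along $q_0$. Since $\tau_0 = \pi_0$ by \Cref{constr:urf}, the 0-component of the pushforward defining the $\Pi$-type structure on $\tau$ matches the pushforward $(\pi_0)_*(\tMcU_0 \times \tMcU_0) \to (\pi_0)_*(\tMcU_0 \times \McU_0)$ providing the $\Pi$-type structure on $\pi_0$, giving the desired preservation. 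I do not anticipate any substantial obstacle here, as the heavy lifting has been done in \Cref{lem:reedy-fib-pshfw,prop:urf-up}; the main care needed is just in matching up the data of the $\Pi$-type structure with the abstract closure property.
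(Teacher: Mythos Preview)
Your approach is correct and essentially the same as the paper's one-line proof, which cites \Cref{lem:reedy-fib-pshfw} together with \Cref{lem:gen-comp-psfw}. The only slip is that you appeal to \Cref{prop:Pi-Sigma-generic} for the reduction to closure under pushforwards, but that proposition only records the implication $\Pi$-type $\Rightarrow$ closure; the identification of $\bP_\tau(\tau)$ as a pushforward of a $\tau$-fibration along a $\tau$-fibration (which you describe informally, modulo a typo in the base map) is precisely the content of \Cref{lem:gen-comp-psfw}.
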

\begin{proof}
  Immediate by \Cref{lem:reedy-fib-pshfw,lem:gen-comp-psfw}.
\end{proof}


\subsection{Id-types}\label{subsec:glue-universe-id}
For $\Id$-types, we need to show that diagonals of (internal) Reedy fibrations
admit a (pointwise trivial cofibration, (internal) Reedy
fibration)-factorisation.
The construction is slight pain but is a direct generalisation of
\cite[Proposition 4.7]{kl21}.

\subsubsection{Id-type for composable pair of fibrations}
For now we first temporarily exit ourselves from the setting of gluing
categories and recall the $\transport$-factorisation of the $\Id$-type for the
$\Sigma$-type in general universe category models of intensional type theory.
\begin{construction}\label{constr:sigma-fact-fib}
  Let $\bC$ be any finitely complete category equipped with a universal map
  $\pi \colon \tMcU \to \McU$ which has an $\Id$-type structure
  $\Id \colon \tMcU \times_\McU \tMcU \to \McU$ and a $\Pi$-type structure.

  Suppose one has a composable pair of $\pi$-fibrations
  $E_1 \twoheadrightarrow E_0 \twoheadrightarrow B$ along with a factorisation
  of the diagonal $E_0 \to E_0 \times_B E_0$ as
  \begin{equation*}
    \begin{tikzcd}[cramped]
      {E_0} & P & {E_0 \times_B E_0}
      \arrow["r", from=1-1, to=1-2]
      \arrow["p", two heads, from=1-2, to=1-3]
    \end{tikzcd} \in \sfrac{\bC}{B}
  \end{equation*}
  where $r$ is equipped with a lifting structure $r \fracsquareslash{B} E \times \pi$.
  The goal is to construct a fibrant object $\ol{\Id}_B(E_0,P,E_1)$ over the
  pullback $P \times_{E_0 \times_B E_0} (E_1 \times_B E_1)$ as follows.
  \begin{equation*}
    \begin{tikzcd}[cramped, column sep=small]
      {\ol{\Id}_B(E_0,P,E_1)} \\
      {P \times_{E_0 \times_B E_0} (E_1 \times_B E_1)} & {E_1 \times_B E_1} \\
      P & {E_0 \times_B E_0}
      \arrow[two heads, from=1-1, to=2-1]
      \arrow[two heads, from=2-1, to=2-2]
      \arrow[two heads, from=2-1, to=3-1]
      \arrow["\lrcorner"{anchor=center, pos=0.15, scale=1.5}, draw=none, from=2-1, to=3-2]
      \arrow[two heads, from=2-2, to=3-2]
      \arrow[two heads, from=3-1, to=3-2]
    \end{tikzcd}
  \end{equation*}

  Denote by
  $p_0,p_1 \colon P \xrightarrow{p} E_0 \times_B E_0 \rightrightarrows E_0$ the
  two projections so that by the lifting structure on $r$, one obtains a
  transport map $\transport \colon p_0^*E_1 \to p_1^*E_1$ as on the left as follows.
  \begin{center}
    \begin{minipage}{0.45\linewidth}
      \begin{equation*}
        \begin{tikzcd}[cramped, column sep=small]
          {p_0^*E_1} \\
          && {p_1^*E_1} && {E_1} \\
          & P & {E_0 \times_B E_0} & {E_0}
          \arrow[from=1-1, to=2-5]
          \arrow[from=1-1, to=3-2]
          \arrow[from=2-3, to=2-5]
          \arrow[two heads, from=2-3, to=3-2]
          \arrow[two heads, from=2-5, to=3-4]
          \arrow[two heads, from=3-2, to=3-3]
          \arrow[shift left, two heads, from=3-3, to=3-4]
          \arrow[shift right, two heads, from=3-3, to=3-4]
          \arrow["\lrcorner"{anchor=center, pos=0.15, scale=1.5}, draw=none, from=1-1, to=3-4]
          \arrow["\lrcorner"{anchor=center, pos=0.15, scale=1.5}, draw=none, from=2-3, to=3-4]
          \arrow["\transport"{description}, from=1-1, to=2-3]
        \end{tikzcd}
      \end{equation*}
    \end{minipage}
    \begin{minipage}{0.45\linewidth}
      \begin{equation*}
        \begin{tikzcd}[cramped, column sep=small]
          {P \times_{E_0 \times_B E_0} (E_1 \times_B E_1)} & {E_1 \times_B E_1} & {E_1} \\
          P & {E_0 \times_B E_0} & {E_0}
          \arrow[two heads, from=1-1, to=1-2]
          \arrow[two heads, from=1-1, to=2-1]
          \arrow["\lrcorner"{anchor=center, pos=0.15, scale=1.5}, draw=none, from=1-1, to=2-2]
          \arrow[from=1-2, to=1-3]
          \arrow[two heads, from=1-2, to=2-2]
          \arrow[two heads, from=1-3, to=2-3]
          \arrow[two heads, from=2-1, to=2-2]
          \arrow[shift left, two heads, from=2-2, to=2-3]
          \arrow[shift right, two heads, from=2-2, to=2-3]
        \end{tikzcd}
      \end{equation*}
    \end{minipage}
  \end{center}
  But note that one also has the commutativity squares on the right.
  Therefore, the universal property of the pullback gives
  maps
  \begin{equation*}
    P \times_{E_0 \times_B E_0} (E_1 \times_B E_1)
    \xrightarrow{\quad}
    p_i^*E_1
    \in \sfrac{\bC}{P}
  \end{equation*}
  where $i=0,1$.
  The 1-component
  $P \times_{E_1 \times_B E_1} (E_1 \times_B E_1) \xrightarrow{\quad} p_1^*E_1$
  together with the composition of the 0-component with the $\transport$ map
  $P \times_{E_0 \times_B E_0} (E_1 \times_B E_1) \xrightarrow{\quad} p_0^*E_1
  \xrightarrow{\transport} p_1^*E_1$ then gives a map
  $\begin{tikzcd}[cramped]
    {P \times_{E_0 \times_B E_0} (E_1 \times_B E_1)} & {p_1^*E_1 \times_P p_1^*E_1}
    \arrow[dashed, from=1-1, to=1-2]
  \end{tikzcd} \in \sfrac{\bC}{P}$ as follows.
  \begin{equation*}\label{eqn:transport-pair}\tag{$\transport\text{-}\textsf{pair}$}
    \begin{tikzcd}[cramped, row sep=small]
      & {p_0^*E_1} & {p_1^*E_1} \\
      {P \times_{E_0 \times_B E_0} (E_1 \times_B E_1)} & {p_1^*E_1 \times_P p_1^*E_1} \\
      && {p_1^*E_1}
      \arrow["\transport", from=1-2, to=1-3]
      \arrow[dotted, from=2-1, to=1-2]
      \arrow[dashed, from=2-1, to=2-2]
      \arrow[dotted, from=2-1, to=3-3]
      \arrow[from=2-2, to=1-3]
      \arrow[from=2-2, to=3-3]
    \end{tikzcd} \in \sfrac{\bC}{P}
  \end{equation*}

  Therefore, the required fibration
  $\ol{\Id}_B(E_0,P,E_1) \xrightarrow{\quad} P \times_{E_0 \times_B E_0} (E_1
  \times_B E_1)$ is constructed by pulling back along the map
  $\begin{tikzcd}[cramped]
    {P \times_{E_0 \times_B E_0} (E_1 \times_B E_1)} & {p_1^*E_1 \times_P p_1^*E_1}
    \arrow[dashed, from=1-1, to=1-2]
  \end{tikzcd}$ constructed above the map
  $\Id_P(p_1^*E_1) \twoheadrightarrow p_1^*E_1 \times_P p_1^*E_1$.
  \begin{equation*}
    \begin{tikzcd}[cramped, column sep=small]
      {\ol{\Id}_B(E_0,P,E_1)} & {\Id_P(p_1^*E_1)} \\
      {P \times_{E_0 \times_B E_0} (E_1 \times_B E_1)} & {p_1^*E_1 \times_P p_1^*E_1}
      \arrow[from=1-1, to=1-2]
      \arrow[from=1-1, to=2-1, two heads]
      \arrow["\lrcorner"{anchor=center, pos=0.15, scale=1.5}, draw=none, from=1-1, to=2-2]
      \arrow["{\ev_\partial}", two heads, from=1-2, to=2-2]
      \arrow[dashed, from=2-1, to=2-2]
    \end{tikzcd} \in \sfrac{\bC}{P}
  \end{equation*}
\end{construction}

The goal is to show that the composite
$\ol{\Id}_B(E_0,P,E_1) \twoheadrightarrow P \times_{E_0 \times_B E_0} (E_1
\times_B E_1) \twoheadrightarrow E_1 \times_B E_1$ is the fibration part of the
diagonal factorisation for $E_1$ over $P$.
To do so, we need to construct the trivial cofibration
$E_1 \to \ol{\Id}_B(E_0,P,E_1)$, whose construction is facilitated by the
following calculation.
\begin{lemma}\label{lem:transport-iter-id-pb}
  In the setting of \Cref{constr:sigma-fact-fib}, the image of
  $\ol{\Id}_B(E_0,P,E_1) \twoheadrightarrow P \times_{E_0 \times_B E_0} (E_1
  \times_B E_1) \in \sfrac{\bC}{P}$ under the pullback along
  $r \colon E_0 \to P$ gives the map
  $\ev_\partial \colon \Id_{E_0}(E_1) \twoheadrightarrow E_1 \times_{E_0} E_1$,
  so that one has the following iterated pullbacks.
  \begin{equation*}
    \begin{tikzcd}[cramped]
      {\Id_{E_0}(E_1)} & {\ol{\Id}_B(E_0,P,E_1)} \\
      {E_1 \times_{E_0} E_1} & {P \times_{E_0 \times_B E_0} (E_1 \times_B E_1)} \\
      {E_0} & P
      \arrow[from=1-1, to=1-2]
      \arrow[two heads, from=1-1, to=2-1]
      \arrow[two heads, from=1-2, to=2-2]
      \arrow[from=2-1, to=2-2]
      \arrow[two heads, from=2-1, to=3-1]
      \arrow["\lrcorner"{anchor=center, pos=0.15, scale=1.5}, draw=none, from=1-1, to=2-2]
      \arrow["\lrcorner"{anchor=center, pos=0.15, scale=1.5}, draw=none, from=2-1, to=3-2]
      \arrow[two heads, from=2-2, to=3-2]
      \arrow["r"', from=3-1, to=3-2]
    \end{tikzcd}
  \end{equation*}
\end{lemma}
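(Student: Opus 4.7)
The plan is to verify both squares directly by computing the pullback along $r$ of each of the relevant maps in the right-hand column. I would first handle the bottom square, which is purely formal: since $pr = \Delta \colon E_0 \to E_0 \times_B E_0$ by construction of the factorization, the two projections satisfy $p_0 r = p_1 r = \id_{E_0}$. Therefore pulling back the square
\begin{equation*}
\begin{tikzcd}[cramped]
P \times_{E_0 \times_B E_0} (E_1 \times_B E_1) \ar[r] \ar[d, two heads] & E_1 \times_B E_1 \ar[d, two heads] \\
P \ar[r, "p"', two heads] & E_0 \times_B E_0
\end{tikzcd}
\end{equation*}
along $r$ gives the pullback of $E_1 \times_B E_1$ along $\Delta$, which is $E_1 \times_{E_0} E_1$, as required.

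For the top square, I would pull back the defining pullback diagram of $\ol{\Id}_B(E_0,P,E_1)$ along $r$. Since $p_1 r = \id_{E_0}$, we have $r^*(p_1^* E_1) = E_1$ (and likewise for $p_0$), and the stability of the selected pullback used to form the $\Id$-type structure gives $r^*\Id_P(p_1^* E_1) \cong \Id_{E_0}(E_1)$. It then remains to identify the pullback along $r$ of the map \eqref{eqn:transport-pair}. Here I would invoke the defining property of $\transport$ from \Cref{constr:transport}: the $\transport \colon p_0^* E_1 \to p_1^* E_1$ obtained from the lifting structure on $r$ restricts along $r$ to the identity on $E_1$. Thus $r^*\transport = \id_{E_1}$, so the first coordinate of \eqref{eqn:transport-pair} pulls back to the first projection of $E_1 \times_{E_0} E_1$ (transport followed by first factor becomes identity followed by first factor), and the second coordinate pulls back to the second projection. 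Hence $r^*$\eqref{eqn:transport-pair} is the identity on $E_1 \times_{E_0} E_1$, and pulling back $\Id_P(p_1^* E_1) \twoheadrightarrow p_1^* E_1 \times_P p_1^* E_1$ along this identity just yields $\ev_\partial \colon \Id_{E_0}(E_1) \twoheadrightarrow E_1 \times_{E_0} E_1$.

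Putting these two squares together via the pullback pasting lemma gives the claimed iterated pullback. The main technical point is really the single observation that $r^*\transport = \id_{E_1}$, which is built into the construction of transport as a lift extending the identity; everything else is a bookkeeping exercise using that $r$ is a common section of $p_0$ and $p_1$ and that the chosen $\Id$-type formation is stable under the selected pullbacks.
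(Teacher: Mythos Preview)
Your proposal is correct and follows essentially the same route as the paper: both arguments reduce the claim to showing that the map \eqref{eqn:transport-pair} pulls back along $r$ to the identity on $E_1 \times_{E_0} E_1$, which in turn comes down to $r^*\transport = \id_{E_1}$. One minor point: the transport map here is the one built in \Cref{constr:sigma-fact-fib} from the given lifting structure on $r$, not literally the map of \Cref{constr:transport}, though the relevant property (restriction along $r$ is the identity) is the same.
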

\begin{proof}
  Because $\ol{\Id}_B(E_0,P,E_1) \xrightarrow{\quad} P \times_{E_0 \times_B E_0} (E_1
  \times_B E_1)$ is constructed by taking the pullback along the map
  $\begin{tikzcd}[cramped]
    {P \times_{E_0 \times_B E_0} (E_1 \times_B E_1)} & {p_1^*E_1 \times_P p_1^*E_1}
    \arrow[dashed, from=1-1, to=1-2]
  \end{tikzcd}$
  of
  $\Id_P(p_1^*E_1) \twoheadrightarrow p_1^*E_1 \times_P p_1^*E_1$,
  it suffices to prove that rebasing the map
  \begin{equation*}
    \begin{tikzcd}[cramped]
      {P \times_{E_0 \times_B E_0} (E_1 \times_B E_1)} & {p_1^*E_1 \times_P p_1^*E_1}
      \arrow[dashed, from=1-1, to=1-2]
    \end{tikzcd} \in \sfrac{\bC}{P}
  \end{equation*}
  along $r \colon E_0 \to P$ gives the identity at $E_1 \times_{E_0} E_1$.

  To do so, we first note that
  $E_1 \times_B E_1 \twoheadrightarrow E_0 \times_B E_0$ pulls back along the
  diagonal $E_0 \to E_0 \times_B E_0$ to give $E_1 \times_{E_0} E_1$.
  Therefore, $P \times_{E_0 \times_B E_0} (E_1 \times_B E_1) \twoheadrightarrow P$
  fits into the following iterated pullback
  \begin{equation*}
    \begin{tikzcd}[cramped, column sep=small]
      {E_1 \times_{E_0} E_1} & {P \times_{E_0 \times_B E_0} (E_1 \times_B E_1)} & {E_1 \times_B E_1} \\
      {E_0} & P & {E_0 \times_B E_0}
      \arrow[from=1-1, to=1-2]
      \arrow[two heads, from=1-1, to=2-1]
      \arrow[from=1-2, to=1-3]
      \arrow[two heads, from=1-2, to=2-2]
      \arrow[two heads, from=1-3, to=2-3]
      \arrow["r"', from=2-1, to=2-2]
      \arrow["p"', from=2-2, to=2-3]
      \arrow["\lrcorner"{anchor=center, pos=0.15, scale=1.5}, draw=none, from=1-1, to=2-2]
      \arrow["\lrcorner"{anchor=center, pos=0.15, scale=1.5}, draw=none, from=1-2, to=2-3]
    \end{tikzcd}
  \end{equation*}
  In other words, the dashed map
  $P \times_{E_0 \times_B E_0} (E_1 \times_B E_1) \to p_1^*E_1 \times_P
  p_1^*E_1$ of \Cref{eqn:transport-pair} pulls back along $r \colon E_0 \to P$
  to a map of $E_1 \times_{E_0} E_1$ to itself over $E_0$.
  We finish by noting that it is the identity because
  $\transport \colon p_0^*E_1 \to p_1^*E_1 \in \sfrac{\bC}{P}$ pulls back along
  $r \colon E_0 \to P$ to give $\id \colon E_1 \to E_1$.
\end{proof}

We are now able to complete the composite of fibrations
$\ol{\Id}_B(E_0,P,E_1) \twoheadrightarrow P \times_{E_0 \times_B E_0} (E_1
\times_B E_1) \twoheadrightarrow E_1 \times_B E_1$ from
\Cref{constr:sigma-fact-fib} to a diagonal factorisation for $E_1$ over $P$ by
supplying the trivial cofibration part.
\begin{construction}\label{constr:sigma-fact-tc}
  In the setting of \Cref{constr:sigma-fact-fib}, we construct a map
  $\ol{\refl}_B(E_0,P,E_1)$ as the composite
  \begin{equation*}
    \ol{\refl}_B(E_0,P,E_1) \coloneqq
    \left(
      E_1 \xrightarrow{\refl} \Id_{E_0}(E_1) \xrightarrow{\text{\Cref{lem:transport-iter-id-pb}}} \ol{\Id}_B(E_0,P,E_1)
    \right)
  \end{equation*}
  where the final map is the connecting map into $\ol{\Id}_B(E_0,P,E_1)$ from
  its pullback under $r$, which is $\Id_{E_0}(E_1)$ by
  \Cref{lem:transport-iter-id-pb}.
\end{construction}

\begin{lemma}\label{lem:sigma-fact-tc} $ $
  The map $\ol{\refl}_B(E_0,P,E_1) \fracsquareslash{B} (B \times \pi)$
  from \Cref{constr:sigma-fact-tc} is such that
  \begin{enumerate}
    \item\label{itm:sigma-fact-tc-over} For the $\pi$-fibration
    $\ol{\Id}_B(E_0,P,E_1) \twoheadrightarrow P \times_{E_0 \times_B E_0}(E_1
    \times_B E_1)$, the following diagram commutes.
    \begin{equation*}
      \begin{tikzcd}[cramped, row sep=small]
        {E_1} & {\ol{\Id}_B(E_1,P,E_2)} \\
        & {P \times_{E_0 \times_B E_0} (E_1 \times_B E_1)} \\
        {E_0} & P
        \arrow["{\ol{\refl}_B(E_1,P,E_2)}", from=1-1, to=1-2]
        \arrow[two heads, from=1-1, to=3-1]
        \arrow[two heads, from=1-2, to=2-2]
        \arrow[two heads, from=2-2, to=3-2]
        \arrow["r"', from=3-1, to=3-2]
      \end{tikzcd}
    \end{equation*}
    \item\label{itm:sigma-fact-tc-diag} The composite
    \begin{equation*}
      \begin{tikzcd}[cramped, sep=small]
        {E_1} & {\Id_{E_0}(E_1)} & {\ol{\Id}_B(E_0,P,E_1)} & {P \times_{E_0 \times_B E_0} (E_1 \times_B E_1)} & {E_1 \times_B E_1}
        \arrow["\refl"', from=1-1, to=1-2]
        \arrow["{\ol{\refl}_B(E_0,P,E_1)}", curve={height=-12pt}, from=1-1, to=1-3]
        \arrow[from=1-2, to=1-3]
        \arrow[two heads, from=1-3, to=1-4]
        \arrow["{\ev_\partial(E_0,P,E_1)}", curve={height=-12pt}, two heads, from=1-3, to=1-5]
        \arrow[two heads, from=1-4, to=1-5]
      \end{tikzcd}
    \end{equation*}
    is the diagonal.
    \item\label{itm:sigma-fact-tc-lift} There is a lifting structure
    $\ol{\refl}_B(E_0,P,E_1) \fracsquareslash{B} (B \times \pi)$.
  \end{enumerate}
\end{lemma}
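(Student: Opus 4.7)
My plan is to prove the three parts sequentially, with Part~\ref{itm:sigma-fact-tc-lift} being the main obstacle. The key decomposition throughout is $\ol{\refl}_B(E_0,P,E_1) = j \circ \refl$, where $j \colon \Id_{E_0}(E_1) \to \ol{\Id}_B(E_0,P,E_1)$ is the connecting map into the pullback supplied by \Cref{lem:transport-iter-id-pb}.

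For Part~\ref{itm:sigma-fact-tc-over}, I will observe that $j$ lies over $r \colon E_0 \to P$ by virtue of being the top edge of the pullback rectangle in \Cref{lem:transport-iter-id-pb}, while $\refl$ lies over $E_1 \twoheadrightarrow E_0$. Composing, $\ol{\refl}_B(E_0,P,E_1)$ sits over $E_1 \twoheadrightarrow E_0 \xrightarrow{r} P$, which gives exactly the commuting diagram claimed once one notes that $\ol{\Id}_B(E_0,P,E_1) \twoheadrightarrow P$ factors through $P \times_{E_0 \times_B E_0} (E_1 \times_B E_1)$ by construction.

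For Part~\ref{itm:sigma-fact-tc-diag}, I will chase through the pullback of \Cref{lem:transport-iter-id-pb} to identify the composite $\Id_{E_0}(E_1) \to \ol{\Id}_B \to P \times_{E_0 \times_B E_0} (E_1 \times_B E_1)$ as $\Id_{E_0}(E_1) \xrightarrow{\ev_\partial} E_1 \times_{E_0} E_1 \hookrightarrow P \times_{E_0 \times_B E_0} (E_1 \times_B E_1)$, the latter inclusion arising as pullback along the diagonal $E_0 \to E_0 \times_B E_0$. Projecting further to $E_1 \times_B E_1$ leaves $\ev_\partial$ composed with the canonical inclusion $E_1 \times_{E_0} E_1 \hookrightarrow E_1 \times_B E_1$. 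Precomposing with $\refl$ and using the identity $\ev_\partial \circ \refl = \Delta$ in the slice over $E_0$ yields the full diagonal $E_1 \to E_1 \times_B E_1$.

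The hard part is Part~\ref{itm:sigma-fact-tc-lift}, where the strategy is to show each of $\refl$ and $j$ carries a structured left lift against $B \times \pi$ in $\sfrac{\bC}{B}$, then compose. For $\refl \colon E_1 \to \Id_{E_0}(E_1)$, the $\MsJ$-elimination structure on the $\Id$-type of $\pi$ supplies a stable lift of $\refl \colon \tMcU \hookrightarrow \Id_\McU(\tMcU)$ against $\McU \times \pi$; pulling back along $\ceil{E_1}$ and then postcomposing the structural map $E_0 \twoheadrightarrow B$ descends this to a structured lift in $\sfrac{\bC}{B}$. For $j$, \Cref{lem:transport-iter-id-pb} exhibits it as a pullback of $r$ inside $\sfrac{\bC}{B}$, so the assumed structured lift $r \fracsquareslash{B}(B \times \pi)$ transports to $j$ by the pullback-stability inherent in $\fracsquareslash{B}$ structures from \cite{struct-lift}. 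Finally, structured lifts compose in the standard way (first lift against $j$ to reach $\ol{\Id}_B$, then against $\refl$ on the resulting restriction), so $\ol{\refl}_B(E_0,P,E_1)$ inherits the desired lift. The anticipated technical subtlety is verifying that the uniformity and naturality of these lifts under base change behave coherently under both the pullback in the slice and the composition, but this is routine bookkeeping given the framework of \cite{struct-lift}.
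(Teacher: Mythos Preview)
Your proposal is correct and follows essentially the same decomposition as the paper: $\ol{\refl}_B(E_0,P,E_1) = j \circ \refl$, with Parts~\ref{itm:sigma-fact-tc-over} and~\ref{itm:sigma-fact-tc-diag} read off from the iterated pullback diagram of \Cref{lem:transport-iter-id-pb}, and Part~\ref{itm:sigma-fact-tc-lift} obtained by composing a lifting structure on $\refl$ (from $\MsJ$-elimination) with one on $j$.

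One point where you are less precise than the paper: the lifting structure on $j$ does not come from a generic ``pullback-stability inherent in $\fracsquareslash{B}$ structures''---left lifting classes are stable under cobase change, not base change, in general. What makes this instance work is that $j$ is the pullback of $r$ along a $\pi$-\emph{fibration} (the composite $\ol{\Id}_B(E_0,P,E_1) \twoheadrightarrow P$), and the $\Pi$-type structure on $\pi$ is what lets one transpose a lifting problem for $j$ against $B \times \pi$ into one for $r$ (this is the content of \cite[Construction 3.6]{struct-lift} invoked by the paper). Your closing remark about ``routine bookkeeping'' glosses over precisely this ingredient.
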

\begin{proof}
  Summarising \Cref{constr:sigma-fact-tc} with the statement of
  \Cref{lem:transport-iter-id-pb}, we arrive at the following diagram.
  \begin{equation*}
    \begin{tikzcd}[cramped]
      {E_1} & {\Id_{E_0}(E_1)} & {\ol{\Id}_B(E_0,P,E_1)} \\
      & {E_1 \times_{E_0} E_1} & {P \times_{E_0 \times_B E_0} (E_1 \times_B E_1)} & {E_1 \times_B E_1} \\
      & {E_0} & P & {E_0 \times_B E_0}
      \arrow["\refl"', from=1-1, to=1-2]
      \arrow["{\ol{\refl}_B(E_0,P,E_1)}", curve={height=-12pt}, from=1-1, to=1-3]
      \arrow["\Delta"', curve={height=12pt}, from=1-1, to=2-2]
      \arrow[from=1-2, to=1-3]
      \arrow[two heads, from=1-2, to=2-2]
      \arrow[two heads, from=1-3, to=2-3]
      \arrow["{\ev_\partial(E_0,P,E_1)}", two heads, from=1-3, to=2-4]
      \arrow[from=2-2, to=2-3]
      \arrow[two heads, from=2-2, to=3-2]
      \arrow[two heads, from=2-3, to=2-4]
      \arrow[two heads, from=2-3, to=3-3]
      \arrow[two heads, from=2-4, to=3-4]
      \arrow["r", from=3-2, to=3-3]
      \arrow["\Delta"', curve={height=12pt}, from=3-2, to=3-4]
      \arrow["p", two heads, from=3-3, to=3-4]
      \arrow["\lrcorner"{anchor=center, pos=0.15, scale=1.5}, draw=none, from=1-2, to=2-3]
      \arrow["\lrcorner"{anchor=center, pos=0.15, scale=1.5}, draw=none, from=2-2, to=3-3]
      \arrow["\lrcorner"{anchor=center, pos=0.15, scale=1.5}, draw=none, from=2-3, to=3-4]
    \end{tikzcd}
  \end{equation*}
  Chasing the diagram proves \Cref{itm:sigma-fact-tc-over}.

  For \Cref{itm:sigma-fact-tc-diag}, the composite as below is the diagonal
  \begin{equation*}
    \begin{tikzcd}[cramped, sep=small]
      {E_1} & {\Id_{E_0}(E_1)} & {\ol{\Id}_B(E_0,P,E_1)} & {P \times_{E_0 \times_B E_0} (E_1 \times_B E_1)} & {E_1 \times_B E_1}
      \arrow["\refl"', from=1-1, to=1-2]
      \arrow["{\ol{\refl}_B(E_0,P,E_1)}", curve={height=-12pt}, from=1-1, to=1-3]
      \arrow[from=1-2, to=1-3]
      \arrow[two heads, from=1-3, to=1-4]
      \arrow["{\ev_\partial(E_0,P,E_1)}", curve={height=-12pt}, two heads, from=1-3, to=1-5]
      \arrow[two heads, from=1-4, to=1-5]
    \end{tikzcd}
  \end{equation*}
  because, referring to the diagram above, the pullback of the diagonal
  $\Delta \colon E_0 \to E_0 \times_B E_0$ along
  $E_1 \times_B E_1 \twoheadrightarrow E_0 \times_B E_0$ is the connecting map
  $E_1 \times_{E_0} E_1 \to E_1 \times_B E_1$ whose both components are the
  identity map.

  Finally, for \Cref{itm:sigma-fact-tc-lift}, to see that one has a lifting
  structure $\ol{\refl}_B(E_0,P,E_1) \fracsquareslash{B} (B \times \pi)$, it
  suffices to note that lifting structures are preserved by composition by
  \cite[Construction 2.5]{struct-lift}, and $E_1 \to \Id_{E_0}(E_1)$ has a
  lifting structure by definition of the $\Id$-type structure.
  Therefore, it suffices to see that $\Id_{E_0}(E_1) \to \ol{\Id}_B(E_0,P,E_1)$
  has a lifting structure against $B \times \pi$.
  But it is the pullback of $r$ under a $\pi$-fibration, so by the $\Pi$-type
  structure and \cite[Construction 3.6]{struct-lift}, it also has a lifting
  structure against $B \times \pi$.
\end{proof}

\subsubsection{Id-type for Reedy Fibrations}
We now switch our attention back to constructing the $\Id$-type for Reedy
fibrations.
The idea is to use the $\Id$-type in the 0-component and
\Cref{constr:sigma-fact-fib,constr:sigma-fact-tc} for the 1-component.
We already know that $\refl_0$ has a left lifting structure and from
\Cref{lem:sigma-fact-tc}\Cref{itm:sigma-fact-tc-lift} that so does $\ol{\refl}$
of \Cref{constr:sigma-fact-tc}.
We also recall as follows that left lifting structures in the gluing category
are assembled pointwise.

\begin{lemma}\label{lem:gl-tc-ptwise-tc}
  For each $B \in \Gl(M)$ and map $s \colon Y \to E$ in the Reedy fibrant slice
  over $B$, taking $i$ to be either $i=0$ or $i=1$, one has a map
  \begin{equation*}
    (s \fracsquareslash{B} (B \times \tau)) \xrightarrow{\quad} (s_i \fracsquareslash{B_i} (B_i \times \pi_i))
  \end{equation*}
  subject to the commutativity condition that for each $f \colon B' \to B \in \Gl(M)$, one has
  \begin{equation*}
    \begin{tikzcd}[cramped, row sep=small]
      (s \fracsquareslash{B} (B \times \tau))
      \ar[r, "{}"]
      \ar[d, "{\text{\cite[Construction 3.2]{struct-lift}}}"']
      &
      (s_i \fracsquareslash{B_i} (B_i \times \pi_i))
      \ar[d, "{\text{\cite[Construction 3.2]{struct-lift}}}"]
      \\
      (f^*s \fracsquareslash{B'} (B' \times \tau))
      \ar[r, "{}"]
      &
      (f_i^*s_i \fracsquareslash{B_i'} (B_i' \times \pi_i))
    \end{tikzcd}
  \end{equation*}
\end{lemma}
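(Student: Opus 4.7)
The plan is to construct the map explicitly by converting a lifting problem in $\bC_i$ into a lifting problem in $\Gl(M)$, applying the given lifting structure $\phi \in s \fracsquareslash{B} (B \times \tau)$, and then taking the $i$-th component of the resulting filler. Using representability as in the proof of \Cref{lem:retract-lift}, both sides are equivalently described as coherent families of extensions: the target assigns, to each $g \colon C \to B_i$, $\pi_i$-fibration $Z \twoheadrightarrow g^*E_i$, and section $x \colon g^*Y_i \hookrightarrow (g^*s_i)^*Z$, a section of $Z \twoheadrightarrow g^*E_i$; the source does the analogous thing in $\Gl(M)$.

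For $i=0$, given such a 0-component problem $(g, Z, x)$, we build an object $B' \in \Gl(M)$ with $B'_0 = C$ and $B'_1 = C \times_{MB_0} B_1$ (via $Mg$), together with the natural map $f \colon B' \to B$ whose 0-component is $g$. We then build a Reedy fibration $\tilde{Z} \to f^*E$ over $B'$ by setting $\tilde{Z}_0 = Z$ and $\tilde{Z}_1 = \WhM_{f^*E}\tilde{Z} = (f^*E)_1 \times_{ME_0} MZ$, with relative matching map the identity (thus trivially a $\pi_1$-fibration); by \Cref{prop:urf-up} this is a $\tau$-fibration. The section $x$ is extended to a $\Gl(M)$-section $f^*Y \to (f^*s)^*\tilde{Z}$ by using $x$ on the 0-component and the canonical map through the matching object on the 1-component. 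Applying $\phi$ and projecting the 0-component of the resulting extension yields the required section of $Z$.

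For $i=1$, given a 1-component problem $(h \colon C \to B_1, Z \twoheadrightarrow h^*E_1, x)$, we instead build $B' \in \Gl(M)$ with $B'_0 = B_0$ and $B'_1 = C$, with structure map $C \xrightarrow{h} B_1 \to MB_0$, and take $f \colon B' \to B$ with 0-component $\id_{B_0}$ and 1-component $h$. We construct $\tilde{Z} \to f^*E$ by taking $\tilde{Z}_0 = E_0$ (with identity; this is a $\pi_0$-fibration using the $\Unit_0$-type structure from \Cref{asm:mrq}\Cref{itm:mrq-pi}) and $\tilde{Z}_1 = Z$, so that the relative matching map recovers the given $\pi_1$-fibration $Z \twoheadrightarrow h^*E_1$; hence $\tilde{Z}$ is Reedy fibrant, thus $\tau$-fibrant. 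The section $x$ lifts to $f^*Y \to (f^*s)^*\tilde{Z}$ by taking identity on the 0-component and $x$ on the 1-component, which agrees with the relevant pullback since pullbacks in $\Gl(M)$ are computed componentwise. Applying $\phi$ and projecting the 1-component yields the required extension.

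Uniformity in the input and naturality under further pullback $g \colon D \to C$ follow from the naturality of all the constructions above and the uniformity of $\phi$; in particular, the assignment of $B'$, $\tilde{Z}$, and the lifted section is functorial in the pullback data. The commutativity of the stated pullback square likewise follows since pullback in $\Gl(M)$ is componentwise and all our constructions are stable under such pullbacks. The main subtlety to verify is that the lifted $\Gl(M)$-section, when restricted back along the $i$-th component, recovers the original $\bC_i$-section; this is immediate in each case from the componentwise nature of the construction and from the way matching objects interact with the relevant pullbacks.
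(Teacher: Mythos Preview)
Your approach is correct and follows the same underlying idea as the paper's proof, though you work in the representability formulation (sections of fibrations, as in the proof of \Cref{lem:retract-lift}) while the paper stays with the direct lifting-problem formulation. For $i=1$ both you and the paper build $B'$ with $B'_0=B_0$, $B'_1=C$, $f_0=\id$, and then use the $\Unit_0$-type to make the identity on the $0$-component a fibration; the paper phrases this as lifting against the Reedy fibration $(\tMcU_1\to M1)\to(\McU_1\to M1)$ (using that $M$ preserves the terminal object since it is lex), which is exactly your $\tilde Z$ construction re-expressed as a classifying map. For $i=0$ the paper is very terse, merely noting that $\id_{M\tMcU_0}\to\id_{M\McU_0}$ is a Reedy fibration by the $\Unit$-type; your explicit construction of $B'$ and $\tilde Z$ is precisely what unpacking that remark requires.

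Two small notational slips to fix: in your $i=0$ case, $B'_1$ should be $MC\times_{MB_0}B_1$ (you need an object of $\bC_1$, so the pullback is along $Mg\colon MC\to MB_0$, and the structure map $B'_1\to MB'_0=MC$ is the projection), and $\tilde Z_1=\WhM_{f^*E}\tilde Z=(f^*E)_1\times_{M(g^*E_0)}MZ$, not $\times_{ME_0}$. With these corrections the argument goes through.
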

\begin{proof}
  The 0-component is straightforward by noting that the map
  $\id_{M\tMcU_0} \to \id_{M\McU_0} \in \Gl(M)$ is a Reedy fibration by the
  $\Unit$-type structure.

  For the 1-component, we must take
  $\ell \in (s \fracsquareslash{B} (B \times \tau))$ and construct a
  corresponding $\ell_1 \in (s_1 \fracsquareslash{B_1} (B_1 \times \pi_1))$.
  To do so, suppose one has a map $f_1 \colon B_1' \to B_1$ and a lifting
  problem of $f_1^*Y_1 \to f_1^*E_1$ against $\tMcU_1 \to \McU_1$ in $\bC_1$, as
  in the back face below.
  Because $M$ is a right adjoint, it preserves the terminal object.
  Thus, the right face of the cube below is a Reedy fibration as identities are
  $\pi_1$-fibrations by the $\Unit$-type structure.
  \begin{equation*}
    \begin{tikzcd}[cramped, sep=small]
      {f_0^*Y_1} && {\tMcU_1} \\
      & {M(f_0^*Y_0)} && M1 \\
      {f_1^*E_1} && {\McU_1} \\
      & {M(f_0^*E_0)} && M1
      \arrow[from=1-1, to=1-3]
      \arrow[from=1-1, to=2-2]
      \arrow[from=1-1, to=3-1]
      \arrow["{!}", from=1-3, to=2-4]
      \arrow[from=1-3, to=3-3]
      \arrow["{!}"{description, pos=0.3}, from=2-2, to=2-4]
      \arrow[from=2-2, to=4-2]
      \arrow[from=2-4, to=4-4]
      \arrow["{\ell_1}"{description, pos=0.3}, dashed, from=3-1, to=1-3]
      \arrow[from=3-1, to=3-3]
      \arrow[from=3-1, to=4-2]
      \arrow["{!}", from=3-3, to=4-4]
      \arrow["{M\ell_0}"{description, pos=0.3}, dashed, from=4-2, to=2-4]
      \arrow["{!}"', from=4-2, to=4-4]
    \end{tikzcd}
  \end{equation*}
  Using $\ell$, one can solve the above lifting problem in $\Gl(M)$, whose
  1-component is a solution to the original lifting problem of $f_1^*s_1$
  against $\pi_1$.

  The required commutativity conditions are straightforwardly inherited.
\end{proof}

We also have the converse result of the above, which is that pointwise left maps
are left maps to Reedy fibrations.

\begin{lemma}\label{lem:ptwise-tc-gl-tc}
  For each $B \in \Gl(M)$ and map $s \colon Y \to E$ in the Reedy fibrant slice
  over $B$, one has a map
  \begin{equation*}
    (s_0 \fracsquareslash{B_0} (B_0 \times \pi_0))
    \times
    (s_1 \fracsquareslash{B_1} (B_1 \times \pi_1))
    \xrightarrow{\quad}
    (s \fracsquareslash{B} (B \times \tau))
  \end{equation*}
  subject to the commutativity condition that for each $f \colon B' \to B \in \Gl(M)$, one has
  \begin{equation*}
    \begin{tikzcd}[cramped, row sep=small]
      (s_0 \fracsquareslash{B_0} (B_0 \times \pi_0))
      \times
      (s_1 \fracsquareslash{B_1} (B_1 \times \pi_1))
      \ar[r, "{}"]
      \ar[d]
      &
      (s \fracsquareslash{B} (B \times \tau))
      \ar[d]
      \\
      (f_0^*s_0 \fracsquareslash{B_0'} (B_0' \times \pi_0))
      \times
      (f_1^*s_1 \fracsquareslash{B_1'} (B_1' \times \pi_1))
      \ar[r, "{}"]
      &
      (f^*s \fracsquareslash{B'} (B' \times \tau))
    \end{tikzcd}
  \end{equation*}
\end{lemma}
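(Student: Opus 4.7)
The plan is to assemble a $\tau$-lift of $s$ component-by-component, first by solving the 0-component lifting problem outright and then by using the resulting 0-component extension to set up a $\pi_1$-lifting problem against the relative matching map. By \Cref{prop:urf-up}, a lifting problem of $f^*s$ against $B'\times\tau$ in $\sfrac{\Gl(M)}{B'}$ amounts to a Reedy fibration $p\colon X\twoheadrightarrow f^*E$ together with a section $\sigma\colon f^*Y\to (f^*s)^*X$ of the pullback $(f^*s)^*p$, and asks for a section $\bar\sigma\colon f^*E\to X$ of $p$ extending $\sigma$ along $f^*s$. I would decompose both $\sigma$ and the sought-after $\bar\sigma$ into their 0- and 1-components.

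First, I would feed the 0-component $\sigma_0$ and the $\pi_0$-fibration $p_0\colon X_0\twoheadrightarrow f_0^*E_0$ into the first factor of the lifting structure to obtain $\bar\sigma_0\colon f_0^*E_0\to X_0$ with $p_0\bar\sigma_0=\id$ and $\bar\sigma_0\cdot f_0^*s_0=\sigma_0^X$, where $\sigma_0^X$ denotes the composite $f_0^*Y_0\to (f^*s)^*X_0\to X_0$. Applying $M$ and composing with the absolute matching map $f^*E\colon f_1^*E_1\to M(f_0^*E_0)$ then yields a map $(\id_{f_1^*E_1},\,M\bar\sigma_0\cdot f^*E)\colon f_1^*E_1\to \WhM_{f^*E}(X)$ into the relative matching object of $p$. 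I regard this as the lower edge of a lifting problem against $\Whm_{f^*E}(X)\colon X_1\twoheadrightarrow\WhM_{f^*E}(X)$, whose upper edge is the composite $\sigma_1^X\colon f_1^*Y_1\to(f^*s)^*X_1\to X_1$ with left edge $f_1^*s_1$.

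The key verification is that this square commutes. Its first component in $\WhM_{f^*E}(X)$ is the equation $p_1\sigma_1^X=f_1^*s_1$, immediate from $\sigma$ being a pullback section. Its second component is the equation $X\cdot \sigma_1^X=M\bar\sigma_0\cdot f^*E\cdot f_1^*s_1$, which unfolds: the left side equals $M\sigma_0^X\cdot f^*Y$ because $\sigma^X\colon f^*Y\to X$ is a map in $\Gl(M)$, and the right side rewrites via $f^*E\cdot f_1^*s_1=M(f_0^*s_0)\cdot f^*Y$ (the gluing condition on $f^*s$) and the already-established identity $\bar\sigma_0\cdot f_0^*s_0=\sigma_0^X$. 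With the square commuting, the second factor of the hypothesised lifting structure supplies $\bar\sigma_1\colon f_1^*E_1\to X_1$; its lower triangle simultaneously yields $p_1\bar\sigma_1=\id$ and $X\cdot\bar\sigma_1=M\bar\sigma_0\cdot f^*E$, the latter being precisely the compatibility making $(\bar\sigma_0,\bar\sigma_1)$ a map in $\Gl(M)$, and its upper triangle delivers $\bar\sigma\cdot f^*s=\sigma^X$.

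Naturality under a further pullback $g\colon B''\to B'$ is inherited verbatim from the naturality conditions assumed on the two pointwise lifting structures, together with the fact that $M$ preserves the finite limits used in constructing the relative matching object and that pushforwards along identities are trivial. I expect the only real bookkeeping obstacle to be tracking the canonical isomorphisms identifying $(fg)^*$ with $g^*f^*$ across the 0- and 1-components simultaneously, but this is routine given that pullbacks in $\Gl(M)$ are computed pointwise and $M$ preserves them on representable data by \Cref{asm:mrq}\Cref{itm:mrq-fib}.
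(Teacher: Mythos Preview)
Your proposal is correct and follows exactly the approach the paper sketches: solve the 0-component first, then use that solution together with Reedy fibrancy of $p$ to set up and solve a $\pi_1$-lifting problem against the relative matching map. The paper's own proof is a one-line summary of precisely this strategy, so your write-up simply fills in the details (the commutativity check for the intermediate square and the naturality bookkeeping) that the paper leaves implicit.
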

\begin{proof}
  Given pointwise lifts, one produces a lift in the gluing category by first
  solving the lifting problem in the 0-component and then lifting the
  1-component against the fibrant relative matching map.
\end{proof}

We are now ready to show that Reedy fibrations also admit an $\Id$-type structure.
\begin{proposition}\label{prop:reedy-fib-id}
  For each (internal) Reedy fibration $E \to B \in \Gl(M)$, the diagonal
  $E \to E \times_B E \in \sfrac{\Gl(M)}{B}$ admits a factorisation
  \begin{equation*}
    E \xrightarrow{\refl} \Id^{\Gl(M)}_B(E) \xrightarrow{\ev_\partial} E \times_B E \in \sfrac{\Gl(M)}{B}
  \end{equation*}
  where
  \begin{enumerate}
    \item\label{itm:reedy-fib-id-fib} The map $\ev_\partial$ is a (internal) Reedy fibration
    \item\label{itm:reedy-fib-id-res} Pulling back along the map
    $\WhM_B(\refl) \colon \WhM_B(E) \to \WhM_B(\Id_BE)$ between the relative
    matching objects sends the relative matching map
    $(\Id_BE)_1 \twoheadrightarrow \WhM_{E \times_B E}(\Id_BE) \in
    \sfrac{\bC_1}{\WhM_B(\Id_BE)}$ to the boundary evaluation
    $\Id_{\WhM_B(E)}(E_1) \twoheadrightarrow E_1 \times_{\WhM_BE}E_1 \in
    \sfrac{\bC}{\WhM_B(E)}$.
    In other words, one has the following iterated pullbacks.
    \begin{equation*}
      \begin{tikzcd}[cramped]
        {\Id_{\WhM_B(E)}(E_1)} & {(\Id_BE)_1} \\
        {E_1 \times_{\WhM_BE}E_1} & {\WhM_{E \times_B E}(\Id_BE)} \\
        {\WhM_B(E)} & {\WhM_B(\Id_BE)}
        \arrow[from=1-1, to=1-2]
        \arrow["{\ev_\partial}"', two heads, from=1-1, to=2-1]
        \arrow["{\Whm_{E \times_B E}(\Id_BE)}", two heads, from=1-2, to=2-2]
        \arrow[from=2-1, to=2-2]
        \arrow[two heads, from=2-1, to=3-1]
        \arrow["\lrcorner"{anchor=center, pos=0.15, scale=1.5}, draw=none, from=1-1, to=2-2]
        \arrow["\lrcorner"{anchor=center, pos=0.15, scale=1.5}, draw=none, from=2-1, to=3-2]
        \arrow[two heads, from=2-2, to=3-2]
        \arrow["{\WhM_B(\refl)}"', from=3-1, to=3-2]
      \end{tikzcd}
    \end{equation*}
    \item\label{itm:reedy-fib-id-lift} There is a choice of a lifting structure $\refl \fracsquareslash{B} (B \times \tau)$
  \end{enumerate}
\end{proposition}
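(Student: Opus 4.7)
The plan is to build $\Id^{\Gl(M)}_B(E)$ componentwise: use the $\Id$-type structure on $\pi_0$ for the $0$-component, and apply the transport-based factorisation of \Cref{constr:sigma-fact-fib,constr:sigma-fact-tc} for the $1$-component. Set $(\Id_B E)_0 \coloneqq \Id_{B_0}(E_0)$. The key observation for the $1$-component is that $E_1 \twoheadrightarrow \WhM_B(E) \twoheadrightarrow B_1$ is a composable pair of $\pi_1$-fibrations, the second being a pullback of the $\pi_1$-fibration $ME_0 \twoheadrightarrow MB_0$ (via \Cref{asm:mrq}\Cref{itm:mrq-fib}) along the structure map $B\colon B_1 \to MB_0$.

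Applying $M$ to the $\Id$-type factorisation $E_0 \xrightarrow{\refl_0} \Id_{B_0}(E_0) \xrightarrow{\ev_\partial} E_0 \times_{B_0} E_0$ and pulling back along $B$, lex-ness of $M$ produces a factorisation $\WhM_B(E) \xrightarrow{r} \WhM_B(\Id_B E) \xrightarrow{p} \WhM_B(E) \times_{B_1} \WhM_B(E)$ of the diagonal in $\sfrac{\bC_1}{B_1}$, where $\WhM_B(\Id_B E) \coloneqq B_1 \times_{MB_0} M\Id_{B_0}(E_0)$ will serve as the relative matching object. The map $p$ is a $\pi_1$-fibration because $M$ preserves fibrations, and $r$ carries a lifting structure against $B_1 \times \pi_1$ obtained by feeding the $\MsJ$-lifting structure on $\refl_0$ through \Cref{asm:mrq}\Cref{itm:mrq-tc} and then pulling back via \cite[Construction 3.2]{struct-lift}. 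This is exactly the input of \Cref{constr:sigma-fact-fib,constr:sigma-fact-tc}, whose outputs I take to be $(\Id_B E)_1$ and the $1$-component of $\refl$ respectively. The absolute matching map of $\Id^{\Gl(M)}_B(E)$ is then the composite $(\Id_B E)_1 \twoheadrightarrow \WhM_B(\Id_B E) \to M\Id_{B_0}(E_0)$, so that $\Id^{\Gl(M)}_B(E)$ is a well-defined object of $\Gl(M)$ over $B$.

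The three properties are then verified as follows. For \Cref{itm:reedy-fib-id-fib}, the $0$-component $\ev_\partial$ is $\pi_0$-fibrant by definition, and the $1$-component relative matching map is $\pi_1$-fibrant by the construction in \Cref{constr:sigma-fact-fib}, once one identifies its codomain $\WhM_B(\Id_B E) \times_{\WhM_B(E) \times_{B_1} \WhM_B(E)} (E_1 \times_{B_1} E_1)$ with $\WhM_{E \times_B E}(\Id_B E)$; this is a diagram chase using $M(E_0 \times_{B_0} E_0) \cong ME_0 \times_{MB_0} ME_0$. Property \Cref{itm:reedy-fib-id-res} is then a direct application of \Cref{lem:transport-iter-id-pb} to this input data. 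For \Cref{itm:reedy-fib-id-lift}, I assemble componentwise via \Cref{lem:ptwise-tc-gl-tc}: the $0$-component lifting structure is $\MsJ$-elimination on $\pi_0$, and the $1$-component lifting structure is provided by \Cref{lem:sigma-fact-tc}\Cref{itm:sigma-fact-tc-lift}. The internal Reedy case runs identically with $\pi_i^\ds$ replacing $\pi_i$.

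The main technical obstacle will be the identification in \Cref{itm:reedy-fib-id-fib} of the pullback output of \Cref{constr:sigma-fact-fib} with $\WhM_{E \times_B E}(\Id_B E)$, together with the dual verification that the map $r$ of this paragraph really agrees with $\WhM_B(\refl)$ and that the $\ol{\refl}$ produced by \Cref{constr:sigma-fact-tc} glues with $\refl_0$ into a genuine map in $\Gl(M)$. Both reduce to routine bookkeeping with the explicit formulas for $\WhM_B$ as pullbacks of $M$-images and repeated use of lex-ness of $M$.
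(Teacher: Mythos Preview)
Your proposal is correct and follows essentially the same approach as the paper: both use the $\Id$-type on $\pi_0$ for the $0$-component, feed the pulled-back factorisation $\WhM_B(E) \xrightarrow{\WhM_B(\refl)} \WhM_B(\Id_BE) \xrightarrow{\WhM_B(\ev_\partial)} \WhM_B(E)\times_{B_1}\WhM_B(E)$ into \Cref{constr:sigma-fact-fib,constr:sigma-fact-tc} for the $1$-component, and then read off the three claims from \Cref{lem:transport-iter-id-pb} and \Cref{lem:sigma-fact-tc} together with \Cref{lem:ptwise-tc-gl-tc}. The bookkeeping you flag (identifying $\WhM_B(\Id_BE)\times_{\WhM_B(E)\times_{B_1}\WhM_B(E)}(E_1\times_{B_1}E_1)$ with $\WhM_{E\times_BE}(\Id_BE)$, and checking via \Cref{lem:sigma-fact-tc}\Cref{itm:sigma-fact-tc-over,itm:sigma-fact-tc-diag} that $\ol{\refl}$ lies over $\WhM_B(\refl)$ and gives the diagonal) is exactly what the paper spells out.
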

\begin{proof}
  As before, the internal Reedy fibration case and ambient Reedy fibration case
  are identical, so we only show the proof for the ambient Reedy fibration case.

  We begin by constructing the required factorisation.
  The 0-component is given by the $\Id$-type structure on $\bC_0$.
  For the 1-component, we first note that by \Cref{asm:mrq} \Cref{itm:mrq-tc},
  one has a lifting structure
  $M(\refl_0) \fracsquareslash{MB_0} (MB_0 \times \pi_1)$, which gives rise to a
  lifting structure $\WhM_B(\refl_0) \fracsquareslash{B_1} (B_1 \times \pi_1)$ by
  rebasing.
  Hence, one has a factorisation of the diagonal
  $\WhM_B(E) \to \WhM_B(E) \times_{B_1} \WhM_B(E)$ as
  \begin{equation*}
    \WhM_B(E) \xrightarrow{\WhM_B(\refl)} \WhM_B(\Id_BE) \xrightarrow{\WhM_B(\ev_\partial)}
    \WhM_B(E) \times_{B_1} \WhM_B(E)
  \end{equation*}
  satisfying the conditions to apply \Cref{constr:sigma-fact-fib}.
  Doing so then produces a fibrant object
  $\ol{\Id}_{B_1}(E_0,\WhM_B(\ev_\partial),E_1)$ over
  $\WhM_B(\Id_BE) \times_{\WhM_B(E_0) \times_{B_1} \WhM_B(E_0)} (E_1 \times_{B_1} E_1)$.
  But recognising that $E_1 \times_{B_1} E_1 \to ME_0 \times_{MB_0} ME_0$
  factors as
  \begin{equation*}
    E_1 \times_{B_1} E_1 \xrightarrow{\quad} \WhM_BE \times_{B_1} \WhM_BE \xrightarrow{\quad} ME_0 \times_{MB_0} ME_0
  \end{equation*}
  it then follows that
  $\WhM_B(\Id_BE) \times_{\WhM_B(E_0) \times_{B_1} \WhM_B(E_0)} (E_1 \times_{B_1}
  E_1) \cong \WhM_{E \times_B E}(\Id_BE)$ is the relative matching object for
  $\Id_B(E)$ as a fibrant object over $E \times_B E$.
  Hence, we may take the 1-component as
  \begin{equation*}
    (\Id_BE)_1 \coloneqq \ol{\Id}_{B_1}(E_0,\WhM_B(\ev_\partial),E_1)
  \end{equation*}
  In summary, we have the following iterated grid of pullbacks.
  \begin{equation*}
    \begin{tikzcd}[cramped, column sep=small]
      {(\Id_BE)_1 \coloneqq \ol{\Id}_{B_1}(E_0,\WhM_B(\ev_\partial),E_1)} & {\WhM_{E \times_B E}(\Id_BE)} & {E_1 \times_{B_1} E_1} \\
      & {\WhM_B(\Id_BE)} & {\WhM_B(E) \times_{B_1} \WhM_B(E)} & {B_1} \\
      & {M(\Id_{B_0}(E_0))} & {ME_0 \times_{MB_0} ME_0} & {MB_0}
      \arrow[two heads, from=1-1, to=1-2]
      \arrow[two heads, from=1-2, to=1-3]
      \arrow[two heads, from=1-2, to=2-2]
      \arrow[two heads, from=1-3, to=2-3]
      \arrow["{\WhM_B(\ev_\partial)}", two heads, from=2-2, to=2-3]
      \arrow[from=2-2, to=3-2]
      \arrow[two heads, from=2-3, to=2-4]
      \arrow[from=2-3, to=3-3]
      \arrow["\lrcorner"{anchor=center, pos=0.15, scale=1.5}, draw=none, from=1-2, to=2-3]
      \arrow["\lrcorner"{anchor=center, pos=0.15, scale=1.5}, draw=none, from=2-2, to=3-3]
      \arrow["\lrcorner"{anchor=center, pos=0.15, scale=1.5}, draw=none, from=2-3, to=3-4]
      \arrow[from=2-4, to=3-4]
      \arrow["{M(\ev_\partial)}"', two heads, from=3-2, to=3-3]
      \arrow[two heads, from=3-3, to=3-4]
    \end{tikzcd}
  \end{equation*}
  Then, by construction, $\Id_B^{\Gl(M)}(E) \twoheadrightarrow E \times_B E$ is
  Reedy fibrant, proving \Cref{itm:reedy-fib-id-fib}.
  Using \Cref{lem:transport-iter-id-pb} also immediately gives
  \Cref{itm:reedy-fib-id-res}.

  For \Cref{itm:reedy-fib-id-lift}, one applies \Cref{constr:sigma-fact-tc} to
  obtain the 1-component of $\refl$ as the composite
  \begin{equation*}
    \refl_1 \coloneqq \ol{\refl}_B(E_0,\WhM_B(\ev_\partial),E_1)
    = \left(
      E_1 \xrightarrow{\refl} \Id_{\WhM_B(E)}(E_1) \xrightarrow{\quad} (\Id_BE)_1
    \right)
  \end{equation*}
  so that \Cref{lem:sigma-fact-tc}\Cref{itm:sigma-fact-tc-over} shows $\refl_1$
  is a map over $\WhM_B(\refl)$.
  As a result, one has the following commutative squares of $\bC_1$ forming a
  map $E \to \Id_B(E)$ in $\Gl(M)$.
  \begin{equation*}
    \begin{tikzcd}[cramped, sep=small]
      {E_1} & {\WhM_B(E)} & {M(E_0)} \\
      {(\Id_BE)_1} & {\WhM_B(\Id_BE)} & {M(\Id_{B_0}(E_0))} \\
      & {B_1} & {MB_0}
      \arrow[from=1-1, to=1-2]
      \arrow["{\refl_1}"', from=1-1, to=2-1]
      \arrow[from=1-2, to=1-3]
      \arrow["{\WhM_B(\refl)}"', from=1-2, to=2-2]
      \arrow["{M(\refl_0)}", from=1-3, to=2-3]
      \arrow[from=2-1, to=2-2]
      \arrow[from=2-2, to=2-3]
      \arrow[from=2-2, to=3-2]
      \arrow["\lrcorner"{anchor=center, pos=0.15, scale=1.5}, draw=none, from=1-2, to=2-3]
      \arrow["\lrcorner"{anchor=center, pos=0.15, scale=1.5}, draw=none, from=2-2, to=3-3]
      \arrow[from=2-3, to=3-3]
      \arrow[from=3-2, to=3-3]
    \end{tikzcd}
  \end{equation*}
  \Cref{lem:sigma-fact-tc}\Cref{itm:sigma-fact-tc-lift} also equips $\refl_1$
  with a lifting structure $\refl_1 \fracsquareslash{B_1} (B_1 \times \pi_1)$.
  Therefore, the required lifting structure
  $\refl \fracsquareslash{B} (B \times \tau)$ of \Cref{itm:reedy-fib-id-lift} is
  provided by assembling the lifting structures of the 0- and 1-components using
  \Cref{lem:ptwise-tc-gl-tc}.

  Finally, we note that this provides a factorisation of the diagonal by
  \Cref{lem:sigma-fact-tc}\Cref{itm:sigma-fact-tc-diag}.
\end{proof}

\begin{theorem}\label{thm:glue-universe-id}
  Both $\tau \colon \tMcV \to \McV \in \Gl(M)$ and
  $\tau^\ds \colon \tMcV^\ds \to \McV^\ds \in \Gl(M)$ admits an $\Id$-type
  structure preserved by the 0-component projection $\Gl(M) \to \bC_0$.
\end{theorem}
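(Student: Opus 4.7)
The plan is to apply \Cref{prop:reedy-fib-id} directly to $\tau$ and $\tau^\ds$, each viewed as an (internal) Reedy fibration over its own codomain. That proposition already supplies essentially all the data demanded by \Cref{def:Id-type}, so the only remaining work is to repackage the factorization it produces through the universe structures of \Cref{prop:urf-up} and to recognize the provided lifting structure as $\MsJ$-elimination.

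First, I would apply \Cref{prop:reedy-fib-id} to $\tau$ to obtain a factorization $\tMcV \xrightarrow{\refl} \Id^{\Gl(M)}_\McV(\tMcV) \xrightarrow{\ev_\partial} \tMcV \times_\McV \tMcV$ with $\ev_\partial$ Reedy fibrant, hence a $\tau$-fibration by \Cref{prop:urf-up}. The universe structure on $\tau$ then classifies $\ev_\partial$ by a $\tau$-name $\Id \colon \tMcV \times_\McV \tMcV \to \McV$ and a $\tau$-point $\widetilde{\Id}$; setting $\ceil{\refl} \coloneqq \widetilde{\Id} \cdot \refl$ completes the pre-$\Id$-type structure of \Cref{def:pre-id-type}. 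The $\MsJ$-elimination structure of \Cref{def:J-elim} is then read off directly from \Cref{prop:reedy-fib-id}\Cref{itm:reedy-fib-id-lift}, since (under the slice-lifting reformulation noted immediately after \Cref{def:J-elim}) the lifting structure $\refl \fracsquareslash{\McV} (\McV \times \tau)$ is exactly the required $\MsJ$-data. The identical argument applied to $\tau^\ds$ handles the internal universe case.

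For preservation under the $0$-component projection $\Gl(M) \to \bC_0$, the key observation is that in the proof of \Cref{prop:reedy-fib-id} every $0$-component of the constructed data is drawn unchanged from the chosen $\Id$-type structure on $\pi_0$: the $0$-component of $\Id^{\Gl(M)}_\McV(\tMcV)$ is $\Id_{\McU_0}(\tMcU_0)$, the $0$-component of $\refl$ is the chosen reflexivity of $\pi_0$, and via \Cref{lem:gl-tc-ptwise-tc} the $0$-component of the lifting structure reduces to the chosen $\MsJ$-structure on $\pi_0$. Combined with the fact that the universe structure on $\tau$ is itself built componentwise in \Cref{constr:urf}, the extracted pair $(\Id, \widetilde{\Id})$ for $\tau$ projects on the $0$-component to the corresponding pair for $\pi_0$, and likewise for $\tau^\ds$.

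The main (but still mild) obstacle will simply be bookkeeping at the interface between the output of \Cref{prop:reedy-fib-id} and the precise shape demanded by \Cref{def:Id-type}, in particular converting between a factorization of the diagonal in $\sfrac{\Gl(M)}{B}$ and a lifting structure against $\McV \times \tau \to \McV \times \McV$ in $\sfrac{\Gl(M)}{\McV}$; no genuinely new homotopical input is required beyond what is already packaged in the preceding propositions.
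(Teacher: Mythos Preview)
Your proposal is correct and takes essentially the same approach as the paper, whose proof is a single line: ``Immediate by applying \Cref{prop:reedy-fib-id} on $\tau$ and $\tau^\ds$ respectively.'' You have simply unpacked what that one line entails, including the repackaging of the factorization into a pre-$\Id$-type structure via \Cref{prop:urf-up} and the identification of the lifting structure from \Cref{prop:reedy-fib-id}\Cref{itm:reedy-fib-id-lift} with the required $\MsJ$-data; the paper leaves all of this implicit.
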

\begin{proof}
  Immediate by applying \Cref{prop:reedy-fib-id} on $\tau$ and $\tau^\ds$
  respectively.
\end{proof}


\subsection{Pointed Functional Extensionality}\label{subsec:glue-universe-funext}
The goal is now to show that if $\bC_i$ has pointed functional extensionality
then so does $\Gl(M)$.
The idea is to use \Cref{lem:ptd-funext-sdf} so that by
\Cref{lem:gl-tc-ptwise-tc,lem:ptwise-tc-gl-tc}, we can reduce the problem to
checking that the 0- and 1-components of the image of a left map under the
pushforward in the gluing category again are respectively left maps.
Due to how the 1-component of the pushforward in the gluing category behaves as
observed in \cite{fkl24}, we require the following cube lemma from
\cite{shu15}.
\begin{lemma}[{\cite[Lemma 11.7]{shu15}}]\label{lem:trv-cof-cube}
  Let $\pi \colon \tMcU \to \McU$ be a universal map in a finitely complete
  category $\bC$ with a $\Pi$-structure $\Pi \colon \bP_\pi(\McU) \to \McU$ and
  an $\Id$-type structure $\Id \colon \tMcU \times_\McU \tMcU \to \McU$.
  Also suppose that $\pi$ has a $\Sigma$-type structure.

  Suppose that one has the following diagram between $\pi$-fibrant objects over
  some $B \in \bC$ where the vertical maps on the right face are
  $\pi$-fibrations and the back and front faces are pullbacks.
  \begin{equation*}
    \begin{tikzcd}[cramped, sep=small]
      {X_4} && {X_3} \\
      & {Y_4} && {Y_3} \\
      {X_2} && {X_1} \\
      & {Y_2} && {Y_1}
      \arrow[from=1-1, to=1-3]
      \arrow["{u_4}"{}, from=1-1, to=2-2]
      \arrow[from=1-1, to=3-1]
      \arrow["{u_3}"{description}, from=1-3, to=2-4]
      \arrow[two heads, from=1-3, to=3-3]
      \arrow[from=2-2, to=2-4]
      \arrow[from=2-2, to=4-2]
      \arrow[two heads, from=2-4, to=4-4]
      \arrow[from=3-1, to=3-3]
      \arrow["{u_2}"{description}, from=3-1, to=4-2]
      \arrow["{u_1}"{description}, from=3-3, to=4-4]
      \arrow[from=4-2, to=4-4]
      \arrow["\lrcorner"{anchor=center, pos=0.15, scale=1.5}, draw=none, from=1-1, to=3-3]
      \arrow["\lrcorner"{anchor=center, pos=0.15, scale=1.5}, draw=none, from=2-2, to=4-4]
    \end{tikzcd}
  \end{equation*}
  Then, one has a map
  \begin{equation*}
    (u_1 \fracsquareslash{B} (B \times \pi)) \times
    (u_2 \fracsquareslash{B} (B \times \pi)) \times
    (u_3 \fracsquareslash{B} (B \times \pi)) \xrightarrow{\quad} (u_1 \fracsquareslash{B} (B \times \pi))
  \end{equation*}
  subject to the commutativity condition as below for each
  $f \colon B' \to B \in \bC$.
  \begin{lrbox}{\BUbox}\scriptsize \begin{tikzcd}[cramped,sep=small]  B \times \tMcU \ar[d] \\ B \times \McU \end{tikzcd} \end{lrbox}
  \newsavebox{\BpUbox}
  \begin{lrbox}{\BpUbox}\scriptsize \begin{tikzcd}[cramped,sep=small] B' \times \tMcU \ar[d] \\ B' \times \McU \end{tikzcd} \end{lrbox}
  \newsavebox{\XYabox}
  \begin{lrbox}{\XYabox}\scriptsize \begin{tikzcd}[cramped,sep=small] X_1 \ar[d, "{u_1}"'] \\ Y_1 \end{tikzcd} \end{lrbox}
  \newsavebox{\XYapbox}
  \begin{lrbox}{\XYapbox}\scriptsize \begin{tikzcd}[cramped,sep=small] f^*X_1 \ar[d, "{f^*u_1}"'] \\ f^*Y_1 \end{tikzcd} \end{lrbox}
  \newsavebox{\XYbbox}
  \begin{lrbox}{\XYbbox}\scriptsize \begin{tikzcd}[cramped,sep=small] X_2 \ar[d, "{u_2}"'] \\ Y_2 \end{tikzcd} \end{lrbox}
  \newsavebox{\XYbpbox}
  \begin{lrbox}{\XYbpbox}\scriptsize \begin{tikzcd}[cramped,sep=small] f^*X_2 \ar[d, "{f^*u_2}"'] \\ f^*Y_2 \end{tikzcd} \end{lrbox}
  \newsavebox{\XYcbox}
  \begin{lrbox}{\XYcbox}\scriptsize \begin{tikzcd}[cramped,sep=small] X_3 \ar[d, "{u_3}"'] \\ Y_3 \end{tikzcd} \end{lrbox}
  \newsavebox{\XYcpbox}
  \begin{lrbox}{\XYcpbox}\scriptsize \begin{tikzcd}[cramped,sep=small] f^*X_3 \ar[d, "{f^*u_3}"'] \\ f^*Y_3 \end{tikzcd} \end{lrbox}
  \newsavebox{\XYdbox}
  \begin{lrbox}{\XYdbox}\scriptsize \begin{tikzcd}[cramped,sep=small] X_4 \ar[d, "{u_4}"'] \\ Y_4 \end{tikzcd} \end{lrbox}
  \newsavebox{\XYdpbox}
  \begin{lrbox}{\XYdpbox}\scriptsize \begin{tikzcd}[cramped,sep=small] f^*X_4 \ar[d, "{f^*u_4}"'] \\ f^*Y_4 \end{tikzcd} \end{lrbox}
  \begin{equation*}
    \begin{tikzcd}[cramped]
      \left(\usebox{\XYabox} \fracsquareslash{B} \usebox{\BUbox} \right)
      \times
      \left(\usebox{\XYbbox} \fracsquareslash{B} \usebox{\BUbox} \right)
      \times
      \left(\usebox{\XYcbox} \fracsquareslash{B} \usebox{\BUbox} \right)
      \ar[r]
      \ar[d]
      &
      \left(\usebox{\XYdbox}  \fracsquareslash{B} \usebox{\BUbox} \right)
      \ar[d]
      \\
      \left(\usebox{\XYapbox} \fracsquareslash{B'} \usebox{\BpUbox} \right)
      \times
      \left(\usebox{\XYbpbox} \fracsquareslash{B'} \usebox{\BpUbox} \right)
      \times
      \left(\usebox{\XYcpbox} \fracsquareslash{B'} \usebox{\BpUbox} \right)
      \ar[r]
      &
      \left(\usebox{\XYdpbox}  \fracsquareslash{B'} \usebox{\BUbox} \right)
    \end{tikzcd}
  \end{equation*}
\end{lemma}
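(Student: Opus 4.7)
The plan is to construct the required lifting structure for $u_4$ by decomposing $u_4$ as a composite of pullbacks of $u_1$, $u_2$, and $u_3$ along $\pi$-fibrations and then combining the resulting lifting structures via the closure of lifting structures under composition \cite[Construction 2.5]{struct-lift} and under pullback along $\pi$-fibrations in the presence of the $\Pi$-type structure \cite[Construction 3.6]{struct-lift}.

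First observe that since $Y_3 \twoheadrightarrow Y_1$ is a $\pi$-fibration, so is its pullback $Y_4 \twoheadrightarrow Y_2$ along $Y_2 \to Y_1$; dually, since $X_3 \twoheadrightarrow X_1$ is a $\pi$-fibration, so is its pullback $X_4 \twoheadrightarrow X_2$. Accordingly, $u_4$ factors through the intermediate object $X_2 \times_{Y_1} Y_3$ formed using the fibration $Y_3 \twoheadrightarrow Y_1$, giving
\begin{equation*}
  X_4 = X_2 \times_{X_1} X_3 \xrightarrow{a} X_2 \times_{Y_1} Y_3 \xrightarrow{b} Y_2 \times_{Y_1} Y_3 = Y_4,
\end{equation*}
where $b$ is a pullback of $u_2$ along the $\pi$-fibration $Y_4 \twoheadrightarrow Y_2$, hence inherits a lifting structure from $\ell_2$ via \cite[Construction 3.6]{struct-lift}. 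The map $a$ is then decomposed further as a composite of pullbacks of $u_1$ and $u_3$ along $\pi$-fibrations, using the $\pi$-fibration $X_3 \twoheadrightarrow X_1$ together with the comparison map $X_3 \to X_1 \times_{Y_1} Y_3$ associated with the right face of the cube. Composing the resulting lifting structures via \cite[Construction 2.5]{struct-lift} then yields the desired lifting structure $\ell_4 \in u_4 \fracsquareslash{B} (B \times \pi)$.

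For naturality with respect to a map $f \colon B' \to B$, each construction used in the assembly commutes with base change: the limits defining the cube are stable under $f^*$, and the operations of \cite[Construction 3.2]{struct-lift} (base change), \cite[Construction 3.6]{struct-lift} (pullback along $\pi$-fibrations), and \cite[Construction 2.5]{struct-lift} (composition of lifting structures) each commute with $f^*$ by construction. Hence the assembled natural transformation $(\ell_1, \ell_2, \ell_3) \mapsto \ell_4$ satisfies the stated coherence.

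The main obstacle is setting up the decomposition of $a$ into a sequence of pullbacks of $u_1$ and $u_3$ along $\pi$-fibrations so that every intermediate step is legitimately a pullback of one of the given maps along a $\pi$-fibration (as opposed to along an arbitrary map); once this bookkeeping is carried out, the remainder of the argument is a routine application of the structural results of \cite{struct-lift}.
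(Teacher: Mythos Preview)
Your factorisation $u_4 = b \circ a$ with $b$ the pullback of $u_2$ along the fibration $Y_4 \twoheadrightarrow Y_2$ is correct, and the handling of $b$ via \cite[Construction 3.6]{struct-lift} is fine. The problem is entirely in the step you flag as ``the main obstacle'': decomposing $a$, or equivalently the comparison map $c \colon X_3 \to X_1 \times_{Y_1} Y_3$, as a composite of pullbacks of $u_1$ and $u_3$ along $\pi$-fibrations. This is not bookkeeping; such a decomposition does not exist in general. The map $c$ is the gap map of the (non-pullback) right face, and the only fibration available with codomain $Y_3$ or $X_1 \times_{Y_1} Y_3$ that you could pull $u_3$ or $u_1$ back along is either trivial or itself a pullback of $u_1$ (hence not a fibration). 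Every attempt to express $c$ in the form you need runs into a pullback along a non-fibration, so the closure properties of \cite{struct-lift} alone do not suffice.

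The paper's own proof does not attempt a direct decomposition: it simply observes that the $\pi$-fibrant slice over $B$ is a type-theoretic fibration category (with pullback along $f \colon B' \to B$ a morphism of such), and then invokes Shulman's cube lemma \cite[Lemma 11.7]{shu15} as a black box. Shulman's argument for that lemma uses the $\Id$-type structure in an essential way (via path objects and transport, ultimately a deformation-retract style argument) to handle exactly the comparison map you are stuck on; it is not a pure closure-under-composition-and-pullback argument. If you want a self-contained proof, you need to import that homotopical step rather than hope for an elementary factorisation.
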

\begin{proof}
  By \cite[Lemma 11.7]{shu15} and noting that the $\pi$-fibrant slice over each
  object $B$ forms a type-theoretic fibration category and pulling back along a
  map $B' \to B$ induces a map between type-theoretic fibration categories.
\end{proof}

We are now ready to prove pointed functional extensionality of the gluing
category.

\begin{theorem}\label{thm:gl-ptd-fe}
  Suppose that each $\pi_i$ is equipped with a pointed $\Id_i$-functional
  extensionality structure.
  Then, any $\Id$-type structure on $\tau$ has a pointed functional
  extensionality structure.
\end{theorem}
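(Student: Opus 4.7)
The plan is to invoke the pointwise characterisation of pointed functional extensionality from \Cref{lem:ptd-funext-sdf}: it suffices, given a Reedy fibration $q \colon B \twoheadrightarrow A \in \Gl(M)$ and a map $s \colon Y \to E$ in the Reedy-fibrant slice over $B$, to produce a map $(s \fracsquareslash{B} (B \times \tau)) \to (q_*s \fracsquareslash{A} (A \times \tau))$ natural in $f \colon A' \to A$. By \Cref{lem:gl-tc-ptwise-tc} and \Cref{lem:ptwise-tc-gl-tc}, lifting structures against $B \times \tau$ and $A \times \tau$ can be decomposed and reassembled componentwise, so the task splits into handling the $0$-component and the $1$-component of $q_*s$ separately.

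For the $0$-component, I would note that the projection $\Gl(M) \to \bC_0$ preserves pushforwards by \cite[Theorem 2.19]{fkl24}, so $(q_*s)_0 = (q_0)_*s_0$. A lifting structure of $s$ against $B \times \tau$ restricts (via \Cref{lem:gl-tc-ptwise-tc}) to a lifting structure of $s_0$ against $B_0 \times \pi_0$, to which the pointed functional extensionality structure of $\pi_0$ via \Cref{lem:ptd-funext-sdf} produces a lifting structure of $(q_0)_*s_0$ against $A_0 \times \pi_0$, as needed.

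The $1$-component is the hard part, because $(q_*s)_1$ is not simply $(q_1)_*s_1$: per \cite[Construction 2.12]{fkl24} (used already in the proof of \Cref{lem:reedy-fib-pshfw}), $(q_*E)_1$ sits as a pullback of $(q_1)_*E_1 \twoheadrightarrow (q_1)_*\WhM_BE$ along $\WhM_A(q_*E) \to (q_1)_*\WhM_BE$, and correspondingly $(q_*s)_1$ fits into a cube of pullbacks. My plan is, from the lifting structure on $s$, to extract via \Cref{lem:gl-tc-ptwise-tc} lifting structures on both $s_1$ and on the induced map of relative matching objects (the latter using \Cref{asm:mrq}\Cref{itm:mrq-tc} applied to $s_0$ to get a lifting structure on $Ms_0$ against $MB_0 \times \pi_1$, which descends by rebasing). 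Applying pointed functional extensionality of $\pi_1$ to the pushforward along $q_1$, one obtains lifting structures on $(q_1)_*s_1$ and on the pushforward of the relative matching map of $s$. The cube lemma \Cref{lem:trv-cof-cube} then converts these two lifting structures, together with the lifting structure on the matching-object-level pushforward, into a lifting structure on the pullback corner $(q_*s)_1 \to $ its relative matching map over $A_1$.

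The main obstacle I anticipate is verifying naturality in $f \colon A' \to A \in \Gl(M)$ through the cube-lemma step, since the pullback defining $(q_*s)_1$ involves a distributivity isomorphism that interacts non-trivially with base-change, and one must check the uniformity constraints of \Cref{lem:ptd-funext-sdf} are preserved when reassembling via \Cref{lem:ptwise-tc-gl-tc}. Once naturality is checked on both components and the induced diagonal-filler component, the two families assemble into the required $\ptdfunext_{q,s}$ and the converse direction of \Cref{lem:ptd-funext-sdf} produces the pointed $\Id$-functional extensionality structure on $\tau$.
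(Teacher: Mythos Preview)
Your proposal is correct and follows essentially the same route as the paper's proof: reduce via \Cref{lem:ptd-funext-sdf}, split into components via \Cref{lem:gl-tc-ptwise-tc,lem:ptwise-tc-gl-tc}, handle the $0$-component by preservation of pushforwards plus $\ptdfunext_0$, and handle the $1$-component by unpacking $(q_*s)_1$ via \cite[Construction 2.12]{fkl24} and then applying the cube lemma \Cref{lem:trv-cof-cube} to three lifting structures obtained from $\ptdfunext_1$ and \Cref{asm:mrq}\Cref{itm:mrq-tc}. The only place you are slightly vague is the third input to the cube lemma (the lifting structure on $\WhM_A(q_*s)$), which the paper obtains by first applying $\ptdfunext_0$ to $s_0$, then \Cref{asm:mrq}\Cref{itm:mrq-tc} to $M((q_0)_*s_0)$, and finally rebasing along $A_1 \to MA_0$; but your ``matching-object-level pushforward'' is pointing at the right object.
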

\begin{proof}
  By the characterisation $\Id$-type pointed functional extensionality of
  \Cref{lem:ptd-funext-sdf}, it suffices to show that pushforwards along Reedy
  fibrations preserve left classes to Reedy fibrations, assuming that pushing
  forwards along $\pi_i$-fibrations preserve left classes to $\pi_i$-fibrations.

  Specifically, take a map $s \colon Y \to E$ in the Reedy fibrant slice over
  some $B \in \Gl(M)$ and a Reedy fibration $q \colon B \twoheadrightarrow A$.
  The goal is to construct a map
  \begin{equation*}
    \begin{tikzcd}[cramped]
      (s \fracsquareslash{B} (B \times \tau)) \ar[r,dashed] & (q_*s \fracsquareslash{A} (B \times \tau))
    \end{tikzcd}
  \end{equation*}
  satisfying the compatibility condition of \Cref{lem:ptd-funext-sdf}.
  By \Cref{lem:gl-tc-ptwise-tc,lem:ptwise-tc-gl-tc}, one has solid maps as
  follows satisfying the required compatibility conditions.
  \begin{equation*}
    \begin{tikzcd}[cramped, column sep=large]
      (s \fracsquareslash{B} (B \times \pi))
      \ar[r,"{\text{\Cref{lem:gl-tc-ptwise-tc}}}"]
      &
      (s_0 \fracsquareslash{B_0} (B_0 \times \pi_0))
      \times
      (s_1 \fracsquareslash{B_1} (B_1 \times \pi_0))
      \ar[d, dashed]
      \\
      (q_*s \fracsquareslash{A} (B \times \pi))
      &
      ((q_*s)_0 \fracsquareslash{B_0} (B_0 \times \pi_0))
      \times
      ((q_*s)_1 \fracsquareslash{B_1} (B_1 \times \pi_0))
      \ar[l,"{\text{\Cref{lem:ptwise-tc-gl-tc}}}"']
    \end{tikzcd}
  \end{equation*}
  Therefore, it suffices to construct the dashed map above stable under
  reindexing.

  Because the map $\Gl(M) \to \bC_0$ preserves pushforwards by \cite[Theorem
  2.19]{fkl24}, one has that $(q_*s)_0 = (q_0)_*s_0$.
  Thus, the $\Id_0$-functional extensionality on $\pi_0$ provides map of
  the 0-component
  \begin{equation*}
    (s_0 \fracsquareslash{B_0} (B_0 \times \pi))
    \xrightarrow{\quad}
    ((q_*s)_0 \fracsquareslash{B_0} (B_0 \times \pi))
  \end{equation*}

  For the 1-component, \cite[Construction 2.12]{fkl24} shows the 1-component
  $(q_*s)_1 \colon (q_*Y)_1 \to (q_*E)_1$ of the pushforward of $s$ along $q$ is
  given by the following connecting map of pullbacks in $\bC_1$ over $B_1$
  \begin{equation*}
    \begin{tikzcd}[cramped, sep=small]
      {(q_*Y)_1} && {(q_1)_*Y_1} \\
      & {(q_*E)_1} && {(q_1)_*E_1} \\
      {\WhM_A(q_*Y)} && {(q_1)_*(\WhM_BY)} \\
      & {\WhM_A(q_*E)} && {(q_1)_*(\WhM_BE)}
      \arrow[from=1-1, to=1-3]
      \arrow[two heads, from=1-1, to=3-1]
      \arrow["{(q_1)_*s_1}", from=1-3, to=2-4]
      \arrow["{(q_1)_*\Whm_BY}"{pos=0.8}, two heads, from=1-3, to=3-3]
      \arrow[from=2-2, to=2-4]
      \arrow[two heads, from=2-2, to=4-2]
      \arrow["{(q_1)_*\Whm_BE}", two heads, from=2-4, to=4-4]
      \arrow[from=3-1, to=3-3]
      \arrow["{\WhM_A(q_*s)}"', from=3-1, to=4-2]
      \arrow["{(q_1)_*(\WhM_Bs)}"'{}, from=3-3, to=4-4]
      \arrow[from=4-2, to=4-4]
      \arrow["\lrcorner"{anchor=center, pos=0.15, scale=1.5}, draw=none, from=1-1, to=3-3]
      \arrow["\lrcorner"{anchor=center, pos=0.15, scale=1.5}, draw=none, from=2-2, to=4-4]
      \arrow["{(q_*s)_1}"'{}, from=1-1, to=2-2]
    \end{tikzcd}
  \end{equation*}
  The vertical maps $(q_1)_*E_1 \to (q_1)_*(\WhM_BE)$ and
  $(q_1)_*Y_1 \to (q_1)_*(\WhM_BY)$ are $\pi_1$-fibrations because $\Whm_BE$ and
  $\Whm_BY$ are $\pi_1$-fibrations by Reedy fibrancy and the distributivity law
  for pushforwards (\cite[Paragraph 2.3]{gk13}, \cite[Lemma 2.3]{inv-psfw})
  implies implies pushforwards of $\pi_1$-fibrations along $\pi_1$-fibrations
  remain $\pi_1$-fibrations.
  We would now like to invoke the cube lemma of \Cref{lem:trv-cof-cube}, so we
  examine the three connecting maps from cospan of the back face to the cospan
  of the front face and hope for lifting structures:
  \begin{enumerate}
    \item $\WhM_A(q_*s)$ is the pullback along $A_1 \to MA_0$ of the map
    $M((q_0)_*s_0)$.
    Therefore, one has a composite
    \begin{equation*}
      \begin{tikzcd}[row sep=small]
        (s_0 \fracsquareslash{B_0} (B_0 \times \pi_0))
        \ar[r,"{\ptdfunext_0}"]
        &
        ((q_0)_*s_0 \fracsquareslash{A_0} (A_0 \times \pi_0))
        \ar[d, "{\text{\Cref{asm:mrq}\Cref{itm:mrq-tc}}}"]
        \\
        (\WhM_A(q_*s) \fracsquareslash{A_1} (A_1 \times \pi_1))
        &
        (M((q_0)_*s_0) \fracsquareslash{MA_0} (MA_0 \times \pi_1))
        \ar[l]
      \end{tikzcd}
    \end{equation*}
    \item $(q_1)_*(\WhM_Bs)$ is the pushforward along $q_1$ of the pullback of
    $Ms_0$ along $B_1 \to MB_0$.
    This means that one has a composite
    \begin{equation*}
      \begin{tikzcd}[row sep=small, column sep=large]
        (s_0 \fracsquareslash{B_0} (B_0 \times \pi_0))
        \ar[r, "{\text{\Cref{asm:mrq}\Cref{itm:mrq-tc}}}"]
        &
        (Ms_0 \fracsquareslash{MB_0} (MB_0 \times \pi_1))
        \ar[d]
        \\
        ((q_1)_*(\WhM_Bs) \fracsquareslash{A_1} (A_1 \times \pi_1))
        &
        (\WhM_Bs \fracsquareslash{B_1} (B_1 \times \pi_1))
        \ar[l, "{\ptdfunext_1}"]
      \end{tikzcd}
    \end{equation*}
    \item $(q_1)_*s_1$ is the pushforward of $s_1$ along $q_1$, which is a
    fibration by the $\Sigma$-type structure, so by the pointed functional
    extensionality assumption one has the map
    \begin{equation*}
      (s_1 \fracsquareslash{B_1} (B_1 \times \pi_1))
      \xrightarrow{\ptdfunext_1}
      ((q_1)_*s_1 \fracsquareslash{A_1} (A_1 \times \pi_1))
    \end{equation*}
  \end{enumerate}
  All of the maps above are stable under re-indexing.
  Using the cube lemma of \Cref{lem:trv-cof-cube}, one then obtains a map
  \begin{equation*}
    (s_0 \fracsquareslash{B_0} (B_0 \times \pi_0))
    \times
    (s_1 \fracsquareslash{B_1} (B_1 \times \pi_0))
    \xrightarrow{\quad}
    ((q_1)_*s_1 \fracsquareslash{A_1} (A_1 \times \pi_1))
  \end{equation*}
  stable under reindexing, as required.
\end{proof}


\subsection{Pointed Univalence}\label{subsec:glue-universe-univalence}
We are now ready to prove that the gluing category $\Gl(M)$ has pointed univalence.
Specifically, we will show that the internal universe of internal Reedy
fibrations equipped with the $\Id$-type structure from \Cref{thm:gl-ptd-fe} is
pointed univalent relative to the universe of ambient Reedy fibrations.
Formally, we will prove the following theorem.

\def\glueua{
  Suppose that
  \begin{enumerate}[itemsep=0em]
    \item $M \colon \bC_0 \to \bC_1$ has a left adjoint
    $D \colon \bC_1 \to \bC_0$
    \item There is a $\pi_i$-univalence structure on the \emph{internal}
    (universe, $\Id$-type)-pair $(\pi_i^\ds, \Id_i^\ds)$.
    \item There is an $\Id_1$-functional extensionality structure on the
    \emph{ambient} $\Id$-type $\Id_1$ of $\pi_1$ of the 1-component.
  \end{enumerate}
  Then, there is a pointed $\tau$-univalence structure on the internal
  (universe, $\Id$-type)-pair $(\tau^\ds, \Id_{\Gl(M)}^\ds)$, where
  $\tau,\tau^\ds$ are from \Cref{thm:glue-universe-univ} and $\Id_{\Gl(M)}^\ds$
  is from \Cref{thm:glue-universe-id}.
}

\begin{theorem}\label{thm:glue-universe-ua}\glueua\end{theorem}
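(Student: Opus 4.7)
The plan is to apply \cref{thm:univalence-sdf}\cref{itm:univalence-sdf-pointed}: pointed $\tau$-univalence of $(\tau^\ds, \Id_{\Gl(M)}^\ds)$ is equivalent to producing an $\Id_{\Gl(M)}$-homotopy from $\trv \cdot \src$ to the identity on $\HIso_{\McV^\ds}^{\Id_{\Gl(M)}^\ds}(\tMcV^\ds)$ that is constant when restricted along $\trv$. The hypotheses of that reduction for $\tau$ and $\tau^\ds$ hold by \cref{prop:glue-universe-sigma,prop:glue-universe-pi,thm:glue-universe-id,thm:gl-ptd-fe}. Since Reedy fibrations in $\Gl(M)$ are built from $0$-component fibrations together with relative matching fibrations, I would construct the sought homotopy component-wise and then assemble it via \cref{lem:ptwise-tc-gl-tc}.

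The $0$-component is essentially free: the $0$-projection $\Gl(M) \to \bC_0$ preserves the pullbacks, pushforwards along the universal fibration, and $\Id$-types used to build $\HIso$ (\cref{prop:glue-universe-sigma,prop:glue-universe-pi,thm:glue-universe-id}), so the $0$-component of $\HIso_{\McV^\ds}^{\Id_{\Gl(M)}^\ds}(\tMcV^\ds)$ together with $\trv$ and $\src$ agrees with $\HIso_{\McU_0^\ds}^{\Id_0^\ds}(\tMcU_0^\ds)$ in $\bC_0$. The pointed $\pi_0$-univalence of $(\pi_0^\ds, \Id_0^\ds)$, re-expressed through \cref{thm:univalence-sdf}\cref{itm:univalence-sdf-pointed}, directly supplies the $0$-component homotopy, constant along $\trv_0$.

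For the $1$-component, the relevant $\pi_1$-fibration over $(\McV^\ds)_1$ is assembled from $\pi_1^\ds$-internal-hom data and $M$-images of $\pi_0^\ds$-data, with the relative matching map landing in a pullback involving $M\HIso_{\McU_0^\ds}^{\Id_0^\ds}(\tMcU_0^\ds)$. Here the three assumptions conspire: (i) the pointed $\pi_1$-univalence of $(\pi_1^\ds, \Id_1^\ds)$ provides, via \cref{thm:univalence-sdf}\cref{itm:univalence-sdf-pointed}, a candidate constant-along-$\trv$ $\Id_1$-homotopy in $\bC_1$ for the $\pi_1^\ds$-internal part; (ii) the assumption $D \dashv M$ makes $M$ continuous, so $M$ commutes with the pullbacks defining the relative matching objects and with $\HIso$ formation at the level of the diagrams of \cref{constr:hiso}, whence applying $M$ to the $0$-component homotopy yields a coherent datum over the matching map; and (iii) the ambient $\Id_1$-functional extensionality on $\pi_1$ enters through \cref{lem:ptd-funext-sdf} to promote fibrewise-over-$M$-images homotopies to genuine $\Id_1$-homotopies while preserving constancy along $\trv$, exactly as in \cref{thm:gl-ptd-fe}.

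The main obstacle is reconciling these two pieces: the $1$-component homotopy must simultaneously lie over the $0$-component homotopy under the relative matching map and remain constant along $\trv$. This coherence is of the same shape as the cube argument \cref{lem:trv-cof-cube}, and pointed $\Id_1$-funext is precisely the tool that guarantees constant-restricted homotopies are stable under the pushforward/matching construction entering the $1$-component. Once the two component homotopies are so aligned, \cref{lem:ptwise-tc-gl-tc} glues them into a single $\Id_{\Gl(M)}$-homotopy in $\Gl(M)$ witnessing pointed $\tau$-univalence of $(\tau^\ds, \Id_{\Gl(M)}^\ds)$, with stability under selected pullbacks inherited from the stability of the component univalence and funext structures.
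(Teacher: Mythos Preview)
Your proposal has the right overall shape (reduce to component-wise data, handle the $0$-component by preservation under the projection, and invoke funext for the $1$-component), but there is a genuine gap in the $1$-component argument, and a few supporting claims are off.

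First, the route via \cref{thm:univalence-sdf}\cref{itm:univalence-sdf-pointed} (produce a constant-restricted $\Id_{\Gl(M)}$-homotopy and then glue with \cref{lem:ptwise-tc-gl-tc}) is not what the paper does and is harder than necessary. \Cref{lem:ptwise-tc-gl-tc} assembles \emph{lifting structures}, not homotopies; a pair of component homotopies is not yet a map into $\Id_{\Gl(M)}(\HIso)$, whose $1$-component is built via transport as in \cref{constr:sigma-fact-fib}. The paper instead stays with the lifting formulation of \cref{def:axm-univalence}: by \cref{cor:src-dest-int-uni-fib} it suffices to equip $\trv_{\Gl(M)}$ with a uniform lifting structure against $\tau$, and by \cref{lem:gl-tc-ptwise-tc,lem:ptwise-tc-gl-tc} this reduces to equipping each $(\trv_{\Gl(M)})_i$ with a uniform lifting structure against $\pi_i$.

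Second, and more importantly, your $1$-component sketch misses the key computation. The paper factors $(\trv_{\Gl(M)})_1$ through the relative matching map as in \cref{eqn:gl-trv}. The second factor is a pullback of $M(\trv_0)$ along a $\pi_1$-fibration, which lifts by pointed $\pi_0$-univalence combined with \cref{asm:mrq}\cref{itm:mrq-tc} and the $\Pi$-structure. The real content is identifying the first factor $\Whm(\trv_{\Gl(M)})$: \cref{prop:rel-match-trv-psfw} shows it is exactly $\bP_{M\pi_0^\ds}(\trv_1)$, the polynomial functor for $M\pi_0^\ds$ applied to $\trv_1$. This is precisely what makes \cref{lem:ptd-funext-sdf} applicable and explains why ambient pointed funext in $\bC_1$ is needed. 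The assumption $D \dashv M$ is used here, but not for the reason you state: $M$ is already continuous by lex-ness, and $M$ does \emph{not} commute with $\HIso$ formation (which involves pushforwards, not just limits). Rather, the adjunction is used in the representability calculations \cref{lem:rel-mat-trv-rep,lem:gl-poly-rep}, where maps out of $B \in \bC_1$ into a $1$-component in $\Gl(M)$ are transposed to maps out of the unit $\eta_B \colon B \to MD(B)$. Without this identification of $\Whm(\trv_{\Gl(M)})$ as a polynomial-functor image of $\trv_1$, there is no bridge from pointed $\pi_1$-univalence and funext to a lifting structure on the $1$-component.
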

We will obtain the proof of this \lcnamecref{thm:glue-universe-ua} in few steps.
The bulk of the technicality is in the following part, where one also can
observe why the functional extensionality premise is required.

\subsubsection{Relative matching map of the trivial homotopy equivalence as a pushforward}
Applying \Cref{constr:hiso,constr:trv-htpy} on the internal
universal Reedy fibration $\tau^\ds$ with the internal $\Id$-type $\Id_{\Gl(M)}^\ds$,
one obtains a map
\begin{equation*}
  \trv_{\Gl(M)} \colon \McV^\ds \to \HIso_{\McV^\ds}^{\Id^\ds_{\Gl(M)}}(\tMcV^\ds)
\end{equation*}
Because $\Gl(M) \to \bC_0$ preserves all logical structures, the 0-component is
just $\trv_0 \colon \McU_0 \to \HIso_{\McU_0}^{\Id_0}(\tMcU_0)$.
It also has a relative matching map $\Whm(\trv_{\Gl(M)})$ indicated as the
dotted map below.
\begin{equation*}\label{eqn:gl-trv}\tag{\text{$\Gl$-$\trv$}}
  \begin{tikzcd}[cramped]
    {\McV_1^\ds} \\
    & {\WhM(\trv_{\Gl(M)})} & {M(\McU_0^\ds)} \\
    & {\HIso_{\McV^\ds}^{\Id^\ds_{\Gl(M)}}(\tMcV^\ds)_1} & {M(\HIso_{\McU_0}^{\Id_0^\ds}(\tMcU_0^\ds))}
    \arrow["{\Whm(\trv_{\Gl(M)})}"{description}, dotted, from=1-1, to=2-2]
    \arrow[curve={height=-12pt}, from=1-1, to=2-3]
    \arrow["{(\trv_{\Gl(M)})_1}"', curve={height=12pt}, dashed, from=1-1, to=3-2]
    \arrow[from=2-2, to=2-3]
    \arrow[dashed, from=2-2, to=3-2]
    \arrow["\lrcorner"{anchor=center, pos=0.15, scale=1.5}, draw=none, from=2-2, to=3-3]
    \arrow["{M(\trv_0)}", dashed, from=2-3, to=3-3]
    \arrow[from=3-2, to=3-3]
  \end{tikzcd}
  \in \bC_1
\end{equation*}
The goal of this part is to give an alternative formulation of the relative
matching map $\Whm(\trv_{\Gl(M)})$.
Specifically, we show the following.

\newsavebox\relmatchtrvpsfwbox
\begin{lrbox}{\relmatchtrvpsfwbox}
  \begin{tikzcd}[cramped, column sep=small]
    {\McU_1^\ds} & {M\tMcU^\ds_0 \times \McU^\ds_1} & {\bP_{M\pi^\ds_0}(\McU^\ds_1)} \\
    {\HIso_{\McU_1^\ds}^{\Id_1}(\tMcU_1^\ds)} & {M\tMcU_0 \times \HIso_{\McU_1^\ds}^{\Id_1}(\tMcU_1^\ds)} & {\bP_{M\pi^\ds_0}(\HIso_{\McU_1^\ds}^{\Id_1}(\tMcU_1^\ds))} & {\WhM(\trv_{\Gl(M)})} \\
    1 & {M\tMcU^\ds_0} & {M\McU_0}
    \arrow["{\trv_1}"', from=1-1, to=2-1]
    \arrow[from=1-2, to=1-1]
    \arrow["{M\tMcU_0 \times \trv_1}", from=1-2, to=2-2]
    \arrow["{\bP_{M\pi_0}(\trv_1)}"', from=1-3, to=2-3]
    \arrow["{\Whm(\trv_{\Gl(M)})}", from=1-3, to=2-4]
    \arrow[from=2-1, to=3-1]
    \arrow[from=2-2, to=2-1]
    \arrow["\lrcorner"{anchor=center, pos=0.125, rotate=270, scale=1.5, pos=0.15}, draw=none, from=1-2, to=2-1]
    \arrow["\lrcorner"{anchor=center, pos=0.125, rotate=270, scale=1.5, pos=0.15}, draw=none, from=2-2, to=3-1]
    \arrow[from=2-2, to=3-2]
    \arrow["\cong", tail reversed, from=2-3, to=2-4]
    \arrow[from=2-3, to=3-3]
    \arrow[from=2-4, to=3-3]
    \arrow[from=3-2, to=3-1]
    \arrow["{M\pi^\ds_0}"', from=3-2, to=3-3]
  \end{tikzcd}
\end{lrbox}

\def\relmatchtrvpsfw{
  Suppose $M \colon \bC_0 \to \bC_1$ has a left adjoint
  $D \colon \bC_1 \to \bC_0$.

  The relative matching map
  $\Whm(\trv_{\Gl(M)}) \colon \McV^\ds_1 \to \WhM(\trv_{\Gl(M)})$ from
  \Cref{eqn:gl-trv} is the image of
  $\trv_1 \colon \McU_1^\ds \to \HIso_{\McU_1^\ds}^{\Id_1}(\tMcU_1^\ds)$ under
  the polynomial functor associated with
  $M\pi^\ds_0 \colon M\tMcU^\ds_0 \to M\McU^\ds_0$.
}
\begin{proposition}\label{prop:rel-match-trv-psfw}
  \relmatchtrvpsfw
  \begin{equation*}\usebox{\relmatchtrvpsfwbox}\end{equation*}
\end{proposition}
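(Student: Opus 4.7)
The strategy is to unwind the construction of $\HIso_{\McV^\ds}^{\Id_{\Gl(M)}^\ds}(\tMcV^\ds)$ in $\Gl(M)$ layer by layer at the 1-component, and then pull back along $M(\trv_0)$ to identify the result with the polynomial functor image.

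First, I would record that the 0-projection $\Gl(M) \to \bC_0$ has a left adjoint (the inclusion $A_0 \mapsto \id_{MA_0}$), so it preserves finite limits and all pushforwards. Consequently, the 0-component of $\HIso_{\McV^\ds}^{\Id_{\Gl(M)}^\ds}(\tMcV^\ds)$ is precisely $\HIso_{\McU_0^\ds}^{\Id_0^\ds}(\tMcU_0^\ds)$, so the absolute matching object appearing in the right-hand column of the target diagram is $M(\HIso_{\McU_0^\ds}^{\Id_0^\ds}(\tMcU_0^\ds))$, and the 0-component of $\trv_{\Gl(M)}$ is $\trv_0$. This nails down the bottom row of the diagram immediately.

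Next, I would trace the 1-component through each step of the $\HIso$ construction: the generic h-identity object $\HId$ built via a pushforward along $\tau^\ds$, the composable-pair object $\CompPair$ built from internal Homs, and $\LHInv,\RHInv,\HIso$ built as pullbacks. For the pushforward step, I would apply the explicit formula from \cite[Construction 2.12]{fkl24} (used already in \Cref{lem:reedy-fib-pshfw}), which writes the 1-component of a pushforward in $\Gl(M)$ as a pullback whose cospan has one leg coming from $(\tau_1^\ds)_*$ acting on the 1-component (recall $\tau_1^\ds = \GenComp(\pi_1^\ds, M\pi_0^\ds)$) and the other leg coming via $M$ from the corresponding 0-component construction. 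Because $M$ has a left adjoint $D$, it preserves finite limits, so $M$ can be commuted past each pullback appearing in the 0-side construction of $\HIso$.

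Now pull back along $M(\trv_0)$. Since $\trv_0$ is the common section of $\src,\dest$ in $\bC_0$, pulling back any step of the $M$-image of the 0-side $\HIso$ construction along $M(\trv_0)$ collapses the absolute-matching contribution to (iterated copies of) $M(\McU_0^\ds)$. What survives on the 1-component is precisely the fibrewise $\HIso_{\McU_1^\ds}^{\Id_1}(\tMcU_1^\ds)$ construction performed inside the fibres of $M\pi_0^\ds \colon M\tMcU_0^\ds \to M\McU_0^\ds$. By the definition of the polynomial functor (\Cref{def:expn-polynomial}) and iterated application of the pullback lemma together with the distributivity law for pushforwards (\cite[Lemma 2.3]{inv-psfw}, cf.\ \Cref{lem:gen-comp-psfw}), this fibrewise construction is exactly $\bP_{M\pi_0^\ds}(\HIso_{\McU_1^\ds}^{\Id_1}(\tMcU_1^\ds))$.

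The main obstacle is bookkeeping: across the several intermediate objects ($\HId$, $\CompPair$, $\LHInv$, $\RHInv$, $\HIso$) one must repeatedly separate the ``$M$-of-the-0-side'' factor from the genuine 1-side pushforward factor, and verify that restriction along $M(\trv_0)$ cleanly kills the former. Once this separation is accomplished, the map $\Whm(\trv_{\Gl(M)})$ is then identified with $\bP_{M\pi_0^\ds}(\trv_1)$ by naturality of the polynomial functor together with the universal property defining $\trv_{\Gl(M)}$ in \Cref{constr:trv-htpy} (its 1-component, by the representability computation at each intermediate stage, factors through the fibrewise $\trv_1$).
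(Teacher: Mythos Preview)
Your approach is genuinely different from the paper's. The paper proceeds by representability: it uses the adjunction $D \dashv M$ (and hence the induced left adjoint $\bC_1 \to \Gl(M)$ sending $B \mapsto \eta_B$) to compute $\bC_1(B, \HIso_{\McV^\ds}^{\Id_{\Gl(M)}^\ds}(\tMcV^\ds)_1)$ as $\Gl(M)(\eta_B, \HIso_{\McV^\ds}^{\Id_{\Gl(M)}^\ds}(\tMcV^\ds))$, then unwinds what an $\Id_{\Gl(M)}^\ds$-homotopy equivalence over $\eta_B$ with trivial $0$-component looks like (\Cref{lem:rel-mat-trv-rep}). Separately it computes what $\bP_{M\pi_0^\ds}(Z)$ represents (\Cref{lem:gl-poly-rep}) and matches the two. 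Your route instead traces the $\HIso$ construction object-by-object at the $1$-level using the explicit pushforward formula for $\Gl(M)$. Both are viable; the representability route avoids having to carry the cospan decomposition of each intermediate pushforward through five layers of pullbacks.

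There is, however, a real gap in your outline that you have mislabelled as bookkeeping. The $\Id$-type $\Id_{\Gl(M)}^\ds$ is \emph{not} given by a pushforward or a pointwise formula: its $1$-component is the $\ol{\Id}$ object of \Cref{constr:sigma-fact-fib}, built using a $\transport$ map. The reason that, after pulling back along $M(\trv_0)$, the surviving structure is an $\Id_1$-homotopy isomorphism over $\eta_B^*(ME_0)$ is precisely \Cref{prop:reedy-fib-id}\Cref{itm:reedy-fib-id-res} (equivalently \Cref{lem:transport-iter-id-pb}): pulling the relative matching map of $\Id_{\Gl(M)}^\ds$ back along $\WhM(\refl)$ recovers the ordinary $\ev_\partial \colon \Id_{\WhM_B(E)}(E_1) \twoheadrightarrow E_1 \times_{\WhM_B E} E_1$. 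Without invoking this, you cannot conclude that the ``fibrewise $\HIso$'' you obtain is computed with respect to $\Id_1$ rather than some twisted path object, and the identification with $\bP_{M\pi_0^\ds}(\HIso_{\McU_1^\ds}^{\Id_1}(\tMcU_1^\ds))$ would not go through. This step is the technical heart of the argument in both approaches; you should make it explicit.
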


To show \Cref{prop:rel-match-trv-psfw}, we need to know what $\WhM(\trv^{\Gl(M)})$
from \Cref{eqn:gl-trv} as an object of $\bC_1$ represents.
\begin{lemma}\label{lem:rel-mat-trv-rep}
  Suppose $M \colon \bC_0 \to \bC_1$ has a left adjoint
  $D \colon \bC_1 \to \bC_0$.

  The relative matching object $\WhM(\trv_{\Gl(M)}) \in \bC_1$ from
  \Cref{eqn:gl-trv} represents the presheaf whose action on objects takes each
  $B \in \bC_1$ to the following tuples of data:
  \begin{enumerate}
    \item A $\pi_0$-fibration $E_0 \twoheadrightarrow DB \in \bC_0$
    \item A choice of two $\pi_1$-fibrant objects
    $E_1,E_1' \twoheadrightarrow \eta_B^*(ME_0)$
    \item The data for an $\Id_1$-homotopy isomorphism between $E_1,E_1'$ over $\eta_B^*(ME_0)$
  \end{enumerate}
  where $\eta_B \colon B \to MD(B)$ is the $(D \dashv M)$-adjunction unit at $B$.
  \begin{equation*}
    \begin{tikzcd}[cramped, sep=small]
      {E_1} && {E_1'} \\
      & {\WhM_\eta E} & {ME_0} \\
      & B & {MD(B)}
      \arrow["\simeq", from=1-1, to=1-3]
      \arrow[two heads, from=1-1, to=2-2]
      \arrow[two heads, from=1-3, to=2-2]
      \arrow[from=2-2, to=2-3]
      \arrow[from=2-2, to=3-2]
      \arrow["\lrcorner"{anchor=center, pos=0.15, scale=1.5}, draw=none, from=2-2, to=3-3]
      \arrow[from=2-3, to=3-3]
      \arrow["\eta"', from=3-2, to=3-3]
    \end{tikzcd}
  \end{equation*}
\end{lemma}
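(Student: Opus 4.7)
The plan is to use the adjunction $D \dashv M$ to reduce the representability claim to the internal version of \Cref{lem:hinv-rep} inside $\Gl(M)$. First, I would observe that $D \dashv M$ induces a left adjoint $L \colon \bC_1 \to \Gl(M)$ to the 1-component projection $(-)_1 \colon \Gl(M) \to \bC_1$, given on objects by $L(B) \coloneqq (B \xrightarrow{\eta_B} MDB)$. This is a direct unfolding: a morphism $L(B) \to X$ in $\Gl(M)$ consists of a 1-component $B \to X_1$ and a 0-component $DB \to X_0$ fitting into a commuting square over the structure map of $X$, but that square together with $D \dashv M$ forces the 0-component to be uniquely determined by the 1-component. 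Hence $\bC_1(B, X_1) \cong \Gl(M)(L(B), X)$ naturally.

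Next I would unwind the pullback defining $\WhM(\trv_{\Gl(M)})$ in \Cref{eqn:gl-trv}. A map $B \to \WhM(\trv_{\Gl(M)})$ amounts to a compatible pair $\psi \colon B \to M\McU_0^\ds$ and $\phi \colon B \to \HIso_{\McV^\ds}^{\Id^\ds_{\Gl(M)}}(\tMcV^\ds)_1$. By $D \dashv M$, $\psi$ corresponds to a map $DB \to \McU_0^\ds$, that is, a $\pi_0^\ds$-fibration $E_0 \twoheadrightarrow DB$ in $\bC_0$. By the free-object adjunction, $\phi$ corresponds to a morphism $L(B) \to \HIso_{\McV^\ds}^{\Id^\ds_{\Gl(M)}}(\tMcV^\ds)$ in $\Gl(M)$. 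Applying \Cref{lem:hinv-rep} internally in $\Gl(M)$ to the universal map $\tau^\ds$ with its $\Id^\ds_{\Gl(M)}$-structure shows that such a morphism classifies a pair of internal Reedy fibrations $E, E'$ over $L(B)$ equipped with an $\Id^\ds_{\Gl(M)}$-homotopy isomorphism between them. An internal Reedy fibration over $L(B)$ decomposes into a $\pi_0^\ds$-fibration on the 0-component plus a $\pi_1^\ds$-fibration on the relative matching map, which is $\eta_B^*(ME_0)$ by definition of $L(B)$. The compatibility of $\phi$ and $\psi$ through $M(\trv_0)$ then forces both 0-components of $E$ and $E'$ to agree with $E_0$ and forces the 0-component of the homotopy isomorphism to be the trivial equivalence at $E_0$.

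The main obstacle will be the last identification: verifying that an $\Id^\ds_{\Gl(M)}$-homotopy isomorphism between two internal Reedy fibrations whose 0-components coincide and whose 0-component equivalence is trivial is exactly the data of an $\Id_1$-homotopy isomorphism between their 1-components $E_1, E_1' \twoheadrightarrow \eta_B^*(ME_0)$ in $\bC_1$. I would verify this by inspecting the explicit construction of the 1-component of $\Id^\ds_{\Gl(M)}$ given in \Cref{prop:reedy-fib-id} via \Cref{constr:sigma-fact-fib,constr:sigma-fact-tc}: when the 0-component equivalence is trivial, the factorisation of the diagonal on the 0-component degenerates, the transport map used in the 1-component construction pulls back to the identity, and the $\bar{\Id}$-object reduces to the ambient $\Id_1$-type on the 1-components, and the $\HIso$-construction inherits this reduction. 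Putting these identifications together exhibits $\WhM(\trv_{\Gl(M)})$ as representing precisely the presheaf of triples described in the statement.
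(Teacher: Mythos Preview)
Your proposal is correct and follows essentially the same route as the paper: use the induced left adjoint $L(B)=\eta_B$ to the 1-component projection, apply the representability of $\HIso$ in $\Gl(M)$, and then restrict along $M(\trv_0)$ to force the 0-component equivalence to be trivial. The paper carries out the last step via an explicit diagram chase of the $H^s$ datum, but the key identification you flag as the ``main obstacle'' is already packaged as \Cref{prop:reedy-fib-id}\Cref{itm:reedy-fib-id-res}, so you can cite that directly rather than re-inspecting \Cref{constr:sigma-fact-fib,constr:sigma-fact-tc}.
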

\begin{proof}
  We first calculate what $\HIso_\McV^{\Id^\ds_{\Gl(M)}}(\tMcV)_1$ represents.
  Because $D \dashv M$, the projection $\Gl(M) \to \bC_1$ also has a left
  adjoint $\bC_1 \to \Gl(M)$ taking each $B \in \bC_1$ to the unit arrow
  $\eta_B \colon B \to MD(B)$.
  In other words, one has
  \begin{equation*}
    \bC_1(B, \HIso_\McV^{\Id^\ds_{\Gl(M)}}(\tMcV)_1) \cong \Gl_M(B \xrightarrow{\eta_B} MD(B), \HIso_{\McV^\ds}^{\Id^\ds_{\Gl(M)}}(\tMcV^\ds))
  \end{equation*}
  Because we are trying to calculate what
  $\HIso_\McV^{\Id^\ds_{\Gl(M)}}(\tMcV)_1$ represents, we may obtain the result
  using the representable nature of
  $\HIso_{\McV^\ds}^{\Id^\ds_{\Gl(M)}}(\tMcV^\ds)$.
  Namely, maps $\eta_B \to \HIso_{\McV^\ds}^{\Id^\ds_{\Gl(M)}}(\tMcV^\ds)$ are
  in natural bijective correspondence with $\Id^\ds_{\Gl(M)}$-homotopy equivalences
  over $\eta \colon B \to MD(B)$.
  Explicitly, these consist of:
  \begin{enumerate}
    \item Internal Reedy fibrations $E, E' \twoheadrightarrow \eta_B \in \Gl(M)$
    \item Maps $f \colon E \to E' \in \sfrac{\Gl(M)}{\eta_B}$ and
    $s,r \colon E' \rightrightarrows E \in \sfrac{\Gl(M)}{\eta_B}$
    \item Homotopies
    $H^s \colon E \to \Id^{\Gl(M)}_B(E) \in \sfrac{\Gl(M)}{\eta_B}$ and
    $H^r \colon E' \to \Id^{\Gl(M)}_B(E') \in \sfrac{\Gl(M)}{\eta_B}$ such that
    \begin{align*}
      \ev_\partial H^s = (fs, \id) && \ev_\partial H^r = (rf, \id)
    \end{align*}
  \end{enumerate}

  The 0-components above respectively give rise to maps $f_0 \colon E_0 \to E_0'$
  and $s_0,r_0 \colon E_0' \rightrightarrows E_0$ fibrant over $DB$ in $\bC_0$
  along with homotopies $H^s_0 \colon E_0 \to \Id_{DB}(E_0)$ and
  $H^r_0 \colon E_0' \to \Id_{DB}(E_0')$ such that
  \begin{align*}
    \ev_\partial H^s_0 = (f_0s_0, \id) && \ev_\partial H^r_0 = (r_0f_0, \id)
  \end{align*}

  As $\WhM(\trv_{\Gl(M)})$ is obtained as the pullback of
  $\HIso_{\McV^\ds}^{\Id^\ds_{\Gl(M)}}(\tMcV^\ds)$ along
  $M(\trv_0) \colon M(\McU_0^\ds) \to
  M(\HIso_{\McU_0}^{\Id_0^\ds}(\tMcU_0^\ds))$, to know what
  $\WhM(\trv^{\Gl(M)})$ represents is to know what the 1-component looks like
  when the 0-components make up the trivial equivalence.

  For this, we first analyse the 1-component in general.
  The commutativity condition says that $f_1 \colon E_1 \to E_1'$ is a map over
  $\WhM_\eta(f) \colon \WhM_\eta(E) \to \WhM_\eta(E')$, which are both over $B$,
  and likewise for $s,r$.
  In other words, combining the 0-component and 1-component of $s,f,r$ gives the
  following diagram
  \begin{equation*}\label{eqn:sfr}\tag{$s,f,r$}
    \begin{tikzcd}[cramped]
      {E_1} && {E_1'} \\
      {\WhM_\eta E} && {\WhM_\eta E'} \\
      B && {ME_0} && {ME_0'} \\
      && {MD(B)}
      \arrow[two heads, from=1-1, to=2-1]
      \arrow["{r_1}"{description}, curve={height=-12pt}, dashed, from=1-3, to=1-1]
      \arrow["{s_1}"{description}, curve={height=12pt}, dashed, from=1-3, to=1-1]
      \arrow["{f_1}"{description}, dashed, from=1-1, to=1-3]
      \arrow[two heads, from=1-3, to=2-3]
      \arrow[from=2-1, to=3-1]
      \arrow["{\WhM_\eta r}"{description}, curve={height=-12pt}, dotted, from=2-3, to=2-1]
      \arrow["{\WhM_\eta s}"{description}, curve={height=12pt}, dotted, from=2-3, to=2-1]
      \arrow["{\WhM_\eta f}"{description}, dotted, from=2-1, to=2-3]
      \arrow[from=2-3, to=3-1]
      \arrow[from=2-3, to=3-5]
      \arrow["\eta"{description}, from=3-1, to=4-3]
      \arrow[from=3-3, to=4-3]
      \arrow["{Mr_0}"{description}, curve={height=-12pt}, dashed, from=3-5, to=3-3]
      \arrow["{Ms_0}"{description}, curve={height=12pt}, dashed, from=3-5, to=3-3]
      \arrow["{Mf_0}"{description}, dashed, from=3-3, to=3-5]
      \arrow[from=3-5, to=4-3]
      \arrow[from=2-1, to=3-3, crossing over]
    \end{tikzcd}
  \end{equation*}

  Analysing the commutativity condition for $H^s$ then says that
  $H_1^s \colon E_1 \to (\Id_\eta E)_1$ is a map over
  $\WhM_\eta(H^s) \colon\WhM_\eta E \to \WhM_\eta(\Id_\eta E)$, where the map
  $(\Id_\eta E)_1 \xrightarrow{\quad} \WhM_\eta(\Id_\eta E)$ factors through the
  relative matching map
  $\Whm_{E \times_\eta E}(\Id_\eta E) \colon (\Id_\eta E)_1 \twoheadrightarrow
  \WhM_{E \times_\eta E}(\Id_\eta E)$ of the boundary evaluation.
  In other words, for, once $s,f$ are fixed, $H^s$ is determined by the dashed
  maps $H_0^s,H_1^s$ making the following diagram commute
  \begin{equation*}\label{eqn:Hs}\tag{$H^s$}
    \begin{tikzcd}[cramped, column sep=small]
      && {(\Id_\eta E)_1} \\
      && {\WhM_{E \times_\eta E}(\Id_\eta E)} \\
      {E_1} && {E_1 \times_B E_1} && {\WhM_\eta(\Id_\eta E)} \\
      && {\WhM_\eta E} && {\WhM_\eta E \times_B \WhM_\eta E} && {M(\Id_{DB}(E_0))} \\
      &&&& {ME_0} && {M(E_0 \times_{DB} E_0)} \\
      &&&& B \\
      &&&&&& {MD(B)}
      \arrow["{\Whm_{E \times_\eta E}(\Id_\eta E)}", two heads, from=1-3, to=2-3]
      \arrow[from=2-3, to=3-3]
      \arrow[from=2-3, to=3-5]
      \arrow["{H_1^s}", dashed, from=3-1, to=1-3]
      \arrow["{(f_1s_1,\id)}"{description}, dashed, from=3-1, to=3-3]
      \arrow[two heads, from=3-1, to=4-3]
      \arrow[from=3-3, to=4-5]
      \arrow["{\WhM_\eta(\ev_\partial)}"{description}, from=3-5, to=4-5]
      \arrow[from=3-5, to=4-7]
      \arrow["{\WhM_\eta (H^s)}"{description, pos=0.7}, dotted, from=4-3, to=3-5]
      \arrow["{\WhM_\eta (f_0s_0,\id)}"'{}, dashed, from=4-3, to=4-5]
      \arrow[from=4-3, to=5-5]
      \arrow[from=4-5, to=5-7]
      \arrow[from=4-5, to=6-5]
      \arrow["{M(\ev_\partial)}", from=4-7, to=5-7]
      \arrow["{MH^s_0}"{description, pos=0.7}, dashed, from=5-5, to=4-7]
      \arrow["{M(f_0s_0,\id)}"', dashed, from=5-5, to=5-7]
      \arrow[from=5-7, to=7-7]
      \arrow["\eta"{description}, from=6-5, to=7-7]
    \end{tikzcd}
  \end{equation*}
  We can also do the same calculation on $H^r$ to get a corresponding diagram for
  it.

  But recall that we are interested in knowing what $\WhM(\trv_{\Gl(M)})$ looks
  like, which is to know what the 1-component of the homotopy equivalence above
  $\eta_B \colon B \to MD(B)$ look like when the 0-component is the trivial
  equivalence.
  So we do this now by replacing all of the 0-components in \Cref{eqn:sfr,eqn:Hs}
  with the corresponding components for the trivial equivalence.
  Doing this replacement for \Cref{eqn:sfr}, we set $s_0,f_0,r_0$ all to be the
  identity (so in particular $ME_0 = ME_0'$ and $\WhM_\eta E = \WhM_\eta E'$).
  Consequently, $\WhM_\eta s, \WhM_\eta f, \WhM_\eta r$ are all also identities.
  Then, $s_1,f_1,r_1$ is in the slice over $\WhM_\eta E$.
  So, \Cref{eqn:sfr} is now
  \begin{equation*}\label{eqn:sfr-refl0}\tag{$(s,f,r)\text{-}\refl_0$}
    \begin{tikzcd}[cramped, sep=small]
      {E_1} && {E_1'} \\
      & {\WhM_\eta E} && {ME_0} \\
      & B && {MD(B)}
      \arrow["{f_1}"{description}, dashed, from=1-1, to=1-3]
      \arrow[two heads, from=1-1, to=2-2]
      \arrow["{r_1}"{description}, curve={height=-12pt}, dashed, from=1-3, to=1-1]
      \arrow["{s_1}"{description}, curve={height=12pt}, dashed, from=1-3, to=1-1]
      \arrow[from=2-2, to=3-4, draw=none, "\lrcorner"{anchor=center, scale=1.5, pos=0.15}]
      \arrow[two heads, from=1-3, to=2-2]
      \arrow[from=2-2, to=2-4]
      \arrow[from=2-2, to=3-2]
      \arrow[from=2-4, to=3-4]
      \arrow["\eta"', from=3-2, to=3-4]
    \end{tikzcd}
  \end{equation*}

  We now do the same replacement of the 0-components of \Cref{eqn:Hs}.
  In addition to replacing the 0-component $s_0,f_0,r_0$ and the pullbacks
  $\WhM_\eta s, \WhM_\eta f, \WhM_\eta r$ all with identities, we also set
  $H_0^s$ to be $\refl$.
  Then, $\WhM_\eta(f_0s_0,\id)$ is the pullback of the diagonal, which is still
  the diagonal, and $\WhM_\eta(H^s)$ is $\WhM_\eta(\refl)$.
  Focusing in on the component over $B$, we get the diagram on the left.
  \begin{center}\scriptsize
    \begin{minipage}{0.45\linewidth}
      \begin{equation*}
        \begin{tikzcd}[cramped, column sep=tiny, row sep=small]
          &&& {(\Id_\eta E)_1} \\
          &&& {\WhM_{E \times_\eta E}(\Id_\eta E)} \\
          \\
          {E_1} &&& {E_1 \times_B E_1} && {\WhM_\eta(\Id_\eta E)} \\
          \\
          && {\WhM_\eta E} &&& {\WhM_\eta E \times_B \WhM_\eta E}
          \arrow["{\Whm_{E \times_\eta E}(\Id_\eta E)}", two heads, from=1-4, to=2-4]
          \arrow[from=2-4, to=4-4]
          \arrow[two heads, from=2-4, to=4-6]
          \arrow["\lrcorner"{anchor=center, pos=0.15, scale=1.5}, draw=none, from=2-4, to=6-6]
          \arrow["{H_1^s}", dashed, from=4-1, to=1-4]
          \arrow["{(f_1s_1,\id)}"{description}, dashed, from=4-1, to=4-4]
          \arrow[two heads, from=4-1, to=6-3]
          \arrow[two heads, from=4-4, to=6-6]
          \arrow["{\WhM_\eta(\ev_\partial)}"{description}, from=4-6, to=6-6]
          \arrow["{\WhM_\eta(\refl)}"{description}, from=6-3, to=4-6]
          \arrow["\Delta"', from=6-3, to=6-6]
        \end{tikzcd}
      \end{equation*}
    \end{minipage}
    \begin{minipage}{0.45\linewidth}
      \begin{equation*}
        \begin{tikzcd}[cramped, column sep=tiny, row sep=small]
          &&& {(\Id_\eta E)_1} \\
          &&& {\WhM_{E \times_\eta E}(\Id_\eta E)} \\
          && {\WhM_\eta(\refl)^*(\Id_\eta E_1)} \\
          {E_1} &&& {E_1 \times_B E_1} && {\WhM_\eta(\Id_\eta E)} \\
          && {E_1 \times_{\WhM_\eta E} E_1} \\
          && {\WhM_\eta E} &&& {\WhM_\eta E \times_B \WhM_\eta E}
          \arrow["{\Whm_{E \times_\eta E}(\Id_\eta E)}", two heads, from=1-4, to=2-4]
          \arrow["{(f_1s_1,\id)}"{description}, dashed, from=4-1, to=4-4]
          \arrow[from=2-4, to=4-4]
          \arrow[two heads, from=2-4, to=4-6]
          \arrow[from=3-3, to=1-4]
          \arrow[two heads, from=3-3, to=5-3]
          \arrow["{H_1^s}", dashed, from=4-1, to=1-4]
          \arrow[dotted, from=4-1, to=3-3]
          \arrow[dotted, from=4-1, to=5-3]
          \arrow[two heads, from=4-1, to=6-3]
          \arrow[two heads, from=4-4, to=6-6]
          \arrow["{\WhM_\eta(\ev_\partial)}"{description}, from=4-6, to=6-6]
          \arrow[from=5-3, to=2-4]
          \arrow[from=5-3, to=4-4]
          \arrow[from=5-3, to=6-3]
          \arrow["\lrcorner"{anchor=center, pos=0.15, scale=1.5}, draw=none, from=2-4, to=6-6]
          \arrow["\lrcorner"{anchor=center, pos=0.15, scale=1.5, rotate=45}, draw=none, from=3-3, to=2-4]
          \arrow["\lrcorner"{anchor=center, pos=0.05, scale=1.5, rotate=45}, draw=none, from=5-3, to=6-6]
          \arrow["{\WhM_\eta(\refl)}"{description}, from=6-3, to=4-6]
          \arrow["\Delta"', from=6-3, to=6-6]
        \end{tikzcd}
      \end{equation*}
    \end{minipage}
  \end{center}
  The diagram on the right is constructed from the internals of the diagrams on
  the left.
  We can first form the pullback in the bottom face to get
  $E_1 \times_{\WhM_\eta E} E_1$, which gives a connecting map
  $E_1 \times_{\WhM_\eta E} E_1 \xrightarrow{\quad} \WhM_{E \times_\eta
    E}(\Id_\eta E)$, along which one can pullback
  $(\Id_\eta E)_1 \xrightarrow{\quad} \WhM_{E \times_\eta E}(\Id_\eta E)$ to get
  $\WhM_\eta(\refl)^*(\Id_\eta E_1) \xrightarrow{\quad} E_1 \times_{\WhM_\eta E}
  E_1$.
  In other words, we have just pulled back
  \begin{equation*}
   (\Id_\eta E)_1 \twoheadrightarrow \WhM_{E \times_\eta E}(\Id_\eta E)
   \twoheadrightarrow \WhM_\eta(\Id_\eta E)
  \end{equation*}
  along $\WhM_\eta(\refl) \colon \WhM_\eta E \to \WhM_\eta(\Id_\eta E)$.
  Taking the pullback in the bottom face also gives the connecting map
  $E_1 \xrightarrow{\quad} E_1 \times_{\WhM_{E \times_\eta E}} E_1$ where the
  first component is again
  $f_1s_1 \colon E_1 \to E_1' \to E_1$.
  After this decomposition, we can then see that a choice of a homotopy
  $H_1^s \colon E_1 \xrightarrow{\quad} (\Id_\eta E)_1$ such that
  $\ev_\partial H_1^s = (f_1s_1,\id) \colon E_1 \to E_1 \times_B E_1$ as in the
  back triangle corresponds bijectively to a dotted map
  $E_1 \xrightarrow{\quad} \WhM_\eta(\refl)^*(\Id_\eta E_1)$ factoring the
  induced dotted map
  $(f_1s_1,\id) \colon E_1 \xrightarrow{\quad} E_1 \times_{\WhM_{E \times_\eta
      E}} E_1$.

  But by \Cref{prop:reedy-fib-id}\Cref{itm:reedy-fib-id-res}, one has that
  $\WhM_\eta(\refl)^*(\Id_\eta E_1) \twoheadrightarrow E_1 \times_{\WhM_\eta E}
  E_1$ is in fact just the $\Id$-type boundary evaluation
  $\Id_{\WhM_\eta E}(E_1) \twoheadrightarrow E_1 \times_{\WhM_\eta E} E_1$.
  Thus, a factorisation of $(f_1s_1,\id) \colon E_1 \to E_1 \times_B E_1$
  through $(\Id_\eta E)_1 \to E_1 \times_B E_1$ over $B$ is precisely a
  factorisation of $(f_1s_1,\id) \colon E_1 \to E_1 \times_{\WhM_\eta E} E_1$
  through
  $\Id_{\WhM_\eta E}(E_1) \twoheadrightarrow E_1 \times_{\WhM_\eta E} E_1$.
  In other words, $s_1$ is a homotopy section of $f_1$ over $\WhM_\eta E$.
  Repeating the same argument for $E'$, we see that $r_1$ is a homotopy
  retraction of $f_1$ over $\WhM_\eta E$.

  To summarise, we have calculated that $\WhM(\trv_{\Gl(M)})$ represents the
  presheaf taking each $B \in \bC_1$ to a $\pi_0$-fibration
  $E_0 \twoheadrightarrow DB \in \bC_0$ along with an $\Id_1$-homotopy
  isomorphism over $\eta_B^*(ME_0) = \WhM_\eta E$, as depicted in
  \Cref{eqn:sfr-refl0}, which is exactly as claimed in the statement of this
  \lcnamecref{lem:rel-mat-trv-rep}.
\end{proof}

\hspace{\parindent} With \Cref{prop:rel-match-trv-psfw} we now know that
$\WhM(\trv_{\Gl(M)})$ of \Cref{eqn:gl-trv} represents.
To prove \Cref{prop:rel-match-trv-psfw}, the next step is to know what
$\bP_{M\pi^\ds_0}(\HIso_{\McU_1^\ds}^{\Id_1}(\tMcU_1^\ds))$ of \Cref{eqn:gl-trv}
from \Cref{prop:rel-match-trv-psfw} represents.
To do so, we need to know what the polynomial functor associated with
$M\pi^\ds_0 \colon M\tMcU^\ds_0 \xrightarrow{\quad} M\McU^\ds_0$ does to objects
of $\bC_1$.
The adjoint pair assumption $D \dashv M$ also makes for a simple description.

\begin{lemma}\label{lem:gl-poly-rep}
  Suppose $M \colon \bC_0 \to \bC_1$ has a left adjoint
  $D \colon \bC_1 \to \bC_0$.

  The image of an object $Z \in \bC_1$ under the polynomial functor
  $\bP_{M\pi^\ds_0} \colon \bC_1 \to \bC_1$ associated with
  $M\pi^\ds_0 \colon M\tMcU^\ds_0 \to M\McU^\ds_0 \in \bC_1$ represents the presheaf
  \begin{equation*}
    \left(
      \coprod_{E_0 \colon DB \to \McU^\ds_0} \bC_1(B.M(DB.E_0), Z)
    \right)_{B \in \bC_1}
  \end{equation*}
  where $B.M(DB.E_0)$ is obtained by the following consecutive pullback
  \begin{equation*}
    \begin{tikzcd}[cramped, sep=small]
      {B.M(DB.E_0)} & {M(DB.E_0)} & {M\tMcU^\ds_0} \\
      B & {MD(B)} & {M\McU^\ds_0}
      \arrow[from=1-1, to=1-2]
      \arrow[from=1-1, to=2-1]
      \arrow[from=1-2, to=1-3]
      \arrow[from=1-2, to=2-2]
      \arrow["\lrcorner"{anchor=center, pos=0.15, scale=1.5}, draw=none, from=1-1, to=2-2]
      \arrow["\lrcorner"{anchor=center, pos=0.15, scale=1.5}, draw=none, from=1-2, to=2-3]
      \arrow[from=1-3, to=2-3]
      \arrow["\eta"', from=2-1, to=2-2]
      \arrow["{ME_0}"', from=2-2, to=2-3]
    \end{tikzcd}
  \end{equation*}
  via choices of pullbacks selected by the universe structure
  $\tMcU_1 \to \McU$.
\end{lemma}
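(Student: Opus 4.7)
The plan is to unfold the definition of the polynomial functor and chain together three adjunctions: the pushforward adjunction $(M\pi_0^\ds)^* \dashv (M\pi_0^\ds)_*$, the forgetful--base adjunction $(M\McU_0^\ds)_!$, and the ambient $D \dashv M$ adjunction, with one application of the pullback lemma at the end to identify the resulting pullback.

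First I would recall that by \Cref{def:expn-polynomial},
\begin{equation*}
  \bP_{M\pi_0^\ds}(Z) = (M\McU_0^\ds)_!\bigl((M\pi_0^\ds)_*(M\tMcU_0^\ds \times Z)\bigr).
\end{equation*}
Then I would compute $\bC_1(B, \bP_{M\pi_0^\ds}(Z))$ step by step. Since $(M\McU_0^\ds)_!$ is just forgetting the base, maps $B \to \bP_{M\pi_0^\ds}(Z)$ split as a coproduct over all maps $B \to M\McU_0^\ds$ of maps into $(M\pi_0^\ds)_*(M\tMcU_0^\ds \times Z)$ over $M\McU_0^\ds$. Applying $D \dashv M$ reindexes this coproduct over all $E_0 \colon DB \to \McU_0^\ds$, with the structure map $B \to M\McU_0^\ds$ being $ME_0 \cdot \eta_B$.

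Next, for each such $E_0$, the pushforward adjunction converts a map over $M\McU_0^\ds$ with codomain $(M\pi_0^\ds)_*(M\tMcU_0^\ds \times Z)$ into a map over $M\tMcU_0^\ds$ of the form $(M\pi_0^\ds)^*B \to M\tMcU_0^\ds \times Z$. The projection to the first factor is forced, so this data is equivalent to a plain map $(M\pi_0^\ds)^*B \to Z$ in $\bC_1$. At this point I would invoke the pullback lemma to factor
\begin{equation*}
  \begin{tikzcd}[cramped]
    (M\pi_0^\ds)^*B \ar[r] \ar[d] & M(DB.E_0) \ar[r] \ar[d] & M\tMcU_0^\ds \ar[d] \\
    B \ar[r, "\eta_B"'] & MDB \ar[r, "ME_0"'] & M\McU_0^\ds
  \end{tikzcd}
\end{equation*}
where the right-hand square is $M$ applied to the selected pullback defining $DB.E_0$ (using that $M$ is lex), and the left-hand square is therefore a selected pullback of $\tMcU_1 \to \McU_1$ since the right is. Hence $(M\pi_0^\ds)^*B \cong B.M(DB.E_0)$, yielding the claimed presheaf.

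The argument has no real obstacle: the only delicate point is making sure that the identification $(M\pi_0^\ds)^*B \cong B.M(DB.E_0)$ is done using pullbacks selected by the universe structure on $\pi_1$ (as asserted in the statement), so that the representation is not only up to isomorphism but strictly compatible with the chosen universe structures used elsewhere. This is immediate from the pullback lemma together with lex-ness of $M$ and the assumption in the theorem that such selected pullbacks are taken as part of the universe structure. Naturality in $B$ is automatic from naturality of all the adjunction units and counits involved.
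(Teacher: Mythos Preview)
Your proposal is correct and follows essentially the same approach as the paper: unwind the pushforward adjunction to express $\bP_{M\pi_0^\ds}(Z)$ as a coproduct over maps $B \to M\McU_0^\ds$, reindex via $D \dashv M$ to a coproduct over $E_0 \colon DB \to \McU_0^\ds$, and then identify the pullback using that $M$ preserves limits. Your version is simply more explicit about the mechanics of each adjunction and the pullback-lemma decomposition, whereas the paper compresses these into ``universal property of the pushforwards'' and ``$M$ is continuous as a right adjoint''.
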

\begin{proof}
  By universal property of the pushforwards, $\bP_{M\pi^\ds_0}(Z)$ represents the
  presheaf taking each $B \in \bC_1$ to
  \begin{equation*}
    \coprod_{B \to M\tMcU^\ds_0} \bC_1(B \times_{M\McU^\ds_0} M\tMcU^\ds_0, Z)
  \end{equation*}
  By adjointness, each $B \to M\tMcU^\ds_0$ factors uniquely through the unit
  $\eta_B \colon B \to MD(B)$ via its adjoint transpose
  $E_0 \colon DB \to \McU^\ds_0$ as follows
  \begin{equation*}
    B \xrightarrow{\eta} MD(B) \xrightarrow{ME_0} M\McU^\ds_0
  \end{equation*}
  The result now follows because $M$ is also continuous as a right adjoint.
\end{proof}

We can now complete the proof of \Cref{prop:rel-match-trv-psfw}.
\begin{proposition*}{\ref{prop:rel-match-trv-psfw}}
  \relmatchtrvpsfw
\end{proposition*}
\begin{proof}
  The result follows from \Cref{lem:gl-poly-rep} used on
  $\HIso_{\McU_1^\ds}^{\Id_1}(\tMcU_1^\ds)$, which gives the same presheaf of
  \Cref{lem:rel-mat-trv-rep} says $\WhM(\trv_{\Gl(M)})$ represents.
\end{proof}


\subsubsection{Proof of Pointed Univalence}
One now recalls the statement of pointed univalence for Reedy fibrations because
one can prove it.
\begin{theorem*}{\ref{thm:glue-universe-ua}}
  \glueua
  \def\qedsymbol{\relax}
\end{theorem*}
\begin{proof}
  Because of the intensional type theory structure on $\tau$ and $\tau^\ds$
  provided by
  \Cref{prop:glue-universe-unit,prop:glue-universe-sigma,prop:glue-universe-pi,thm:glue-universe-id},
  it follows by \Cref{cor:src-dest-int-uni-fib} that it suffices to show that
  the map
  \begin{equation*}
    \trv_{\Gl(M)} \colon \McV^\ds \to \HIso_{\McV^\ds}^{\Id^\ds_{\Gl(M)}}(\tMcV^\ds)
  \end{equation*}
  obtained by applying \Cref{constr:hiso,constr:trv-htpy} on the internal
  universal Reedy fibration $\tau^\ds$ with the internal $\Id$-type
  $\Id_{\Gl(M)}^\ds$ of \Cref{thm:glue-universe-id} lifts uniformly on the left
  against $\tau \colon \tMcV \to \McV$.

  By \Cref{lem:gl-tc-ptwise-tc,lem:ptwise-tc-gl-tc}, the problem is now
  equivalent to showing that
  \begin{equation*}
    (\trv_{\Gl(M)})_i \colon \McV_i^\ds \to (\HIso_{\McV^\ds}^{\Id^\ds_{\Gl(M)}}(\tMcV^\ds))_i
  \end{equation*}
  lifts uniformly on the left against $\tau \colon \tMcV \to \McV \in \Gl(M)$
  while getting to assume that
  \begin{equation*}
    \trv_i \colon \McU^\ds_i \to \HIso_{\pi^\ds_i}(\Id^\ds_i) \in \bC_i
  \end{equation*}
  lifts uniformly on the left against $\pi_i \colon \tMcU_i \to \McU_i \in \bC_i$.
  Lifting for the 0-component is immediate because $\Gl(M) \to \bC_0$ preserves
  all logical structures, so the 0-component of $\trv_{\Gl(M)}$ is just
  $\trv_0 \colon \McU_0 \to \HIso_{\McU_0}^{\Id_0}(\tMcU_0)$.

  For the 1-component, we refer to the factorisation of $(\trv_{\Gl(M)})_1$ via
  the relative matching map from \Cref{eqn:gl-trv}.
  \begin{equation*}
    \McV_1^\ds \xrightarrow{\Whm(\trv_{\Gl(M)})} \WhM(\trv_{\Gl(M)}) \xrightarrow{\quad} \HIso_{\McV^\ds}^{\Id^\ds_{\Gl(M)}}(\tMcV^\ds)_1
  \end{equation*}
  In particular, the second map is a pullback of $M(\trv_0)$ along
  $\HIso_{\McV^\ds}^{\Id^\ds_{\Gl(M)}}(\tMcV^\ds) \in \Gl(M)$ viewed as a map in
  $\bC_1$.
  By the $\Sigma,\Pi,\Id$-structures combined with
  \Cref{cor:src-dest-int-uni-fib},
  $\HIso_{\McV^\ds}^{\Id^\ds_{\Gl(M)}}(\tMcV^\ds) \in \Gl(M)$ is a Reedy fibrant
  object of $\Gl(M)$ and so it is a $\pi_1$-fibration of $\bC_1$.
  In other words, the second map
  $\WhM(\trv_{\Gl(M)}) \xrightarrow{\quad}
  \HIso_{\McV^\ds}^{\Id^\ds_{\Gl(M)}}(\tMcV^\ds)_1$ is the pullback of
  $M(\trv_0)$ along a $\pi_1$-fibration.
  But then $\trv_0$ lifts uniformly against $\pi_0$ by pointed univalence and so
  $M(\trv_0)$ lifts uniformly on the left of $\pi_1$ by
  \Cref{asm:mrq}\Cref{itm:mrq-tc}.
  Therefore,
  $\WhM(\trv_{\Gl(M)}) \xrightarrow{\quad}
  \HIso_{\McV^\ds}^{\Id^\ds_{\Gl(M)}}(\tMcV^\ds)_1$ is also equipped with a
  uniform lifting structure against $\pi_1$ by the $\Pi$-type structure.

  By \Cref{prop:rel-match-trv-psfw} the first map $\Whm(\trv_{\Gl(M)})$, which
  is the relative matching map of the trivial homotopy equivalence, is the image
  of $\trv_1$ under the polynomial associated with the $\pi_1$-fibration
  $\pi_1^\ds$.
  Pointed univalence on $\bC_1$ gives that $\trv_1$ is in the left class to
  $\pi_1$, so we see that $\Whm(\trv_{\Gl(M)})$ is also in the left class by
  \Cref{lem:ptd-funext-sdf} applying on the pointed functionality structure of
  $\bC_1$.
\end{proof}



\section{Inverse Diagrams of Universe Categories}\label{sec:inverse-universe}
We now show that the structure of intensional type theory, pointed functional
extensionality and pointed univalence is stable under the formation of inverse
diagrams.

We first recall the definition of an inverse category.
\begin{definition}
  An \emph{inverse category} $\McI$ is a category such that there exists a
  grading on its objects by a degree function $\deg \colon \ob\McI \to \bN$ such
  that if $f \colon i \to j \in \McI$ is not an identity map, then
  $\deg i > \deg j$.

  For each $n \in \bN$, denote by:
  \begin{enumerate}
    %
    %
    \item $\McI_{\leq n}$ the full subcategory of $\McI$
    spanned by the objects of degree strictly at most $n$
    \item $\McI_{= n}$ the discrete subcategory of $\McI$ spanned by objects
    of degree exactly $n$
    \item $\partial(i/\McI)$ the full subcategory of $i/\McI$ excluding the
    identity map.
  \end{enumerate}
\end{definition}

Central to the theory of inverse categories is the concept of coskeletons and
its derived matching constructions, which we now recall.
\begin{definition}\label{def:deg-match}
  Let $\McI$ be a finite inverse category and $\bC$ be a finitely complete
  category.
  For each $n \in \bN$, the \emph{$(n+1)$-th coskeleton} functor is defined as the right
  adjoint to the restriction along $\McI_{\leq n} \hookrightarrow \McI_{\leq n+1}$
  \begin{equation*}
    \begin{tikzcd}[cramped]
      {\bC^{\McI_{\leq n}}} & {\bC^{\McI_{\leq n+1}}} & \bC^{\McI_{=n+1}}
      \arrow[""{name=0, anchor=center, inner sep=0}, "{\cosk_{n+1}\relax}", shift left=1.5, from=1-1, to=1-2]
      \arrow[""{name=1, anchor=center, inner sep=0}, "{(-)|_{\leq n}}", shift left=1.5, from=1-2, to=1-1]
      \arrow["{(-)|_{=n+1}}", from=1-2, to=1-3]
      \arrow["\dashv"{anchor=center, rotate=90}, draw=none, from=1, to=0]
    \end{tikzcd}
  \end{equation*}

  The \emph{$(n+1)$-th matching object functor} is the composite
  \begin{equation*}
    M_{n+1}(-) \coloneqq (\cosk_{n+1}-)|_{={n+1}} \colon \bC^{\McI_{\leq n}} \to
    \bC^{\McI_{\leq n+1}} \to \bC^{\McI_{=n+1}}
  \end{equation*}
  and the \emph{$(n+1)$-th matching map} is the natural transformation
  \begin{equation*}
    m_{n+1}(-) \colon (-)|_{=n+1} \to M_{n+1}((-)|_{\leq n}) \in \Cat(\bC^{\McI_{\leq n+1}}, \bC^{\McI_{=n+1}})
  \end{equation*}
  obtained by restricting the unit of the coskeleton adjunction along
  $(-)|_{n+1}$.

  For each $i \in \McI_{=n+1}$, one also has the \emph{matching object functor}
  and \emph{matching map} at $i$ given by applying
  $\ev_i \colon \bC^{\McI_{=n+1}} \to \bC^{\set{i}} \cong \bC$ to $M_{n+1}$ and
  $m_{n+1}$.
  That is,
  \begin{equation*}
    M_i(-) \coloneqq M_{n+1}(-)_i \colon \bC^{\McI_{\leq n}} \to \bC^{\McI_{=n+1}} \to \bC
  \end{equation*}
  and
  \begin{equation*}
    m_i(-) \colon (-)_i \to M_i((-)|_{\leq n}) \in \Cat(\bC^{\McI_{\leq n+1}}, \bC)
  \end{equation*}
\end{definition}

For ease of notation, we often drop the restriction $(-)|_{\leq n}$ when its
application is clear from context.
Then, for each $X \in \bC^{\McI_{\leq n+1}}$, and $i \in \McI_{=n+1}$, the
matching object is the limit
\begin{equation*}
  M_iX = \lim(\partial(i/\McI) \to \McI \xrightarrow{X} \bC)
\end{equation*}
and the matching map $m_iX \colon X_i \to M_iX$ is the unique map induced by the
cone $(X_f \colon X_i \to X_j)_{f \colon i \to j \in \partial(i/\McI)}$.

The inductive nature in which inverse diagrams are constructed can also be
stated in terms of gluing categories as follows.

\begin{proposition}[{\cite[Proposition 4.7]{fkl24}}]\label{prop:inv-glue}
  If $\bC$ is a finitely complete category then for each $n \in \bN$, one has an
  equivalence of categories
  \begin{equation*}
    \bC^{\McI_{\leq n+1}} \simeq \Gl\left(\bC^{\McI_{\leq n}} \xrightarrow{M_{n+1}} \bC^{\McI_{=n+1}}\right)
  \end{equation*}
  taking each $B \in \bC^{\McI_{\leq n+1}}$ to the absolute matching map
  $m_{n+1} \colon B|_{=n+1} \to M_{n+1}(B|_{\leq n})$.
  \def\endingmark{\qedsymbol}
\end{proposition}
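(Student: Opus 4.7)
The plan is to construct an explicit inverse functor $G \colon \Gl(M_{n+1}) \to \bC^{\McI_{\leq n+1}}$ to the functor $F$ described in the statement, and then to check that both composites are naturally isomorphic to the identity. The functor $F$ itself is read off the statement: it sends $B$ to the triple $(B|_{\leq n}, B|_{=n+1}, m_{n+1}(B))$, with morphisms handled by restriction together with the naturality of $m_{n+1}$. The real content is to describe $G$ on a gluing datum $(X_0, X_1, X \colon X_1 \to M_{n+1}(X_0))$ by specifying it to agree with $X_0$ on $\McI_{\leq n}$ and with $X_1$ on $\McI_{=n+1}$; what remains is to define its action on morphisms.

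The key structural observation, using the inverse category axiom of $\McI$, is that in $\McI_{\leq n+1}$ every non-identity morphism whose source has degree exactly $n+1$ must land in an object of strictly smaller degree, and there are no morphisms into a degree-$(n+1)$ object except identities. Hence, specifying a functor on $\McI_{\leq n+1}$ amounts to specifying a functor on $\McI_{\leq n}$, a family of objects indexed by $\McI_{=n+1}$, and for each $i \in \McI_{=n+1}$ a family of morphisms $(X_1)_i \to (X_0)_j$ indexed by $f \colon i \to j \in \partial(i/\McI)$, compatible with composition in $\partial(i/\McI)$.

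By the very definition of $M_i(X_0)$ as the limit of the composite $\partial(i/\McI) \to \McI_{\leq n} \xrightarrow{X_0} \bC$ from \Cref{def:deg-match}, such a compatible family is in natural bijection with a morphism $(X_1)_i \to M_i(X_0)$, which is precisely the $i$-th component of the gluing map $X$. Assembling these componentwise defines $G(X_0, X_1, X) \in \bC^{\McI_{\leq n+1}}$ on objects, and the same universal property, applied degreewise, yields the behavior on morphisms of gluing data, producing a well-defined functor $G$.

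The main obstacle is the bookkeeping to verify that $F \circ G$ and $G \circ F$ are both naturally isomorphic to identity functors. One direction, $F \circ G = \id$, holds on the nose by construction: restriction recovers the two pieces of the gluing datum that were put in, while $m_{n+1}$ recovers the cone encoded in the gluing map by the universal property of the matching object. The other direction, $G \circ F \cong \id$, requires checking that an arbitrary diagram $B \in \bC^{\McI_{\leq n+1}}$ is reconstructed up to a canonical isomorphism from its restrictions and matching map; this again follows from the same universal property identifying the family $(B_f)_{f \in \partial(i/\McI)}$ with the matching map $m_i(B)$. There is no deep computation at any point; the technicality lies purely in tracking indices and naturality squares cleanly.
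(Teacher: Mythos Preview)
Your proof is correct and follows the standard argument for this equivalence. Note, however, that the paper does not actually give a proof of this statement: it is cited from \cite[Proposition 4.7]{fkl24} and closed immediately with a \qedsymbol, so there is no in-paper proof to compare against. Your argument via the explicit inverse functor $G$, built using the universal property of the matching object $M_i(X_0) = \lim(\partial(i/\McI) \to \McI_{\leq n} \xrightarrow{X_0} \bC)$ together with the structural observation that $\McI_{\leq n+1}$ has no non-identity morphisms into degree-$(n+1)$ objects, is exactly the expected one and would serve as a complete proof.
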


Our goal is to induce from a universe model $\bC$ of intensional type theory
another universe model on the category $\bC^\McI$ of inverse diagrams of shape
$\McI$ and show the induced diagram model has pointed functional extensionality
and pointed univalence.
Types in the diagram model $\bC^\McI$ are supposed to model telescopes of types
satisfying certain definitional inter-dependencies specified by the shape
$\McI$.
The notion of such inter-dependencies are formalised as Reedy fibrations defined
using the absolute matching objects and relative matching maps, which we now
recall.

\begin{definition}\label{def:inv-reedy-fib}
  Given a map $p \colon E \to B \in \bC^\McI$, its \emph{relative matching
    object} is the object over $B_i$ given by
  $\WhM_ip \to B_i \in \sfrac{\bC}{B_i}$ obtained as rebase of $M_ip$ along
  $m_iB$ and its \emph{relative matching map} is the connecting map
  $\Whm_ip \colon E_i \to \WhM_ip$ into the pullback, observed as follows.
  \begin{equation*}
    \begin{tikzcd}[cramped]
      {E_i} \\
      & {\WhM_ip} & {M_iE} \\
      & {B_i} & {M_iB}
      \arrow["{\Whm_ip}"{description}, dashed, from=1-1, to=2-2]
      \arrow["{m_iE}", curve={height=-12pt}, from=1-1, to=2-3]
      \arrow["{p_i}"', curve={height=12pt}, from=1-1, to=3-2]
      \arrow[from=2-2, to=2-3]
      \arrow[from=2-2, to=3-2]
      \arrow["\lrcorner"{anchor=center, pos=0.15, scale=1.5}, draw=none, from=2-2, to=3-3]
      \arrow["{M_ip}", from=2-3, to=3-3]
      \arrow["{m_iB}"', from=3-2, to=3-3]
    \end{tikzcd}
  \end{equation*}

  Suppose $\pi \colon \tMcU \to \McU \in \bC$ is a universal map of $\bC$.
  Then, map $p \colon E \to B \in \bC^\McI$ is a \emph{Reedy $\pi$-fibration}
  when all of its relative matching maps $\Whm_ip$ are $\pi$-fibrations.

  A \emph{(weakly) classifying Reedy $\pi$-fibration} is a map
  $\pi^\McI \colon \tMcU^\McI \to \McU^\McI\in \bC^\McI$ such that $E \to B$ is
  a $\pi^\McI$-fibration if and only if it is a Reedy $\pi$-fibration.
\end{definition}

For the rest of this section, we fix a Reedy category $\McI$ and a finitely
complete category $\bC$ with a universal map
\begin{align*}
  \pi \colon \tMcU \to \McU \in \bC
\end{align*}
equipped with an internal universe
\begin{align*}
  \pi^\ds \colon \tMcU^\ds \to \McU^\ds \in \bC
\end{align*}
We also assume that $\pi$ and $\pi^\ds$ are equipped with
$\Unit,\Sigma,\Pi,\Id$-type structures denoted $\Unit,\Sigma,\Pi,\Id$ and
$\Unit^\ds,\Sigma^\ds,\Pi^\ds,\Id^\ds$ respectively.

The first task is to show that these type-theoretic structures on $\bC$ all
transfer to the corresponding structures on inverse diagram categories
$\bC^\McI$ where $\McI$ is inverse.
In light of the presentation of inverse diagram categories as gluing categories
as from \Cref{prop:inv-glue}, we accomplish this by applying the series of
results in
\Cref{prop:glue-universe-unit,prop:glue-universe-sigma,prop:glue-universe-pi,thm:glue-universe-id}.
However, these results all depend on the ``right Quillen'' properties of the
matching object functor, as in \Cref{asm:mrq}.
Therefore, we start by a series of results building up to the verification that
\Cref{asm:mrq} is satisfied by each $M_{n+1}$.

\begin{lemma}[{\cite[Proposition 2.4]{inv-psfw}}]\label{lem:mat-fib}
  Fix $i \in \McI_{=n+1}$.
  The functor $M_i \colon \bC^{\McI_{\leq n}} \to \bC$ computing the absolute
  matching object at $i$ sends Reedy $\pi$-fibrations to $\pi$-fibrations and
  Reedy $\pi^\ds$-fibrations to $\pi^\ds$-fibrations.
  \def\endingmark{\qedsymbol}
\end{lemma}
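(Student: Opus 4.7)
The plan is to prove a strengthening by induction on the cardinality of a finite inverse category: for any finite inverse category $\McK$ and any map $p \colon E \to B \in \bC^\McK$ such that every relative matching map $\Whm_k p$ (for $k \in \McK$) is a $\pi$-fibration, the induced limit map $\lim_\McK p$ is a $\pi$-fibration. The stated lemma is then the special case $\McK = \partial(i/\McI)$, the hypothesis for $\McK$ being inherited from Reedy fibrancy of $p$ via the canonical identification of $\partial(k_0/\McK)$ with $\partial(j_0/\McI)$ for $k_0 = (j_0, f_0)$.

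First I would handle the base case $\McK = \emptyset$: here $\lim_\McK p = \id_1$, a $\pi$-fibration by the $\Unit$-type structure on $\pi$. For the inductive step, I would pick $k_0 \in \McK$ of maximum degree. Since $\McK$ is inverse, $k_0$ has no non-identity incoming morphisms, so $\McK' \coloneqq \McK \setminus \{k_0\}$ remains inverse with strictly smaller cardinality, and the hypothesis on $p|_{\McK'}$ is inherited because $\partial(k/\McK') = \partial(k/\McK)$ for all $k \in \McK'$. The key ingredient is the pullback decomposition
\begin{equation*}
  \lim_\McK X \;\cong\; \lim_{\McK'} X \;\times_{\lim_{\partial(k_0/\McK)} X}\; X_{k_0},
\end{equation*}
whose right leg is the matching map $m_{k_0} X$ and whose left leg is the restriction of the tautological cone on $\lim_{\McK'} X$ along the inclusion $\partial(k_0/\McK) \hookrightarrow \McK'$.

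Applying the decomposition to both $E$ and $B$ and invoking the pullback lemma, I would exhibit $\lim_\McK p$ as the composite
\begin{equation*}
  \lim_\McK E \;\longrightarrow\; \lim_{\McK'} E \;\times_{\lim_{\McK'} B}\; \lim_\McK B \;\longrightarrow\; \lim_\McK B,
\end{equation*}
whose first map is the pullback of the relative matching map $\Whm_{k_0} p \colon E_{k_0} \to \WhM_{k_0} p$ (a $\pi$-fibration by hypothesis) along the cone restriction $\lim_{\McK'} E \to \lim_{\partial(k_0/\McK)} E$, and whose second map is the pullback of $\lim_{\McK'} p$ (a $\pi$-fibration by the inductive hypothesis) along $\lim_\McK B \to \lim_{\McK'} B$. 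The conclusion then follows from closure of $\pi$-fibrations under pullback and composition (\Cref{prop:Pi-Sigma-generic}). The same argument applies verbatim to $\pi^\ds$.

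The main obstacle will be carefully verifying the pullback decomposition and identifying the first map of the factorisation as a pullback of $\Whm_{k_0} p$; the latter reduces via the pullback lemma to the defining decomposition $\WhM_{k_0} p \cong \lim_{\partial(k_0/\McK)} E \times_{\lim_{\partial(k_0/\McK)} B} B_{k_0}$ once one rewrites $\lim_{\McK'} E \times_{\lim_{\McK'} B} \lim_\McK B$ as $\lim_{\McK'} E \times_{\lim_{\partial(k_0/\McK)} B} B_{k_0}$.
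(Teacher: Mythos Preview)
Your argument is correct. The paper does not give its own proof of this lemma; it defers entirely to the cited reference \cite[Proposition~2.4]{inv-psfw}, as signalled by the bare \textsf{qed} symbol. From the way that result is invoked later in the proof of \Cref{lem:mat-tc}, the cited argument is organised a little differently: it rewrites $M_iE \to M_iB$ as the limit over $\partial(i/\McI)$ of the pulled-back diagram $(\proj_f^*E_j)_{f \colon i \to j}$ in $\sfrac{\bC}{M_iB}$ and then shows this diagram is itself Reedy fibrant over $M_iB$. That is essentially the same induction repackaged over a fixed base; your direct induction on $\lvert\McK\rvert$ with the pullback decomposition $\lim_\McK X \cong \lim_{\McK'} X \times_{M_{k_0}X} X_{k_0}$ is a clean and self-contained alternative.
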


\begin{lemma}\label{lem:mat-tc}
  Let $\tau \colon \tMcV \to \McV \in \bC^{\McI_{\leq n+1}}$ be a classifying
  Reedy $\pi$-fibration.

  Fix $i \in \McI_{=n+1}$.
  For every $B \in \bC^{\McI_{\leq n}}$ and map
  $s \colon E \to Y \in \sfrac{\bC^{\McI_{\leq n}}}{B}$ in the Reedy
  $\pi$-fibrant slice over $B$, one has a map
  \begin{equation*}
    (s \fracsquareslash{B} (B \times \tau)) \xrightarrow{\quad} (M_is \fracsquareslash{M_iB} (M_iB \times \pi))
  \end{equation*}
  subject to the condition that for each $f \colon B' \to B$, the following
  diagram commutes.
  \begin{equation*}
    \begin{tikzcd}[cramped, row sep=small, column sep=huge]
      (s \fracsquareslash{B} (B \times \tau))
      \ar[r, "{}"]
      \ar[d, "{\text{\cite[Construction 3.2]{struct-lift}}}"]
      &
      (M_is \fracsquareslash{M_iB} (M_iB \times \pi))
      \ar[d, "{\text{\cite[Construction 3.2]{struct-lift}}}"]
      \\
      (f^*s \fracsquareslash{f^*B} (f^*B \times \tau))
      \ar[r, "{}"]
      &
      (M_i(f^*s) \fracsquareslash{M_iB'} (M_iB' \times \pi))
    \end{tikzcd}
  \end{equation*}
\end{lemma}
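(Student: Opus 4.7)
The plan is to prove this by structural induction on the size of the finite inverse category $\partial(i/\McI)$, which has degree at most $n$.

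In the base case, $\partial(i/\McI) = \emptyset$, forcing $M_iX = 1$; here $M_is$ is the identity on the terminal object and the required map of lifting structures is the canonical trivial one (sending any input to the unique available lifting structure on an identity map).

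For the inductive step, I choose an object $j \in \partial(i/\McI)$ of maximal degree (so that no nonidentity morphism of $\partial(i/\McI)$ ends at $j$), and let $\partial'(i/\McI) := \partial(i/\McI) \setminus \{j\}$ with matching functor $M_{i'}$. There is a canonical pullback decomposition
\begin{equation*}
M_iX \;\cong\; M_{i'}X \times_{N_jX} X_j,
\end{equation*}
where $N_jX$ is the limit of $X$ restricted to the full subcategory of $\partial'(i/\McI)$ spanned by codomains of nonidentity morphisms out of $j$ (so this limit naturally receives a map both from $M_{i'}X$ by restriction and from $X_j$ via its matching map). With this decomposition applied to $E$, $Y$, and $B$, the map $M_is$ arises as the induced map on pullbacks, and I build a lifting structure on it out of lifting structures on the three constituents: on $M_{i'}s$, by the inductive hypothesis applied to the smaller category $\partial'(i/\McI)$; on $N_js$, by the inductive hypothesis applied to the still-smaller subcategory defining $N_j$; and on $s_j$, by a componentwise evaluation argument analogous to \Cref{lem:gl-tc-ptwise-tc}. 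The cube lemma of \Cref{lem:trv-cof-cube} is then used to combine these three lifting structures into one on $M_is$ against $M_iB \times \pi$.

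Stability under pullback along $f \colon B' \to B$ is inherited from the inductive stability of each constituent together with the naturality of the pullback decomposition of $M_i$ and of the cube combination.

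The main anticipated obstacle is arranging the hypotheses of \Cref{lem:trv-cof-cube}, since $s_j$ and $N_js$ are a priori only maps between $\pi$-fibrant objects rather than $\pi$-fibrations themselves. Overcoming this will likely require a further decomposition of each face of the cube through the matching maps $m_jE$ and $m_jY$, which genuinely are $\pi$-fibrations by Reedy fibrancy together with \Cref{prop:Pi-Sigma-generic} — so the single cube may need to be replaced by an iterated application of the cube argument to put the construction into a form where the fibration hypothesis applies verbatim.
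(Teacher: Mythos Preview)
Your inductive approach via the cube lemma is sound in outline but takes a genuinely different route from the paper. The paper does not induct: it first rewrites $M_iE \to M_iB$ and $M_iY \to M_iB$ as limits of the diagrams $(\proj_f^*E_j)_{f \in \partial(i/\McI)}$ and $(\proj_f^*Y_j)_{f \in \partial(i/\McI)}$ in the slice over $M_iB$, observes (via \cite[Proposition 2.4]{inv-psfw}) that these diagrams are Reedy fibrant and valued in fibrant objects, and then invokes \cite[Lemma 11.8]{shu15} in one shot. Your induction is essentially an inlined proof of that Shulman lemma, so the two approaches converge; the paper's is shorter because it outsources the induction.

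Two technical points deserve attention in your version. First, your stated obstacle is slightly off: \Cref{lem:trv-cof-cube} does not require the connecting maps $u_1,u_3$ (your $N_js$, $s_j$) to be fibrations; the fibration hypothesis is on the \emph{right-face structure maps} $X_3 \to X_1$ and $Y_3 \to Y_1$, i.e.\ on $E_j \to N_jE$ and $Y_j \to N_jY$. Second, and more substantively, you never place the eight cube vertices over a common base, which the cube lemma requires. The clean fix for both issues is the paper's move: pull everything back to the slice over $M_iB$ via the cone legs $\proj_f$. There the diagram $(\proj_f^*E_j)_f$ is Reedy fibrant by \cite[Proposition 2.4]{inv-psfw}, so the matching map you need on the right face is a $\pi$-fibration on the nose, and your proposed ``further decomposition through $m_jE$, $m_jY$'' becomes unnecessary. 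With that single adjustment your induction goes through and is equivalent to the paper's argument.
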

\begin{proof}
  By \cite[Proposition 2.1]{inv-psfw}, the map $M_iE \to M_iB$ viewed as an
  object over $M_iB$ is the limit of the diagram
  \begin{equation*}
    (\proj_f^*E_j \to M_iB)_{f \colon i \to j \in \partial(i/\McI)}
  \end{equation*}
  over $M_iB$, where each $\proj_f \colon M_iB \to B_j$ is limiting leg of the
  matching object cone.
  In \cite[Proposition 2.4]{inv-psfw}, it was proven that Reedy fibrancy of $E$
  results in Reedy fibrancy of the diagram
  $(\proj_f^*E_j \to M_iB)_{f \colon i \to j \in \partial(i/\McI)}$ over $M_iB$.
  Moreover, \cite[Proposition 2.4]{inv-psfw} proves that each $E_j \to B_j$ is a
  $\pi$-fibration and so
  $(\proj_f^*E_j \to M_iB)_{f \colon i \to j \in \partial(i/\McI)}$ is a Reedy
  fibrant diagram valued in the fibrant slice over $M_iB$.
  Applying the same argument to $M_iY \to M_iB$, we see that it is also the
  limit of a Reedy fibrant diagram
  \begin{equation*}
    (\proj_f^*Y_j \to M_iB)_{f \colon i \to j \in \partial(i/\McI)}
  \end{equation*}
  over $M_iB$ valued in the fibrant slice over $M_iB$.

  Therefore, $M_is \colon M_iE \to M_iY \in \sfrac{\bC}{M_iB}$ is the map
  \begin{equation*}
    M_is = \lim_{f \colon i \to j \in \partial(i/\McI)}(\proj_f^*E_j \xrightarrow{\proj_f^*s} \proj_f^*Y_j)
  \end{equation*}
  obtained by applying the limit functor to a natural transformation between
  Reedy fibrant diagrams valued in fibrant objects over $M_iB$.
  But the $\Unit,\Sigma,\Pi,\Id$-type structure of $\bC$ gives rise to a
  type-theoretic fibration category structure on each fibrant slice of $\bC$.
  Therefore, by \cite[Lemma 11.8]{shu15}, it follows that $M_is$ lifts against
  $\pi$.
  The required compatibility conditions in the statement of this
  \lcnamecref{lem:mat-tc} is routinely verified by referring to the computations
  involved in \cite{shu15}.
\end{proof}

\begin{lemma}\label{lem:discrete-tt}
  For any set $S$ viewed as a discrete category, the constant map at $\pi$, also denoted
  \begin{align*}
    \pi \colon \tMcU \to \McU \in \bC^S
  \end{align*}
  is equipped with an internal universe given by the constant map at $\pi^\ds$, also denoted
  \begin{align*}
    \pi^\McI \colon \tMcU^\ds \to \McU^\ds \in \bC^S
  \end{align*}
  They are also both equipped with $\Unit,\Sigma,\Pi,\Id$-type structures
  defined using the corresponding structures from $\bC$ pointwise.
\end{lemma}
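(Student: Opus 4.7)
The plan is to observe that since $S$ is discrete, $\bC^S$ is literally the $S$-indexed product category $\prod_{s \in S} \bC$, so every finite limit in $\bC^S$ is computed componentwise and every map between constant diagrams is just an $S$-indexed family of the ``same'' map. Once this is made precise, all the required structures transfer componentwise from the structures on $\pi$ and $\pi^\ds$ in $\bC$ with no additional work.

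First I would check that the constant map at $\pi$ (respectively $\pi^\ds$) is a universal map in $\bC^S$, by constructing the universe structure as the constant choice of pullback: given a map $A \colon \Gamma \to \McU \in \bC^S$, which is an $S$-indexed family $(A_s \colon \Gamma_s \to \McU)_{s \in S}$, we define $\Gamma.A$ componentwise as $(\Gamma_s.A_s)_{s \in S}$ using the universe structure on $\pi$ at each $s$. Pullback-stability and the required universal property are immediate because both are verified componentwise. The same argument applied to $\pi^\ds$ yields the internal universe structure on the constant $\pi^\ds$, where fibrancy of the constant diagram at $\McU^\ds$ and of the constant map $\pi^\ds$ over the constant $\pi$ reduces to the corresponding componentwise statements in $\bC$.

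For the $\Unit$-, $\Sigma$-, $\Pi$-, and $\Id$-type structures, one proceeds identically: the relevant generic constructions ($\GenComp$, the polynomial functor $\bP_\pi$, the diagonal $\tMcU \to \tMcU \times_\McU \tMcU$, and the $\MsJ$-elimination lifting structure) all commute with the product decomposition of $\bC^S$ because they are built from finite limits, pushforwards along (componentwise) exponentiable maps, and structured lifts, each of which is computed componentwise. So the $S$-indexed family of structures on $\pi$ in $\bC$ assembles into a structure on the constant diagram at $\pi$ in $\bC^S$, and the same for $\pi^\ds$. The internal universe compatibility (i.e. that $\tMcU^\ds \to \McU^\ds$ and $\McU^\ds \to 1$ acquire canonical $\pi$-fibrancy structures as constant diagrams) again reduces to the componentwise statement already assumed in $\bC$.

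The main ``obstacle'' is really just to verify that pushforwards along the constant map at $\pi$ in $\bC^S$ exist and are computed componentwise, so that the $\GenComp$ and $\bP_\pi$ constructions used to state $\Sigma$- and $\Pi$-structures agree with their componentwise counterparts. This is straightforward once we note that for discrete $S$ the slice $\sfrac{\bC^S}{X}$ itself decomposes as $\prod_{s \in S} \sfrac{\bC}{X_s}$ and that post-composition and pullback along a constant family are componentwise, so the right adjoint $\pi_*$ in $\bC^S$ is the product of the pointwise $\pi_*$ in each copy of $\bC$. With this observation, everything else is a routine unpacking of the relevant definitions componentwise.
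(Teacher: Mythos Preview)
Your proposal is correct and matches the paper's approach: the paper's own proof is literally the single word ``Straightforwards.'', and your plan is exactly the componentwise unpacking one would give if asked to spell that out. You have identified the only nontrivial point (that pushforwards in $\bC^S$ for discrete $S$ are computed componentwise), so there is nothing missing.
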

\begin{proof}
  Straightforwards.
\end{proof}

We can show that various type-theoretic structures of $\bC^\McI$ are preserved
by formation of inverse diagrams by way of iterated gluing.
For pointed univalence, a slight complication arises due to the technical
reliance of a left adjoint to the matching object functor in
\Cref{subsec:glue-universe-univalence} for \Cref{prop:rel-match-trv-psfw}.
Recall from \Cref{def:deg-match} that the matching object functor
$M_{n+1} \colon \bC^{\McI_{\leq n}} \to \bC^{\McI_{\leq n+1}} \to \bC^{\McI_{=n+1}}$
is defined as the composite of the coskeleton functor
$\cosk_{n+1} \colon \bC^{\McI_{\leq n}} \to \bC^{\McI_{\leq n+1}}$
with the restriction along $\McI_{=n+1} \hookrightarrow \McI_{\leq n+1}$.
While $\cosk_{n+1}$ is by definition a right adjoint, the restriction functor
$\bC^{\McI_{\leq n+1}} \to \bC^{\McI_{=n+1}}$ does not necessarily have a left
adjoint unless $\bC$ is cocomplete.
Fortunately, this problem can be solved by passing into the free cocompletion of
$\bC$, so that pointed univalence for inverse categories can still be obtained.

\begin{theorem}\label{thm:inverse-tt}
  For any finite inverse category $\McI$, the diagram category $\bC^\McI$ also
  admits a universal map
  \begin{align*}
    \pi^\McI \colon \tMcU^\McI \to \McU^\McI \in \bC^\McI
  \end{align*}
  equipped with an internal universe
  \begin{align*}
    (\pi^\McI)^\ds \colon (\tMcU^\McI)^\ds \to (\McU^\McI)^\ds \in \bC^\McI
  \end{align*}
  such that
  \begin{enumerate}[itemsep=0em]
    \item\label{itm:inverse-tt-univ} A map $E \to B \in \bC^\McI$ is a $\pi$-Reedy fibration (respectively
    $\pi^\ds$-Reedy fibration) if and only if it is a $\pi^\McI$-fibration
    (respectively $(\pi^\McI)^\ds$-fibration).
    \item\label{itm:inverse-tt-uspi} $\pi^\McI$ and $(\pi^\McI)^\ds$ are both equipped with
    $\Unit,\Sigma,\Pi,\Id$-type structures.
    \item\label{itm:inverse-tt-fe} If the $\Id$-type structure on $\pi$ has a pointed functional
    extensionality structure then so does the induced $\Id$-type structure on
    $\bC^\McI$.
    \item\label{itm:inverse-tt-ua} Suppose $\bC$ is cocomplete.
    If the $\Id$-type structure on $\pi$ has
    a pointed functional extensionality structure and $\bC$ has a pointed
    $\pi$-univalence structure on $(\pi^\ds,\Id^\ds)$ then $\bC^\McI$ has a
    $\pi^\McI$-univalence structure on $((\pi^\ds)^\McI,(\Id^\ds)^\McI)$, where
    $(\Id^\ds)^\McI$ is the induced $\Id$-type structure on $(\pi^\ds)^\McI$.
  \end{enumerate}
\end{theorem}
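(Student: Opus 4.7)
The proof will proceed by induction on the maximal degree $n$ of the finite inverse category $\McI$, so that $\McI = \McI_{\leq n}$ for some $n \in \bN$. In the base case $n = 0$, the category $\McI$ is discrete, and the entire claim (including pointed functional extensionality and pointed univalence) follows from \Cref{lem:discrete-tt}, which establishes the componentwise passage to power categories indexed by a set. For the inductive step, I would invoke \Cref{prop:inv-glue} to identify $\bC^{\McI_{\leq n+1}}$ with the Artin--Wraith gluing category $\Gl(M_{n+1})$ along the matching object functor $M_{n+1} \colon \bC^{\McI_{\leq n}} \to \bC^{\McI_{=n+1}}$, and then apply the machinery of \Cref{sec:glue-universe}.

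To apply the gluing theorems I must verify \Cref{asm:mrq} for $M_{n+1}$, working with the inductively constructed universe structure on $\bC^{\McI_{\leq n}}$ and the componentwise universe structure on $\bC^{\McI_{=n+1}}$ supplied by \Cref{lem:discrete-tt}. Condition \Cref{itm:mrq-fib} (that $M_{n+1}$ takes Reedy fibrations to fibrations) is exactly \Cref{lem:mat-fib} applied at each $i \in \McI_{=n+1}$, while condition \Cref{itm:mrq-tc} (that $M_{n+1}$ transports left-lifting structures compatibly with reindexing) is \Cref{lem:mat-tc} applied similarly. Condition \Cref{itm:mrq-pi} is the combination of the inductive hypothesis (for the $0$-component) and \Cref{lem:discrete-tt} (for the $1$-component). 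With \Cref{asm:mrq} verified, parts \Cref{itm:inverse-tt-univ} and \Cref{itm:inverse-tt-uspi} follow from \Cref{thm:glue-universe-univ} together with \Cref{prop:glue-universe-unit}, \Cref{prop:glue-universe-sigma}, \Cref{prop:glue-universe-pi}, and \Cref{thm:glue-universe-id}. Part \Cref{itm:inverse-tt-fe} (pointed functional extensionality) follows in the inductive step from \Cref{thm:gl-ptd-fe}, since both the inductive hypothesis and \Cref{lem:discrete-tt} supply pointed $\Id$-functional extensionality for the two component categories.

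The main obstacle lies in part \Cref{itm:inverse-tt-ua}: applying \Cref{thm:glue-universe-ua} in the inductive step requires the matching object functor $M_{n+1}$ to admit a left adjoint. Unwinding the definition, $M_{n+1} = (-)|_{=n+1} \circ \cosk_{n+1}$, so that a left adjoint to $M_{n+1}$ exists precisely when both restriction along $\McI_{=n+1} \hookrightarrow \McI_{\leq n+1}$ admits a left adjoint (left Kan extension, whose existence requires cocompleteness of $\bC$) and $\cosk_{n+1}$ admits a left adjoint (which it does, namely the restriction along $\McI_{\leq n} \hookrightarrow \McI_{\leq n+1}$, since $\cosk_{n+1}$ is by definition the right adjoint of that restriction). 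This is precisely why the cocompleteness hypothesis on $\bC$ is invoked in \Cref{itm:inverse-tt-ua}. With this left adjoint in hand, and pointed univalence available on both component universes (by induction and \Cref{lem:discrete-tt}) together with pointed functional extensionality on the $1$-component (again by \Cref{lem:discrete-tt}), \Cref{thm:glue-universe-ua} produces the required pointed $\pi^\McI$-univalence structure on the induced internal universe and closes the induction.
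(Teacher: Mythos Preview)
Your proposal is correct and takes essentially the same approach as the paper's own proof: induction on the degree, base case via \Cref{lem:discrete-tt}, verification of \Cref{asm:mrq} via \Cref{lem:mat-fib} and \Cref{lem:mat-tc}, and then the gluing results of \Cref{sec:glue-universe} for each part. Your explicit decomposition of $M_{n+1}$ to justify the existence of a left adjoint under cocompleteness is slightly more detailed than the paper, which simply asserts it; otherwise the arguments coincide.
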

\begin{proof}
  Because $\McI$ is finite, it suffices to show the result for each
  $\McI_{\leq n}$ by induction on $n$.

  If $n=0$ then the result is directly by \Cref{lem:discrete-tt}.
  Now assume that we have established such universe category structures on
  $\bC^{\McI_{\leq n}}$ and the goal is to establish these structures on
  $\bC^{\McI_{\leq n+1}}$.
  Then, by \Cref{lem:mat-fib,lem:mat-tc} combined with the pointwise universe
  category structure on $\bC^{\McI_{=n+1}}$ from \Cref{lem:discrete-tt}, the
  $(n+1)$-th matching object functor
  $M_{n+1} \colon \bC^{\McI_{\leq n}} \to \bC^{\McI_{=n+1}}$ satisfies
  \Cref{asm:mrq}.
  Therefore, the series of results in
  \Cref{prop:glue-universe-unit,prop:glue-universe-sigma,prop:glue-universe-pi,thm:glue-universe-id}
  equips $\Gl(M_{n+1})$ with a universal Reedy fibration with an internal
  universal Reedy fibration along with $\Unit,\Sigma,\Pi,\Id$-type structures on
  both universal maps.
  These universal structures are transferred to $\bC^{\McI_{\leq n+1}}$ by
  noting that the equivalence $\bC^{\McI_{\leq n+1}} \simeq \Gl(M_{n+1})$ of
  \Cref{prop:inv-glue} preserves and reflects Reedy fibrations and pushforwards.
  This proves \Cref{itm:inverse-tt-univ,itm:inverse-tt-uspi}.

  \Cref{thm:gl-ptd-fe} also inductively establishes a pointed $\Id$-type
  functional extensionality structure on $\Gl(M_{n+1})$ by provided it is
  present in $\bC$ since it is also present in $\bC^{\McI_{=n+1}}$ in this case
  by \Cref{lem:discrete-tt}.
  The pointed functional extensionality structure also transfers to
  $\bC^{\McI_{\leq n+1}}$ under the equivalence
  $\bC^{\McI_{\leq n+1}} \simeq \Gl(M_{n+1})$ of \Cref{prop:inv-glue} as it is
  formulated as a structured lift and thus stable under equivalence of
  categories.
  This proves \Cref{itm:inverse-tt-fe}.

  Finally, for \Cref{itm:inverse-tt-ua}, we use \Cref{thm:glue-universe-ua} and
  the fact that cocompleteness of $\bC$ gives a left adjoint to the matching
  object functor $M_{n+1}$.
\end{proof}

For the final result, we remove the cocompleteness assumption of
\Cref{thm:inverse-tt}\Cref{itm:inverse-tt-ua} by passing into the presheaf
category.

\begin{lemma}\label{lem:yoneda-preserve-universe}
  Under the Yoneda embedding $\bC \hookrightarrow \widehat{\bC}$,
  \begin{enumerate}[itemsep=0em]
    \item The map $\pi \colon \tMcU \to \McU \in \widehat{\bC}$ is also equipped
    with an internal universe given by the map
    $\pi^\ds \colon \tMcU^\ds \to \McU^\ds \in \widehat{\bC}$.
    Both $\pi,\pi^\ds \in \widehat{\bC}$ also inherit the
    $\Unit,\Sigma,\Pi,\Id$-type structures they had in $\bC$ by the Yoneda
    embedding.
    \item If $\bC$ has a pointed $\pi$-univalence structure on $(\pi_0,\Id_0)$
    then so does $\widehat{\bC}$.
    \item If $\pi$ has a pointed $\Id$-functional extensionality structure in
    $\bC$ then so does $\pi$ in $\widehat{\bC}$.
  \end{enumerate}
\end{lemma}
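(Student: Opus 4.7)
The plan is to exploit two classical properties of the Yoneda embedding $y \colon \bC \hookrightarrow \widehat{\bC}$: it is fully faithful, and it preserves all limits that exist in $\bC$. The key additional fact to establish first is that $y$ also preserves the relevant pushforwards: for any exponentiable map $p \colon E \to B$ in $\bC$ with pushforward $p_\ast X$ of some $X \to E$ in $\bC$, one has $y(p_\ast X) \cong y(p)_\ast y(X)$ in $\widehat{\bC}$. This follows by testing against an arbitrary representable $y(C)$, chaining the $\bC$-adjunction $p^\ast \dashv p_\ast$ with the Yoneda lemma on both sides, and using that $y$ preserves the pullback $p^\ast C$; a presheaf is then determined by its values on representables.

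With this preservation in hand, the first part becomes routine. The chosen pullbacks of $\pi$ and of $\pi^\ds$ in $\bC$ transport to chosen pullbacks of $y(\pi)$ and $y(\pi^\ds)$ in $\widehat{\bC}$ by limit preservation, giving the universe structure and the internal universe structure there. The $\Unit$-, $\Sigma$-, and $\Id$-type structures are specified by maps between fibres, pullback squares, and (for $\Id$) a structured lift; each of these transfers under $y$ because $y$ is fully faithful, preserves finite limits, and the lifting structures from \cite{struct-lift} can be transported slice-wise via the Yoneda lemma. The $\Pi$-structure is a $\pi$-fibrancy structure on $\bP_\pi(\tMcU) \to \bP_\pi(\McU)$; since the polynomial functor $\bP_\pi$ is assembled from products, pushforwards along $\pi$, and forgetful projections, each preserved by $y$ by the observation above, the $\Pi$-structure transfers as well.

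For parts (2) and (3), the constructions of $\HIso_{\McU_0}^{\Id_0}(\tMcU_0)$ and of the map $\trv$ from \Cref{constr:hiso,constr:trv-htpy}, as well as the map $\bP_\pi(\refl)$ appearing in \Cref{def:funext}, are built entirely from finite limits and pushforwards along $\pi$ (or $\pi_0$). Hence $y$ sends these constructions in $\bC$ to the analogous constructions in $\widehat{\bC}$. The pointed univalence and pointed functional extensionality structures in $\bC$ are uniform lifting structures; via $y$ they immediately yield lifting solutions for lifting problems in $\widehat{\bC}$ whose base is representable. I would then extend to arbitrary presheaf bases $X$ using density of representables: present $X$ as the canonical colimit over its category of elements, decompose a lifting problem over $X$ into a compatible family of lifting problems over representables, solve each using the transported lifting structure, and glue the solutions by invoking the pullback-stability axiom of a structured lift.

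The main obstacle is this density extension step: one must verify that the solutions produced on the representable pieces are compatible under the transition maps of the colimit presentation, and that the resulting structure on $\widehat{\bC}$ satisfies pullback-stability against base changes between non-representable presheaves. Both reduce to the uniformity axiom of the structured lift on $\bC$ combined with the Yoneda lemma, so the verification is routine but requires care in unpacking the definitions of a structured lift as presented in \cite{struct-lift}.
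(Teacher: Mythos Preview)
Your proposal is correct and establishes the same preliminary fact as the paper (Yoneda preserves pushforwards, hence polynomial functors and all the intermediate constructions), but the treatment of the lifting structures in parts (2) and (3), and of $\MsJ$-elimination in part (1), diverges from the paper's.

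The paper does not perform the density extension you outline. Instead it invokes \cite[Corollary 1.7]{struct-lift}: a uniform lifting structure of $i$ against $p$ is the same datum as a single section of the pullback-Hom map $\langle i,p\rangle$. Since the pullback-Hom is assembled from finite limits and internal homs in a slice (i.e.\ pushforwards), and Yoneda preserves both, this map is carried to the corresponding pullback-Hom in $\widehat{\bC}$; the section, being just a morphism in $\bC$, comes along for free by full faithfulness. That single sentence handles $\MsJ$-elimination, pointed univalence, and pointed functional extensionality uniformly. Your approach instead transports lifts over representable bases and then glues along the canonical colimit presentation of an arbitrary presheaf, using uniformity in $\bC$ to guarantee compatibility along transition maps. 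This is a valid argument---the compatibility check you flag as ``the main obstacle'' is exactly the uniformity axiom, so it goes through---but it is essentially a by-hand unpacking of what the section-of-pullback-Hom characterisation already encodes: a section of a presheaf map is determined by, and automatically natural in, its values on representables. The paper's route is shorter and avoids the bookkeeping you anticipate; your route has the minor advantage of not needing the pullback-Hom to exist as an object of $\bC$ itself, though in $\widehat{\bC}$ it always does, which is really what makes both arguments work.
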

\begin{proof}
  The Yoneda embedding is continuous and preserves pushforwards.
  Therefore, the $\Unit,\Sigma,\Pi$-type structures of $\bC$ on $\pi$ give rise
  to the corresponding structures on $\pi$ in $\widehat{\bC}$.

  For the $\Id$-type structure, we recall that a $\MsJ$-elimination structure is
  precisely a section to a pullback-Hom map due to \cite[Corollary
  1.7]{struct-lift}, and so $\MsJ$-elimination structures on pre-$\Id$-type
  structures in $\bC$ also give rise to their counterparts in $\widehat{\bC}$.
  The same argument also applies for pointed univalence and pointed functional
  extensionality as they are also defined using uniform lifting structures in
  \Cref{def:axm-univalence,def:funext}.
\end{proof}

\begin{theorem}\label{thm:inv-ua}
  For any finite inverse category $\McI$, the diagram category $\bC^\McI$ also
  admits a universal Reedy $\pi$-fibration
  \begin{equation*}
    \pi^\McI \colon \tMcU^\McI \to \McU^\McI \in \bC^\McI
  \end{equation*}
  with an internal universal Reedy $\pi^\ds$-fibration
  \begin{equation*}
    (\pi^\ds)^\McI \colon (\tMcU^\ds)^\McI \to (\McU^\ds)^\McI \in \bC^\McI
  \end{equation*}
  such that
  \begin{enumerate}[itemsep=0em]
    \item $\pi^\McI$ and $(\pi^\McI)^\ds$ are both equipped with
    $\Unit,\Sigma,\Pi,\Id$-type structures.
    \item If $\bC$ has a pointed $\pi$-univalence structure on
    $(\pi^\ds,\Id^\ds)$ and $\pi$ has a pointed $\Id$-functional extensionality
    structure then $\bC^\McI$ has a pointed $\pi^\McI$-univalence structure on
    $((\pi^\ds)^\McI,(\Id^\ds)^\McI)$, where $(\Id^\ds)^\McI$ is the induced
    $\Id$-type structure on $(\pi^\ds)^\McI$.
  \end{enumerate}
\end{theorem}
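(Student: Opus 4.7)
The plan is to reduce this statement to \Cref{thm:inverse-tt} by bootstrapping through the Yoneda embedding, which is why the previous theorem's cocompleteness hypothesis is needed only temporarily. The first half of the statement, concerning the universal Reedy fibration with its internal universal Reedy fibration equipped with $\Unit,\Sigma,\Pi,\Id$-type structures, is already provided directly by \Cref{thm:inverse-tt}\Cref{itm:inverse-tt-univ} and \Cref{itm:inverse-tt-uspi}, since those parts of the earlier theorem make no use of cocompleteness. So no work is needed for the first part.

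For the pointed univalence claim, first invoke \Cref{lem:yoneda-preserve-universe} to transport the universal map $\pi$, the internal universe $\pi^\ds$, all of the $\Unit,\Sigma,\Pi,\Id$-type structures, the pointed $\pi$-univalence structure on $(\pi^\ds,\Id^\ds)$, and the pointed $\Id$-functional extensionality structure along the Yoneda embedding $\bC \hookrightarrow \widehat{\bC}$. Since the presheaf category $\widehat{\bC}$ is cocomplete, one may then apply \Cref{thm:inverse-tt}\Cref{itm:inverse-tt-ua} inside $\widehat{\bC}$ to obtain a pointed $\pi^\McI$-univalence structure on $((\pi^\ds)^\McI,(\Id^\ds)^\McI)$ inside $\widehat{\bC}{}^\McI$.

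The remaining step is descent back to $\bC^\McI$. The pointwise extension $\bC^\McI \hookrightarrow \widehat{\bC}{}^\McI$ of the Yoneda embedding is fully faithful and preserves finite limits, pushforwards along representable maps, and the selected pullbacks making up the universe structure. Consequently the universal maps $\pi^\McI, (\pi^\ds)^\McI$ constructed in $\bC^\McI$ via iterated gluing from \Cref{prop:inv-glue} agree with their counterparts in $\widehat{\bC}{}^\McI$ under this embedding, and likewise the constructed objects $\HIso_{(\McU^\ds)^\McI}^{(\Id^\ds)^\McI}((\tMcU^\ds)^\McI)$ and the map $\trv^\McI$ are preserved. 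Pointed univalence in $\widehat{\bC}{}^\McI$ is a uniform lifting structure for $\trv^\McI$ against $\pi^\McI$ in the sense of \cite[Definition 3.1]{struct-lift}, so when restricted to lifting problems whose data comes from $\bC^\McI$, the provided lifts, being constructed through pullbacks and compositions that the embedding reflects along its image, land back in $\bC^\McI$.

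The main obstacle will be making this last descent argument rigorous: one must check that the uniform lift supplied by the structure on $\widehat{\bC}{}^\McI$, when evaluated on a lifting problem arising from objects and maps in $\bC^\McI$, produces a solution that actually factors through $\bC^\McI \hookrightarrow \widehat{\bC}{}^\McI$ rather than merely a presheaf-level solution. This reduces to the observation that in each inductive step of the gluing construction of \Cref{sec:glue-universe}, the lifts were assembled via \Cref{lem:ptwise-tc-gl-tc,lem:gl-tc-ptwise-tc} from lifts in the component categories by operations (pullbacks, pushforwards along representable maps, composition with selected pullbacks of the universe structure) that the Yoneda embedding both preserves and reflects; hence the output remains in the essential image at every stage of the recursion over the skeletal filtration of $\McI$.
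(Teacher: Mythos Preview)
Your approach is essentially the same as the paper's: obtain part (1) directly from \Cref{thm:inverse-tt}, then for part (2) pass to $\widehat{\bC}$ via \Cref{lem:yoneda-preserve-universe}, apply \Cref{thm:inverse-tt}\Cref{itm:inverse-tt-ua} there using cocompleteness, and descend along $\bC^\McI \hookrightarrow \widehat{\bC}^\McI$. The paper likewise argues that this embedding preserves pushforwards (citing the explicit formula for pushforwards in inverse diagrams and the fact that Yoneda preserves pushforwards), so that the universal maps, the $\HIso$-object, and $\trv^\McI$ computed in $\widehat{\bC}^\McI$ are the images of those computed in $\bC^\McI$.

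The one place you work harder than necessary is the descent step. You propose to trace through the inductive gluing construction of the lift to verify it stays in the essential image. The paper's argument is shorter: once you know that $\trv^\McI$ and $\pi^\McI$ in $\widehat{\bC}^\McI$ are the images of the corresponding maps in $\bC^\McI$, then for any $X \in \bC^\McI$ a lifting problem of $X \times \trv^\McI$ against $\pi^\McI$ has all its corners in the image of the fully faithful embedding, and hence any diagonal filler produced in $\widehat{\bC}^\McI$ is automatically a map between objects of $\bC^\McI$ and therefore already lives in $\bC^\McI$ by full faithfulness alone. No unwinding of the recursive construction of the lift is needed.
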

\begin{proof}
  By \Cref{thm:inverse-tt}\Cref{itm:inverse-tt-univ,itm:inverse-tt-uspi},
  $\bC^\McI$ admits a universal Reedy $\pi$-fibration $\pi^\McI$ with an
  internal universal Reedy $\pi^\ds$-fibration $(\pi^\ds)^\McI$ where both
  universal maps are equipped $\Unit,\Sigma,\Pi,\Id$-type structures
  $\Unit^\McI,\Sigma^\McI,\Pi^\McI,\Id^\McI$ and
  $(\Unit^\ds)^\McI,(\Sigma^\ds)^\McI,(\Pi^\ds)^\McI,(\Id^\ds)^\McI$
  respectively.

  Applying \Cref{thm:inverse-tt} on $\widehat{\bC}$ with the logical structures
  on $\widehat{\bC}$ from \Cref{lem:yoneda-preserve-universe} given by the
  Yoneda embedding, we obtain a universal Reedy $\pi$-fibration $\ul{\pi^\McI}$
  on $\widehat{\bC}^\McI$ with an internal universal Reedy $\pi^\ds$-fibration
  $\ul{(\pi^\ds)^\McI}$ where both universal maps are equipped
  $\Unit,\Sigma,\Pi,\Id$-type structures
  $\ul{\Unit^\McI},\ul{\Sigma^\McI},\ul{\Pi^\McI},\ul{\Id^\McI}$ and
  $\ul{(\Unit^\ds)^\McI},\ul{(\Sigma^\ds)^\McI},\ul{(\Pi^\ds)^\McI},\ul{(\Id^\ds)^\McI}$
  respectively.
  Moreover, in the case of $\widehat{\bC}$, one additionally has a
  $\ul{\pi^\McI}$-univalence structure on
  $(\ul{(\pi^\ds)^\McI},\ul{(\Id^\ds)^\McI})$ by
  \Cref{thm:inverse-tt}\Cref{itm:inverse-tt-ua}.

  However, by the formula for pushforwards in inverse diagram categories in
  \cite[Lemma 2.2]{inv-psfw} and the fact that the Yoneda embedding preserves
  pushforwards \cite[Lemma 6.6(3)]{uem23}, one observes that
  $\bC^\McI \hookrightarrow \widehat{\bC}^\McI$ also preserves pushforwards.
  Consequently, an examination of the constructions of \Cref{sec:glue-universe}
  reveals that the universal maps
  $\ul{\pi^\McI},\ul{(\pi^\ds)^\McI} \in \widehat{\bC}^\McI$ arise as the image
  of $\pi^\McI,(\pi^\ds)^\McI \in \bC^\McI$ under
  $\bC^\McI \hookrightarrow \widehat{\bC}^\McI$ and likewise for the
  $\Unit,\Sigma,\Pi,\Id$-type structures as they are constructed relying only on
  pullbacks and pushforwards.
  In particular, the map
  \begin{equation*}
    \trv^\McI \colon (\pi^\ds)^\McI \xrightarrow{\quad}
    \HIso^{(\Id^\ds)^\McI}_{(\McU^\ds)^\McI}((\tMcU^\ds)^\McI) \in \bC^\McI
  \end{equation*}
  is sent by $\bC^\McI \hookrightarrow \widehat{\bC}^\McI$ to
  \begin{equation*}
    \ul{\trv^\McI} \colon \ul{(\pi^\ds)^\McI} \xrightarrow{\quad}
    \HIso^{\ul{(\Id^\ds)^\McI}}_{\ul{(\McU^\ds)^\McI}}(\ul{(\tMcU^\ds)^\McI}) \in \widehat{\bC}^\McI
  \end{equation*}
  Therefore, $\pi^\McI$-univalence on $((\pi^\ds)^\McI,(\Id^\ds)^\McI)$ follows
  by full faithfulness of $\bC^\McI \hookrightarrow \widehat{\bC}^\McI$, as all
  lifting problems of $X \times \trv^\McI$ against $\pi^\McI$ for any
  $X \in \bC^\McI$ can be solved in $\widehat{\bC}^\McI$, with the solution in
  fact being already in $\bC^\McI$.
\end{proof}


\printbibliography

\end{document}